\newcommand{\anote}[1]{\authnote{ András}{#1}{green}}
\newcommand{\snote}[1]{\authnote{ Stacey}{#1}{red}}
\newcommand{\shnote}[1]{\authnote{ Shantanav}{#1}{blue}}
\newcommand{\eps}{\varepsilon}
\newcommand{\ketbra}[2]{|#1\rangle\! \langle #2|}
\newcommand{\nrm}[1]{\left\lVert #1 \right\rVert}
\newcommand{\bigO}[1]{\mathcal{O}\left( #1 \right)}
\newcommand{\bigOt}[1]{\widetilde{\mathcal{O}}\left( #1 \right)}
\newcommand{\littleo}[1]{o\left( #1 \right)}
\newcommand{\pvp}{\vec{p}{\kern 0.45mm}'}
\let\oldnabla\nabla
\renewcommand{\nabla}{\oldnabla\!}
\DeclarePairedDelimiter\ceil{\lceil}{\rceil}
\DeclarePairedDelimiter\bra{\langle}{\rvert}
\DeclarePairedDelimiter\ket{\lvert}{\rangle}
\DeclarePairedDelimiterX\braket[2]{\langle}{\rangle}{#1 \delimsize\vert #2}
\newcommand{\underflow}[2]{\underset{\kern-60mm \overbrace{#1} \kern-60mm}{#2}}
\def\polylog{\mathrm{polylog}}
\long\def\ignore#1{}
\newtheorem{theorem}{Theorem}
\newtheorem{corollary}[theorem]{Corollary}
\newtheorem{lemma}[theorem]{Lemma}
\newtheorem{definition}[theorem]{Definition}
\newtheorem{claim}[theorem]{Claim}
\newcommand{\Hi}{\ensuremath{\mathcal{H}}}
\newcommand{\A}{\ensuremath{\mathcal{A}}}
\newcommand{\B}{\ensuremath{\mathcal{B}}}
\newcommand{\C}{\ensuremath{\mathbb{C}}}
\newcommand{\N}{\ensuremath{\mathbb{N}}}
\newcommand{\Oo}{\ensuremath{\mathcal{O}}}
\newcommand{\R}{\ensuremath{\mathbb{R}}}
\title{
  The power of block-encoded matrix powers: improved regression techniques via faster Hamiltonian simulation
}
\author{
Shantanav Chakraborty\thanks{QuIC, Université Libre de Bruxelles. Supported by the Belgian Fonds de la Recherche Scientifique - FNRS under grants no F.4515.16 (QUICTIME) and R.50.05.18.F (QuantAlgo). \texttt{shchakra@ulb.ac.be}}
	\and	
András Gilyén\thanks{QuSoft/CWI, Science Park 123, 1098 XG Amsterdam, Netherlands. Supported by ERC Consolidator Grant QPROGRESS.  \texttt{gilyen@cwi.nl} } 
	\and	
Stacey Jeffery\thanks{QuSoft/CWI, Science Park 123, 1098 XG Amsterdam, Netherlands. Supported by an NWO WISE Grant and an NWO Veni Grant. \texttt{jeffery@cwi.nl}}
}
\date{\today\vspace{-5mm}}
\begin{document}
	
	\maketitle

\begin{abstract}
We apply the framework of block-encodings, introduced by Low and Chuang (under the name standard-form), to the study of quantum machine learning algorithms and derive general results that are applicable to a variety of input models, including sparse matrix oracles and matrices stored in a data structure. We develop several tools within the block-encoding framework, such as singular value estimation of a block-encoded matrix, and quantum linear system solvers using block-encodings. The presented results give new techniques for Hamiltonian simulation of non-sparse matrices, which could be relevant for certain quantum chemistry applications, and which in turn imply an exponential improvement in the dependence on precision in quantum linear systems solvers for non-sparse matrices. 

In addition, we develop a technique of variable-time amplitude \emph{estimation}, based on Ambainis' variable-time amplitude amplification technique, which we are also able to apply within the framework. 

As applications, we design the following algorithms: (1) a quantum algorithm for the quantum weighted least squares problem, exhibiting a 6-th power improvement in the dependence on the condition number and an exponential improvement in the dependence on the precision over the previous best algorithm of Kerenidis and Prakash; (2) the first quantum algorithm for the quantum generalized least squares problem; and (3) quantum algorithms for estimating electrical-network quantities, including effective resistance and dissipated power, improving upon previous work. 
	\end{abstract}

\newpage
%\ifdraft
{\tableofcontents
\newpage}

	\section{Introduction}
%	We build on ideas presented in \cite{LowChuangQubitization2016}. Our generalization is motivated by \cite{kerenidis:quantumgraddescent}.

A rapidly growing and important class of quantum algorithms are those that use Hamiltonian simulation subroutines to solve linear algebraic problems, many with potential applications to machine learning. This subfield began with the HHL algorithm, due to Harrow, Hassidim and Lloyd \cite{harrow2009quantum}, which solves the \emph{quantum linear system problem} (QLS problem). In this problem, the input consists of a matrix $A\in\mathbb{R}^{N\times N}$ and a vector $\vec{b}\in\mathbb{R}^N$, in some specified format, and the algorithm should output a quantum state proportional to $\sum_{i=1}^Nx_i\ket{i}$, where $\vec{x}=A^{-1}\vec{b}$. 

The format in which the input is presented is of crucial importance. For a sparse $A$, given an efficient algorithm to query the $i$-th non-zero entry of the $j$-th row of $A$, the HHL algorithm and its subsequent improvements \cite{AmbainisVariableTime12,childs2015quantum} can solve the QLS problem in complexity that depends poly-logarithmically on $N$. Here, if $A$ were given naively as a list of all its entries, it would generally take time proportionally to $N^2$ just to read the input. We will refer to this model of accessing $A$, in which we can query the $i$-th non-zero entry of the $j$-th row, as the \emph{sparse-access input model}.\footnote{If the matrix is not symmetric (or Hermitian) we also assume access to its transpose in a similar fashion.}

In \cite{kerenidis2016quantum} and \cite{kerenidis:quantumgraddescent}, Kerenidis and Prakash consider several linear algebraic problems in a different input model. They assume that data has been collected and stored in some carefully chosen data structure in advance. If the data is described by an arbitrary $N\times N$ matrix, then of course, this collection will take time at least $N^2$ (or, if the matrix is sparse, at least the number of non-zero entries). However, processing the data, given such a data structure, is significantly cheaper, depending only poly-logarithmically on $N$. Kerenidis and Prakash describe 
a data structure that, when stored in quantum-random-access read-only memory (QROM)\footnote{This refers to memory that is only required to store classical (non-superposition) data, but can be addressed in superposition.}, allows for the preparation of a superposition over $N$ data points in complexity poly-logarithmic in $N$. We call this the \emph{quantum data structure input model} and discuss it more in Section~\ref{sec:data-structure}. Although in some applications it might be too much to ask for the data to be presented in such a structure, one advantage of this input model is that it is not restricted to sparse matrices. This result can potentially also be useful for some quantum chemistry applications, since a recent proposal of Babbush et al.~\cite{BabbushSecondQuantizedChem15} uses a database of all Hamiltonian terms in order to simulate the electronic structure.

The HHL algorithm and its variants and several other applications are based on techniques from Hamiltonian simulation. Given a Hermitian matrix $H$ and an input state $\ket{\psi}$, the Hamiltonian simulation problem is to simulate the unitary $e^{iH}$ on $\ket{\psi}$ for some time $t$. Most work in this area has considered the sparse-access input model \cite{Llo96,AT03,BACS05,BC09,berry:exponentialprecision,berry:simHamTaylor,Chi04,Chi10,CW12,PQSV11,WBHS11,BerryChilds:hamsimFOCS,NB16,BN16}, but recent work of Low and Chuang \cite{LowChuangQubitization2016} has considered a different model, which we call \emph{the block-encoding framework}\footnote{Low and Chuang call this input model \emph{standard form}.}. 

\paragraph{The block-encoding framework.} A block-encoding of a matrix $A\in\mathbb{C}^{N\times N}$ is a unitary $U$ such that the top left block of $U$ is equal to $A/\alpha$ for some normalizing constant $\alpha\geq \nrm{A}$:
$$ U=\left(\begin{array}{cc} A/\alpha & . \\ . & .\end{array}\right).$$
In other words, for some $a$, for any state $\ket{\psi}$ of appropriate dimension, $\alpha (\bra{0}^{\otimes a}\otimes I)U(\ket{0}^{\otimes a}\otimes \ket{\psi}) = A\ket{\psi}$.

Such an encoding is useful if $U$ can be implemented efficiently. In that case, $U$, combined with amplitude amplification, can be used to generate the state $A\ket{\psi}/\nrm{A\ket{\psi}}$ given a circuit for generating $\ket{\psi}$. The main motivation for using block-encodings is that Low and Chuang showed~\cite{LowChuangQubitization2016} how to perform optimal Hamiltonian simulation given a block-encoded Hamiltonian $A$. 

In Ref.~\cite{kerenidis2016quantum}, Kerenidis and Prakash implicitly prove that if an $N\times N$ matrix $A$ is given as a quantum data structure, then there is an $\eps$-approximate block-encoding of $A$ that can be implemented in complexity polylog$(N/\eps)$. This implies that all results about block-encodings ---
 including Low and Chuang's Hamiltonian simulation when the input is given as a block-encoding \cite{LowChuangQubitization2016}, and other techniques we develop in this paper --- 
also apply to input presented in the quantum data structure model. This observation is the essential idea behind our applications. Implicit in work by Childs \cite{Chi10} is the fact that, given $A$ in the sparse-access input model, there is an $\eps$-approximate block-encoding of $A$ that can be implemented in complexity polylog$(N/\eps)$, so our results also apply to the sparse-access input model. In fact, the block-encoding framework unifies a number of possible input models, and also enables one to work with hybrid input models, where some matrices may come from purifications of density operators, whereas other input matrices may be accessed through sparse oracles or a quantum data structure. For a very recent overview of these general techniques see e.g.~\cite{gilyenBlockMatrices}.

We demonstrate the elegance of the block-encoding framework by showing how to combine and modify block-encodings to build up new block-encodings, similar to building new algorithms from existing subroutines. For example, given block-encodings of $A$ and $B$, their product yields a block-encoding of $AB$. Given a block-encoding of a Hermitian $A$, it is possible to construct a block-encoding of $e^{iA}$, using which one can implement a block-encoding of $A^{-1}$. We summarize these techniques in Section \ref{sec:intro-tools}, and present them formally in Section \ref{sec:tools}.

To illustrate the elegance of the block-encoding framework, consider one of our applications: generalized least squares. This problem, defined in Section \ref{sec:intro-app}, requires that given inputs $X\in\mathbb{R}^{M\times N}$, $\Omega\in \mathbb{R}^{M\times M}$ and $\vec{y}\in\mathbb{R}^M$, we output a quantum state proportional to 
$$\vec{\beta} = (X^T\Omega^{-1}X)^{-1}X^T\Omega^{-1}\vec{y}.$$
Given block-encodings of $X$ and $\Omega$, it is simple to combine them to get a block-encoding of $(X^T\Omega^{-1}X)^{-1}X^T\Omega^{-1}$, which can then be applied to a quantum state proportional to $\vec{y}$. 

\paragraph{Variable-time amplitude estimation.} A variable-stopping-time quantum algorithm is a quantum algorithm $\cal A$ consisting of $m$ stages ${\cal A}={\cal A}_m\dots {\cal A}_1$, where ${\cal A}_j{\cal A}_{j-1}\dots {\cal A}_1$ has complexity $t_j$, for $t_m> \dots > t_1>0$. At each stage, a certain flag register, which we can think of as being initialized to a neutral symbol, may be marked as ``good'' in some branches of the superposition, or ``bad'' in some branches of the superposition, or left neutral. Each subsequent stage only acts non-trivially on those branches of the superposition in which the flag is not yet set to ``good'' or ``bad''. 

At the end of the algorithm, we would like to project onto that part of the final state in which the flag register is set to ``good''. This is straightforward using amplitude amplification, however this approach may be vastly sub-optimal. If the algorithm terminates with amplitude $\sqrt{p_{succ}}$ on the ``good'' part of the state, then standard amplitude amplification requires that we run $1/\sqrt{p_{succ}}$ rounds, each of which requires us to run the full algorithm $\cal A$ to generate its final state, costing $t_m/\sqrt{p_{succ}}$.

To see why this might be sub-optimal, suppose that after ${\cal A}_1$, the amplitude on the part of the state in which the flag register is set to ``bad'' is already very high. Using amplitude amplification at this stage is very cheap, because we only have to incur the cost $t_1$ of ${\cal A}_1$ at each round, rather than running all of $\cal A$. In \cite{AmbainisVariableTime12}, Ambainis showed that given a variable-stopping-time quantum algorithm, there exists an algorithm that approximates the ``good'' part of the algorithm's final state in cost $\bigOt{t_m+\sqrt{\sum_{j=1}^m\frac{p_j}{p_{succ}}t_j^2 }}$\footnote{We use the notation $\bigOt{f(x)}$ to indicate $\bigO{f(x)\polylog(f(x))}$.}, where $p_j$ is the amplitude on the part of the state that is moved from neutral to ``good'' or ``bad'' during application of ${\cal A}_j$ (intuitively, the probability that the algorithm stops at stage $j$). 

%To see why this might be sub-optimal, suppose that after ${\cal A}_1$, the amplitude on the part of the state in which the flag register is set to ``bad'' is already very high. Using amplitude amplification at this stage is very cheap, because we only have to incur the cost $t_1$ of ${\cal A}_1$ at each round, rather than running all of $\cal A$. In \cite{AmbainisVariableTime12}, Ambainis showed that given a variable-stopping-time quantum algorithm there exists an algorithm that approximates the ``good'' part of the algorithm's final state in cost $\bigOt{t_m+\sqrt{\sum_{j=1}^m\frac{p_j}{p_{succ}}t_j^2 }}$\footnote{We use the notation $\bigOt{f(x)}$ to indicate $\bigO{f(x)\polylog(f(x))}$.}, where $p_j$ is the amplitude on the part of the state that is moved from neutral to ``good'' or ``bad'' during application of ${\cal A}_j$ (intuitively, the probability that the algorithm stops at stage $j$). 

While amplitude amplification can easily be modified to not only project a state onto its ``good'' part, but also return an estimate of $p_{succ}$ (i.e. the probability of measuring ``good'' given the output of ${\cal A}$), this is not immediate in variable-time amplitude amplification. The main difficulty is that a variable-time amplification algorithm applies a lot of subsequent amplification phases, where in each amplification phase the precise amount of amplification is a priori unknown. We overcome this difficulty by separately estimating the amount of amplification in each phase with some additional precision and finally combining the separate estimates in order to get a multiplicative estimate of $p_{succ}$.

In Section \ref{sec:vtae}, we show in detail how to estimate the success probability of a variable-stopping-time quantum algorithm to within a multiplicative error of $\eps$ in complexity 
$$\bigOt{\frac{1}{\eps}\left(t_m+\sqrt{\sum_{j=1}^m\frac{p_j}{p_{succ}}t_j^2 }\right)}.$$
Meanwhile we also derive some logarithmic improvements to the complexity of variable-time amplitude amplification. 
%To overcome the aforementioned difficulties, we develop a technique that we call \emph{mindful-amplification}, which allows us to amplify the amplitude on the ``good'' part of a quantum state to constant, while at the same time, computing an estimate of how much the state has been amplified.

\paragraph{Applications.} We give several applications of the block-encoding framework and variable-time amplitude estimation. 

We first present a quantum weighted least squares solver (WLS solver), which outputs a quantum state proportional to the optimal solution to a weighted least squares problem, when the input is given either in the quantum data structure model of Kerenidis and Prakash, or the sparse-access input model. We remark that the sparse-access input model is perhaps less appropriate to the setting of data analysis, where we cannot usually assume any special structure on the input data, however, since our algorithm is designed in the block-encoding framework, it works for either input model. Our quantum WLS solver improves the dependence on the condition number from $\kappa^6$ in \cite{kerenidis:quantumgraddescent}\footnote{In the paper of Kerenidis and Prakash their $\kappa$ corresponds to our $\kappa^2$.} to $\kappa$, and the dependence on $\eps$ from $1/\eps$ to $\polylog(1/\eps)$. 

We next present the first quantum generalized least squares solver (GLS solver), which outputs a quantum state proportional to the optimal solution to a generalized least squares problem. We again assume that the input is given in either the quantum data structure model or the sparse-access model. The complexity is again polynomial in $\log(1/\eps)$ and in the condition numbers of the input matrices. 

Finally, we also build on the algorithms of Wang \cite{wang2017efficient} to estimate effective resistance between two nodes of an electrical network and the power dissipated across a network when the input is given as a quantum data structure or in the sparse-access model. We estimate the norm of the output state of a certain linear system by applying the variable-time amplitude estimation algorithm. In the sparse-access model, we find that our algorithm outperforms Wang's linear-system-based algorithm. In the quantum data structure model, our algorithms offer a speedup whenever the maximum degree of an electrical network of $n$ nodes is $\Omega(n^{1/3})$. Our algorithms also have a speedup over the quantum walk based algorithm by Wang in certain regimes.

We describe these applications in more detail in Section \ref{sec:intro-app}, and formally in Section \ref{sec:app}.

\paragraph{Related Work.} Independently of this work, recently, Wang and Wossnig \cite{WW18} have also considered Hamiltonian simulation of a Hamiltonian given in the quantum data structure model, using quantum-walk based techniques from earlier work on Hamiltonian simulation \cite{BerryChilds:hamsimFOCS}. Their algorithm's complexity scales as $\nrm{A}_1$ (which they upper bound by $\sqrt{N}$); 
whereas 
our Hamiltonian simulation results (Theorem \ref{thm:dataHamSim}), which follow from Low and Chuang's block-Hamiltonian simulation result, have a complexity that depends poly-logarithmically on the dimension, $N$. Instead, our complexity depends on the parameter $\mu$, described below, which is also at most $\sqrt{N}$. In principle, the Hamiltonian simulation result of \cite{WW18} can also be used to implement a quantum linear systems solver, however the details are not worked out in \cite{WW18}. 

\subsection{Techniques for block-encodings}\label{sec:intro-tools}

We develop several tools within the block-encoding framework that are crucial to our applications, but also likely of independent interest. Since an input given either in the sparse-access model or as a quantum data structure can be made into a block-encoding, our block-encoding results imply analogous results in each of the sparse-access and quantum data structure models. 

In the following, let $\mu(A)$ be one of: (1) $\mu(A)=\nrm{A}_F$, the Frobenius norm of $A$, in which case the quantum data structures should encode $A$;
or (2) for some $p\in [0,1]$, $\mu(A)=\sqrt{s_{2p}(A)s_{2(1-p)}(A)}$, where $s_{p}(A)=\max_j\nrm{A_{j,\cdot}}_p^p$, in which case the quantum data structures should encode both $A^{(p)}$ and $(A^{(1-p)})^T$, defined by $A_{i,j}^{(q)}:=(A_{i,j})^q$. 

\paragraph{Hamiltonian simulation from quantum data structure.} In Theorem \ref{thm:dataHamSim}, we prove the following. 
Given a quantum data structure for a Hermitian $A\in\mathbb{C}^{N\times N}$ such that $\nrm{A}\leq 1$, we can implement $e^{it A}$ in complexity $\bigOt{t\mu(A)\polylog(N/\eps)}$. This follows from the quantum Hamiltonian simulation algorithm of Low and Chuang that expects the input as a block-encoding. 
 Independently, Wang and Wossnig have proven a similar result, with $\nrm{A}_1\leq \sqrt{N}$ in place of~$\mu(A)$~\cite{WW18}.

\paragraph{Quantum singular value estimation.} In Ref.~\cite{kerenidis2016quantum}, Kerenidis and Prakash give a quantum algorithm for estimating the singular values of a matrix stored in a quantum data structure. In Theorem~\ref{thm:sve}, we present an algorithm for singular value estimation of a matrix given as a block-encoding. In the special case when the block-encoding is implemented by a quantum data structure, we recover the result of \cite{kerenidis2016quantum}. 

\paragraph{Quantum linear system solver.} Given a block-encoding of $A$, which is a unitary $U$ with $A/\alpha$ in the top left corner, for some $\alpha$, we can implement a block-encoding of $A^{-1}$ (Lemma \ref{lem:negative-power-restated}), which is a unitary $V$ with $A^{-1}/(2\kappa)$ in the top left corner, where $\kappa$ is an upper bound on\footnote{In the special case when $\nrm{A}=1$, $\kappa$ is an upper bound on the condition number of $A$, justifying the notation.} $\nrm{A^{-1}}$. Such a block-encoding can be applied to a state $\ket{b}$ to get $\frac{1}{2\kappa}\ket{0}^{\otimes a}(A^{-1}\ket{b})+\ket{0^\bot}$ for some unnormalized state $\ket{0^\bot}$ orthogonal to every state with $\ket{0}$ in the first $a$ registers. Performing amplitude amplification on this procedure, we can approximate the state $A^{-1}\ket{b}/\nrm{A^{-1}\ket{b}}$. However, this gives quadratic dependence on the condition number of $A$, whereas only linear dependence is needed for quantum linear systems solvers in the sparse-access input model, thanks to the technique of variable-time amplitude amplification. Using this technique, we are able to show, in Theorem~\ref{thm:qls-vtaa}, that given a block-encoding of $A$, we can approximate the state $A^{-1}\ket{b}/\nrm{A^{-1}\ket{b}}$ in time 
$$\bigOt{\alpha\kappa T_U\log^3(1/\eps)+\kappa T_b\log(1/\eps)},$$
where $T_U$ is the complexity of implementing the block-encoding of $A$, and $T_b$ is the cost of a subroutine that generates $\ket{b}$.
From this, it follows that, given $A$ in a quantum data structure, we can implement a QLS solver in complexity 
$$\bigOt{\kappa\mu(A)\polylog(N/\eps)},$$
which we prove in Theorem \ref{thm:QLS-data-structure}. 

Using our new technique of variable-time amplitude estimation, in Corollary \ref{cor:qls-vtae}, we also show how to compute a $(1\pm\eps)$-multiplicative estimate of $\nrm{A^{-1}\ket{b}}$ in complexity 
$$\bigOt{\frac{\kappa}{\eps}(\alpha T_U+T_b)}.$$

\paragraph{Negative powers of Hamiltonians.} Finally, we generalize our QLS solver to apply $A^{-c}$ for any $c\in(0,\infty)$. Using variable-time amplification techniques we show in Theorem~\ref{thm:negative-power-vtaa} that, if we have access to a unitary $U$ that block-encodes of $A$, such that $A/\alpha$ is the top left block of $U$, and $U$ can be implemented in cost $T_U$, then we can generate $A^{-c}\ket{b}/\nrm{A^{-c}\ket{b}}$ in complexity
$$\bigO{\alpha q\kappa^q T_U\log^3(1/\eps)+\kappa^qT_b\log(1/\eps)}$$
where $q=\max\{1,c\}$ and $T_b$ is the complexity of prepare $\ket{b}$.

\subsection{Application to least squares}\label{sec:intro-app}

\paragraph{Problem statements.} The problem of \emph{ordinary least squares (OLS)} is the following. 
Given data points $\{(\vec{x}^{(i)},y^{(i)})\}_{i=1}^M$ for $\vec{x}^{(1)},\dots,\vec{x}^{(M)}\in \mathbb{R}^N$ and $y^{(1)},\dots,y^{(M)}\in\mathbb{R}$, find $\vec{\beta}\in \mathbb{R}^N$ that minimizes:
\begin{equation}
\sum_{i=1}^M(y^{(i)}-\vec{\beta}^T\vec{x}^{(i)})^2.\label{eq:olsIntro}
\end{equation}
The motivation for this task is the assumption that the samples are obtained from some process such that at every sample $i$, $y^{(i)}$ depends linearly on $\vec{x}^{(i)}$, up to some random noise, so $y^{(i)}$ is drawn from a random variable $\vec{\beta}^T\vec{x}^{(i)}+E_i$, where $E_i$ is a random variable with mean 0, for example, a Gaussian. The vector $\vec{\beta}$ that minimizes \eqref{eq:olsIntro} represents a good estimate of the underlying linear function. We assume $M\geq N$ so that it is feasible to recover this linear function. 

%In particular, if $X\in \mathbb{R}^{M\times N}$ is the matrix with $\vec{x}^{(i)}$ as its $i$-th row, for each $i$, and $\vec{y}\in\mathbb{R}^M$ has $y^{(i)}$ as its $i$-th entry, assuming $X^TX$ is invertible, the optimal $\vec{\beta}$ satisfies:
%$$\vec{\beta}=(X^TX)^{-1}X^T\vec{y}.$$
%The assumption that $X^TX$ is invertible, or equivalently, that $X$ has rank $N$, is very reasonable, and is generally used in least squares algorithms. This is because $X^TX \in \mathbb{R}^{N\times N}$ is a sum of $M\geq N$ terms, and so it is unlikely to have rank less than $N$. 

%The problem of quantum OLS is the following. 
%\begin{problem}[quantum OLS]
%Given quantum-accessible data structures for 
%\end{problem}
%We will consider variants of this problem with slightly different data structures. 

We can generalize this task to settings in which certain samples are thought to be of higher quality than others, for example, because the random variables $E_i$ are not identical. We express this belief by assigning a positive weight $w_i$ to each sample, and minimizing
\begin{equation}
\sum_{i=1}^Mw_i(y^{(i)}-\vec{\beta}^T\vec{x}^{(i)})^2.\label{eq:wls}
\end{equation}
Finding $\vec{\beta}$ given $X$, $\vec{w}$ and $\vec{y}$ is the problem of \emph{weighted least squares (WLS)}.

We can further generalize to settings in which the random variables $E_i$ for sample $i$ are correlated. In the problem of \emph{generalized least squares (GLS)}, the presumed correlations in error between pairs of samples are given in a symmetric non-singular covariance matrix $\Omega$. We then want to find the vector $\vec{\beta}$ that minimizes
\begin{equation}
\sum_{i,j=1}^M\Omega_{i,j}^{-1}(y^{(i)}-\vec{\beta}^T\vec{x}^{(i)})(y^{(j)}-\vec{\beta}^T\vec{x}^{(j)}).\label{eq:gls}
\end{equation}

We will consider solving \emph{quantum} versions of these problems. Specifically, a \emph{quantum WLS solver} (resp. \emph{quantum GLS solver}) is given access to $\vec{y}\in\mathbb{R}^M$, $X\in \mathbb{R}^{M\times N}$, and positive weights $w_1,\dots,w_M$ (resp. $\Omega$), in some specified manner, and outputs an $\eps$-approximation of a quantum state $\sum_i\beta_i\ket{i}/\nrm{\vec{\beta}}$, where $\vec\beta$ minimizes the expression in \eqref{eq:wls} (resp. \eqref{eq:gls}).

\paragraph{Prior work.} Quantum algorithms for least squares fitting were first considered in \cite{wiebe2012quantum}. They considered query access to $X$, and a procedure for outputting $\ket{y}=\sum_iy_i\ket{i}/\nrm{\vec{y}}$, which we refer to as the sparse-access input model. They present a \emph{quantum} OLS solver, outputting a state proportional to a solution $\vec\beta$, %and estimate the quality of the fit 
that runs in time $\bigOt{\min\{\log(M)s^3\kappa^6/\eps,\log(M)s\kappa^6/\eps^2\}}$, where $s$ is the sparsity of $X$, and $\kappa$ the condition number. To compute a state proportional to $\vec\beta$, they first apply $X^T$ to $\ket{y}$ to get a state proportional to $X^T\vec y$, using techniques similar to \cite{harrow2009quantum}. They then apply $(X^TX)^{-1}$ using the quantum linear system solving algorithm of \cite{harrow2009quantum}, giving a final state proportional to $(X^TX)^{-1}X^T\vec y = X^+\vec y$. %The SWAP test is then used to estimate the quality of the fit. 

The approach of \cite{wiebe2012quantum} was later improved upon by \cite{LZ15}, who also give a quantum OLS solver in the sparse-access input model. Unlike \cite{wiebe2012quantum}, they apply $X^+$ directly, by using Hamiltonian simulation of $X$ and phase estimation to estimate the singular values of $X$, and then apply a rotation depending on the inverse singular value if it's larger than 0, and using amplitude amplification to de-amplify the singular-value-zero parts of the state. This results in an algorithm with complexity $\bigOt{s\kappa^3\log(M+N)/\eps^2}$. 

Several works have also considered quantum algorithms for least squares problems with a classical output. The first, due to Wang \cite{Wan17}, outputs the vector $\vec\beta$ in a classical form. 
The input model should be compared with the sparse-access model --- although $\vec y$ is given in classical random access memory, an assumption about the regularity of $\vec y$ means the quantum state $\ket{y}$ can be efficiently prepared. The algorithm also requires a regularity condition on the matrix $X$. The algorithm's complexity is $\mathrm{poly}(\log M,N,\kappa,\frac{1}{\eps})$. Like \cite{LZ15}, Wang's algorithm uses techniques from quantum linear system solving to apply $X^+$ directly to $\ket{y}$. To do this, Hamiltonian simulation of $X$ is accomplished via what we would call a block-encoding of $X$. This outputs a state proportional to $X^+\vec y$, whose amplitudes can be estimated one-by-one to recover $\vec\beta$. 

A second algorithm to consider least squares with a classical output is \cite{SSP16}, which does not output $\vec\beta$, but rather, given an input $\vec x$, outputs $\vec x^T\vec\beta$, thus predicting a new data point. This algorithm requires that $\vec x$, $\vec y$, and even $X$ be given as quantum states, and assumes that $X$ has low approximate rank. The algorithm uses techniques from quantum principal component analysis \cite{LMR13}, and runs in time $\bigO{\log(N)\kappa^2/\eps^3}$. %\snote{Could we also output a new data point by using the swap test to estimate $x^T X^+y$? We would also need to be able to encode $x$ as a quantum state.}

Recently, Kerenidis and Prakash introduced the quantum data structure input model \cite{kerenidis2016quantum}.  This input model fits data analysis tasks, because unlike in more abstract problems such as Hamiltonian simulation, where the input matrix may be assumed to be sparse and well-structured so that we can hope to have implemented  efficient subroutines to find the non-zero entries of the rows and columns, the input to least squares is generally noisy data for which we may not assume any such structure. 
In Ref.~\cite{kerenidis:quantumgraddescent}, 
%they used a new quantum gradient descent technique, 
utilizing this data structure, they solve the quantum version of the \emph{weighted} least squares problem. Their algorithm assumes access to quantum data structures storing $X$, or some closely related matrix (see Section \ref{sec:data-structure}), $W=\mathrm{diag}(\vec{w})$, and $\vec y$, and have running time $\bigOt{\frac{\kappa^6\mu}{\eps}\polylog(MN)}$, where $\kappa$ is the condition number of $X^T\sqrt{W}$, and $\mu$ is some prior choice of $\nrm{X^T\sqrt{W}}_F$ or $\sqrt{s_{2p}(X^T\sqrt{W})s_{2(1-p)}(X^T\sqrt{W})}$ for some $p\in [0,1]$\footnote{We stress that our algorithms do not achieve the minimum possible $\mu$, but rather, we need to store the input in QROM with a particular $\mu$ in mind. We might more accurately describe the quantum data structure input model as a \emph{family} of input models, parametrized by $\mu$.}.
%\footnote{$s_q(A)^{1/q}$ is defined as the maximum $q$-norm of any row of $A$.\anote{Should we remove this duplicate comment?}}
Note that the choice of $\mu$ impacts the way $X$ must be encoded, leading to a family of algorithms requiring slightly different encodings of the input. %We discuss this more in \snote{somewhere}.

\paragraph{Our results.} We give quantum WLS and GLS solvers in the model where the input is given as a block-encoding. As a special case, we get quantum WLS and GLS solvers in the quantum data structure input model of Kerenidis and Prakash. Our quantum WLS solver has complexity 
$$\bigOt{\mu\kappa\polylog(MN/\eps)}.\footnote{For comparison with the results of \cite{kerenidis:quantumgraddescent}, we assume $\nrm{\sqrt{W}X}\leq 1$, and the normalized residual error of the fit is bounded by a constant. For details see Section \ref{sec:WLS}.}$$
This is a 6-th power improvement in the dependence on $\kappa$, and an exponential improvement in the dependent on $1/\eps$ as compared with the quantum WLS solver of \cite{kerenidis:quantumgraddescent}. Our quantum WLS solver is presented in Section \ref{sec:WLS}, Theorem \ref{thm:WLS}.  Since our algorithm is designed via the block-encoding framework, we also get an algorithm in the sparse-access input model with the same complexity, where $\mu$ is replaced by $s$, the sparsity. As a special case we get a quantum OLS solver, which compares favourably to previous quantum OLS solvers in the sparse-access model \cite{wiebe2012quantum,LZ15} in having a linear dependence on $\kappa$, and a polylog$(1/\eps)$ dependence on the precision. However, these previous results rely on QLS solver subroutines which have since been improved, so their complexity can also likely be improved. 

In addition, we give the first quantum GLS solver. We first show how to implement a GLS solver when the inputs are given as block-encodings (Theorem \ref{thm:GLSBlockGen}). As a special case, we get a quantum GLS solver in the quantum data structure input model (Corollary~\ref{cor:GLSData}), with complexity 
$$\bigOt{\kappa_X\kappa_{\Omega}\left(\mu_X+\mu_\Omega\kappa_{\Omega}\right)\polylog(MN/\eps)},\footnote{Assuming $\nrm{X}$ and $\nrm{\Omega}$ are bounded by $1$, and the normalized residual error of the fit is bounded by a constant. For details see Corollary~\ref{cor:GLSData}.}$$
where $\kappa_\Omega$ and $\kappa_X$ are the condition numbers of $\Omega$ and $X$, and $\mu_\Omega$ and $\mu_X$ are quantities that depend on how the input is given, as in the case of WLS. 
As before, since our algorithm is designed via the block-encoding framework, we also get an algorithm in the sparse-access input model with the same complexity, replacing $\mu_X$ and $\mu_\Omega$ with the respective sparsities of $X$ and $\Omega$. 
For details, see Section~\ref{sec:GLS}.

\subsection{Application to estimating electrical network quantities}

\paragraph{Problem statements.}
An electrical network is a weighted graph $G(V,E,w)$ of $|V|=N$ vertices and $|E|=M$ edges, with the weight of each edge $w_e$ being the inverse of the resistance (conductance) between the two underlying nodes. Given an external current input to the network, represented by a vector spanned by the vertices of the network, we consider the problem of estimating the power dissipated in the network, up to a multiplicative error $\eps$. A special case of this, where the external current just has a unit entering at vertex $s$ and leaving at vertex $t$ is the effective resistance between $s$ and $t$. 

The electrical networks we consider here can be dense, i.e.\ the maximum degree of the network, $d$, can scale with $N$. In addition to considering the sparse-access input model, we consider the {quantum data structure input model}, i.e.\ we assume that certain matrices representing the network are stored in a  quantum-accessible data structure. 
%In particular, for the aforementioned problems, we assume access to the weighted vertex-edge incidence matrix of $G$ in a quantum data structure. %entries of the incidence matrix of $G$ and to the weights of each edge $w_e$ of $G$. %Given these requirements, we estimate the aforementioned quantities to a multiplicative error $\bigO{\eps}$.

\paragraph{Prior work.}

Electrical networks have previously been studied in several quantum algorithmic contexts. Belovs~\cite{belovs2013quantum} established a relationship between the problem of finding a marked node in a graph by a quantum walk and the effective resistance of the graph. Building on this work, several other quantum algorithms have been developed \cite{belovs2013time, montanaro2015quantum, moylett2017quantum}. %Jeffery and Kimmel~\cite{Jeffery2017algorithmsgraph} showed that the problem of determining whether two nodes of a graph are connected is related to finding the effective resistance between them.

Ref.~\cite{IJ15} gave a quantum algorithm for estimating the effective resistance between two nodes in a network when the input is given in the \emph{adjacency query model}. This allows one to query the $ij$-th entry of the adjacency matrix, in contrast to what we are referring to as the sparse-access model, in which one can query the $i$-th nonzero entry of the $j$-th row. They considered the unweighted case, where all conductances are in $\{0,1\}$, and showed how to estimate the effective resistance between two nodes, $R_{s,t}$, to multiplicative accuracy $\eps$ in complexity:
$$\bigOt{\min\left\{\frac{N\sqrt{R_{s,t}}}{\epsilon^{3/2}},\frac{N\sqrt{R_{s,t}}}{\epsilon\sqrt{d\lambda}}\right\}},$$
where $\lambda$ is the spectral gap of the normalized Laplacian. 

Wang~\cite{wang2017efficient} presented two quantum algorithms for estimating certain quantities in large sparse electrical networks: one based on the quantum linear system solver for sparse matrices by \cite{childs2015quantum}, the other based on quantum walks. Using both of these algorithms, Wang estimates the power dissipated, the effective resistance between two nodes, the current across an edge and the voltage between two nodes. Wang's algorithms work in the {sparse-access input model}, in which the algorithm accesses the input by querying the $i$-th neighbour of the $j$-th vertex, for $i\in [d]$ and $j\in [N]$, where $d$ is the maximum degree. 

In particular, we focus on the problems of approximating the power dissipated in a network and the effective resistance between two nodes. If the maximum edge weight of the network is $w_{\mathrm{max}}$ and $\lambda$ is the spectral gap of the normalized Laplacian of the network, then Wang's first algorithm for solving these tasks is based on solving a certain quantum linear system and then estimating the norm of the output state by using amplitude estimation. The resulting complexity is 
$$\bigOt{\dfrac{w_{\mathrm{max}}d^2}{\lambda\epsilon}\mathrm{polylog}(N)}.$$ 
On the other hand, the quantum walk based algorithm by Wang solves these problems in complexity 
$$\widetilde{\mathcal{O}}\left(\min\left\{\dfrac{\sqrt{w_{\mathrm{max}}}d^{3/2}}{\lambda\epsilon}\right.\right.,
    \left.\left.\dfrac{w_{\mathrm{max}}\sqrt{d}}{\lambda^{3/2}\epsilon}\right\}\mathrm{polylog}(N)\right).$$

\paragraph{Our results.} Using the block-encoding framework, we give algorithms that improve on Wang's sparse-access input model algorithms for certain parameters, and in addition, we give the first quantum algorithms for estimating the effective resistance between two nodes and the power dissipated by the network in the quantum data structure input model, where the weighted vertex-edge incidence matrix of the electrical network, as well as the input current vector, are given as a quantum data structure. %We consider networks that may be dense, so the maximum degree $d=d(N)$ may scale with $N$. 
%For estimating the power dissipated and effective resistance between two nodes, we consider an electrical network that may be dense. As such the maximum degree of the network, $d_G(N)$, scales with $N$. We assume that the entries of the incidence matrix of the underlying electrical network can be stored in the quantum data structure and that the weight of each edge of the graph is known. 
%Furthermore, given access to the entries of the input current vector $\mathbf{i}_{\mathrm{ext}}$, we can construct the quantum state $\ket{i_\mathrm{ext}}$ by using the quantum data structure.

Our algorithms are based on the quantum-linear-system-solver-based algorithms of Wang. As described in Section \ref{sec:electrical_networks}, we replace the quantum linear systems algorithm used by Wang, which assumes sparse access to the input \cite{childs2015quantum}, with the QLS solver that we develop here, which assumes the input is given as a block-encoding. We also replace standard amplitude estimation with variable time amplitude estimation. As such, we are not only able to improve upon Wang's algorithms in the sparse-access model, but also provide new algorithms for the same problem in the quantum data structure model. 

In Corollary \ref{cor:dissipated-sparse}, we prove that in the sparse-access input model, there is a quantum algorithm for estimating the dissipated power (or as a special case, the effective resistance) to an $\eps$-multiplicative error in complexity
$$\bigOt{\dfrac{d^{3/2}}{\epsilon}\sqrt{\dfrac{ w_{\max}}{\lambda}}\mathrm{polylog}(N)}.$$

Thus our algorithm always outperforms the linear-systems based algorithm by Wang. As compared to the quantum-walk based-algorithm:

\begin{itemize}
\item[(i)] When $d<\sqrt{w_{\mathrm{max}}/\lambda}$, we have a speedup of $\bigOt{1/\sqrt{\lambda}}$.

\item[(ii)] When $d>\sqrt{w_{\mathrm{max}}/\lambda}$, we have a speedup as long as $\sqrt{w_{\max}/\lambda} < d < \sqrt{w_{\max}}/\lambda$. 
\end{itemize}

In comparison to the algorithm of Ref.~\cite{IJ15}, our algorithm (in the sparse-access input model) has a speedup as long as $R_{s,t}\gg d^4/N^2$, although we note that these results are not directly comparable, as they assume very different input models.

In Corollary \ref{cor:dissipated-qram1}, we give the first quantum algorithm for estimating the dissipated power (or as a special case, the effective resistance) in the quantum data structure model, with complexity:
$$
\bigOt{\dfrac{1}{\eps}\sqrt{\dfrac{dw_{\mathrm{max}}N}{\lambda}}}.
$$ 
This algorithm outperforms the quantum-linear-system-based algorithms by Wang for both these tasks when the maximum degree of the electrical network is $\Omega(N^{1/3})$. On the other hand, as compared to the quantum walk based algorithm:

\begin{itemize}
\item[(i)] When $d<\sqrt{w_{\mathrm{max}}/\lambda}$, we have a speedup as long as $\lambda < d^2/N$.

\item[(ii)] When $d>\sqrt{w_{\mathrm{max}}/\lambda}$, we have a speedup as long as $\lambda<\sqrt{w_{\mathrm{max}}/N}.$ 
\end{itemize}

In comparison to the algorithm for estimating effective resistance in the adjacency query model from Ref.~\cite{IJ15}, we get an improvement whenever $\lambda = \Omega(1)$ and $R_{s,t}\gg d^2/N$. 

We emphasize that our algorithm in Corollary \ref{cor:dissipated-qram1} is not directly comparable to any of these previous results, since the input models are different. 

\section{Preliminaries}

\subsection{Notation}

We begin by introducing some notation. 
For $A\in\mathbb{C}^{M\times N}$, define $\overline{A}\in\mathbb{C}^{(M+N)\times (M+N)}$ by
\begin{equation}
\label{eq:symmetrized-A}
\overline{A}=\left[\begin{array}{cc}0 & A\\ A^\dagger & 0 \end{array}\right].
\end{equation}
For many applications where we want to simulate $A$, or a function of $A$, it suffices to simulate~$\overline{A}$.

\vskip10pt
\noindent For $A\in\mathbb{C}^{N\times N}$, we define the following norms:
\begin{itemize}
\item Spectral norm: $\nrm{A} = \max\{\nrm{A\ket{u}}:\nrm{\ket{u}}=1\}$
\item Frobenius norm: $\nrm{A}_F =\sqrt{\sum_{i,j}A_{i,j}^2}$
\end{itemize}

\vskip10pt
\noindent For $A\in \mathbb{C}^{M\times N}$, let $A_{i,\cdot}$ denote the $i$-th row of $A$, $\mathrm{row}(A)$ the span of the rows of $A$, and $\mathrm{col}(A)=\mathrm{row}(A^T)$. Define the following:
\begin{itemize}
\item For $q\in [0,1]$, $s_q(A)=\max_{i\in M}\nrm{A_{i,\cdot}}_q^q$
\item For $p\in [0,1]$, $\mu_p(A)=\sqrt{s_{2p}(A)s_{2(1-p)}(A^T)}$
\item $\sigma_{\min}(A)=\min\{\nrm{A\ket{u}}:\ket{u}\in\mathrm{row}(A),\nrm{\ket{u}}=1\}$ (the smallest non-zero singular value)
\item $\sigma_{\max}(A)=\max\{\nrm{A\ket{u}}:\nrm{\ket{u}}=1\}$ (the larges singular value)
\item $\nrm{A}=\nrm{\overline{A}}=\sigma_{\max}(A)$
\end{itemize}

For $A\in \mathbb{C}^{M\times N}$ with singular value decomposition $A=\sum_i\sigma_i\ket{u_i}\bra{v_i}$, we define the Moore-Penrose pseudoinverse of $A$ by $A^+=\sum_i\sigma_i^{-1}\ket{v_i}\bra{u_i}$. 
For a matrix $A$, we let $A^{(p)}$ be defined $A_{i,j}^{(p)}=(A_{i,j})^p$.

\subsection{Quantum-accessible data structure}\label{sec:data-structure}

We will consider the following data structure, studied in \cite{kerenidis2016quantum}. We will refer to this data structure as a \emph{quantum-accessible} data structure, because it is a classical data structure, which, if stored in QROM, is addressable in superposition, but needn't be able to store a quantum state, facilitates the implementation of certain useful quantum operations. In our complexity analysis, we consider the cost of accessing a QROM of size $N$ to be $\polylog(N)$. Although this operation requires order $N$ gates \cite{GLM07,ArunachalamOnRobustnessBucketBrig15}, but the gates can be arranged in parallel such that the depth of the circuit indeed remains $\polylog(N)$.

The following is proven in \cite{kerenidis2016quantum}. We include the proof for completeness. 
\begin{theorem}[Implementing quantum operators using an efficient data structure \cite{kerenidis2016quantum}]
\label{theorem:data_structure}
Let\\ $A\in\mathbb{R}^{M\times N}$ be a matrix with $A_{ij}\in \mathbb{R}$ being the entry of the $i$-th row and the $j$-th column. If $w$ is the number of non-zero entries of $A$, then there exists a data structure of size\footnote{\label{foot:bitCount}Here, for simplicity we assume that we can store a real number in $1$ data register, however more realistically we should actually count the number of bits, incurring logarithmic overheads. Also in this theorem we assign unit cost for classical arithmetic operations.} $\bigO{w\log^2(MN)}$ that, given the entries $(i,j,A_{ij})$ in an arbitrary order, stores them such that time\textsuperscript{\emph{\ref{foot:bitCount}}} taken to store each entry of $A$ is $\mathcal{O}\left(\log(MN)\right)$. Once this data structure has been initiated with all non-zero entries of $A$, there exists a quantum algorithm that can perform the following maps with $\eps$-precision in $\mathcal{O}\left(\text{polylog}(MN/\eps)\right)$ time:
\[\widetilde{U}:\ket{i}\ket{0}\mapsto \ket{i}\dfrac{1}{\nrm{A_{i,\cdot}}}\sum_{j=1}^{N}A_{i,j}\ket{j}=\ket{i,A_i},\]
\[\widetilde{V}:\ket{0}\ket{j}\mapsto \dfrac{1}{\nrm{A}_F}\sum_{i=1}^{M}\nrm{A_{i,\cdot}}\ket{i}\ket{j}=\ket{\widetilde{A},j},\]
where 
$\ket{A_{i,\cdot}}$ is the normalized quantum state corresponding to the $i$-th row of $A$ and $\ket{\widetilde{A}}$ is a normalized quantum state such that $\braket{i}{\widetilde{A}}=\nrm{A_{i,\cdot}}$, i.e.\ the norm of the $i$-th row of $A$.   

In particular, given a vector $\vec{v}\in \mathbb{R}^{M\times 1}$ stored in this data structure, we can generate an $\eps$-approximation of the superposition $\sum_{i=1}^Mv_i\ket{i}/\nrm{\vec{v}}$ in complexity $\polylog(M/\eps)$. 
\end{theorem}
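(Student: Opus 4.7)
My plan is to prove the theorem by explicitly constructing the Kerenidis--Prakash tree data structure and then showing that its partial-sum contents allow the required superpositions to be prepared by a sequence of controlled single-qubit rotations. The data structure consists of one binary tree per non-empty row of $A$, plus one additional ``norms tree'' over rows. For row $i$, the tree has a leaf for every $j$ with $A_{ij}\neq 0$ storing the pair $(A_{ij}^2,\sgn{A_{ij}})$, and every internal node stores the sum of the values held in the leaves of its subtree; we keep only nodes on paths from non-zero leaves to the root, so row $i$ contributes $\bigO{\nrm{A_{i,\cdot}}_0 \log N}$ nodes. The norms tree is built analogously over the $M$ row-norm squares $\nrm{A_{i,\cdot}}^2$ (the roots of the row trees). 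Each node can be addressed in QROM by the pair (tree id, path from root), so the total storage, including the $\bigO{\log(MN)}$ bits per numerical entry, is $\bigO{w\log^2(MN)}$.

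For insertion of a triple $(i,j,A_{ij})$: we locate/allocate the leaf for $j$ in the tree for row $i$, write $A_{ij}^2$ together with $\sgn{A_{ij}}$, and walk up the tree adding the induced delta to each ancestor ($\bigO{\log N}$ work); we then update the leaf for $i$ in the norms tree by the same delta and propagate upward in $\bigO{\log M}$ work, for a total of $\bigO{\log(MN))}$ per update. Since nodes are created lazily, after all insertions the sizes match the bound above.

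To implement $\widetilde{U}$, I would process the $\log N$ bits of the column index $j$ from the root downward. Conditioned on $\ket{i}$ and on the bits of $j$ already prepared (which identify an internal node $v$ of row $i$'s tree), we query the stored subtree sums $L_v,R_v$ of $v$'s two children and perform the controlled rotation
\[ \ket{0}\mapsto \sqrt{L_v/(L_v+R_v)}\ket{0}+\sqrt{R_v/(L_v+R_v)}\ket{1} \]
on the next qubit of the $j$-register. After $\log N$ such rotations the amplitude on $\ket{i,j}$ is $|A_{ij}|/\nrm{A_{i,\cdot}}$; a final controlled-$Z$-type step, conditioned on the sign bit stored at the leaf, introduces $\sgn{A_{ij}}$, yielding exactly $\ket{i}\ket{A_{i,\cdot}}$. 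The map $\widetilde{V}$ is implemented by the identical procedure on the norms tree (whose leaves store $\nrm{A_{i,\cdot}}^2$ and are already non-negative, so no signs are needed), producing $\sum_i \nrm{A_{i,\cdot}}\ket{i}\ket{j}/\nrm{A}_F$. Each step uses $\bigO{1}$ QROM queries of cost $\polylog(MN)$ and one rotation, so the total gate count is $\polylog(MN)$.

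The main technical point, and the one I expect to be the most delicate, is the $\eps$-precision claim: the stored subtree sums are read out with finitely many bits, and the rotation angles $\arctan(\sqrt{R_v/L_v})$ must be computed and synthesized to finite precision. I would budget $\bigO{\log(MN/\eps)}$ bits of precision per rotation, noting that $\log(MN)$ rotations compose with additive error in operator norm, so it suffices to synthesize each rotation to error $\eps/\log(MN)$; this absorbs into the $\polylog(MN/\eps)$ runtime. The last statement, preparation of $\ket{\vec v}$ up to error $\eps$ for a stored vector $\vec v\in\R^M$, is the special case $N=1$ of $\widetilde{V}$, so no separate argument is needed.
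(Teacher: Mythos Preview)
Your proposal is correct and follows essentially the same construction as the paper: an array of binary trees (one per row) storing $A_{ij}^2$ and $\sgn{A_{ij}}$ at the leaves with subtree sums at internal nodes, an extra ``norms tree'' over the row roots, insertion by leaf-to-root propagation, and state preparation by a top-down sequence of controlled single-qubit rotations with signs applied at the leaves. One very minor quibble: the vector-preparation clause is not literally the $N=1$ case of $\widetilde{V}$, since $\widetilde{V}$ only encodes the nonnegative row norms and would lose the signs of $\vec v$; it is rather the single-row case of $\widetilde{U}$ (equivalently, your tree-plus-rotation procedure with the sign step included), which is what the paper also does.
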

\begin{proof}
The idea is to have a classical data structure to which the quantum algorithm has access. The data structure includes an array of $M$ full binary trees, each having $N$ leaves. For the incoming entry $(A_{ij},i,j)$, the tuple $\left(A_{i,j}^2,\mathrm{sign}(A_{ij})\right)$ is stored in leaf $j$ of binary tree $B_i$. An internal node $l$ stores the sum of the entries of the leaves in the subtree rooted at $l$. In this way, the root of binary tree $B_i$ contains the entry $\sum_{j=1}^N A_{i,j}^2$. Let the value of any internal node $l$ of $B_i$, at depth $d$ be denoted by $B_{i,l}$. Then if $j_b$ represents the $b$-th bit of $j$, then 
$$B_{i,l}=\sum_{\substack{j_1...j_d=l;\\
j_{d+1}...j_{\log(N)}\in\{0,1\}}} A_{i,j}^2.$$
This implies that the first $d$ bits of $j$ written in binary is fixed to $l$, indicating that we are at depth $d$. So whenever a new entry comes in, all nodes of the binary tree
corresponding to the entry gets updated. In the end the root stores $\nrm{A_{i,\cdot}}^2$.
As there are at most $\mathcal{O}(\log N)$ nodes from the leaf to the root of any binary tree and to find the address of each entry takes $\mathcal{O}(\log(MN))$, inserting each entry into this data structure takes $\mathcal{O}(\log^2\left(MN\right))$ time. If there are $w$ non-zero entries in $A$, then the memory requirement of this data structure is $\mathcal{O}(w\log^2(MN))$, because each entry can cause $\ceil{\log(N)}$ new nodes to be added, each of which require $\mathcal{O}(\log(MN))$ registers.

To construct the unitary $\widetilde{U}$ in $\mathcal{O}(\text{polylog}(MN/\eps))$ time, quantum access to this data structure is required. A sequence of controlled-rotations is performed, similarly to the ideas of~\cite{GroverRudolf02}. For any internal node $B_{i,l}$ at depth $d$, conditioned on the first register being $\ket{i}$ and the first $d$ qubits of the second register being equal to $l$, the following rotation is made to the $(d+1)$-th qubit
\begin{equation*}
%\label{eq:controlled_rotation_data_structure}
\ket{i}\ket{l}\ket{0....0}\mapsto\ket{i}\ket{l}\dfrac{1}{\sqrt{B_{i,l}}}\left(\sqrt{B_{i,2l}}\ket{0}+\sqrt{B_{i,2l+1}}\ket{1}\right)\ket{0....0}.
\end{equation*}
For the last qubit, i.e.\ the $\ceil{\log(n)}$-th qubit, the sign of the entry is also included
\begin{equation*}
%\label{eq:controlled_rotation_data_structure_leaf}
\ket{i}\ket{l}\ket{0}\mapsto\ket{i}\ket{l}\dfrac{1}{\sqrt{B_{i,l}}}\left(\mathrm{sign}(a_{2l})\sqrt{B_{i,2l}}\ket{0}+\mathrm{sign}(a_{2l+1})\sqrt{B_{i,2l+1}}\ket{1}\right).
\end{equation*}
So performing $\widetilde{U}$ requires $\ceil{\log(N)}$ controlled rotations and for each of which two queries to the classical database is made to query the children of the node under consideration. Discretization errors can be nicely bounded and one can see that an $\eps$-approximation of $\widetilde{U}$ can be implemented in $\mathcal{O}\left(\text{polylog}(MN/\eps)\right)$ time.

To implement $\widetilde{V}$, we require an additional binary tree $\mathcal{B}$ having $M$ leaf nodes. Leaf $j$ stores the entry of the root of binary tree $B_j$. As before, all internal nodes $l$ store the sum of the entries of the subtree rooted at $l$. So just as before, by applying $\ceil{\log(M)}$ controlled rotations, each of which queries the database twice, we can implement an $\eps$-approximation of $\widetilde{V}$ in $\mathcal{O}(\text{polylog}(MN/\eps))$ time.
\\~\\
\textit{Preparation of quantum states:} Note that this data structure is also useful for preparing a quantum state when the entries of a classical vector arrive in an online manner. Formally speaking, if $\vec{v}\in\mathbb{R}^M$ is a vector with $i$-th entry $v_i$, then using the quantum-accessible data structure described above, one can prepare the quantum state
$\ket{\vec v}=\frac{1}{\nrm{\vec v}}\sum_i v_i\ket{i}$. 
The idea is similar to the case of a matrix. One can store the tuple $(v_i^2,\mathrm{sign}(v_i))$ in the $i$-th leaf of a binary tree. As before, any internal node $l$ stores the sum of squares of the entries of the subtree rooted at $l$. So, we can use the same unitary $\widetilde{U}$ as before to obtain $\ket{\vec v}$.
\end{proof}

As a corollary, we have the following, which allows us to generate alternative quantum state representations of the rows of $A$, as long as we have stored $A$ appropriately beforehand:\vskip-3mm
\begin{corollary}
If $A^{(p)}$ is stored in a quantum data structure, then there exists a quantum algorithm that can perform the following map with $\eps$-precision in $\polylog(MN/\eps)$ time:
$$\ket{i}\ket{0}\mapsto \ket{i}\frac{1}{s_{2p}(A)}\sum_{j=1}^NA_{i,j}^p\ket{j}.$$
\end{corollary}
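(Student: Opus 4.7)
The plan is to reduce the statement directly to Theorem \ref{theorem:data_structure} applied to the matrix $A^{(p)}$, followed by a simple controlled-rotation to switch the per-row normalization of $\widetilde U$ to a uniform normalization in terms of $s_{2p}(A)$. Since the data structure holds $A^{(p)}$ (whose $(i,j)$-entry is $A_{i,j}^p$), Theorem \ref{theorem:data_structure} gives in cost $\polylog(MN/\eps)$ a unitary that implements
\[
\ket{i}\ket{0}\ \mapsto\ \ket{i}\,\frac{1}{\nrm{A^{(p)}_{i,\cdot}}}\sum_{j=1}^{N}A_{i,j}^{p}\ket{j},
\qquad\text{with}\qquad \nrm{A^{(p)}_{i,\cdot}}^{2}=\sum_{j}A_{i,j}^{2p}.
\]
The quantity $\nrm{A^{(p)}_{i,\cdot}}^{2}$ is precisely what is stored at the root of the $i$-th binary tree in the data structure for $A^{(p)}$, so it can be read out with a single QROM lookup. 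By definition $\nrm{A^{(p)}_{i,\cdot}}^{2}\leq s_{2p}(A)$, so the ratio $\nrm{A^{(p)}_{i,\cdot}}/\sqrt{s_{2p}(A)}$ (or, with the extra factor of $1/\sqrt{s_{2p}(A)}$, the ratio $\nrm{A^{(p)}_{i,\cdot}}/s_{2p}(A)$ in the regime $s_{2p}(A)\geq 1$) is at most one and thus is a legal rotation angle.

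Concretely, I would append an ancilla qubit in $\ket{0}$ before running $\widetilde U$, and then perform a controlled single-qubit rotation $R_i$ conditioned on $\ket{i}$, with $\sin\theta_i = \nrm{A^{(p)}_{i,\cdot}}/\sqrt{s_{2p}(A)}$ computed on the fly from the value cached at the root of tree $B_i^{(p)}$ (using $s_{2p}(A)$, which is just the maximum of all these root values, as a precomputed classical constant). Interchanging the order of $\widetilde U$ and $R_i$ gives, on the $\ket{0}$ branch of the ancilla, the state
\[
\ket{i}\,\frac{1}{\sqrt{s_{2p}(A)}}\sum_{j=1}^{N}A_{i,j}^{p}\ket{j},
\]
which is the desired block of the stated isometry; the orthogonal $\ket{1}$ branch is garbage. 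Throughout, only one additional QROM lookup and $\bigO{\polylog(MN/\eps)}$ classical arithmetic for the rotation angle are needed on top of $\widetilde U$, yielding the claimed $\polylog(MN/\eps)$ total runtime.

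The only non-routine point is to confirm that the $\eps$-precision in $\widetilde U$ propagates unchanged to the composed map: the additional controlled rotation is implemented with $\polylog(1/\eps)$ bits of precision in the angle (a standard synthesis), so its contribution to the overall trace distance is $\bigO{\eps}$, matching the precision inherited from $\widetilde U$. This is the same discretization bookkeeping already used at the end of the proof of Theorem \ref{theorem:data_structure}, so I would simply appeal to it rather than redo the estimates. No step looks like a genuine obstacle; the content of the corollary is really just the observation that storing $A^{(p)}$ in the data structure puts all the row $2p$-norms at hand, so the row-dependent normalization inherent in $\widetilde U$ can be traded for the uniform $s_{2p}(A)$ normalization at essentially no additional cost.
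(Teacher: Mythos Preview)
Your approach is correct and is exactly what the paper has in mind: the corollary is stated without proof, as an immediate consequence of Theorem~\ref{theorem:data_structure} applied to the matrix $A^{(p)}$, together with one extra controlled rotation (using the root value $\nrm{A^{(p)}_{i,\cdot}}^{2}$ and the classical constant $s_{2p}(A)$) to replace the per-row normalization by the uniform one. You also correctly note that the natural normalization is $1/\sqrt{s_{2p}(A)}$ (as used downstream in Lemma~\ref{lem:kp}); the $1/s_{2p}(A)$ in the displayed statement is a typo.
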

In Section~\ref{sec:tools} we show that using these techniques one can efficiently implement a block-encoding of the matrix $A$, as defined below. 
    
\subsection{Unitary block-encoding of matrices}
    We will take advantage of recent techniques in Hamiltonian simulation~\cite{LowChuangQubitization2016,LowChuangHamSpectraAmp2017}, which enable us to present our results in a nice unified framework. The presented techniques give rise to exponential improvements in the dependence on precision in several applications. In this framework we will represents a subnormalised matrix as the top-left block of a unitary.
    $$ U=\left(\begin{array}{cc} A/\alpha & . \\ . & .\end{array}\right) $$
	Following the exposition of Gilyén et al.~\cite{gilyenBlockMatrices} we use the following definition: 

	\begin{definition}[Block-encoding]\label{def:standardForm}
		Suppose that $A$ is an $s$-qubit operator, $\alpha,\eps\in\R_+$ and $a\in \mathbb{N}$. Then we say that the $(s+a)$-qubit unitary $U$ is an $(\alpha,a,\eps)$-block-encoding\footnote{Note that since $\nrm{U}=1$ we necessarily have $\nrm{A}\leq \alpha+\eps$.} of $A$, if 
		$$ \nrm{A - \alpha(\bra{0}^{\otimes a}\otimes I)U(\ket{0}^{\otimes a}\otimes I)}\leq \eps. $$
	\end{definition}

 	Block-encodings are really intuitive to work with. For example, one can easily take the product of two block-encoded matrices by keeping their ancilla qubits separately. The following lemma shows that the errors during such a multiplication simply add up as one would expect, and the block-encoding does not introduce any additional errors.
	
	\begin{lemma}[Product of block-encoded matrices]\label{lemma:disjointAncillaProduct}
		If $U$ is an $(\alpha,a,\delta)$-block-encoding of an $s$-qubit operator $A$, and $V$ is a $(\beta,b,\eps)$-block-encoding of an $s$-qubit operator $B$ then\footnote{In the expression $(I_b\otimes U)(I_a\otimes V)$, the identity operator $I_b$ should be seen as acting on the ancilla qubits of $V$, and $I_a$ on those of $U$.}
%s act on each other's ancilla qubits, which is hard to express properly using simple tensor notation, but the reader should read this tensor product this way.} 
$(I_b\otimes U)(I_a\otimes V)$ is an $(\alpha\beta,a+b,\alpha\eps+\beta\delta)$-block-encoding of $AB$.
	\end{lemma}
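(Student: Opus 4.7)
The plan is to separate the argument into two movements: first identify the scaled top-left ``block'' of the product $(I_b\otimes U)(I_a\otimes V)$, and then bound how far it sits from $AB$ by a triangle-inequality calculation.

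For the block-identification step, I would organize the full Hilbert space as $\mathcal{H}_a\otimes \mathcal{H}_b\otimes \mathcal{H}_s$, where $\mathcal{H}_a$ and $\mathcal{H}_b$ are the disjoint ancilla registers of $U$ and $V$ respectively and $\mathcal{H}_s$ is the system. Define $\tilde{A} := \alpha(\bra{0}^{\otimes a}\otimes I)U(\ket{0}^{\otimes a}\otimes I)$ and $\tilde{B} := \beta(\bra{0}^{\otimes b}\otimes I)V(\ket{0}^{\otimes b}\otimes I)$; by the block-encoding hypothesis $\nrm{A-\tilde{A}}\le \delta$ and $\nrm{B-\tilde{B}}\le \eps$, and since $\tilde{A}/\alpha$ and $\tilde{B}/\beta$ are themselves blocks of unitaries, $\nrm{\tilde{A}}\le \alpha$ and $\nrm{\tilde{B}}\le \beta$. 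The key observation is that the two block-encodings act on disjoint ancilla registers, so applying $I_a\otimes V$ to $\ket{0^a}\otimes\ket{0^b}\otimes\ket{\psi}$ produces $\ket{0^a}\otimes\bigl(\ket{0^b}\otimes\tfrac{\tilde{B}}{\beta}\ket{\psi} + \ket{\perp}\bigr)$, where $\ket{\perp}$ is orthogonal to $\ket{0^b}$ on the $b$-register. Since $I_b\otimes U$ acts trivially on $\mathcal{H}_b$, the $\ket{\perp}$ piece remains orthogonal to $\ket{0^b}$ and so is annihilated by the final projection onto $\ket{0^{a+b}}$; on the surviving term, projecting $U$ onto $\ket{0^a}$ yields $\tfrac{\tilde{A}}{\alpha}$ on the system. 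Multiplying through by $\alpha\beta$, the scaled top-left block of the product is exactly $\tilde{A}\tilde{B}$.

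For the error bound, I would use the telescoping identity $AB-\tilde{A}\tilde{B} = \tilde{A}(B-\tilde{B}) + (A-\tilde{A})B$ together with submultiplicativity of the operator norm, giving $\nrm{AB-\tilde{A}\tilde{B}}\le \nrm{\tilde{A}}\,\nrm{B-\tilde{B}} + \nrm{A-\tilde{A}}\,\nrm{B} \le \alpha\eps + \beta\delta$, where we appeal to $\nrm{\tilde{A}}\le \alpha$ (from unitarity of $U$) and the assumed normalization $\nrm{B}\le \beta$ for the block-encoding of $B$. (Were one to be pedantic and use only $\nrm{B}\le \beta+\eps$ from the definition, one would pick up a harmless second-order term $\delta\eps$.)

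The main ``obstacle'' is purely notational: keeping straight the tensor-product ordering in step one, which the footnote of the lemma statement fixes by declaring $I_b$ to act on $V$'s ancillas and $I_a$ on $U$'s. Once that convention is adopted, the disjointness of the two ancilla registers makes the block multiplication a short mechanical computation, and the norm bound is then completely standard.
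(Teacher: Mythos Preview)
Your proposal is correct and essentially identical to the paper's proof: the paper also defines $\tilde{A}$ and $\tilde{B}$ as the scaled blocks, observes that the block of the product is $\tilde{A}\tilde{B}$, and then uses the same telescoping $(A-\tilde{A})B+\tilde{A}(B-\tilde{B})$ together with $\nrm{\tilde{A}}\le\alpha$ and $\nrm{B}\le\beta$. Your parenthetical remark about the pedantic $\delta\eps$ term is in fact a small detail the paper glosses over.
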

	\begin{proof}
		\begin{align*}
		&\nrm{AB - \alpha\beta(\bra{0}^{\otimes a+b}\otimes I)(I_b\otimes U)(I_a\otimes V)(\ket{0}^{\otimes a+b}\otimes I)}\\
		=&\Big\lVert AB - 
		\underset{\tilde{A}}{\underbrace{\alpha(\bra{0}^{\otimes a}\otimes I)U(\ket{0}^{\otimes a}\otimes I)}}
		\underset{\tilde{B}}{\underbrace{\beta(\bra{0}^{\otimes b}\otimes I)V(\ket{0}^{\otimes b}\otimes I)}}\Big\rVert\\
		=&\nrm{AB -\tilde{A}B+\tilde{A}B-\tilde{A}\tilde{B}}\\
		=&\nrm{(A-\tilde{A})B+\tilde{A}(B-\tilde{B})}\\
		=&\nrm{A-\tilde{A}}\beta+\alpha\nrm{B-\tilde{B}}\\		
		\leq& \alpha\eps+\beta\delta.\qedhere
		\end{align*}
	\end{proof}

\begin{comment}
The following is proven similarly, but we defer the proof to Appendix \ref{app:proofs}.
\begin{restatable}{lemma}{nonSquareProd}\label{lem:block-encoding-product-non-square} %\snote{lem:block-encoding-product-non-square}
Let $A\in \mathbb{R}^{S\times S}$ and $B\in\mathbb{R}^{S\times T}$. If $U$ is a $(\alpha,a,\delta)$-block-encoding of $A$ that can be implemented in time $T_U$, and $V$ is a $(\beta,b,\eps)$-block-encoding of $\overline{B}$ that can be implemented in time $T_V$, then there is a $(\alpha\beta,a+b,2\alpha\eps+\beta\delta)$-block-encoding of $\overline{AB}$ that can be implemented in time $\bigO{T_U+T_V}$.  
\end{restatable}
\end{comment}

The above lemma assumes that both matrices are of size $2^s\times 2^s$. This is in fact without loss of generality, if the two matrices have size say $M\times N$ and $N \times K$ where $M,N,K\leq 2^s$ we can simply ``pad'' the matrices with zero entries, which does not affect the result of the matrix product.

Also the above lemma can be made more efficient in some cases when both $A$ and $B$ are significantly subnormalized. In such a situation we can first amplify the block-encodings and then only after that take their product. This improvement is based on the fact a subnormalized block-encoding can be efficiently amplified, as shown by Low and Chuang~\cite{LowChuangHamSpectraAmp2017} and Gilyén et al.~\cite{gilyenBlockMatrices}. The precise argument can be found in Appendix \ref{app:proofs}.

\begin{restatable}{lemma}{nonSquareProdPreAmp}\emph{(Poduct of preamplified block-matrices~\cite{LowChuangQubitization2016})}\label{lem:block-encoding-product-non-square-pre-amp} %\snote{lem:block-encoding-product-non-square}
	Let $A\in \mathbb{R}^{M\times N}$ and $B\in\mathbb{R}^{N\times K}$ such that $\nrm{A}\leq 1, \nrm{B}\leq 1$. If $\alpha\geq 1$ and $U$ is a $(\alpha,a,\delta)$-block-encoding of $A$ that can be implemented in time $T_U$; $\beta\geq 1$ and $V$ is a $(\beta,b,\eps)$-block-encoding of $B$ that can be implemented in time $T_V$, then there is a $(2,a+b+2,\sqrt{2}(\delta+\eps+\gamma))$-block-encoding of $AB$ that can be implemented in time $\bigO{\left(\alpha(T_U+a)+\beta(T_V+b)\right)\log(1/\gamma)}$.  
\end{restatable}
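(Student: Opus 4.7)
The plan is to reduce to Lemma~\ref{lemma:disjointAncillaProduct} after first pre-amplifying each of the two block-encodings so that both have their normalizing constant brought down to roughly $\sqrt{2}$. The naive product of Lemma~\ref{lemma:disjointAncillaProduct} gives an $(\alpha\beta,a+b,\alpha\eps+\beta\delta)$-block-encoding of $AB$, which is wasteful when $\alpha,\beta\gg 1$ but $\nrm{A},\nrm{B}\leq 1$: the subnormalization $\alpha\beta$ inflates both the normalizing constant and, more importantly, the error prefactors. The cited amplification result of Low--Chuang (and the variant in Gilyén et al.~\cite{gilyenBlockMatrices}) lets us trade this subnormalization for applications of the encoding.

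More concretely, I would first apply the uniform singular value amplification lemma to $U$: since $\nrm{A}\leq 1$ and $U$ is an $(\alpha,a,\delta)$-block-encoding of $A$ with $\alpha\geq 1$, there is a procedure using $\bigO{\alpha\log(1/\gamma)}$ calls to $U$ and $U^\dagger$, interleaved with reflections about $\ket{0}^{\otimes a}$ on the ancilla register (each such reflection costs $\bigO{a}$ elementary gates), that produces a $(\sqrt{2},a+1,\delta+\gamma/2)$-block-encoding $\tilde U$ of $A$. The extra ancilla qubit is the standard one introduced by the amplification polynomial, and the additive $\gamma/2$ comes from approximating the amplifying polynomial to the required precision. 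The total time for this step is $\bigO{\alpha(T_U+a)\log(1/\gamma)}$. Doing the analogous amplification on $V$ yields a $(\sqrt{2},b+1,\eps+\gamma/2)$-block-encoding $\tilde V$ of $B$ in time $\bigO{\beta(T_V+b)\log(1/\gamma)}$.

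Now I would simply apply Lemma~\ref{lemma:disjointAncillaProduct} to $\tilde U$ and $\tilde V$. That lemma yields a block-encoding of $AB$ with normalization $\sqrt{2}\cdot\sqrt{2}=2$, ancilla count $(a+1)+(b+1)=a+b+2$, and error
\[
\sqrt{2}\bigl(\eps+\gamma/2\bigr)+\sqrt{2}\bigl(\delta+\gamma/2\bigr)=\sqrt{2}(\delta+\eps+\gamma),
\]
exactly matching the parameters in the statement. The overall runtime is the sum of the two amplification costs plus a single additional application of each amplified encoding, giving $\bigO{(\alpha(T_U+a)+\beta(T_V+b))\log(1/\gamma)}$ as claimed.

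The only non-routine part is invoking the amplification result with the right parameters --- specifically, verifying that one can push the subnormalization all the way down to $\sqrt{2}$ (rather than to $1$, which is impossible in general since the encoding is unitary) at the cost of only one extra ancilla and an additive error $\gamma/2$ in time linear in $\alpha$ and $\log(1/\gamma)$. Everything else is bookkeeping: padding $A$ and $B$ with zeros if their dimensions are not powers of two so that Lemma~\ref{lemma:disjointAncillaProduct} applies verbatim, and checking that the error and cost accountings combine as above. I would therefore defer the detailed statement of the amplification subroutine to the appendix (as the paper already signals it does) and make this brief reduction the body of the proof.
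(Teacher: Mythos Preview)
Your proposal is correct and follows essentially the same approach as the paper: the paper invokes its Lemma~\ref{lem:uniformBlockAmp} (uniform block-amplification) to turn $U$ and $V$ into $(\sqrt{2},a+1,\delta+\gamma/2)$- and $(\sqrt{2},b+1,\eps+\gamma/2)$-block-encodings respectively, then applies Lemma~\ref{lemma:disjointAncillaProduct} to obtain exactly the $(2,a+b+2,\sqrt{2}(\delta+\eps+\gamma))$ parameters and the stated cost. Your error and ancilla bookkeeping match the paper line for line.
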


Also note that if we have a block-encoding for $A$, it can be easily converted to a block-encoding of $\overline{A}$, as shown by the following lemma.
\begin{lemma}[Complementing block-encoded matrices]
	Let $U$ is be an $(\alpha,a,\delta)$-block-encoding of an $s$-qubit operator $A$, and let $\mathrm{c}U$ denote the $(a+1+s)$-qubit controlled-$U$ operator, that acts on the first $a$ and last $s$ qubits controlled on the $(a+1)$st qubit. Then $\mathrm{c}U^\dagger(I_a\otimes X\otimes I_s)\mathrm{c}U$ is an $(\alpha,a+1,\delta)$-block-encoding of $\overline{A}$.
\end{lemma}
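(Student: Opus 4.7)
The plan is to compute the product $W := \mathrm{c}U^\dagger(I_a\otimes X\otimes I_s)\mathrm{c}U$ in a convenient block form and then verify directly that projecting the $U$-ancillas gives (a close approximation to) $\overline{A}$.

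First I would reorder the tensor factors so that the control qubit sits first; then $\mathrm{c}U$ takes the natural $2\times 2$ block form
\[
\mathrm{c}U \;=\; |0\rangle\langle 0|\otimes I_{a+s}\;+\;|1\rangle\langle 1|\otimes U,
\]
and the flip $I_a\otimes X\otimes I_s$ becomes $X\otimes I_{a+s}$. A one-line multiplication then gives
\[
W \;=\; |0\rangle\langle 1|\otimes U\;+\;|1\rangle\langle 0|\otimes U^{\dagger},
\]
which has the same $2\times 2$ off-diagonal block shape as $\overline{A}$.

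Second, I would project only the $a$ ancilla qubits of $U$ onto $|0\rangle^{\otimes a}$, keeping the control qubit as part of the target register of the block-encoding. Writing $\tilde A := \alpha(\langle 0|^{\otimes a}\otimes I_s)U(|0\rangle^{\otimes a}\otimes I_s)$, so that $\nrm{A-\tilde A}\le \delta$ by assumption, the partial projection of $W$ yields
\[
|0\rangle\langle 1|\otimes \tilde A/\alpha\;+\;|1\rangle\langle 0|\otimes \tilde A^\dagger/\alpha \;=\;\overline{\tilde A}/\alpha.
\]

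Finally I would close the error estimate using the elementary identity $\nrm{\overline{M}}=\nrm{M}$ for any $M$ (immediate from the SVD, since the nonzero singular values of $\overline{M}$ are exactly those of $M$), applied to $M=A-\tilde A$, which gives $\nrm{\overline{A}-\overline{\tilde A}}\le\delta$. This is precisely the block-encoding condition for $\overline{A}$. The one spot calling for care is the qubit bookkeeping --- correctly identifying which register of $W$ serves as the extra ``block'' qubit of $\overline{A}$ and which are the ancillas to be projected --- but once the $2\times 2$ block form of $W$ is written down explicitly, the verification is essentially a single line.
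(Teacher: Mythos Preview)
The paper states this lemma without proof, so there is nothing to compare against; your argument is a correct and natural proof. The block computation
\[
W \;=\; \mathrm{c}U^\dagger\,(X\otimes I_{a+s})\,\mathrm{c}U \;=\; |0\rangle\langle 1|\otimes U \;+\; |1\rangle\langle 0|\otimes U^{\dagger}
\]
(in the reordered frame with the control qubit first) is right, the identification of the control qubit as the extra system qubit carrying the $2\times 2$ block structure of $\overline{A}$ is the correct bookkeeping choice, and the identity $\nrm{\overline{M}}=\nrm{M}$ applied to $M=A-\tilde A$ closes the error bound.

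One remark on the qubit count: your proof in fact shows that $W$ is an $(\alpha,a,\delta)$-block-encoding of $\overline{A}$, not $(\alpha,a+1,\delta)$. Since $\overline{A}$ is an $(s+1)$-qubit operator and $W$ acts on $a+1+s$ qubits, exactly $a$ qubits remain as ancillas once the control qubit is absorbed into the system register. With $a+1$ ancillas the dimensions would not match the paper's own Definition~\ref{def:standardForm}, so the ``$a+1$'' in the lemma statement appears to be a small slip; your version is the one that type-checks.
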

	
	The following theorem about block-Hamiltonian simulation is a corollary of the results of \cite[Theorem 1]{LowChuangQubitization2016}, which also includes bounds on the propagation of errors. For more details see  Appendix~\ref{app:error}.
	\begin{restatable}{theorem}{optHamSim}\emph{(Block-Hamiltonian simulation \cite{LowChuangQubitization2016})}\label{thm:blockHamSim}
		Suppose that $U$ is an $(\alpha,a,\eps/|2t|)$-block-encoding of the Hamiltonian $H$. Then we can implement an $\eps$-precise Hamiltonian simulation unitary $V$ which is an $(1,a+2,\eps)$-block-encoding of $e^{itH}$, with $\bigO{|\alpha t|+\log(1/\eps)}$ uses of controlled-$U$ or its inverse and with $\bigO{a|\alpha t|+a\log(1/\eps)}$ two-qubit gates.
	\end{restatable}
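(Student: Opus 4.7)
The plan is to derive this theorem as a corollary of Low and Chuang's exact qubitization-based Hamiltonian simulation result, with a separate accounting of the block-encoding imprecision via a Lipschitz bound on the matrix exponential. First I would isolate the matrix that $U$ block-encodes \emph{exactly}: set $\tilde H := \alpha(\bra{0}^{\otimes a}\otimes I)U(\ket{0}^{\otimes a}\otimes I)$, so that $\nrm{\tilde H - H}\leq \epsilon/(2|t|)$ by hypothesis and $U$ is a $(\alpha,a,0)$-block-encoding of $\tilde H$. The problem then splits cleanly into a polynomial-approximation error (for $e^{it\tilde H}$) and a Hamiltonian-perturbation error (for passing from $\tilde H$ back to $H$).

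Next I would carry out Low--Chuang's construction on $U$: build a qubitized walk operator $W$ whose two-dimensional invariant subspaces associated with each eigenvalue $\lambda/\alpha\in[-1,1]$ of $\tilde H/\alpha$ act as rotations by angle $\pm\arccos(\lambda/\alpha)$. Each application of $W$ uses one (controlled) call to $U$ or $U^\dagger$ together with a reflection $2\ket{0^a}\!\bra{0^a}-I$ on the ancilla, costing $O(a)$ two-qubit gates. Feeding $W$ into quantum signal processing with the QSP phases encoding the Jacobi--Anger expansion of $e^{i\alpha t x}$---which admits a uniform $\epsilon/2$-approximation on $[-1,1]$ by a polynomial of degree $d=O(|\alpha t|+\log(1/\epsilon))$---produces a unitary $V$ that is a $(1,a+2,0)$-block-encoding of some polynomial $P(\tilde H/\alpha)$ with $\nrm{P(\tilde H/\alpha)-e^{it\tilde H}}\leq\epsilon/2$. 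The extra two ancilla qubits beyond the $a$ qubits of $U$ come from the standard LCU-style combination of the two QSP polynomials approximating the real and imaginary parts of the target function. The query and gate counts stated in the theorem follow immediately from $d$ applications of $W$. Finally, I would close the argument via the triangle inequality:
\begin{align*}
\nrm{e^{itH} - (\bra{0}^{\otimes a+2}\otimes I)\,V\,(\ket{0}^{\otimes a+2}\otimes I)}
&\leq \nrm{e^{itH}-e^{it\tilde H}} + \nrm{e^{it\tilde H}-P(\tilde H/\alpha)}\\
&\leq |t|\cdot\tfrac{\epsilon}{2|t|} + \tfrac{\epsilon}{2} = \epsilon,
\end{align*}
using the standard estimate $\nrm{e^{itA}-e^{itB}}\leq|t|\nrm{A-B}$ for Hermitian $A,B$.

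The main obstacle I expect is bookkeeping the error propagation correctly. A naive approach that tries to bound $\nrm{P(H/\alpha)-P(\tilde H/\alpha)}$ directly would introduce a spurious factor of the polynomial degree $d$, which would force a much tighter tolerance than $\epsilon/(2|t|)$ on the block-encoding of $H$. Routing the error instead through the intermediate Hamiltonian $\tilde H$---i.e.\ applying the Lipschitz bound to the \emph{exact} exponentials $e^{itH}$ and $e^{it\tilde H}$, and only then invoking the QSP polynomial approximation on the exactly block-encoded $\tilde H$---keeps the dependence linear in $|t|$ and yields exactly the stated precision. Verifying that the Low--Chuang construction indeed delivers an $a+2$ (rather than $a+1$ or $a+3$) ancilla overhead and that the reflection $2\ket{0^a}\!\bra{0^a}-I$ can be implemented in $O(a)$ two-qubit gates are the remaining technical details, both standard.
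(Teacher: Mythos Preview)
Your proposal is correct and follows essentially the same approach as the paper: define the exactly block-encoded Hermitian matrix $\tilde H$ (the paper calls it $H'$), invoke Low--Chuang's qubitization result as a black box to obtain a $(1,a+2,\eps/2)$-block-encoding of $e^{it\tilde H}$, and then close the gap to $e^{itH}$ via the Lipschitz bound $\nrm{e^{itA}-e^{itB}}\le |t|\nrm{A-B}$ (which the paper proves as a separate lemma using the Karplus--Feynman integral formula). Your additional unpacking of the QSP/Jacobi--Anger internals and the explicit discussion of why one should route the error through $e^{it\tilde H}$ rather than through the polynomial are helpful elaborations, but the skeleton is identical.
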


\noindent From this, we can prove the following useful statement (proven in Appendix~\ref{app:smooth} as Lemma~\ref{lemma:controlledHamsin}).

\begin{lemma}[Implementing controlled Hamiltonian simulation operators]\label{lem:controlled}
Let $T=2^J$ for some $J\in \mathbb{N}$ and $\epsilon\geq0$. Suppose that $U$ is an $(\alpha,a,\eps/|8 (J+1)^2 T|)$-block-encoding of the Hamiltonian $H$. Then we can implement a $(1,a+2,\eps)$-block-encoding of $\sum_{t=1}^{T-1}\ket{t}\bra{t}\otimes e^{it H}$, with $\bigO{\alpha  T+J\log(J/\eps)}$ uses of controlled-$U$ or its inverse and with $\bigO{a(\alpha T+J\log(J/\eps))}$ two-qubit gates.	
\end{lemma}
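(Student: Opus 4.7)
The idea is to binary-decompose the time parameter. For $t\in\{0,\dots,T-1\}$ written as $t=\sum_{j=0}^{J-1} t_j 2^j$ in binary, we have $e^{it H}=\prod_{j=0}^{J-1} e^{i t_j 2^j H}$, and hence
\[
\sum_{t=0}^{T-1}\ket{t}\bra{t}\otimes e^{it H}
\;=\; \prod_{j=0}^{J-1}\Bigl(\ket{0}\bra{0}_j\otimes I \;+\; \ket{1}\bra{1}_j\otimes e^{i 2^j H}\Bigr),
\]
where $\ket{b}\bra{b}_j$ projects onto the $j$-th qubit of the $J$-qubit time register being $\ket{b}$. The missing $t=0$ summand contributes only $\ket{0}\bra{0}\otimes I$, matching the statement's operator up to that trivial block.

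First, for each $j\in\{0,\dots,J-1\}$ I would feed the given $(\alpha,a,\eps/|8(J+1)^2T|)$-block-encoding of $H$ into Theorem~\ref{thm:blockHamSim} with time $2^j$, obtaining a $(1,a+2,\delta_j)$-block-encoding $V_j$ of $e^{i 2^j H}$ with $\delta_j := 2^{j+1}\eps/(8(J+1)^2T)$, at cost $\bigO{\alpha 2^j+\log(1/\delta_j)}$ controlled-$U$ queries and $\bigO{a(\alpha 2^j+\log(1/\delta_j))}$ two-qubit gates. Second, I would turn each $V_j$ into its controlled-on-the-$j$-th-time-qubit version $cV_j$ with only constant overhead, giving a $(1,a+2,\delta_j)$-block-encoding of the $j$-th factor of the product above.

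Third, I would compose $cV_{J-1}\,cV_{J-2}\cdots cV_0$, reusing the same $(a+2)$ ancilla qubits across all stages. Summing per-stage costs and using $T=2^J$ to absorb the $O(J^2)$ term coming from $\sum_j\log(1/\delta_j)$, the overall controlled-$U$ query count is $\bigO{\alpha T+J\log(J/\eps)}$ and the two-qubit gate count is $\bigO{a(\alpha T+J\log(J/\eps))}$, as required. Provided the per-stage block-encoding errors compose additively through the shared ancilla, the total error is bounded by $\sum_{j=0}^{J-1}\delta_j = \eps(T-1)/(4(J+1)^2T)\leq\eps$, so the composed circuit is a $(1,a+2,\eps)$-block-encoding of the target operator.

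The main technical subtlety is this additive error claim: a generic composition of approximate block-encodings on a shared ancilla only yields a $\sqrt{\delta_j}$ leakage per step from ancillas that fail to reset exactly to $\ket{0}^{a+2}$. Recovering additive error here relies on the specific structure of the QSP/qubitization-based construction used in Theorem~\ref{thm:blockHamSim} together with the generous $(J+1)^2$ slack built into the required input precision; the detailed error accounting is what the appendix version of this lemma (Lemma~\ref{lemma:controlledHamsin}) carries out.
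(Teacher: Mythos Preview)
Your decomposition and cost accounting match the paper's proof (Lemma~\ref{lemma:controlledHamsin}) exactly: write $t$ in binary, realize the target as the product $\prod_{j}\bigl(\ket{0}\bra{0}_j\otimes I+\ket{1}\bra{1}_j\otimes e^{i2^jH}\bigr)$, build each factor from Theorem~\ref{thm:blockHamSim}, and compose on the same $a{+}2$ ancilla qubits.

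Where you diverge from the paper is precisely the point you flag as subtle. The errors do \emph{not} compose additively, and the paper does not attempt to argue that they do. Instead it invokes Corollary~\ref{cor:blockProductPrecision}: if each $U_j$ is a $(1,a,\delta)$-block-encoding of a \emph{unitary} $W_j$, then $\prod_{j=1}^{K}U_j$ is a $(1,a,4K^2\delta)$-block-encoding of $\prod_j W_j$. The quadratic blowup $4K^2$ is exactly what the $(J{+}1)^2$ factor in the hypothesis is there to absorb: each controlled-$e^{i2^jH}$ block-encoding has error at most $\eps/(4(J{+}1)^2)$, and with $K\le J{+}1$ factors the product has error $\le\eps$. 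So your intuition that ``the generous $(J{+}1)^2$ slack'' is doing the work is right, but it is compensating a quadratic composition bound, not recovering an additive one. Your additive estimate $\sum_j\delta_j\approx\eps/(4(J{+}1)^2)$ happens to be consistent with this, since each $\delta_j$ is already $\le\eps/(8(J{+}1)^2)$ and the corollary only needs a uniform per-factor bound.
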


Apeldoorn et al.~developed some general techniques~\cite[Appendix B]{AGGW:SDP} that make it possible to implement smooth-functions of a Hamiltonian $H$, accessing $H$ only via controlled-Hamiltonian simulation. Using their techniques, we show in Appendix~\ref{app:smooth} the following results about implementing negative and positive powers of Hermitian matrices.

\begin{restatable}{lemma}{negPower}\emph{(Implementing negative powers of Hermitian matrices)}\label{lem:negative-power-restated} %\shnote{lem:negative-power-restated}
	Let $c\in (0,\infty),\kappa\geq 2$, and let $H$ be a Hermitian matrix such that $I/\kappa \preceq H \preceq I$. Suppose that $\delta = \littleo{\eps/\left(\kappa^{1+c}(1+c)\log^3\frac{\kappa^{1+c}}{\eps}\right)}$, and $U$ is an $(\alpha,a,\delta)$-block-encoding of $H$, that can be implemented using $T_U$ elementary gates. 
	Then for any $\eps$, we can implement a unitary $\widetilde{U}$ that is a $(2\kappa^c,a+\bigO{\log(\kappa^{1+c}\log\frac{1}{\eps}},\eps)$-block-encoding of $H^{-c}$ in cost
	$$\bigO{\alpha \kappa(a+T_U)(1+c)\log^2\!\left(\frac{\kappa^{1+c}}{\eps}\right)}.$$
	%$$\bigO{\alpha \mathrm{max}(1,c) \kappa\log\left(\frac{\kappa^c}{\eps}\right)(a+T_U)+\kappa \mathrm{max}(1,c) \log^2\left(\frac{\mathrm{max}(1,c)\kappa^{\mathrm{max}(1,c)}}{\eps}\right)}.$$
\end{restatable}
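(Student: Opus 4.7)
The plan is to realize $H^{-c}$ as (a subnormalization of) a linear combination of Hamiltonian evolutions $e^{it H}$, and then synthesize this combination by plugging Lemma~\ref{lem:controlled} into a standard LCU construction, exactly as in the Apeldoorn et al.\ smooth-function framework referenced in the statement. First I would fix a Fourier-type approximation of the function $f(x)=x^{-c}$ on the interval $[1/\kappa,1]$: there is a trigonometric / complex-exponential polynomial of the form
\[
\widetilde{f}(x)=\sum_{t=-T}^{T} c_t\, e^{i t x /\tau},
\]
with bandwidth $T=\bigO{\kappa^{1+c}\log(1/\eps)}$, $\ell_1$-norm of coefficients bounded by $\bigO{\kappa^c}$, and $\nrm{f-\widetilde f}_{\infty,[1/\kappa,1]}\leq \eps/4$. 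Such an approximation is obtained by truncating a Fourier integral representation of $x^{-c}$, with the bandwidth dictated by the resolution $1/\kappa$ at which one must resolve the singularity at $0$. I would absorb the factor $2$ in the normalization $2\kappa^c$ precisely to cover the $\ell_1$-norm of the coefficients, up to a constant.

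Next, I would realize $\widetilde{f}(H)=\sum_t c_t e^{itH/\tau}$ as a block-encoding using the LCU lemma combined with the controlled-Hamiltonian simulation unitary from Lemma~\ref{lem:controlled}. Concretely: after a rescaling of time, the controlled block-encoding $\sum_{t}\ket{t}\!\bra{t}\otimes e^{itH/\tau}$ is built from $U$ with $\bigO{\alpha T + J\log(J/\eps)}$ calls to controlled-$U$, where $J=\log T$. Sandwiching this between two state-preparation unitaries that prepare the (suitably normalized) coefficient vector $(c_t)$ on an ancillary register of $\bigO{\log T}=\bigO{\log(\kappa^{1+c}\log(1/\eps))}$ qubits, and selecting on the $\ket{0}$ outcome, yields a $(2\kappa^c,\,a+\bigO{\log(\kappa^{1+c}\log(1/\eps))},\,\eps')$-block-encoding of $\widetilde{f}(H)$ for a suitable $\eps'$. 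Since $I/\kappa\preceq H\preceq I$, the approximation quality of $\widetilde f$ on $[1/\kappa,1]$ transfers immediately to $\nrm{\widetilde f(H)-H^{-c}}$ via the spectral theorem.

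The key step, and the main obstacle, is the error bookkeeping through this stack. Three sources of error must add up to $\eps$: (i) the $\eps/4$ pointwise error from Fourier truncation, (ii) the block-Hamiltonian-simulation error from Lemma~\ref{lem:controlled}, which by the lemma's assumption on $\delta$ is driven by the per-step error $\delta$ multiplied by the effective simulation time $\bigO{\alpha T}$, giving a total contribution scaling like $\alpha T\delta\cdot \kappa^c$ after accounting for the $\ell_1$-norm of the coefficients, and (iii) the amplification / block-encoding error when taking the subnormalized LCU combination, which scales polynomially in $\kappa^c$ and $\log(1/\eps)$. Choosing $\delta = \littleo{\eps/(\kappa^{1+c}(1+c)\log^3(\kappa^{1+c}/\eps))}$ as hypothesized makes (ii) dominated by $\eps/4$, and hence the total error is at most $\eps$; I would carefully verify the constants here, since this is where the precise $\log^3$ factor in the hypothesis on $\delta$ gets used. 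Finally, the circuit depth bound comes directly from Lemma~\ref{lem:controlled}: $\bigO{\alpha T}=\bigO{\alpha\kappa^{1+c}\log(1/\eps)}$ controlled calls to $U$, giving a total cost of $\bigO{\alpha\kappa(a+T_U)(1+c)\log^2(\kappa^{1+c}/\eps)}$ as claimed, where one factor of $\log(\kappa^{1+c}/\eps)$ comes from the Fourier bandwidth and the other from the $J\log(J/\eps)$ term in Lemma~\ref{lem:controlled}.
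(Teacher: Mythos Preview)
Your high-level plan is exactly the paper's: approximate $x\mapsto x^{-c}$ on $[1/\kappa,1]$ by a linear combination of exponentials $e^{itx}$, then synthesize that combination via LCU on top of the controlled Hamiltonian-simulation primitive (Lemma~\ref{lem:controlled}). That part is fine.

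The gap is in your bandwidth claim, and it propagates to an inconsistency in your final accounting. You assert $T=\bigO{\kappa^{1+c}\log(1/\eps)}$ and then, two sentences later, that $\bigO{\alpha T}$ equals $\bigO{\alpha\kappa(1+c)\log^2(\kappa^{1+c}/\eps)}$. These do not match: $\kappa^{1+c}$ is exponential in $c$, while the target complexity scales only as $\kappa(1+c)$. With your stated bandwidth the cost would be $\bigO{\alpha\kappa^{1+c}(a+T_U)\log(1/\eps)}$, which is far worse than the lemma's bound for $c$ bounded away from $0$. So either the bandwidth is wrong or the conclusion is, and in fact it is the bandwidth: a direct Fourier truncation argued only from ``resolution $1/\kappa$ near the singularity'' does not by itself yield the sharper $\kappa(1+c)$ scaling.

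The paper obtains the correct bandwidth not by a bare Fourier argument but via Theorem~\ref{thm:Taylor} (the smooth-function-of-Hamiltonian theorem from \cite{AGGW:SDP}), applied with a Taylor expansion of $(1+x)^{-c}$ about $x_0=1$. With $r=1-1/\kappa$ and smoothness margin $\delta=1/(2\kappa\max(1,c))$, one computes
\[
\sum_{k\ge 0}\left|\binom{-c}{k}\right|(r+\delta)^k \;=\; \Bigl(\tfrac{1}{\kappa}\bigl(1-\tfrac{1}{2\max(1,c)}\bigr)\Bigr)^{-c}\;\le\; 2\kappa^c =: B,
\]
which simultaneously gives the $\ell_1$-bound (hence the normalization $2\kappa^c$) and, through Theorem~\ref{thm:Taylor}, the controlled-simulation parameter $M=\bigO{r\log(1/\eps')/\delta}=\bigO{\kappa\max(1,c)\log(\kappa^c/\eps)}$. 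Plugging this into Lemma~\ref{lemma:controlledHamsin} yields the stated cost $\bigO{\alpha\kappa(a+T_U)(1+c)\log^2(\kappa^{1+c}/\eps)}$. The point is that the Taylor-series route is what pins down both the $2\kappa^c$ subnormalization and the $\kappa(1+c)$ (rather than $\kappa^{1+c}$) simulation time; your sketch asserts both without a mechanism that produces them together.
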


\begin{restatable}{lemma}{posPower}\emph{(Implementing positive powers of Hermitian matrices)}\label{lem:positive-power-restated} %\shnote{lem:positive-power-restated}
	Let $c\in (0,1],\, \kappa\geq 2$, and $H$ a Hermitian matrix such that $I/\kappa \preceq H \preceq I$. Suppose that for $\delta = \littleo{\eps/(\kappa\log^3\frac{\kappa}{\eps})}$, and we are given a unitary $U$ that is an $(\alpha,a,\delta)$-block-encoding of $H$, that can be implemented using $T_U$ elementary gates. 
Then for any $\eps$, we can implement a unitary $\widetilde{U}$ that is a $(2,a+\bigO{\log\log(1/\eps)}, \eps)$-block-encoding of $H^{c}$ in cost
$$\bigO{\alpha\kappa(a+T_U)\log^2(\kappa/\eps)}.$$
%$$\bigO{\alpha(\kappa+\log\log(1/\eps))\log(1/\eps)(a+T_U)+\kappa\log(1/\eps)\log(\kappa/\eps)}.$$
	\end{restatable}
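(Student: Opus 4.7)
The plan is to mirror the proof of Lemma~\ref{lem:negative-power-restated}, invoking the smooth-function-of-a-Hamiltonian framework of Apeldoorn et al.~\cite[Appendix B]{AGGW:SDP} with the target function $f(x)=x^c$ in place of $x^{-c}$. The key structural difference is that on the spectral range $[1/\kappa,1]$ of $H$, the map $x\mapsto x^c$ is bounded by $1$ and smooth, whereas $x^{-c}$ has a singularity at $0$; this is what is ultimately responsible for both the improved normalisation (a constant rather than $2\kappa^c$) and the smaller ancilla overhead in the statement.

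First I would use Lemma~\ref{lem:controlled} to upgrade the given block-encoding $U$ into a controlled Hamiltonian-simulation operator $\sum_{t=0}^{T-1}\ket{t}\bra{t}\otimes e^{itH}$, with simulation range $T=\bigO{\kappa}$ (enough to resolve the spectrum of $H$ down to $1/\kappa$) and simulation precision scaling as $\eps/\polylog(\kappa/\eps)$. By Lemma~\ref{lem:controlled} this step costs $\bigO{\alpha\kappa(a+T_U)\log(\kappa/\eps)}$ and adds only $\bigO{1}$ ancillas beyond the $a$ qubits of $U$. Next I would construct a short Fourier-type approximation $\tilde f(x)=\sum_{k}c_k e^{ikx}$ of $x^c$ on $[1/\kappa,1]$, with uniform error $\eps$, $\sum_k|c_k|\leq 2$, and only $K=\polylog(1/\eps)$ distinct frequencies; the existence of such an approximation is the central technical ingredient, and it exploits the fact that $x^c$ is bounded and smooth on the effective domain.

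Combining the Fourier approximation with the controlled Hamiltonian simulation via the LCU construction of~\cite{AGGW:SDP} then produces a block-encoding of $\tilde f(H)\approx H^c$, with normalisation $\sum_k|c_k|\leq 2$ and ancilla overhead $a+\bigO{\log K}=a+\bigO{\log\log(1/\eps)}$ coming from the LCU select register. Errors from the three stages --- Hamiltonian-simulation precision, Fourier truncation, and LCU synthesis --- add up to at most $\eps$ when each local tolerance is taken as $\eps/\polylog(\kappa/\eps)$. The total gate cost is the controlled-Hamiltonian-simulation cost times a further $\log(\kappa/\eps)$ factor from the LCU, giving the claimed $\bigO{\alpha\kappa(a+T_U)\log^2(\kappa/\eps)}$.

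The main obstacle is the quantitative Fourier approximation of $x^c$ on $[1/\kappa,1]$ with only $\polylog(1/\eps)$ frequencies and bounded $\ell_1$ coefficient norm. Because $x^c$ varies rapidly near $x=1/\kappa$, a naive Fourier series would need $\Omega(\kappa)$ indices; the argument must instead use a smoothed periodic extension whose bandwidth is absorbed into the simulation time $T=\bigO{\kappa}$ rather than into the number of distinct Fourier frequencies, so that the select register remains of polylogarithmic size. Once this approximation is pinned down, the remaining error bookkeeping is routine and essentially mirrors the negative-power case.
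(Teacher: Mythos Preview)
Your proposal follows the same high-level framework as the paper --- controlled Hamiltonian simulation via Lemma~\ref{lem:controlled}, a Fourier/series approximation of $x^c$, and LCU synthesis, all drawn from~\cite[Appendix B]{AGGW:SDP}. The substantive difference is in how the approximation and the normalisation $B=2$ are obtained.

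The paper does not build a Fourier approximation of $x^c$ directly. Instead it expands $y^c=(1+x)^c=\sum_{k\ge 0}\binom{c}{k}x^k$ as a Taylor series about $x_0=1$, with $r=1-1/\kappa$ and $\delta=1/\kappa$, and observes the clean identity
\[
\sum_{k\ge 0}\Bigl|\binom{c}{k}\Bigr|(r+\delta)^k=\sum_{k\ge 0}\Bigl|\binom{c}{k}\Bigr|=2-\sum_{k\ge 0}\binom{c}{k}(-1)^k=2-0^c=2
\]
for $c\in(0,1]$. This immediately gives $B=2$ and feeds directly into the black-box Theorem~\ref{thm:Taylor} (Theorem~40 of~\cite{AGGW:SDP}), which handles the Taylor-to-Fourier conversion internally. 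So the obstacle you single out --- constructing a Fourier approximation with bounded $\ell_1$ norm and few terms --- is entirely sidestepped.

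Two smaller points. First, the required controlled-simulation range is $\bigO{\kappa\log(1/\eps)}$, not $\bigO{\kappa}$ (this comes from $r/\delta\cdot\log(1/\eps')$ in Theorem~\ref{thm:Taylor}); your cost accounting absorbs this, but the stated $T=\bigO{\kappa}$ is off. Second, the number of frequencies used in the LCU is not $\polylog(1/\eps)$; it is the full simulation range $\bigO{\kappa\log(1/\eps)}$, and the ancilla overhead is logarithmic in \emph{that}. Your attempt to separate ``bandwidth'' from ``number of distinct frequencies'' is not how the construction actually works. The paper's route via the Taylor expansion about $1$ is both simpler and what concretely delivers the constant normalisation.
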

	
	Finally, we note that subsequent work of Gilyén et al.~\cite{gilyenBlockMatrices} improved the log factor of the above two lemmas quadratically, and reduced the ancilla space overhead to a constant. This also directly implies an improvement in the log factors of the results presented in Section~\ref{sec:tools}.
	
\subsection{Sparse-access input model}
In the sparse-access model we assume that the input matrix $A\in\C^{M\times N}$ has $s_r$-sparse rows and $s_c$-sparse columns, such that the matrix elements can be queried via an oracle 
\begin{align*}
&\mathrm{O}_A\colon \ket{i}\ket{j}\ket{0}^{\!\otimes b} \mapsto \ket{i}\ket{j}\ket{a_{ij}}& &\kern-30mm\forall i\in[M],j\in[N].
\end{align*}
Moreover, the indices of non-zero elements of each row can be queried via an oracle 
\begin{align*}
&\mathrm{O}_r\colon \ket{i}\ket{k} \mapsto \ket{i}\ket{r_{ik}}& &\kern-30mm\forall i\in[N], k\in [s_r],\text{ where}		
\end{align*}
$r_{ij}$ is the index for the $j$-th non-zero entry of the $i$-th row of $A$, or if there are less than $i$ non-zero entries, then it is $j+N$. If $A$ is not symmetric (or Hermitian) then we also assume the analogous oracle for columns.
It is not difficult to prove~\cite{Chi10} that a block-encoding of $A$ can be efficiently implemented in the sparse-access input model, see \cite[Lemma 48]{gilyenBlockMatrices} 
for a direct proof.

\begin{lemma}[Constructing block-encodings for sparse-access matrices~{\cite[Lemma 48]{gilyenBlockMatrices}}]\label{lem:sparse-block}
	Let $A\in\mathbb{C}^{M\times N}$ be an $s^r,s^c$ row and column-sparse matrix given in the sparse-access input model. Then for any $\eps\in(0,1)$, we can implement a $(\sqrt{s^r s^c},\polylog(MN/\eps),\eps)$-block-encoding of $A$ with $\bigO{1}$ queries and $\polylog(MN/\eps)$ elementary gates.
	\end{lemma}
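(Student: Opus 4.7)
The plan is to construct the block-encoding by combining two unitaries derived from the sparse-access oracles: one that prepares a uniform superposition over the nonzero entries of each column, and another that prepares an amplitude-weighted superposition over the nonzero entries of each row. The overall structure mirrors Childs' construction from \cite{Chi10}, recast in the block-encoding language.

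First I would define, using the column oracle $\mathrm{O}_c$, a unitary $V_L$ acting as
$$V_L\colon \ket{0}^{\otimes\log M}\ket{j} \mapsto \frac{1}{\sqrt{s^c}}\sum_{k=1}^{s^c}\ket{c_{kj}}\ket{j},$$
which is implemented by preparing a uniform superposition over $k\in[s^c]$ via Hadamards and then applying $\mathrm{O}_c$ to rewrite $\ket{k}\mapsto\ket{c_{kj}}$. Analogously, using $\mathrm{O}_r$ together with $\mathrm{O}_A$ I would build $V_R$ acting as
$$V_R\colon \ket{i}\ket{0}^{\otimes\log N}\ket{0} \mapsto \frac{1}{\sqrt{s^r}}\sum_{k=1}^{s^r}\ket{i}\ket{r_{ik}}\bigl(A_{i,r_{ik}}^*\ket{0} + \sqrt{1-|A_{i,r_{ik}}|^2}\,\ket{1}\bigr),$$
where we assume (without loss of generality) $\|A\|_{\max}\leq 1$. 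This is realized by Hadamards on the $k$-register, one call to $\mathrm{O}_r$, one call to $\mathrm{O}_A$ into an ancilla, a controlled rotation on the flag qubit conditioned on the stored value of $A_{i,r_{ik}}$, and finally one call to $\mathrm{O}_A^\dagger$ to uncompute the element register.

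Then I would set $U := (V_R^\dagger\otimes I)\cdot(\mathrm{SWAP}\otimes I)\cdot(V_L\otimes I)$, where SWAP exchanges the two $M\times N$-dimensional index registers and the trailing $I$ acts on the flag qubit. A direct computation shows
$$\bigl(\bra{0}^{\otimes a}\otimes\bra{i}\bigr)\,U\,\bigl(\ket{0}^{\otimes a}\otimes\ket{j}\bigr) = \frac{A_{ij}}{\sqrt{s^r s^c}},$$
because $V_L$ spreads amplitude $1/\sqrt{s^c}$ over the row indices of nonzero entries of column $j$, the SWAP moves these labels into the positions expected by $V_R^\dagger$, and $V_R^\dagger$ contributes amplitude $A_{ij}/\sqrt{s^r}$ exactly when the row and column indices correspond to a common nonzero entry (with the flag landing in $\ket{0}$ extracting the matrix element rather than its complement).

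The main technical subtlety, and the only real source of error, is that the controlled rotation inside $V_R$ must encode the amplitude $A_{i,r_{ik}}^*$ to finite precision. Using $b=\polylog(MN/\eps)$ bits of precision for each entry and a standard controlled-rotation gadget, each rotation can be implemented to spectral-norm error $\eps/\sqrt{s^r s^c}$, so that the final block-encoding is $\eps$-close to the ideal one in the required normalization. The ancilla count $a$ is $\polylog(MN/\eps)$ because we need $\log M+\log N+1$ qubits for the two index registers plus the flag, together with $\bigO{b}$ workspace qubits for the element register used during rotation and subsequently uncomputed. The query complexity is $\bigO{1}$: $V_L$ uses one call to $\mathrm{O}_c$, and $V_R$ uses one call each to $\mathrm{O}_r$ and $\mathrm{O}_A$ (plus one $\mathrm{O}_A^\dagger$ for uncomputation), with $\polylog(MN/\eps)$ additional elementary gates from Hadamards and the precise controlled rotations.
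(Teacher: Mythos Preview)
The paper does not supply its own proof of this lemma; it simply cites \cite{Chi10} and refers to \cite[Lemma 48]{gilyenBlockMatrices} for a direct argument. Your proposal is precisely this standard two-isometry construction, so in spirit it matches what the paper invokes.

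There is, however, a register-convention bug in your specific circuit. Your $V_L$ keeps the fixed column index $j$ in the \emph{second} register and writes the row superposition into the first, whereas your $V_R$ keeps the fixed row index $i$ in the \emph{first} register and writes the column superposition into the second. With these mismatched conventions, inserting the SWAP in the middle does not yield the claimed matrix element: tracing through $U=V_R^\dagger\cdot\mathrm{SWAP}\cdot V_L$ on $\ket{0}_1\ket{j}_2\ket{0}_F$ and projecting onto $\bra{0}_1\bra{i}_2\bra{0}_F$ forces you to evaluate $V_R\ket{0}_1\ket{i}_2\ket{0}_F$, which lies outside $V_R$'s specified domain (the second register is not $\ket{0}$), and the concrete implementation you describe would then scramble $\ket{i}$ with Hadamards and query row $0$ of $A$ rather than row $i$.

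The fix is trivial bookkeeping: either move the SWAP to the outside, taking $U=\mathrm{SWAP}\cdot V_R^\dagger\cdot V_L$ with the system register identified with register~2 throughout (then the relevant overlap becomes $(V_R\ket{i}_1\ket{0}_2\ket{0}_F)^\dagger(V_L\ket{0}_1\ket{j}_2\ket{0}_F)=A_{ij}/\sqrt{s^rs^c}$ as desired), or redefine $V_L$ in the same ``Childs form'' as $V_R$, i.e.\ $\ket{j}_1\ket{0}_2\mapsto\ket{j}_1\ket{\phi_j}_2$, in which case the middle SWAP is correct. The rest of your argument---the implicit assumption $\|A\|_{\max}\le 1$, the precision analysis for the controlled rotation, and the query/gate/ancilla counts---is sound.
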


\section{Variable-time amplitude amplification and estimation}\label{sec:vtae}

	Following the work of Ambainis~\cite{AmbainisVariableTime12} we define variable-stopping-time quantum algorithms. In our presentation we use the formulation of Childs et al.~\cite{childs2015quantum} which makes the statements easier to read, while one does not lose much of the generality. 
	
	In the problem of variable-time amplitude amplification the goal is to amplify the success probability of a variable-stopping-time algorithm by exploiting that the computation may end after time $t_j$ marking a significant portion of the quantum state as ``bad''. Here we define the problem of variable-time amplitude estimation which asks for an $\eps$-multiplicative estimate of the initial unamplified amplitude/probability of success.
	
	Our approach to variable-time amplitude estimation is that we first solve the mindful-amplification problem,
	where we amplify the amplitude to $\Theta(1)$, while also determining the amplification gain up to $\eps/3$-multiplicative precision. Then we estimate to $\eps/3$-multiplicative precision the amplitude after the mindful-amplification using amplitude estimation, incurring an overhead of $\approx 1/\eps$. This then results in an $\eps$-multiplicative approximation of the initial amplitude.
	
	\begin{definition}[Mindful-amplification problem]
		For a given $\eps>0$, a quantum algorithm $\A$ and an orthogonal projector $\Pi$, the \emph{$\eps$-mindful-amplification problem} is the following: Construct an algorithm $\A'$ such that $\Pi \A' \ket{\pmb{0}}\propto \Pi \A \ket{\pmb{0}}$ and $\nrm{\Pi \A' \ket{\pmb{0}}}=\Theta(1)$, moreover output a number $\Gamma$ such that $\frac{\nrm{\Pi \A' \ket{\pmb{0}}}}{\Gamma\nrm{\Pi \A \ket{\pmb{0}}}}\in \left[1-\eps, 1+\eps\right]$.
	\end{definition}
	
\subsection{Variable-stopping-time algorithms and variable-time amplification}\label{sec:variable-stopping}
	
	Now we turn to discussing variable-stopping-time quantum algorithms. The main idea of such an algorithm is that there are $m$ possible stopping times, and for each stopping time $t_j$, there is a control register that can be set to $1$ at time $t_j$, indicating that the computation has stopped on that branch. More precisely it means that after time $t_j$, the algorithm does not alter the part of the quantum state for which the control flag has been set to $1$ by time $t_j$. 
	
	\begin{definition}[Variable-stopping-time quantum algorithm]
		We say that $\A=\A_m\cdot \ldots \cdot \A_1$ is a \emph{variable-stopping-time} quantum algorithm if $\A$ acts on $\Hi=\Hi_C\otimes \Hi_\A$, where $\Hi_C=\otimes_{i=1}^m \Hi_{C_i}$ with $\Hi_{C_i}=\mathrm{Span}(\ket{0},\ket{1})$, and each unitary $\A_j$ acts on $\Hi_{C_j}\otimes \Hi_\A$ controlled on the first $j-1$ qubits $\ket{0}^{\otimes j-1}\in \otimes_{i=1}^{j-1} \Hi_{C_i}$  being in the all-$0$ state.
	\end{definition}
	In the case of variable-time amplitude amplification the space $\Hi_\A$ on which the algorithm acts has a flag which indicates success, i.e., $\Hi_\A=\Hi_F\otimes \Hi_W$, where the flag space $\Hi_F=\mathrm{Span}(\ket{g},\ket{b})$ indicates ``good'' and ``bad'' outcomes. Also we define stopping times $0=t_0<t_1<t_2<\ldots<t_m=T_{\max}$ such that for all $j\in[m]$ the algorithm $\A_j\cdot \ldots \cdot \A_1$ has (query/gate) complexity $t_j$. In order to analyse such an algorithm we define the probability of the different stopping times.
	We use $\ket{\pmb{0}}\in\Hi$ to denote the all-$0$ initial state on which we run the algorithm~$\A$.
	
	\begin{definition}[Probability of stopping by time $t$]\label{def:stoppingProbabilites}
	We define the orthogonal projector 
	$$\Pi_{\mathrm{stop}\leq t}:=\sum_{j\colon t_j\leq t}\ketbra{1}{1}_{C_j}\otimes I_{\Hi_{\A}},$$
	where by $\ketbra{1}{1}_{C_j}$ we denote the orthogonal projector on $\Hi_C$ which projects onto the state $$\ket{0}_{\Hi_{C_1}}\otimes\cdots\otimes\ket{0}_{\Hi_{C_{j-1}}}
	\otimes\ket{1}_{\Hi_{C_{j}}}
	\otimes\ket{0}_{\Hi_{C_{j+1}}}\otimes\cdots\otimes\ket{0}_{\Hi_{C_m}}.$$ 
	We define $p_{\mathrm{stop}\leq t}:=\nrm{\Pi_{\mathrm{stop}\leq t}\A\ket{\pmb{0}}}^2$, and similarly $p_{\mathrm{stop}\geq t}$. Finally we define the projector $$\Pi_{\mathrm{mg}}^{(j)}:=I-\Pi_{\mathrm{stop}\leq t_j}\cdot \left(I_{\Hi_{C}}\otimes\ketbra{b}{b}_{\Hi_{F}}\otimes I_{\Hi_{W}}\right),$$
	and $p_{\mathrm{mg}}^{(j)}:=\nrm{\Pi_{\mathrm{mg}}^{(j)}\A\ket{\pmb{0}}}^2=\nrm{\Pi_{\mathrm{mg}}^{(j)}\A_j\cdot\ldots\cdot \A_1\ket{\pmb{0}}}^2$ expressing the probability that the state ``maybe good'' after the $j$-th segment of the algorithm has been used. This is $1$ minus the probability that the state was found to be ``bad'' by the end of the $j$-th segment of the algorithm.
	\end{definition}

	For simplicity from now on we assume that $p_{\mathrm{stop}\leq t_m}=1$. Using the above notation we can say that in the problem of variable-time amplitude amplification the goal is to prepare a state $\propto \Pi_{\mathrm{mg}}^{(m)}\A\ket{\pmb{0}}$; in variable-time amplitude estimation the goal is to estimate $p_{\mathrm{succ}}:=\nrm{\Pi_{\mathrm{mg}}^{(m)}\A\ket{\pmb{0}}}^2$. 
	
	Now we define what we precisely mean by variable-time amplification.
	\begin{definition}[Variable-time amplification]\label{def:vtAmp}
		We say that $\A'=(\A'_1,\A'_2,\ldots,\A'_m)$ is a \emph{variable-time amplification} of $\A$ if $\A'_0=I$ and $\forall j\in [m]\colon$ $\Pi_{\mathrm{mg}}^{(j)}\A'_j\ket{\pmb{0}}\propto \Pi_{\mathrm{mg}}^{(j)}\A_j\A'_{j-1}\ket{\pmb{0}}$, moreover $\A'_j$ uses the circuit $\A_j\A'_{j-1}$ and its inverse a total of $q_j$ times and on top of that it uses at most $g_j$ elementary gates. We define $a_j:=\frac{\nrm{\Pi_{\mathrm{mg}}^{(j)}\A'_j\ket{\pmb{0}}}}{\nrm{\Pi_{\mathrm{mg}}^{(j)}\A_j\A'_{j-1}\ket{\pmb{0}}}}$ as the amplification of the $j$-th phase, and $o_j:=\frac{q_j}{a_j}$ as the (query) overhead of the $j$-th amplification phase.
	\end{definition}
	Note that the above definition implies that for a variable-time amplification $\A'$ we have that $\forall j\in [m]\colon$ $\Pi_{\mathrm{mg}}^{(j)}\A'_j\ket{\pmb{0}}\propto \Pi_{\mathrm{mg}}^{(j)}\A_j\A_{j-1}\cdots \A_1\ket{\pmb{0}}$, in particular $\Pi_{\mathrm{mg}}^{(j)}\A'_m\ket{\pmb{0}}\propto \Pi_{\mathrm{mg}}^{(j)}\A\ket{\pmb{0}}$. 
	
	The following lemma analyses the efficiency of a variable-time amplification $\A'$.
	\begin{lemma}\label{lemma:basicRecursion}
		For all $j< k\in [m]$ we have that $\A'_k$ uses $\A'_j$ a total of 
		\begin{equation}\label{eq:basicRecursion}
			 \frac{\nrm{\Pi_{\mathrm{mg}}^{(k)}\A'_k\ket{\pmb{0}}}}{\nrm{\Pi_{\mathrm{mg}}^{(j)}\A'_j\ket{\pmb{0}}}}\sqrt{\frac{p_{\mathrm{mg}}^{(j)}}{p_{\mathrm{mg}}^{(k)}}} \cdot \prod_{i=j+1}^{k}o_i
		\end{equation} times.
	\end{lemma}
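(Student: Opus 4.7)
The plan is to first count the number of uses of $\A'_j$ in $\A'_k$ directly from the recursive definition, and then handle the amplification ratios $a_i$ via a telescoping argument based on the observation that the ``maybe good'' direction of the state is preserved across amplification phases.

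From Definition \ref{def:vtAmp}, $\A'_k$ uses $\A_k\A'_{k-1}$ (and its inverse) a total of $q_k$ times, and hence $\A'_{k-1}$ exactly $q_k$ times. By a straightforward induction on $k-j$, $\A'_k$ therefore uses $\A'_j$ a total of $\prod_{i=j+1}^k q_i$ times. Substituting $q_i=a_i o_i$, the lemma reduces to the identity
$$\prod_{i=j+1}^k a_i \;=\; \frac{\nrm{\Pi_{\mathrm{mg}}^{(k)}\A'_k\ket{\pmb{0}}}}{\nrm{\Pi_{\mathrm{mg}}^{(j)}\A'_j\ket{\pmb{0}}}}\sqrt{\frac{p_{\mathrm{mg}}^{(j)}}{p_{\mathrm{mg}}^{(k)}}}.$$
Abbreviating $\alpha_i:=\nrm{\Pi_{\mathrm{mg}}^{(i)}\A'_i\ket{\pmb{0}}}$ and $\beta_i:=\nrm{\Pi_{\mathrm{mg}}^{(i)}\A_i\A'_{i-1}\ket{\pmb{0}}}$ (so $a_i=\alpha_i/\beta_i$), it will be enough to establish the per-step identity $\beta_i = \alpha_{i-1}\sqrt{p_{\mathrm{mg}}^{(i)}/p_{\mathrm{mg}}^{(i-1)}}$, since then $a_i = (\alpha_i/\alpha_{i-1})\sqrt{p_{\mathrm{mg}}^{(i-1)}/p_{\mathrm{mg}}^{(i)}}$ and the product over $i\in\{j+1,\ldots,k\}$ telescopes to the required expression.

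The cornerstone is the operator identity $\Pi_{\mathrm{mg}}^{(i)}\A_i = \Pi_{\mathrm{mg}}^{(i)}\A_i\Pi_{\mathrm{mg}}^{(i-1)}$. Because $\A_i$ is controlled on the first $i-1$ control qubits being $\ket{0}$, it acts as the identity on the range of $\Pi_{\mathrm{stop}\leq t_{i-1}}$, and in particular on the range of $I-\Pi_{\mathrm{mg}}^{(i-1)}=\Pi_{\mathrm{stop}\leq t_{i-1}}(I_{\Hi_C}\otimes\ketbra{b}{b}\otimes I_{\Hi_W})$. Moreover, since $t_{i-1}\leq t_i$, Definition \ref{def:stoppingProbabilites} gives $\Pi_{\mathrm{stop}\leq t_{i-1}}\preceq\Pi_{\mathrm{stop}\leq t_i}$ and hence $I-\Pi_{\mathrm{mg}}^{(i-1)}\preceq I-\Pi_{\mathrm{mg}}^{(i)}$, so $\Pi_{\mathrm{mg}}^{(i)}(I-\Pi_{\mathrm{mg}}^{(i-1)})=0$. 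Combining these two facts, $\Pi_{\mathrm{mg}}^{(i)}\A_i(I-\Pi_{\mathrm{mg}}^{(i-1)}) = \Pi_{\mathrm{mg}}^{(i)}(I-\Pi_{\mathrm{mg}}^{(i-1)}) = 0$, which yields the identity.

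Using this, I would argue by induction on $i$ that $\Pi_{\mathrm{mg}}^{(i)}\A'_i\ket{\pmb{0}}$ is parallel to $\Pi_{\mathrm{mg}}^{(i)}\A_i\A_{i-1}\cdots\A_1\ket{\pmb{0}}$. The base case is trivial, and for the induction step, Definition \ref{def:vtAmp} gives $\Pi_{\mathrm{mg}}^{(i)}\A'_i\ket{\pmb{0}}\propto \Pi_{\mathrm{mg}}^{(i)}\A_i\A'_{i-1}\ket{\pmb{0}} = \Pi_{\mathrm{mg}}^{(i)}\A_i\Pi_{\mathrm{mg}}^{(i-1)}\A'_{i-1}\ket{\pmb{0}}$, which by the induction hypothesis is proportional to $\Pi_{\mathrm{mg}}^{(i)}\A_i\Pi_{\mathrm{mg}}^{(i-1)}\A_{i-1}\cdots\A_1\ket{\pmb{0}} = \Pi_{\mathrm{mg}}^{(i)}\A_i\A_{i-1}\cdots\A_1\ket{\pmb{0}}$. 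Matching norms, the proportionality scalar is $\alpha_{i-1}/\sqrt{p_{\mathrm{mg}}^{(i-1)}}$, so applying $\Pi_{\mathrm{mg}}^{(i)}\A_i$ to both sides of $\Pi_{\mathrm{mg}}^{(i-1)}\A'_{i-1}\ket{\pmb{0}} = (\alpha_{i-1}/\sqrt{p_{\mathrm{mg}}^{(i-1)}})\cdot\Pi_{\mathrm{mg}}^{(i-1)}\A_{i-1}\cdots\A_1\ket{\pmb{0}}$ and taking norms yields $\beta_i = (\alpha_{i-1}/\sqrt{p_{\mathrm{mg}}^{(i-1)}})\cdot\sqrt{p_{\mathrm{mg}}^{(i)}}$, which is the desired per-step identity. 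The only nontrivial point is the operator identity $\Pi_{\mathrm{mg}}^{(i)}\A_i = \Pi_{\mathrm{mg}}^{(i)}\A_i\Pi_{\mathrm{mg}}^{(i-1)}$; once it is in hand the rest is bookkeeping.
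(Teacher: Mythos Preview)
Your proposal is correct and follows essentially the same approach as the paper: both reduce to the per-step identity $a_i=\dfrac{\nrm{\Pi_{\mathrm{mg}}^{(i)}\A'_i\ket{\pmb{0}}}}{\nrm{\Pi_{\mathrm{mg}}^{(i-1)}\A'_{i-1}\ket{\pmb{0}}}}\sqrt{p_{\mathrm{mg}}^{(i-1)}/p_{\mathrm{mg}}^{(i)}}$ and then telescope. Your version is simply more explicit, proving the operator identity $\Pi_{\mathrm{mg}}^{(i)}\A_i=\Pi_{\mathrm{mg}}^{(i)}\A_i\Pi_{\mathrm{mg}}^{(i-1)}$ and the proportionality $\Pi_{\mathrm{mg}}^{(i)}\A'_i\ket{\pmb{0}}\propto\Pi_{\mathrm{mg}}^{(i)}\A_i\cdots\A_1\ket{\pmb{0}}$ that the paper asserts without proof immediately after Definition~\ref{def:vtAmp}.
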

	\begin{proof}
		We prove the claim by induction on $k-j$. For $j=k$ the statement is trivial. For $j=k-1$ we have that
		$\A'_k$ uses $\A'_{k-1}$ a total of $q_k=a_k\cdot o_k$ times by definition. Now observe that
		\begin{align*}
			a_k&=\frac{\nrm{\Pi_{\mathrm{mg}}^{(k)}\A'_k\ket{\pmb{0}}}}{\nrm{\Pi_{\mathrm{mg}}^{(k)}\A_k\A'_{k-1}\ket{\pmb{0}}}}\\
			&=\frac{\nrm{\Pi_{\mathrm{mg}}^{(k)}\A'_k\ket{\pmb{0}}}}{\nrm{\Pi_{\mathrm{mg}}^{(k-1)}\A'_{k-1}\ket{\pmb{0}}}}
			\frac{\nrm{\Pi_{\mathrm{mg}}^{(k-1)}\A'_{k-1}\ket{\pmb{0}}}}{\nrm{\Pi_{\mathrm{mg}}^{(k)}\A_k\A'_{k-1}\ket{\pmb{0}}}}\\
			&=\frac{\nrm{\Pi_{\mathrm{mg}}^{(k)}\A'_k\ket{\pmb{0}}}}{\nrm{\Pi_{\mathrm{mg}}^{(k-1)}\A'_{k-1}\ket{\pmb{0}}}}
			\frac{\nrm{\Pi_{\mathrm{mg}}^{(k-1)}\A\ket{\pmb{0}}}}{\nrm{\Pi_{\mathrm{mg}}^{(k)}\A\ket{\pmb{0}}}}\\
			&=\frac{\nrm{\Pi_{\mathrm{mg}}^{(k)}\A'_k\ket{\pmb{0}}}}{\nrm{\Pi_{\mathrm{mg}}^{(k-1)}\A'_{k-1}\ket{\pmb{0}}}}
			\sqrt{\frac{p_{\mathrm{mg}}^{(k-1)}}{p_{\mathrm{mg}}^{(k)}}}.
		\end{align*}
		Finally we show the induction step when $j<k-1$. As we observed above $\A'_k$ uses $\A'_{k-1}$ a total of $$\frac{\nrm{\Pi_{\mathrm{mg}}^{(k)}\A'_k\ket{\pmb{0}}}}{\nrm{\Pi_{\mathrm{mg}}^{(k-1)}\A'_{k-1}\ket{\pmb{0}}}}
		\sqrt{\frac{p_{\mathrm{mg}}^{(k-1)}}{p_{\mathrm{mg}}^{(k)}}}\cdot o_k$$ times. Note that $\A'_{k}$ only uses $\A'_j$ via $\A'_{k-1}$. By the induction hypothesis we know that $\A'_{k-1}$ uses $\A'_j$ a total of 
		$$ \frac{\nrm{\Pi_{\mathrm{mg}}^{(k-1)}\A'_k\ket{\pmb{0}}}}{\nrm{\Pi_{\mathrm{mg}}^{(j)}\A'_j\ket{\pmb{0}}}}\sqrt{\frac{p_{\mathrm{mg}}^{(j)}}{p_{\mathrm{mg}}^{(k-1)}}} \cdot\prod_{i=j+1}^{k-1}o_i$$ 
		times, which then implies the statement.
	\end{proof}
	\begin{corollary}\label{cor:segmenQueries}
		$\Pi_{\mathrm{mg}}^{(m)}\A'_m\ket{\pmb{0}}\propto \Pi_{\mathrm{mg}}^{(m)}\A\ket{\pmb{0}}$ and for all $j\in [m]$, $\A'_m$ uses $\A_j$ a total of at most 
		\begin{equation}\label{eq:preciseUpperBound}
			\frac{\nrm{\Pi_{\mathrm{mg}}^{(m)}\A'_m\ket{\pmb{0}}}}{\nrm{\Pi_{\mathrm{mg}}^{(j-1)}\A'_{j-1}\ket{\pmb{0}}}}
			\left(1+\sqrt{\frac{p_{\mathrm{stop}\geq t_j}}{p_{\mathrm{succ}}}} \right)\cdot \prod_{i=j}^{m}o_i
		\end{equation} times.
	\end{corollary}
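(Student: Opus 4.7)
The proportionality claim $\Pi_{\mathrm{mg}}^{(m)}\A'_m\ket{\pmb{0}}\propto \Pi_{\mathrm{mg}}^{(m)}\A\ket{\pmb{0}}$ is immediate from the observation made right after Definition \ref{def:vtAmp}: iterating $\Pi_{\mathrm{mg}}^{(j)}\A'_j\ket{\pmb{0}}\propto \Pi_{\mathrm{mg}}^{(j)}\A_j\A'_{j-1}\ket{\pmb{0}}$ from $j=m$ downward, and using that $\A_j$ acts as the identity on branches stopped before stage $j$, collapses $\A'_{m-1}\cdots\A'_0$ into $\A_{m-1}\cdots\A_1$ inside $\Pi_{\mathrm{mg}}^{(m)}$. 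Hence only the query-count bound needs genuine work. The plan for the count rests on a simple combinatorial observation: inside $\A'_m$, every invocation of $\A_j$ is paired with exactly one invocation of $\A'_{j-1}$, because $\A_j$ enters the circuit only via the combined block $\A_j\A'_{j-1}$ used $q_j$ times inside $\A'_j$, and likewise $\A'_{j-1}$ only appears inside that block. Consequently the number of uses of $\A_j$ equals the number of uses of $\A'_{j-1}$.

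Applying Lemma \ref{lemma:basicRecursion} with $k=m$ and the lemma's index shifted to $j-1$ expresses this count as
$$\frac{\nrm{\Pi_{\mathrm{mg}}^{(m)}\A'_m\ket{\pmb{0}}}}{\nrm{\Pi_{\mathrm{mg}}^{(j-1)}\A'_{j-1}\ket{\pmb{0}}}}\sqrt{\frac{p_{\mathrm{mg}}^{(j-1)}}{p_{\mathrm{mg}}^{(m)}}}\prod_{i=j}^{m}o_i.$$
The boundary case $j=1$ is handled by the convention $\A'_0=I$ together with $\nrm{\Pi_{\mathrm{mg}}^{(0)}\A'_0\ket{\pmb{0}}}=1$ and $p_{\mathrm{mg}}^{(0)}=1$; equivalently, one can apply the lemma with $j'=1$ and then multiply by $q_1=a_1 o_1$, which yields the same expression after unfolding the identity $a_1=\nrm{\Pi_{\mathrm{mg}}^{(1)}\A'_1\ket{\pmb{0}}}/\nrm{\Pi_{\mathrm{mg}}^{(1)}\A_1\ket{\pmb{0}}}$.

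To obtain the form stated in the Corollary it remains to show $\sqrt{p_{\mathrm{mg}}^{(j-1)}/p_{\mathrm{mg}}^{(m)}}\leq 1+\sqrt{p_{\mathrm{stop}\geq t_j}/p_{\mathrm{succ}}}$. Since $p_{\mathrm{stop}\leq t_m}=1$ we have $p_{\mathrm{mg}}^{(m)}=p_{\mathrm{succ}}$, so it suffices to prove $p_{\mathrm{mg}}^{(j-1)}\leq p_{\mathrm{succ}}+p_{\mathrm{stop}\geq t_j}$ and then invoke $\sqrt{a+b}\leq\sqrt{a}+\sqrt{b}$. The main obstacle is this probabilistic inequality: intuitively, any amplitude that is ``maybe good'' at stage $j-1$ but ends up ``bad'' at stage $m$ must correspond to a branch that was not-yet-stopped at time $t_{j-1}$ and then stopped at some time $t_i$ with $i\geq j$, so its total probability is at most $p_{\mathrm{stop}\geq t_j}$. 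To formalize this I would decompose $\A\ket{\pmb{0}}$ along the orthogonal ``stopped at time $t_i$'' subspaces together with the good/bad flag, use that stages $\A_{i+1},\dots,\A_m$ act trivially on branches already stopped by time $t_i$, and then compute $p_{\mathrm{mg}}^{(j-1)}-p_{\mathrm{mg}}^{(m)}$ as exactly the mass of bad branches whose stopping time lies in $\{t_j,\dots,t_m\}$. This careful tracking of the control register is the main technical point; the remaining algebraic manipulations are routine.
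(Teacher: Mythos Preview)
Your proof is correct and follows essentially the same route as the paper: use Definition~\ref{def:vtAmp} to equate the number of uses of $\A_j$ and $\A'_{j-1}$, apply Lemma~\ref{lemma:basicRecursion} with $k=m$ and index $j-1$, and then bound $\sqrt{p_{\mathrm{mg}}^{(j-1)}/p_{\mathrm{mg}}^{(m)}}$ via $p_{\mathrm{mg}}^{(j-1)}\leq p_{\mathrm{succ}}+p_{\mathrm{stop}\geq t_j}$ and $\sqrt{a+b}\leq\sqrt a+\sqrt b$.

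One remark: you describe the inequality $p_{\mathrm{mg}}^{(j-1)}\leq p_{\mathrm{succ}}+p_{\mathrm{stop}\geq t_j}$ as the ``main obstacle'' requiring careful tracking, but in fact it drops out directly from Definition~\ref{def:stoppingProbabilites}. Since $\Pi_{\mathrm{stop}\leq t_{j-1}}$ and $\Pi_{\mathrm{stop}\geq t_j}=\Pi_{\mathrm{stop}\leq t_m}-\Pi_{\mathrm{stop}\leq t_{j-1}}$ are orthogonal projectors commuting with the flag projector, one has
\[
p_{\mathrm{mg}}^{(j-1)}-p_{\mathrm{succ}}
=\nrm{\Pi_{\mathrm{stop}\geq t_j}\bigl(I\otimes\ketbra{b}{b}\otimes I\bigr)\A\ket{\pmb 0}}^2
\leq\nrm{\Pi_{\mathrm{stop}\geq t_j}\A\ket{\pmb 0}}^2=p_{\mathrm{stop}\geq t_j},
\]
so no elaborate branch-by-branch decomposition is needed. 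The paper treats this step as a one-liner.
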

	\begin{proof}
		By Definition~\ref{def:vtAmp} we have that $\A'_m$ uses $\A_{j}$ and $\A'_{j-1}$ the same number of times. By Lemma~\ref{lemma:basicRecursion} we know the latter is used a total of $\frac{\nrm{\Pi_{\mathrm{mg}}^{(m)}\A'_m\ket{\pmb{0}}}}{\nrm{\Pi_{\mathrm{mg}}^{(j-1)}\A'_{j-1}\ket{\pmb{0}}}}\sqrt{\frac{p_{\mathrm{mg}}^{(j-1)}}{p_{\mathrm{mg}}^{(m)}}} \cdot \prod_{i=j+1}^{m}o_i$ times. By Definition~\ref{def:stoppingProbabilites} we have that $p_{\mathrm{succ}}=p_{\mathrm{mg}}^{(m)}$ and $p_{\mathrm{mg}}^{(j-1)}\leq p_{\mathrm{succ}}+p_{\mathrm{stop}\geq t_j}$ from which the statement follows using the simple observation that $\forall a,b\in\mathbb{R}^+_0\colon \sqrt{a+b}\leq\sqrt{a}+\sqrt{b}$.
	\end{proof}

	Now we define some uniform bound quantities, which make it easier to analyze the performance of variable-time amplification.
	\begin{definition}[Uniformly bounded variable-time amplification]
		If the variable-time amplification algorithm $\A'$ is such that for some $E,G,O\in\R_+$ we have that $\forall j\in [m]\colon$
		\begin{align}\label{eq:uniformBounds}
			\frac{\nrm{\Pi_{\mathrm{mg}}^{(m)}\A'_m\ket{\pmb{0}}}}{\nrm{\Pi_{\mathrm{mg}}^{(j-1)}\A'_{j-1}\ket{\pmb{0}}}}\leq E,& & 
			g_j\leq G(t_j-t_{j-1}),
			& \text{ and} & \prod_{i=1}^{m}o_i\leq O,
		\end{align}
		%\begin{align}\label{eq:uniformBounds}
			%\frac{\nrm{\Pi_{\mathrm{mg}}^{(m)}\A'_m\ket{\pmb{0}}}}{\nrm{\Pi_{\mathrm{mg}}^{(j-1)}\A'_{j-1}\ket{\pmb{0}}}}
			%&\leq E\cdot\min\left[\sqrt{\log(T_{\max})},\sqrt{\log(t_j/T_{\max})}\left(\log\log(t_j/T_{\max})+1\right)\right],\\
			%g_j&\leq G(t_j-t_{j-1}), 
			%\quad\text{ and }\quad  \prod_{i=1}^{m}o_i\leq O,
		%\end{align}	
		then we say that $\A'$ is $(E,G,O)$-bounded.
	\end{definition}
	Using the above definition we can derive some intuitive complexity bounds on variable-time amplifications, essentially recovering\footnote{Actually we improve Ambainis's bound by a factor of $\sqrt{\log(T_{\max})}$.} a bound used by Ambainis~\cite{AmbainisVariableTime12}.
	\begin{corollary}\label{cor:boundWithI}
		If $\A'$ is an $(E,G,O)$-bounded variable-time amplification, 
		then $\A'_m$ has complexity at most $EO\left(T_{\max} + \frac{I}{\sqrt{p_{\mathrm{succ}}}}\right)$ coming from the use of the variable-time algorithm $\A$, and it uses at most $EGO\left(T_{\max} + \frac{I}{\sqrt{p_{\mathrm{succ}}}}\right)$ additional elementary gates, where 
		$I\leq t_1+\nrm{T}_2\sqrt{\ln(T_{\max}/t_1)}$ and $\nrm{T}_2:=\sqrt{\sum_{j=1}^{m}t_j^2\cdot p_{\mathrm{stop}=t_j}}$.
	\end{corollary}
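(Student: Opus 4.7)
The plan is to combine the per-segment query bound of Corollary~\ref{cor:segmenQueries} with the uniform bounds of $(E,G,O)$-boundedness, and then control the resulting weighted sum by a Cauchy--Schwarz estimate that reproduces the bound $I\leq t_1+\nrm{T}_2\sqrt{\ln(T_{\max}/t_1)}$.

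First, inserting $\frac{\nrm{\Pi_{\mathrm{mg}}^{(m)}\A'_m\ket{\pmb{0}}}}{\nrm{\Pi_{\mathrm{mg}}^{(j-1)}\A'_{j-1}\ket{\pmb{0}}}}\leq E$ and $\prod_{i=j}^m o_i\leq O$ (the latter using that $o_i\geq 1$ always) into \eqref{eq:preciseUpperBound}, and noting that each invocation of $\A_j$ costs the marginal amount $t_j-t_{j-1}$ (because $\A_j\A_{j-1}\cdots\A_1$ has complexity $t_j$), the total query cost from $\A$ is at most $EO(T_{\max}+S/\sqrt{p_{\mathrm{succ}}})$, where $S:=\sum_{j=1}^m(t_j-t_{j-1})\sqrt{p_{\mathrm{stop}\geq t_j}}$. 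Since $p_{\mathrm{stop}\geq t_1}=1$, the $j=1$ contribution is exactly $t_1$, so it suffices to bound the remainder by $\nrm{T}_2\sqrt{\ln(T_{\max}/t_1)}$.

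The key step is to apply Cauchy--Schwarz with the cleverly chosen weights $f_j:=t_j^2-t_{j-1}^2=(t_j-t_{j-1})(t_j+t_{j-1})$:
\begin{equation*}
\sum_{j\geq 2}(t_j-t_{j-1})\sqrt{p_{\mathrm{stop}\geq t_j}} \leq \sqrt{\sum_{j\geq 2}\frac{t_j-t_{j-1}}{t_j+t_{j-1}}}\cdot \sqrt{\sum_{j\geq 2}f_j\, p_{\mathrm{stop}\geq t_j}}.
\end{equation*}
For the first factor I would use the elementary inequality $\ln x\geq 2(x-1)/(x+1)$ for $x\geq 1$, applied with $x=t_j/t_{j-1}$; the resulting upper bound telescopes to $\tfrac{1}{2}\ln(T_{\max}/t_1)$. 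For the second factor, exchanging the order of summation in $p_{\mathrm{stop}\geq t_j}=\sum_{k\geq j}p_{\mathrm{stop}=t_k}$ and telescoping $\sum_{j=2}^k f_j = t_k^2-t_1^2$ yields $\sum_k p_{\mathrm{stop}=t_k}(t_k^2-t_1^2)\leq \nrm{T}_2^2$. Multiplying the two factors gives $S\leq t_1+\nrm{T}_2\sqrt{\ln(T_{\max}/t_1)}=I$, which completes the query bound.

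For the additional elementary gates, each invocation of $\A'_j$ within $\A'_m$ contributes at most $g_j\leq G(t_j-t_{j-1})$ extra gates. Lemma~\ref{lemma:basicRecursion}, together with $p_{\mathrm{mg}}^{(j)}\leq p_{\mathrm{succ}}+p_{\mathrm{stop}\geq t_{j+1}}$ and the uniform bounds, shows that $\A'_j$ is invoked at most $EO\bigl(1+\sqrt{p_{\mathrm{stop}\geq t_{j+1}}/p_{\mathrm{succ}}}\bigr)$ times within $\A'_m$; repeating the Cauchy--Schwarz estimate yields the extra-gate bound $EGO(T_{\max}+I/\sqrt{p_{\mathrm{succ}}})$. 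The main difficulty lies in selecting the Cauchy--Schwarz weights: they must telescope cleanly on the probability side to recover $\nrm{T}_2^2$, while the reciprocal sum remains merely logarithmic in $T_{\max}/t_1$. The choice $f_j=t_j^2-t_{j-1}^2$, together with the logarithm inequality above, accomplishes both simultaneously, and ultimately explains why Ambainis's original $\sqrt{\log T_{\max}}$ factor can be shaved down here.
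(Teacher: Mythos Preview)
Your proof is correct and reaches the same bound (indeed with the same hidden constant $1/\sqrt{2}$) as the paper, but the route is genuinely different. The paper first rewrites the sum $S=\sum_j(t_j-t_{j-1})\sqrt{p_{\mathrm{stop}\geq t_j}}$ as a Riemann--Stieltjes-type integral via the inverse distribution function $F^{-1}$ of the random stopping time $T$, obtaining $S=\int_0^1 \frac{F^{-1}(p)}{2\sqrt{1-p}}\,dp$; it then splits the integral at $1-c$ with $c=t_1^2/T_{\max}^2$, bounds the tail by $T_{\max}$ and applies H\"older on the bulk. Your argument stays entirely discrete: you peel off the $j=1$ term, then apply Cauchy--Schwarz with weights $f_j=t_j^2-t_{j-1}^2$ so that one factor telescopes to $\nrm{T}_2^2$ while the other, via $\ln x\geq 2(x-1)/(x+1)$, telescopes to $\tfrac12\ln(T_{\max}/t_1)$. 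This is more elementary and avoids introducing $F^{-1}$ altogether. The paper's integral formulation, on the other hand, makes the connection to $\nrm{T}_q$ for general $q$ more transparent and clarifies why the $\sqrt{\ln}$ factor cannot be removed in general.

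One small remark: your appeal to ``$o_i\geq 1$ always'' to deduce $\prod_{i=j}^m o_i\leq O$ from $\prod_{i=1}^m o_i\leq O$ is not guaranteed by Definition~\ref{def:vtAmp} in the abstract; it holds in every concrete instantiation used later (ordinary amplitude amplification, via Corollary~\ref{cor:ampAmpEff}), and the paper tacitly relies on the same fact. Your gate-count argument with $p_{\mathrm{stop}\geq t_{j+1}}$ is also fine, since $p_{\mathrm{stop}\geq t_{j+1}}\leq p_{\mathrm{stop}\geq t_j}$ immediately reduces it to the same sum $S$.
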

	\begin{proof}
		The complexity of the algorithm segment $\A_j$ is $t_j-t_{j-1}$, and due to Corollary~\ref{cor:segmenQueries} $\A'_m$ uses $\A_j$ at most $EO\left(1+\sqrt{\frac{p_{\mathrm{stop}\geq t_j}}{p_{\mathrm{succ}}}} \right)$ times. So we can bound the complexity coming from the use of $\A_j$ by $(t_j-t_{j-1})EO\left(1+\sqrt{\frac{p_{\mathrm{stop}\geq t_j}}{p_{\mathrm{succ}}}} \right)$, and we can bound the number of additional elementary gates coming from the implementation of $\A'_j$ by $(t_j-t_{j-1})EGO\left(1+\sqrt{\frac{p_{\mathrm{stop}\geq t_j}}{p_{\mathrm{succ}}}} \right)$.
		
		We get the total complexities by summing up these contributions for all $j\in [m]$:
		\begin{align}
		E(G)O\sum_{i=1}^{m}(t_j-t_{j-1})\left(1+\sqrt{\frac{p_{\mathrm{stop}\geq t_j}}{p_{\mathrm{succ}}}}\right)
		&= E(G)O\left(T_{\max} + \frac{1}{\sqrt{p_{\mathrm{succ}}}}\sum_{j=1}^{m}(t_j-t_{j-1})\sqrt{p_{\mathrm{stop}\geq t_j}}\right).\label{eq:egoBound}
		\end{align}
		Before bounding the above expression let us introduce some notation. Let $T$ be a random variable corresponding to the stopping times such that $\mathbb{P}(T=t_j)=p_{\mathrm{stop}=t}$. Let $F$ be the distribution function of $T$, i.e., $F(t):=p_{\mathrm{stop}\leq t}$. Also let $F^{-1}(p):=\inf\{t\in \R \colon F(t)\geq p\}$ be the generalized inverse distribution function. The intuitive meaning of $F^{-1}$ is the following: $F^{-1}$ is a monotone increasing function with the property that picking a number $p\in[0,1]$ uniformly at random, and then outputting the value $F^{-1}(p)$ results in a random variable with the same distribution as $T$.
		%$I=int_{0}^1\frac{F^{-1}(p)}{2\sqrt{1-p}}dp$, 
		
		Now using the above definitions we rewrite the summation of \eqref{eq:egoBound} in the following way:
		\begin{align*}
		\sum_{j=1}^{m}(t_j-t_{j-1})\sqrt{p_{\mathrm{stop}\geq t_j}}
		&=\sum_{j=1}^{m}(t_j-t_{j-1})\sum_{k=j}^{m}\left(\sqrt{p_{\mathrm{stop}> t_{k-1}}}-\sqrt{p_{\mathrm{stop}> t_{k}}}\right)\\
		&=\sum_{k=1}^{m}\left(\sqrt{p_{\mathrm{stop}> t_{k-1}}}-\sqrt{p_{\mathrm{stop}> t_{k}}}\right)\left(\sum_{j=1}^{k}t_{j}-\sum_{j=1}^{k-1}t_{j}\right)\\		
		&=\sum_{k=1}^{m}t_k\left(\sqrt{p_{\mathrm{stop}> t_{k-1}}}-\sqrt{p_{\mathrm{stop}> t_{k}}}\right)\\	
		&=\sum_{k=1}^{m}t_k\left(\sqrt{1-F( t_{k-1})}-\sqrt{1-F( t_{k})}\right) \tag{by definition}\\
		&=\sum_{k=1}^{m}t_k\left[-\sqrt{1-p}\right]^{p=F( t_{k})}_{p=F( t_{k-1})}\\
		&=\sum_{k=1}^{m}\int_{F( t_{k-1})}^{F(t_{k})}\frac{t_k}{2\sqrt{1-p}} dp\\
		&=\sum_{k=1}^{m}\int_{F( t_{k-1})}^{F(t_{k})}\frac{F^{-1}(p)}{2\sqrt{1-p}}dp \tag{by definition}\\		
		&=\int_{0}^{1}\frac{F^{-1}(p)}{2\sqrt{1-p}}dp\\
		&=:I.
		\end{align*}
		Thus we get that the total query and gate complexity are bounded by:
		\begin{align*}
		E(G)O\left(T_{\max} + \frac{I}{\sqrt{p_{\mathrm{succ}}}}\right).
		\end{align*}
		Now we finish the proof by bounding $I$. Let $c=t_1^2/T_{\max}^{2}\in(0,1)$, then we can bound $I$ as 
		follows\footnote{Note that the presented upper bound is quite tight, and one may not be able to remove the $\sqrt{\ln(T_{\max})}$ factor. Take for example $T_{\max}\geq 1$ and $F^{-1}(p):=\min(T_{max},(1-p)^{-1/2})$. Then 
			$$I=\frac{1}{2}\int_{0}^{1-T^{-2}_{max}} (1-p)^{-1}dp+\int_{1-T^{-2}_{max}}^{1} \frac{T_{max}}{2\sqrt{1-p}}dp
			=1+\ln(T_{max}),$$ 
			whereas $\nrm{T}_2=\sqrt{1+2\ln(T_{max})}$.
			\anote{For next version: The lower bound of Ambainis~\cite{AmbainisVariableTimeSearch06} seems to imply that $I=\Omega(\nrm{T}_2)$.}
}:
		\begin{align}
		I&=\int_{0}^{1-c}\frac{F^{-1}(p)}{2\sqrt{1-p}}dp+\int_{1-c}^{1}\frac{F^{-1}(p)}{2\sqrt{1-p}}dp\nonumber\\
		&\leq\int_{0}^{1-c}\frac{F^{-1}(p)}{2\sqrt{1-p}}dp+\int_{1-c}^{1}\frac{T_{\max}}{2\sqrt{1-p}}dp\tag{since $F^{-1}\leq T_{\max}$}\\
		&=\int_{0}^{1-c}\frac{F^{-1}(p)}{2\sqrt{1-p}}dp+\sqrt{c}T_{\max}\nonumber\\
		&\leq \frac{1}{2}\sqrt{\int_{0}^{1-c}\left(F^{-1}(p)\right)^2 dp}
		\sqrt{\int_{0}^{1-c}\frac{1}{1-p} dp}+\sqrt{c}T_{\max}\tag{by Hölder's inequality}\\
		&\leq \frac{1}{2}\sqrt{\int_{0}^{1}\left(F^{-1}(p)\right)^2 dp}
		\sqrt{\int_{0}^{1-c}\frac{1}{1-p} dp}+\sqrt{c}T_{\max}\tag{since $F^{-1}\geq 0$}\\						
		&= \sqrt{c}T_{\max}+ \frac{\nrm{T}_2}{2} \sqrt{\int_{0}^{1-c}\frac{1}{1-p} dp}\tag{by the definition of $F^{-1}$}\\	
		&= \sqrt{c}T_{\max}+ \frac{\nrm{T}_2}{2} \sqrt{\int_{c}^{1}\frac{1}{x} dx}\tag{by a change of variable}\\			
		&= t_1+ \frac{\nrm{T}_2}{\sqrt{2}} \sqrt{\ln(T_{\max}/t_1)}.\tag{by the definition of $c$}	
		%&= \sqrt{c}T_{\max}+ \frac{\nrm{T}_q}{2} \left(\left[\frac{2}{2-r}p^{1-r/2}\right]_{c}^{1}\right)^{1/r}
		\end{align}			
	\end{proof}	

	Note that the above upper bound depends on the distribution of stopping times only via $\nrm{T}_2$. 
	Also  we can reduce the number of distinct stopping times to $\leq1+\log(T_{\max}/t_1)$ while increasing the value of $\nrm{T}_2$ by at most a constant factor. Therefore one may assume that $m\leq1+\log(T_{\max}/t_1)$.

	\subsection{Efficient variable-time amplitude amplification and estimation}
	
	Now that we have carefully analyzed the complexity of uniformly bounded variable-time amplifications, 
	%we give a few alternative constructions providing bounds of different flavor.
	we finally apply the results in order to obtain efficient algorithms.
	
	The basic method is to use ordinary amplitude amplification in each amplification phase, which was the original approach that Ambainis used~\cite{AmbainisVariableTime12}.
	In order to understand the efficiency of this approach we invoke a result of Aaronson and Ambainis~\cite[Lemma~9]{AaronsonSpatialSearch}, which carefully analyses the efficiency of amplitude amplification.
	We present their result in a slightly reformulated way, fitting the framework of the presented work better.
	\begin{lemma}[Efficiency of ordinary amplitude amplification]\label{lemma:ampAmpEff}
		Suppose that $\A$ is a quantum algorithm, $\Pi$ is an orthogonal projector, and $\alpha=\nrm{\Pi\A\ket{\pmb{0}}}$. Let $\A^{(k)}$ denote the quantum algorithm that applies $k$ amplitude amplification steps on the outcome of $\A$. If
		$$ k \leq \frac{\pi}{4\arcsin(\alpha)}-\frac{1}{2},$$
		then
		$$ \nrm{\Pi\A^{(k)}\ket{\pmb{0}}} \geq \sqrt{1-\frac{(2k+1)^2\alpha^2}{3}}(2k+1)\alpha.$$		
	\end{lemma}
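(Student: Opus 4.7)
The plan is to reduce the claim to the standard two-dimensional picture of Grover-type amplitude amplification and then verify a single elementary trigonometric inequality. Set $\theta = \arcsin(\alpha)\in[0,\pi/2]$, so that $\sin\theta=\alpha$, and decompose $\A\ket{\pmb{0}}=\alpha\ket{\psi_g}+\sqrt{1-\alpha^2}\ket{\psi_b}$ with $\ket{\psi_g}\in\im(\Pi)$ and $\ket{\psi_b}\in\ker(\Pi)$ unit vectors. Each amplitude-amplification step is a rotation by $2\theta$ in the plane spanned by $\ket{\psi_g}$ and $\ket{\psi_b}$, so after $k$ steps the component in $\im(\Pi)$ has norm $\sin((2k+1)\theta)$; this is the textbook analysis of Grover's iteration.

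Next I would observe that the hypothesis $k\leq \pi/(4\arcsin\alpha)-1/2$ is exactly $(2k+1)\theta\leq \pi/2$. Combined with the elementary inequality $\arcsin\alpha\geq\alpha$ (which follows from $\sin x\leq x$ for $x\geq 0$), this gives $(2k+1)\alpha\leq(2k+1)\theta\leq \pi/2$, hence by monotonicity of $\sin$ on $[0,\pi/2]$,
\begin{equation*}
\nrm{\Pi\A^{(k)}\ket{\pmb{0}}}=\sin((2k+1)\theta)\geq\sin((2k+1)\alpha).
\end{equation*}

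It therefore suffices to prove $\sin x\geq x\sqrt{1-x^2/3}$ for all $x\in[0,\pi/2]$, and apply it at $x=(2k+1)\alpha$. Squaring, this is equivalent to $\sin^2 x\geq x^2-x^4/3$. Expanding
\begin{equation*}
\sin^2 x=\tfrac{1}{2}(1-\cos 2x)=\sum_{n\geq 1}(-1)^{n+1}\frac{2^{2n-1}x^{2n}}{(2n)!}=x^2-\tfrac{x^4}{3}+\tfrac{2x^6}{45}-\cdots,
\end{equation*}
the desired inequality reduces to showing that the omitted tail is non-negative, i.e.\ that the series is alternating with non-increasing magnitudes on $[0,\pi/2]$. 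The ratio of the magnitudes of consecutive terms equals $4x^2/((2n+2)(2n+1))$, which is bounded by $\pi^2/12<1$ for every $n\geq 1$ on this interval, so the alternating-series truncation bound applies and the inequality follows. The one mildly delicate step is this tail estimate; no other obstacles are expected.
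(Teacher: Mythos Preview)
Your proof is correct. Note, however, that the paper does not supply its own proof of this lemma: it is quoted (in slightly reformulated form) from Aaronson and Ambainis~[Lemma~9, \emph{Quantum search of spatial regions}], so there is no in-paper argument to compare against. Your approach---reducing to $\sin((2k+1)\theta)$ via the standard two-dimensional Grover rotation, then using $\theta\ge\alpha$ and the elementary inequality $\sin x\ge x\sqrt{1-x^2/3}$ on $[0,\pi/2]$ established by the alternating Taylor tail of $\sin^2 x$---is a clean, self-contained derivation that makes the lemma independent of the cited reference.
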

	The above result essentially states that if the amplification does not start to wrap around, then the inefficiency of the amplification step is bounded by the final amplitude squared. We make this claim precise in the following corollary.
	\begin{corollary}[A bound on amplification ratio]\label{cor:ampAmpEff}
		Suppose that $\A, \Pi, \alpha$ and $\A^{(k)}$ are as in Lemma~\ref{lemma:ampAmpEff}. If we do not overamplify, i.e., $(2k+1)\arcsin(\alpha)\leq \pi/2$, then 
		$$
				\frac{2k+1}{\left(\frac{\nrm{\Pi\A^{(k)}\ket{\pmb{0}}}}{\alpha}\right)}\leq 1 + \frac{3}{2}\nrm{\Pi\A^{(k)}\ket{\pmb{0}}}^2.
		$$ 
	\end{corollary}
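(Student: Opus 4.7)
The plan is to work directly in the trigonometric picture of amplitude amplification rather than unpack the quadratic estimate of Lemma~\ref{lemma:ampAmpEff}. Set $\theta := \arcsin(\alpha)$ and $\phi := (2k+1)\theta$. The standard analysis of amplitude amplification gives $\nrm{\Pi\A^{(k)}\ket{\pmb{0}}}=\sin(\phi)$, and the non-overamplification hypothesis $(2k+1)\arcsin(\alpha)\leq\pi/2$ is precisely $\phi\in[0,\pi/2]$. The quantity we need to bound then becomes
\[
\frac{2k+1}{\nrm{\Pi\A^{(k)}\ket{\pmb{0}}}/\alpha}=\frac{(2k+1)\sin\theta}{\sin\phi}.
\]

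The first step is to bound numerator and denominator by elementary trigonometric inequalities. For the numerator I use $\sin\theta\leq\theta$, giving $(2k+1)\sin\theta\leq(2k+1)\theta=\phi$. For the denominator, the Taylor lower bound $\sin\phi\geq\phi-\phi^3/6=\phi(1-\phi^2/6)$ holds on $[0,\sqrt{6}]\supset[0,\pi/2]$. Combining the two yields
\[
\frac{(2k+1)\sin\theta}{\sin\phi}\leq\frac{1}{1-\phi^2/6}.
\]
Because $\phi^2/6\leq\pi^2/24<1/2$, the elementary inequality $(1-x)^{-1}\leq 1+2x$ on $x\in[0,1/2]$ lets me linearize this to $1+\phi^2/3$.

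The last step is to convert the remaining $\phi^2$ into $\sin^2\phi=\nrm{\Pi\A^{(k)}\ket{\pmb{0}}}^2$. For this I would invoke Jordan's inequality $\sin\phi\geq 2\phi/\pi$ on $\phi\in[0,\pi/2]$, yielding $\phi^2\leq(\pi^2/4)\sin^2\phi$ and therefore $\phi^2/3\leq(\pi^2/12)\sin^2\phi\leq\tfrac{3}{2}\sin^2\phi$, the last step using only $\pi^2/12<3/2$. Chaining the inequalities gives the claim.

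The only mildly delicate point is arranging a single constant that works both in the small-$\phi$ regime (where the cubic Taylor remainder controls everything) and the $\phi\approx\pi/2$ regime (where Jordan's inequality is tight), but since $\pi^2/12\approx0.82$ sits comfortably below $3/2$ the argument has substantial slack and no careful optimization is needed; the main care is just in verifying that the domain restrictions for each classical inequality are satisfied under the hypothesis $\phi\leq\pi/2$.
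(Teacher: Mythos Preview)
Your proof is correct. It takes a route that is close in spirit to the paper's but is more direct. The paper invokes Lemma~\ref{lemma:ampAmpEff} (the Aaronson--Ambainis bound $\nrm{\Pi\A^{(k)}\ket{\pmb{0}}}\geq\sqrt{1-(2k+1)^2\alpha^2/3}\,(2k+1)\alpha$) as a black box, inverts it, replaces $\alpha$ by $\arcsin\alpha$, and then applies the estimate $(1-x)^{-1/2}\leq 1+\tfrac{5}{3}x$ on $[0,\pi^2/12]$ followed by Jordan's inequality and $5\pi^2/36\leq 3/2$. You instead work directly with the exact identity $\nrm{\Pi\A^{(k)}\ket{\pmb{0}}}=\sin\phi$ (which the paper also appeals to, but only at the very end), use the cubic Taylor bound $\sin\phi\geq\phi(1-\phi^2/6)$, then $(1-x)^{-1}\leq 1+2x$ on $[0,1/2]$, and finish with the same Jordan step and the same numerical slack $\pi^2/12<3/2$. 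Your argument thus bypasses Lemma~\ref{lemma:ampAmpEff} entirely and avoids the square-root estimate, making it slightly more elementary and self-contained; the paper's version has the minor advantage of explicitly linking the corollary to the cited Aaronson--Ambainis result.
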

	\begin{proof}
		\begin{align*}
			\frac{(2k+1)\alpha}{\nrm{\Pi\A^{(k)}\ket{\pmb{0}}}}
			&\leq \frac{1}{\sqrt{1-\frac{(2k+1)^2\alpha^2}{3}}}\tag{by Lemma~\ref{lemma:ampAmpEff}}\\
			&\leq \frac{1}{\sqrt{1-\frac{(2k+1)^2\arcsin^2(\alpha)}{3}}}\tag*{$\left(\forall \alpha\in [0,1]\colon \arcsin(\alpha)\geq \alpha\right)$}\\		
			&\leq 1 + \frac{5(2k+1)^2\arcsin^2(\alpha)}{9} \tag*{$\left(\forall x\in [0,\pi^2/12]\colon \frac{1}{\sqrt{1-x}}\leq 1+\frac{5}{3}x\right)$}\\
			%Checked by Mathematica: 
			%Plot[{1/Sqrt[1 - x], 1 + 5 x/3}, {x, 0, Pi^2/12}]
			%Solve[1/(1 - x^2) == (1 + 6 x^2/5)^2, x] // N
			%Pi^2/12 // N
			&\leq 1 + \frac{5\pi^2\sin^2\left((2k+1)\arcsin(\alpha)\right)}{9 \cdot 4} \tag*{$\left(\forall y\in [0,\pi/2]\colon y\leq \frac{\pi}{2}\sin(y)\right)$}\\
			&\leq 1 + \frac{3\sin^2\left((2k+1)\arcsin(\alpha)\right)}{2}\tag*{$\left(\frac{5\pi^2}{36}\leq \frac{3}{2}\right)$}\\		
			&=1+\frac{3}{2}\nrm{\Pi\A^{(k)}\ket{\pmb{0}}}^2.
		\end{align*}
		The last equality comes from the usual geometric analysis of amplitude amplification.
	\end{proof}

	Now we turn to proving our result about the efficiency of variable-time amplitude amplification and estimation. A trick we employ is to carefully select the amount of amplification in each phase so that the inefficiencies remain bounded.

	\begin{lemma}[Analysis of variable-time amplitude amplification]
	\label{lemma:vtaa-analysis}
		Suppose $\A'$ is a variable-time amplification such that for all $j\in[m]$ the $j$-th amplification phase $\A'_j$ uses $k_j\geq 0$ ordinary amplitude amplification steps such that
		\begin{equation}\label{eq:dontOverAmp}
			k_j \leq \frac{\pi}{4\arcsin(\alpha)}-\frac{1}{2},
		\end{equation}
		and for all $j\in[m]$ 
		\begin{equation}\label{eq:incrementalAmp}
		\nrm{\Pi_{\mathrm{mg}}^{(j)}\A'_{j}\ket{\pmb{0}}}=\Theta\left(\max\left[\frac{1}{\sqrt{m}},\frac{1}{\sqrt{m-j+1}\left(1+\ln(m-j+1)\right)}\right]\right).
		\end{equation}
		%Then $E=\bigO{\sqrt{m}}$ and $O=\bigO{1}$.
		%If $m=\bigO{\log(T_{\max})}$ and $t_1=\Omega(1)$, then u
		Using $\A'$ the variable-time amplification problem can be solved with query complexity\footnote{Using $m=\bigO{\log(2T_{\max}/t_1)}$, by a little bit more careful analysis in Corollary~\ref{cor:boundWithI}, one can further improve this bound, in particular one can reduce the $T_{\max}\sqrt{m}$ term to $T_{\max}$.}
		$$\bigO{T_{\max}\sqrt{m}+ \frac{\nrm{T}_2\sqrt{\log(2T_{\max}/t_1)}\sqrt{m}}{\sqrt{p_\mathrm{succ}}}}.$$
	\end{lemma}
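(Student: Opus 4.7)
The plan is to apply Corollary~\ref{cor:boundWithI}, after verifying that the hypothesized variable-time amplification $\A'$ is $(E,G,O)$-bounded with $E\cdot O = O(\sqrt{m})$. Since the statement concerns only query complexity, I may ignore the gate-count parameter $G$ and focus on bounding $E$ and $O$.

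The main technical step is controlling the product of overheads $O = \prod_{i=1}^m o_i$. Each overhead $o_i = (2k_i+1)/a_i$ measures how much more costly phase $i$ is than its ideal amplification ratio. The non-overamplification hypothesis \eqref{eq:dontOverAmp} is precisely what is needed to apply Corollary~\ref{cor:ampAmpEff}, which gives
\begin{equation*}
o_i \;\leq\; 1 + \tfrac{3}{2}\,\nrm{\Pi_{\mathrm{mg}}^{(i)}\A'_i\ket{\pmb{0}}}^2.
\end{equation*}
Using $\ln(1+x)\leq x$, I then bound $\ln O \leq \sum_{i=1}^m \tfrac{3}{2}\nrm{\Pi_{\mathrm{mg}}^{(i)}\A'_i\ket{\pmb{0}}}^2$. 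Plugging in the amplitude schedule \eqref{eq:incrementalAmp}, the contribution of the $1/\sqrt{m}$ branch sums to $O(1)$ (there are $m$ terms of size $1/m$), while the contribution of the other branch, after reindexing $k=m-i+1$, reduces to $\sum_{k=1}^m 1/(k(1+\ln k)^2)$, which is bounded by a convergent integral. Therefore $O = \Theta(1)$. This is the crux of the argument, and the reason for the unusual logarithmic correction $1+\ln(m-j+1)$ in \eqref{eq:incrementalAmp}: without it the series would diverge like $\log m$, and we would lose an extra $\sqrt{\log m}$ factor.

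Bounding $E = \max_j \nrm{\Pi_{\mathrm{mg}}^{(m)}\A'_m\ket{\pmb{0}}}/\nrm{\Pi_{\mathrm{mg}}^{(j-1)}\A'_{j-1}\ket{\pmb{0}}}$ is then straightforward: the numerator is $\Theta(1)$ by taking $j=m$ in \eqref{eq:incrementalAmp}, and the denominator is $\Omega(1/\sqrt{m})$ uniformly in $j$ by the same schedule (with the convention $\A'_0 = I$ making the $j=1$ denominator equal to $1$). Hence $E = O(\sqrt{m})$ and $E\cdot O = O(\sqrt{m})$.

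Substituting these bounds into Corollary~\ref{cor:boundWithI}, together with the given estimate $I \leq t_1 + \nrm{T}_2\sqrt{\ln(T_{\max}/t_1)}$ and the trivial inequality $t_1 \leq T_{\max}$, yields the claimed query complexity $\bigO{T_{\max}\sqrt{m} + \sqrt{m\log(2T_{\max}/t_1)}\,\nrm{T}_2/\sqrt{p_{\mathrm{succ}}}}$. That $\A'$ solves the variable-time amplification problem --- namely, achieves constant amplification --- is immediate from $\nrm{\Pi_{\mathrm{mg}}^{(m)}\A'_m\ket{\pmb{0}}} = \Theta(1)$, which is the $j=m$ instance of \eqref{eq:incrementalAmp}.
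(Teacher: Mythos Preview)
Your proposal is correct and follows essentially the same approach as the paper's own proof: you apply Corollary~\ref{cor:ampAmpEff} (enabled by the non-overamplification hypothesis \eqref{eq:dontOverAmp}) to bound each $o_j$, use $\ln(1+x)\le x$ to reduce $O$ to the exponential of a sum, split that sum into the $1/m$ branch and the $1/\bigl(k(1+\ln k)^2\bigr)$ branch and bound the latter by an integral, then read off $E=O(\sqrt{m})$ from \eqref{eq:incrementalAmp} and invoke Corollary~\ref{cor:boundWithI}. Your added remark explaining why the $(1+\ln(m-j+1))$ factor is needed (to make the overhead series converge) is a nice touch not made explicit in the paper.
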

	\begin{proof}
		We get that $E= \bigO{\sqrt{m}}$ from \eqref{eq:incrementalAmp} immediately.
		We need to work a bit more for bounding $O$:
		\begin{align*}
			O&=\prod_{j=1}^m o_j=\prod_{j=1}^m \frac{q_j}{a_j}=\prod_{j=1}^m \frac{2k_j+1}{\left(\frac{\nrm{\Pi_{\mathrm{mg}}^{(j)}\A'_{j}\ket{\pmb{0}}}}{\nrm{\Pi_{\mathrm{mg}}^{(j)}\A_j\A'_{j-1}\ket{\pmb{0}}}}\right)}\\
			&\leq\prod_{j=1}^m \left(1+ \frac{3}{2} \nrm{\Pi_{\mathrm{mg}}^{(j)}\A'_{j}\ket{\pmb{0}}}^2 \right)\tag{by Corollary~\ref{cor:ampAmpEff}}\\
			&\leq\exp\left(\sum_{j=1}^m \ln\left(1+\frac{3}{2}\nrm{\Pi_{\mathrm{mg}}^{(j)}\A'_{j}\ket{\pmb{0}}}^2\right)\right) \leq\exp\left(\sum_{j=1}^m \frac{3}{2} \nrm{\Pi_{\mathrm{mg}}^{(j)}\A'_{j}\ket{\pmb{0}}}^2\right) \\			
			&\leq\exp\left(C\sum_{j=1}^m \max\left[\frac{1}{m},\frac{1}{(m-j+1)\left(1+\ln(m-j+1)\right)^2}\right]\right)\tag{for some $C\in\mathbb{R}_+$ by \eqref{eq:incrementalAmp}} \\				
		\end{align*}
		Now we show that $O\leq\exp(3C)=\bigO{1}$ by bounding the expression inside the exponent:
		\begin{align*}
			\sum_{j=1}^m \max\left[\frac{1}{m},\frac{1}{(m-j+1)\left(1+\ln(m-j+1)\right)^2}\right]
			&\leq \sum_{j=1}^m \frac{1}{m}+ \sum_{j=1}^m\frac{1}{(m-j+1)\left(1+\ln(m-j+1)\right)^2}\\
			&=1+\sum_{j=1}^m\frac{1}{j\left(1+\ln(j)\right)^2}		=2+\sum_{j=2}^m\frac{1}{j\left(1+\ln(j)\right)^2}\\
			&\leq 2+\int_{1}^{m} \frac{1}{x (1+\ln(x))^2} dx\\
			&=2+\left[\frac{-1}{1+\ln(x)}\right]_{1}^{m}\leq 3.
		\end{align*}		
		Using Corollary~\ref{cor:boundWithI} we get the final complexity claim of variable-time amplification.
	\end{proof}
	
	Now we describe how to efficiently construct a variable-time amplitude amplification algorithm satisfying the above requirements, and derive an efficient algorithm for variable-time amplitude estimation.

	\begin{theorem}[Efficient variable-time amplitude amplification and estimation]\label{thm:efficient-vtaa-vtae}
		Let $U$ be a state-preparation unitary that prepares the sate $U\ket{0}^{\!\otimes k}=\sqrt{p_{\mathrm{prep}}}\ket0\ket{\psi_0}+\sqrt{1-p_{\text{prep}}}\ket1\ket{\psi_1}$ and has query complexity $T_U$. Suppose that $\A$ is a variable-stopping-time algorithm such that 
		%$t_1\geq1$, and 
		we know lower bounds $p_{\mathrm{prep}}\geq p'_{\mathrm{prep}}$ and $p_{\mathrm{succ}}\geq p'_{\mathrm{succ}}$. Let $T'_{\max}:=2T_{\max}/t_1$ and
		$$Q=\left(T_{\max}+\frac{T_U+k}{\sqrt{p_{\mathrm{prep}}}}\right)\sqrt{\log(T'_{\max})}+ \frac{\left(\nrm{T}_2+\frac{T_U+k}{\sqrt{p_{\mathrm{prep}}}}\right)\log(T'_{\max})}{\sqrt{p_\mathrm{succ}}}.$$
		We can construct with success probability at least $1-\delta$ a variable-stopping time algorithm $\A'$ that prepares a state $a\ket{0}\A'\ket{\psi_0}+\sqrt{1-a^2}\ket{1}\ket{\psi_{\mathrm{garbage}}}$, such that $a=\Theta(1)$ and $\A'$ has complexity $\bigO{Q}$, moreover the quantum procedure constructing the classical description of the circuit of $\A'$ has query complexity
		$$\bigO{Q\log\left(T'_{\max}\right)\log\left(\frac1\delta\log\left(\frac{T'_{\max}}{p'_{\mathrm{prep}}p'_{\mathrm{succ}}}\right)\right)}.$$		
		Also, for any $\eps\in(0,1)$ the $\eps$-mindful-amplification problem can be solved using $\A'$ with complexity
		$$\bigO{\frac{Q}{\eps}\log^2\left(T'_{\max}\right)\log\left(\frac{\log(T'_{\max})}{\delta}\right)}.$$ 	
	\end{theorem}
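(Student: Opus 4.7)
The plan is to fit the theorem into the framework of Lemma~\ref{lemma:vtaa-analysis}. First I would merge $U$ into the variable-stopping-time algorithm as a ``phase $0$'': apply $U$, let an added control flag mark the branch $\ket{1}\ket{\psi_1}$ as ``bad'' immediately, and amplify against that flag so the resulting initial state is $\Theta(1)$-close to $\ket{0}\ket{\psi_0}$. This contributes the additive segment cost $(T_U+k)/\sqrt{p_{\mathrm{prep}}}$ that appears inside both bracketed terms of $Q$. Second, by the remark after Corollary~\ref{cor:boundWithI} I can bucket the stopping times into dyadic intervals $[2^i t_1,2^{i+1}t_1)$ and assume $m\leq 1+\log(T'_{\max})=\bigO{\log(T'_{\max})}$, losing only a constant factor in $\nrm{T}_2$.

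The phases $\A'_j$ are then built inductively. At phase $j$, I take $\A_j\A'_{j-1}$ and run ordinary amplitude amplification against $\Pi_{\mathrm{mg}}^{(j)}$ for $k_j$ rounds, where $k_j$ is chosen so that the target amplitude \eqref{eq:incrementalAmp} is met. Since the right $k_j$ depends on $\nrm{\Pi_{\mathrm{mg}}^{(j)}\A_j\A'_{j-1}\ket{\pmb{0}}}$, which is a priori unknown, I first run amplitude estimation on $\A_j\A'_{j-1}$ to learn this norm to within a small constant multiplicative factor, and round to the largest integer $k_j$ still respecting the no-overshoot condition \eqref{eq:dontOverAmp}. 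Each of these $m$ amplitude-estimation calls is median-boosted to failure $\delta/\bigO{m}$, so a union bound yields overall success $1-\delta$; the precisions required are capped by $p'_{\mathrm{prep}}$ and $p'_{\mathrm{succ}}$, which produce the $\log(T'_{\max}/(p'_{\mathrm{prep}}p'_{\mathrm{succ}}))$ term in the outer logarithm of the construction cost.

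The runtime of the resulting $\A'_m$ then follows directly from Lemma~\ref{lemma:vtaa-analysis}: the uniform bounds $(E,G,O)=(\bigO{\sqrt m},\bigO{1},\bigO{1})$ yield query complexity $\bigO{Q}$, with the $\sqrt m=\sqrt{\log T'_{\max}}$ factors matching both terms of $Q$. Multiplying by the cost of the amplitude-estimation subroutines and their median-boosting produces the advertised $\bigO{Q\log(T'_{\max})\log\bigl(\tfrac1\delta\log(T'_{\max}/(p'_{\mathrm{prep}}p'_{\mathrm{succ}}))\bigr)}$ construction cost.

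For the $\eps$-mindful-amplification problem, during construction I would additionally record a classical estimate $\tilde a_j$ of each per-phase amplification factor $a_j$ to multiplicative precision $\eps/(3m)$, which only inflates each amplitude-estimation call by a $\log(m/\eps)$ factor. After $\A'$ is built, one extra amplitude estimation on the final $\Theta(1)$-amplitude to multiplicative precision $\eps/3$ yields an estimate $\tilde N\approx \nrm{\Pi_{\mathrm{mg}}^{(m)}\A'\ket{\pmb{0}}}$ at cost $\bigO{Q/\eps}$ times the logarithmic boosting factors inside $\A'$. Setting $\Gamma:=\tilde N/\prod_{j=1}^m \tilde a_j$ and propagating the $\eps/(3m)$ and $\eps/3$ errors gives the required $(1\pm\eps)$-multiplicative estimate of $\nrm{\Pi\A'\ket{\pmb{0}}}/\nrm{\Pi\A\ket{\pmb{0}}}$. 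The main technical obstacle I anticipate is calibrating the per-phase estimation precision tightly enough that the amplified amplitude reliably lands in the narrow window of \eqref{eq:incrementalAmp} while keeping the product overhead $O=\prod o_j$ at $\bigO{1}$ via Corollary~\ref{cor:ampAmpEff}; too tight a window inflates the estimation cost, and too loose a window risks $O$ blowing up exponentially in $m$.
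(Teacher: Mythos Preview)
Your proposal is correct and follows essentially the same route as the paper: pre-amplify $U$ and fold it into phase~$0$, sparsify stopping times to $m=\bigO{\log T'_{\max}}$, use constant-precision amplitude estimation at each phase to choose $k_j$ satisfying \eqref{eq:dontOverAmp}--\eqref{eq:incrementalAmp}, invoke Lemma~\ref{lemma:vtaa-analysis} for the $\bigO{Q}$ bound, and for mindful amplification estimate each per-phase amplification $a_j$ to precision $\Theta(\eps/m)$ so that $\Gamma=\prod_j\tilde a_j$ is an $\eps$-multiplicative estimate of the total gain.

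Two small slips worth tightening. First, upgrading an amplitude-estimation call from constant precision to multiplicative precision $\eps/(3m)$ costs a multiplicative factor of $\Theta(m/\eps)$, not $\log(m/\eps)$; summed over the $m$ phases this is where the $\log^2(T'_{\max})/\eps$ in the final bound comes from. Second, for the mindful-amplification problem as stated, $\Gamma$ is the amplification ratio $\nrm{\Pi\A'\ket{\pmb 0}}/\nrm{\Pi\A\ket{\pmb 0}}$, so you want $\Gamma:=\prod_j\tilde a_j$ directly; your $\tilde N/\prod_j\tilde a_j$ is instead an estimate of $\nrm{\Pi\A\ket{\pmb 0}}$ itself (which is the full amplitude-estimation answer, one step beyond mindful amplification). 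Neither issue affects the validity of the approach.
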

	\begin{proof}
		We describe how to construct a variable-time amplification algorithm as described in Lemma~\ref{lemma:vtaa-analysis}. 
				
		We will use the following fact throughout the proof. If $\B$ is a quantum algorithm such that $\B\ket{0}^{\!\otimes k}=b\ket0\ket{\phi_0}+\sqrt{1-b^2}\ket1\ket{\phi_1}$, then for arbitrary $j\in\N$ we can boost the success probability of amplitude estimation in a way that it outputs either $b<2^j$ or $b\geq 2^{-j}$ such that the output is correct with probability at least $1-\delta'$. Moreover, if the implementation cost of $\B$ is $T_{\B}$, then the cost of the procedure is $\bigO{2^j(T_{\B}+k)\log(1/\delta)}$.
		
		Using the above version of amplitude estimation we can estimate $p_{\mathrm{prep}}$ with constant multiplicative precision and success probability at least $1-\delta/4$ with complexity $\bigO{\frac{T_{U}+k}{\sqrt{p_{\mathrm{prep}}}}\log\left(\frac{\log(1/p'_{\mathrm{prep}})}{\delta}\right)}$.
		Then we amplify $T_U$ using $\Theta(1/\sqrt{p_{\mathrm{prep}}})$ amplification steps, to get amplitude $\Theta(1)$ on the state $\ket{0}\ket{\psi_0}$, and define a new variable-time algorithm $\widetilde{\A}$ by appending the amplified version of $T_U$ to the beginning of the algorithm. This adds $C:=\Theta\left(\frac{T_{U}+k}{\sqrt{p_{\mathrm{prep}}}}\right)$ to the complexity of the first step of the algorithm.
		
		In order to get the claimed bounds we ``sparsify'' the stopping times yielding $\widetilde{m}=\bigO{\log(T'_{\max})}$, without changing $\nrm{\widetilde{T}}_2$ too much. Let us define $\widetilde{m}:=\lceil\log_2(T'_{\max})\rceil$, and also for all $j\in[\widetilde{m}]$ $t'_j:=\max\left(t_j:t_j\text{ is a stopping time of }\A\text{ which is less than or equal }2^jt_1\right)$. Then we define the stopping times of $\widetilde{\A}$ for all $j\in[\widetilde{m}]$ such that $\tilde{t}_j=C+t'_j$. Then clearly we have that $\widetilde{m}=\bigO{\log(T'_{\max})}$, and
		$\widetilde{T}_{\max}=T_{\max}+\Theta\left(\frac{T_{U}+k}{\sqrt{p_{\mathrm{prep}}}}\right)$, and $\nrm{\widetilde{T}}_2
		%\leq\nrm{T'}_2+\Theta\left(\frac{T_{U}+k}{\sqrt{p_{\mathrm{prep}}}}\right)
		\leq2\left(\nrm{T}_2+\Theta\left(\frac{T_{U}+k}{\sqrt{p_{\mathrm{prep}}}}\right)\right)$.
		
		Following Definition~\ref{def:vtAmp} we construct the variable-time amplification $\widetilde{\A}'$ inductively. For each $j\in[\widetilde{m}]$ after running the algorithm $\widetilde{\A}_j\widetilde{\A}_{j-1}'$ we estimate the maybe-good amplitude with constant multiplicative precision and success probability at least $1-\delta/4/(\log(\widetilde{m})+\log(1/p'_{\mathrm{succ}}))$.
		Then we get the algorithm segment $\widetilde{\A}'_j$ by applying amplitude amplification $k_j$ times on $\widetilde{\A}_j\widetilde{\A}_{j-1}'$, such that requirements \eqref{eq:dontOverAmp}-\eqref{eq:incrementalAmp} are satisfied. Observe that upon success the the cost of the amplitude estimation procedure is at most $\bigO{\log\left(\frac{\widetilde{m}+\log_2\left(1/p'_{\mathrm{succ}}\right)}{\delta}\right)}$ times the cost of running $\widetilde{\A}'_j$. Moreover the overall success probability is at least $1-\delta/2$, since upon success there can be at most $\widetilde{m}+\log_2\left(1/p'_{\mathrm{succ}}\right)$ amplification steps.
		
		%We define two thresholds $l_j,u_j\in (0,1)$ which are our desired lower and upper bounds on the amplitude after the $j$-th phase (these can be deduced, e.g., from \eqref{eq:incrementalAmp}.). First we define $\A_0:=\A'_0:=I$. Let $j\in[m]$ and suppose we already constructed $\A'_{j-1}$, then we run the algorithm segment $\A_{j}$ on $\A'_{j-1}\ket{\pmb{0}}$, and determine the amplitude $\alpha'_{j}:=\nrm{\Pi_{\mathrm{mg}}^{(1)}\A_{j}\A'_{j-1}\ket{\pmb{0}}}$ up to multiplicative precision $1/2$. If our estimate $\tilde{\alpha}'_j > l_j$, then we simply define $\A'_{j}:=\A_{j}\A'_{j-1}$, in which case we call the $j$-th step a \emph{trivial step}. 
		
		Note that the above procedure needs to be completed only once in order to construct a variable-time amplification $\widetilde{\A}'$, that satisfies the requirements of Lemma~\ref{lemma:vtaa-analysis}.
		The complexity bound on $\widetilde{\A}'$ follows from  Lemma~\ref{lemma:vtaa-analysis}.  There is no need to use the full procedure constructing $\widetilde{\A}'$ when we use the variable-time amplification itself. The query and gate complexity of the above procedure matches the query and gate complexity of the resulting variable-time amplification up to a factor $\bigO{\widetilde{m}\log\left(\frac{\widetilde{m}+\log\left(1/p'_{\mathrm{succ}}\right)}{\delta}\right)}$, since the sum of the cost of the algorithms  $\widetilde{\A}_j$ is upper bounded by $\widetilde{m}$ times the cost of the variable-time amplified algorithm $\widetilde{\A}'$.		
		
		Finally observe that in order to get an estimate $\Gamma$ such that $\frac{\nrm{\Pi_{\mathrm{mg}}^{(\widetilde{m})} \widetilde{\A}' \ket{\pmb{0}}}}{\Gamma\nrm{\Pi_{\mathrm{mg}}^{(\widetilde{m})} \widetilde{\A} \ket{\pmb{0}}}}\in \left[1-\eps, 1+\eps\right]$, it suffices to obtain estimates $\gamma_j$
		such that $\frac{\nrm{\Pi_{\mathrm{mg}}^{(j)} \widetilde{\A}'_j \ket{\pmb{0}}}}{\gamma_j\nrm{\Pi_{\mathrm{mg}}^{(j)} \widetilde{\A}_j\widetilde{\A}'_{j-1} \ket{\pmb{0}}}}\in \left[1-\frac{\eps}{2\widetilde{m}}, 1+\frac{\eps}{2\widetilde{m}}\right]$. Then $\Gamma:=\prod_{j=1}^{\widetilde{m}}\gamma_j$ is a good enough estimate since
		\begin{align*}
		\prod_{j=1}^{\widetilde{m}} \frac{\nrm{\Pi_{\mathrm{mg}}^{(j)} \widetilde{\A}'_j \ket{\pmb{0}}}}{\nrm{\Pi_{\mathrm{mg}}^{(j)} \widetilde{\A}_j\widetilde{\A}'_{j-1} \ket{\pmb{0}}}}
		=\prod_{j=1}^{\widetilde{m}} \frac{\nrm{\Pi_{\mathrm{mg}}^{(\widetilde{m})} \widetilde{\A}_{\widetilde{m}}\cdot\ldots\cdot \widetilde{\A}_{j+1}\widetilde{\A}'_j \ket{\pmb{0}}}}{\nrm{\Pi_{\mathrm{mg}}^{(\widetilde{m})} \widetilde{\A}_{\widetilde{m}}\cdot\ldots\cdot \widetilde{\A}_{j+1}\widetilde{\A}_j\widetilde{\A}'_{j-1} \ket{\pmb{0}}}}
		=\frac{\nrm{\Pi_{\mathrm{mg}}^{(\widetilde{m})} \widetilde{\A}'\ket{\pmb{0}}}}{\nrm{\Pi_{\mathrm{mg}}^{(\widetilde{m})} \widetilde{\A}\ket{\pmb{0}}}}.
		%=\frac{\nrm{\Pi \widetilde{\A}' \ket{\pmb{0}}}}{\nrm{\Pi \widetilde{\A} \ket{\pmb{0}}}}.
		\end{align*}
		In order to get an estimate of $\gamma_j$ it suffices to estimate both $\nrm{\Pi_{\mathrm{mg}}^{(j)} \widetilde{\A}'_j \ket{\pmb{0}}}$ and $\nrm{\Pi_{\mathrm{mg}}^{(j)} \widetilde{\A}_j\widetilde{\A}'_{j-1} \ket{\pmb{0}}}$ with multiplicative precision $1\pm\frac{\eps}{5\widetilde{m}}$. Note that such an estimate can be computed with success probability at least $1-\delta/(2\widetilde{m})$ with complexity that is at most $\bigO{\frac{\widetilde{m}}{\eps}\log\left(\frac{\widetilde{m}}{\delta}\right)}$ times bigger than the complexity of the algorithm $\widetilde{\A}'$. Since we need to compute only $\widetilde{m}=\bigO{\log(T'_{\max})}$ such estimates, the complexity bound follows from the complexity bound on $\widetilde{\A}'$. \anote{For next version: Here in the last step one can probably save a factor of $m$ with a more careful bound.}
	\end{proof}
	
\section{Linear system solving using blocks of unitaries}\label{sec:tools}

Given a way to implement a block-encoding for some matrix $A$, there are a number of useful basic operations one can do. We have already seen how the product of two block-encodings for $A$ and $B$ respectively gives a block-encoding for $AB$ (Lemma \ref{lemma:disjointAncillaProduct}), and how, given a block-encoding for $A$, we can implement a block-encoding for $A^{-c}$, for some $c\in(0,\infty)$ (Lemma \ref{lem:negative-power-restated}). 

Given a block-encoding $U$ of $A$, and a state $\ket{b}$, it is straightforward to approximate the state $A\ket{b}/\nrm{A\ket{b}}$, by applying $U$ to $\ket{b}$, and then using amplitude amplification on the $\ket{0}A\ket{b}$ part of the resulting state. For convenience, we make this precise in the following lemma.

\begin{lemma}[Applying a block-encoded matrix to a quantum state]\label{lem:block-encoding-to-state}
Fix any $\eps\in(0,1/2)$. Let $A\in \mathbb{C}^{N\times N}$ such that $\nrm{A}\leq 1$, and $\ket{b}$ a normalized vector in $\mathbb{C}^N$ such that $\nrm{A\ket{b}}\geq \gamma$. Suppose $\ket{b}$ can be generated in complexity $T_b$, and there is an $(\alpha,a,\epsilon)$-block-encoding of $A$ for some $\alpha\geq 1$, with $\epsilon\leq {\eps\gamma}/{2}$, that can be implemented in cost $T_A$. Then there is a quantum algorithm with complexity 
$$\bigO{\min\left(\frac{\alpha(T_A+T_b)}{\gamma},\frac{\alpha T_A\log\left(\frac{1}{\epsilon}\right)+T_b}{\gamma}\right)},$$ 
that terminates with success with probability at least $\frac{2}{3}$, and upon success generates the state $A\ket{b}/\nrm{A\ket{b}}$ to precision $\eps$. 
\end{lemma}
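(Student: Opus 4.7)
The plan is to use the block-encoding $U$ to produce a state that contains $A\ket{b}/\alpha$ on the ``good'' ancilla branch $\ket{0}^{\otimes a}$, and then invoke amplitude amplification to isolate it. Specifically, applying $U$ to $\ket{0}^{\otimes a}\ket{b}$ produces a state of the form $\frac{1}{\alpha}\ket{0}^{\otimes a}\widetilde{A}\ket{b}+\ket{\bot}$ for some $\widetilde{A}$ with $\nrm{A-\widetilde{A}}\leq\epsilon$, where $\ket{\bot}$ has no overlap with $\ket{0}^{\otimes a}$ on the ancillae. The ``good'' amplitude is therefore $\nrm{\widetilde{A}\ket{b}}/\alpha$, which by the triangle inequality lies within $\epsilon/\alpha$ of $\nrm{A\ket{b}}/\alpha\geq\gamma/\alpha$, and the normalised good state is within $\bigO{\epsilon/\gamma}$ of the target $A\ket{b}/\nrm{A\ket{b}}$; the hypothesis $\epsilon\leq\eps\gamma/2$ is calibrated precisely to make this final error at most $\eps$.

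For the first bound I would apply standard amplitude amplification directly on the state-preparation unitary $W:=U\cdot(I\otimes V_b)$, where $V_b$ prepares $\ket{b}$ in cost $T_b$. Since the good amplitude is $\Theta(\gamma/\alpha)$, amplification takes $\bigO{\alpha/\gamma}$ rounds, each invoking $W$ and $W^\dagger$, for a per-round cost of $\bigO{T_A+T_b}$, yielding the claimed $\bigO{\alpha(T_A+T_b)/\gamma}$. A small technical point is that we do not know $\nrm{A\ket{b}}$ exactly, but this is handled either by fixed-point amplitude amplification or by the standard trick of randomising the number of Grover iterations within the known window, giving constant success probability.

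For the second bound I would first invoke block-encoding pre-amplification (the same singular-value-amplification technique underlying Lemma~\ref{lem:block-encoding-product-non-square-pre-amp}, specialised to a single factor) to convert the $(\alpha,a,\epsilon)$-block-encoding $U$ into a $(2,a+\bigO{1},\eps\gamma/4)$-block-encoding $\widetilde{U}$ of $A$ at cost $\bigO{\alpha T_A\log(1/\epsilon)}$. After this one-off preprocessing, applying $\widetilde{U}$ to $\ket{0}^{\otimes a'}\ket{b}$ places an amplitude $\Theta(\gamma)$ on the good branch, so amplitude amplification now requires only $\bigO{1/\gamma}$ rounds. Each round costs $T_b$ for preparing $\ket{b}$ plus $\bigO{\alpha T_A\log(1/\epsilon)}$ for one use of $\widetilde{U}$, giving total complexity $\bigO{(\alpha T_A\log(1/\epsilon)+T_b)/\gamma}$, as required. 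Taking the minimum of the two strategies gives the claimed bound.

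The main obstacle is bookkeeping the errors: one must check that the $\Theta(\epsilon)$ error in the block-encoding, after the standard amplitude-amplification error propagation and possibly after an additional pre-amplification step, still yields an output state within $\eps$ of the target. This boils down to standard perturbation estimates of the form ``if $\nrm{\widetilde{A}-A}\leq\epsilon$ and $\nrm{A\ket{b}}\geq\gamma$ then $\nrm{\widetilde{A}\ket{b}/\nrm{\widetilde{A}\ket{b}}-A\ket{b}/\nrm{A\ket{b}}}=\bigO{\epsilon/\gamma}$,'' combined with the fact that amplitude amplification propagates errors multiplicatively by a constant when the target amplitude is $\Theta(1)$; the precision choice $\epsilon\leq\eps\gamma/2$ then suffices in both approaches.
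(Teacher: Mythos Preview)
Your proposal is correct and follows essentially the same approach as the paper: direct amplitude amplification on $U\circ(I\otimes V_b)$ for the first bound, and pre-amplifying the block-encoding to a $(\sqrt{2},a+1,\bigO{\epsilon})$-encoding via uniform singular-value amplification (the paper's Lemma~\ref{lem:uniformBlockAmp}) before doing $\bigO{1/\gamma}$ rounds for the second bound. The only quibble is your phrase ``one-off preprocessing'': the amplified encoding $\widetilde{U}$ is not applied once but in every amplification round, as your own per-round cost $\bigO{\alpha T_A\log(1/\epsilon)}$ correctly reflects.
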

\begin{proof}
First we prove the first complexity upper bound.
Let $U$ be the block-encoding of $A$ referred to in the statement of the lemma, so 
\begin{align*}
\nrm{A-\alpha(\bra{0}^{\otimes a}\otimes I)U(\ket{0}^{\otimes a}\otimes I)} &\leq \epsilon\\
\nrm{\frac{1}{\alpha}A\ket{b} - (\bra{0}^{\otimes a}\otimes I)U(\ket{0}^{\otimes a}\otimes \ket{b})} &\leq \epsilon/\alpha.
\end{align*}

By generating $\ket{b}$ and then applying $U$, in cost $T_b+T_A$, we get a state that is $\epsilon/\alpha$-close to a state of the form 
$$\ket{0}^{\otimes a}\left(\frac{1}{\alpha}A\ket{b}\right)+\ket{0^\bot},$$
for some unnormalized state $\ket{0^\bot}$ that is orthogonal to every state with $\ket{0}^{\otimes a}$ in the first register. We have $\nrm{\frac{1}{\alpha}A\ket{b}}\geq \gamma/\alpha,$
so using ${\alpha}/{\gamma}$ rounds of amplitude amplification on $\ket{0}^{\otimes a}$ in the first register, we can get within a constant of a state that is $\frac{\epsilon}{\alpha}\frac{\alpha}{\gamma}=\frac{\epsilon}{\gamma}$-close to $\ket{0}^{\otimes a}\frac{A\ket{b}}{\nrm{A\ket{b}}}$ (because the error is also amplified by the amplitude amplification). Since $\epsilon\leq {\eps \gamma}/{2}$, the error on the $\ket{0}^{\otimes a}$ part of the state is at most $\eps/2$. Thus, if we measure a $\ket{0}^{\otimes a}$ in the first register at this stage, we will be within $\eps/2$ of the desired state. 

To get the second complexity bound we first amplify the block-encoding resulting in a unitary $U'$ that is a $(\sqrt{2},a,2\epsilon)$-block-encoding of $A$, that can by implemented in complexity $T'_A:=\bigO{\alpha T_A\log\left(\frac{1}{\epsilon}\right)}$ due to Lemma~\ref{lem:uniformBlockAmp}. Then we use $U'$ in the previous argument, replacing $T_A$ with $T'_A$ and $\alpha$ with $\sqrt{2}$.
%With $\bigO{\log\frac{1}{\eps}}$ rounds of amplification, we can bring the total error within $\eps$. 
\end{proof}

Given a block-encoding of $A$, we can implement a block-encoding of $A^{-c}$, from which we can approximately generate the state $A^{-c}\ket{b}/\nrm{A^{-c}\ket{b}}$ given a circuit for generating $\ket{b}$. When $c=1$, this is simply a quantum linear system solver, and more generally, we call this \emph{implementing negative powers of a Hamiltonian}. However, we can get a better algorithm for this problem using the technique of variable-time amplitude amplification, which we do in Section \ref{sec:qls}.

Although block-encodings are quite a general way of representing a matrix, we motivate them by connecting them to quantum data structures, showing that if a matrix is stored in a quantum data structure, in one of a number of possible ways, then there is an efficiently implementable block-encoding of the matrix. 

Specifically, for $p\in [0,1]$, define $\mu_p(A)=\sqrt{s_{2p}(A)s_{2(1-p)}(A^T)}$, where $s_{q}(A)=\max_j\nrm{A_{j,\cdot}}_q^q$ is the $q$-th power of the maximum $q$-norm of any row of $A$. We let $A^{(p)}$ denote the matrix of the same dimensions as $A$, with $A^{(p)}_{j,k}=(A_{j,k})^p$. The following was proven in \cite{kerenidis:quantumgraddescent}, although not in the language of block-encodings. We include the proof of \cite{kerenidis:quantumgraddescent} for completeness. 

\begin{lemma}[Implementing block-encodings from quantum data structures]\label{lem:kp}
Let $A\in\mathbb{C}^{M\times N}$.
\begin{enumerate}
\item Fix $p\in [0,1]$. If $A\in\mathbb{C}^{M\times N}$, and $A^{(p)}$ and $(A^{(1-p)})^\dagger$ are both stored in quantum-accessible data structures\footnote{\label{foot:precData}Here we assume that the datastructure stores the matrices with sufficient precision.}, then there exist unitaries $U_R$ and $U_L$ that can be implemented in time $\bigO{\polylog(MN/\eps)}$ such that 
$U_R^\dagger U_L$ is a $(\mu_p(A),\ceil{\log (N+M+1)},\eps)$-block-encoding of $\overline{A}$. 
%$A/\mu_p(A)=(I\otimes \bra{0}^{\otimes\ell}) U_R^\dagger U_L(I\otimes \ket{0}^{\otimes \ell'})$, where $\ell=\log N+2$ and $\ell' = \log M+2$.
\item On the other hand, if $A$ is stored in a quantum-accessible data structure\textsuperscript{\emph{\ref{foot:precData}}}, then there exist unitaries $U_R$ and $U_L$ that can be implemented in time $\bigO{\polylog(MN)/\eps}$ such that $U_R^\dagger U_L$ is a $(\nrm{A}_F,\ceil{\log(M+N)},\eps)$-block-encoding of $\overline{A}$.
%$A/\nrm{A}_F=(I\otimes \bra{0}^{\otimes\ell})U_R^\dagger U_L(I\otimes \ket{0}^{\otimes \ell'})$ for $\ell=\log N$ and $\ell'=\log M$.
\end{enumerate}
\end{lemma}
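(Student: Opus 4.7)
The plan is to construct, in each case, two unitaries $U_L$ and $U_R$ so that the states $\ket{\phi_j}:=U_L\ket{0}^{\otimes a}\ket{j}$ and $\ket{\psi_i}:=U_R\ket{0}^{\otimes a}\ket{i}$, indexed by basis states of $\mathbb{C}^{M+N}$, satisfy $\braket{\psi_i}{\phi_j}=\overline{A}_{i,j}/\alpha$ for $\alpha=\mu_p(A)$ or $\alpha=\nrm{A}_F$ respectively. Then $U_R^\dagger U_L$ is the required block-encoding via the standard identity $\bra{i}\bigl(\bra{0}^{\otimes a}U_R^\dagger U_L\ket{0}^{\otimes a}\bigr)\ket{j}=\braket{\psi_i}{\phi_j}$. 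To kill the diagonal blocks of $\overline A$, I split $j\in\{1,\dots,M+N\}$ into a flag bit $b\in\{0,1\}$ (``top half'' $[M]$ versus ``bottom half'' $M+[N]$) and an inner index, and arrange the flag values carried by $\ket{\phi_j}$ and $\ket{\psi_i}$ so that they agree precisely on the two off-diagonal blocks and disagree on the two diagonal blocks.

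For part 2 (Frobenius case) Theorem~\ref{theorem:data_structure} directly provides $\widetilde U\colon\ket{j}\ket{0}\mapsto \ket{j}\ket{A_{j,\cdot}}$ and $\widetilde V\colon\ket{0}\ket{j}\mapsto\ket{\widetilde A}\ket{j}$, and the key identity
\[
\bigl(\bra{i}\otimes\bra{A_{i,\cdot}}\bigr)\bigl(\ket{\widetilde A}\otimes\ket{j}\bigr)=\frac{\nrm{A_{i,\cdot}}}{\nrm{A}_F}\cdot\frac{A_{i,j}}{\nrm{A_{i,\cdot}}}=\frac{A_{i,j}}{\nrm{A}_F}
\]
is exactly what the Frobenius normalization asks for. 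I prepare $\ket{\phi_j}=\ket{0}_{\mathrm{flag}}\ket{j}\ket{A_{j,\cdot}}$ for $j\in[M]$ (via $\widetilde U$) and $\ket{\phi_{j'+M}}=\ket{1}_{\mathrm{flag}}\ket{\widetilde A}\ket{j'}$ for $j'\in[N]$ (via $\widetilde V$), and define $\ket{\psi_i}$ by the same formulas with the flag bit flipped, so $U_R=(X_{\mathrm{flag}}\otimes I)\,U_L$. The identity above (together with its transpose for the lower block) yields $\braket{\psi_i}{\phi_j}=\overline A_{i,j}/\nrm{A}_F$ on the two off-diagonal blocks, while flag mismatches give $0$ on the two diagonal blocks.

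For part 1 ($\mu_p$ case) I follow the same flag template, but use the two stored matrices to build more refined amplitude states. Using the data structure for $A^{(p)}$, the tree-rotation procedure of Theorem~\ref{theorem:data_structure} together with one extra controlled rotation by $\arcsin\!\bigl(\nrm{A^{(p)}_{i,\cdot}}_2/\sqrt{s_{2p}(A)}\bigr)$, where the norm $\nrm{A^{(p)}_{i,\cdot}}_2$ is read from the root of the $i$-th binary tree, prepares on the ``good'' branch of the extra qubit the state $\ket{i}\otimes\tfrac{1}{\sqrt{s_{2p}(A)}}\sum_{j}A_{i,j}^{p}\ket{j}$ (plus an orthogonal ``bad'' branch). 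Analogously, applying the tree-rotation procedure to the data structure for $(A^{(1-p)})^\dagger$, whose rows are the columns of $A^{(1-p)}$, prepares $\tfrac{1}{\sqrt{s_{2(1-p)}(A^T)}}\sum_{i}A_{i,j'}^{1-p}\ket{i}\otimes\ket{j'}$ up to orthogonal garbage. Assigning these (with matching flag bits) to $\ket{\psi_i}$ on the top half and $\ket{\phi_{j'+M}}$ on the bottom half respectively, the inner product on the upper off-diagonal block telescopes to
\[
\braket{\psi_i}{\phi_{j'+M}}=\frac{A_{i,j'}^{p}\cdot A_{i,j'}^{1-p}}{\sqrt{s_{2p}(A)\,s_{2(1-p)}(A^T)}}=\frac{A_{i,j'}}{\mu_p(A)},
\]
and the orthogonal ``bad'' branches contribute $0$. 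A symmetric assignment, using rows of $A^{(p)}$ for the top half of $\ket{\phi_j}$ and rows of $(A^{(1-p)})^\dagger$ for the bottom half of $\ket{\psi_i}$, handles the lower off-diagonal block; the flag bit zeros out the diagonal blocks as before.

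The main obstacles are bookkeeping rather than substance: I must check that the ``bad'' components of the norm-uniformization rotation sit in a subspace orthogonal to every target state, so they do not contaminate any inner product, and that the $\eps$-approximation errors of the tree-rotation unitaries (from Theorem~\ref{theorem:data_structure}) combine only additively, via the triangle inequality, into the final $\eps$ required by Definition~\ref{def:standardForm}; since $\widetilde U,\widetilde V$ are each applied $\bigO{1}$ times, asking for precision $\Theta(\eps)$ inside Theorem~\ref{theorem:data_structure} suffices. The ancilla budget is one flag qubit plus $\lceil\log\max(M,N)\rceil+\bigO{1}$ qubits for the ``other'' register used to hold $\ket{A_{j,\cdot}}$, $\ket{\widetilde A}$, or their $(p)$-analogues, matching the stated $\lceil\log(M+N+1)\rceil$ count up to constants.
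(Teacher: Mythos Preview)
Your proposal is correct and follows essentially the same approach as the paper: both construct state families $\ket{\psi_i},\ket{\phi_j}$ whose inner products reproduce $\overline{A}_{i,j}/\alpha$, using the row-state and norm-vector maps from Theorem~\ref{theorem:data_structure} for the Frobenius case and the analogous $p$-power states with a norm-uniformizing ``garbage'' branch for the $\mu_p$ case. Your use of an explicit flag qubit (and the neat identity $U_R=(X_{\mathrm{flag}}\otimes I)U_L$ in part~2) is a cosmetic repackaging of the paper's device of placing indices in the range $[M]$ versus $M+[N]$ in the two registers; both mechanisms kill the diagonal blocks and yield the same inner-product calculation on the off-diagonal ones.
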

(Note that in the above lemma one could replace $\overline{A}$ by $A$, the proof remains almost the same.)

This allows us to apply our block-encoding results in the quantum data structure setting, including Hamiltonian simulation (Section \ref{sec:Ham-sim}), quantum linear system solvers (Section \ref{sec:qls}) and implementing negative powers of a Hamiltonian (Section \ref{sec:qls}). 

We now proceed with the proof of Lemma \ref{lem:kp}. 

\begin{proof}
Similarly to \cite[Theorem 4.4]{kerenidis:quantumgraddescent}, for $j\in [M]$, we define
$$\ket{\psi_j}=\frac{\sum_{k\in [N]} A_{j,k}^p\ket{j,M+k}}{\sqrt{s_{2p}(A)}}+\sqrt{1-\frac{\sum_{k\in[N]}A_{j,k}^{2p}}{s_{2p}(A)}}\ket{j,N+M+1}$$
and for $k\in [N]$, define
$$\ket{\psi_{M+k}}=\frac{\sum_{j\in [M]}A_{j,k}^{1-p}\ket{M+k,j}}{\sqrt{s_{2(1-p)}(A^T)}}+\sqrt{1-\frac{\sum_{j\in[M]}A_{j,k}^{2(1-p)}}{s_{2(1-p)}(A^T)}}\ket{M+k,M+N+1}.
$$

For $j\in[M]$, define
$$\ket{\phi_j}=\frac{\sum_{k\in [N]}A_{j,k}^p\ket{M+k,j} }{\sqrt{s_{2p}(A)}} + \sqrt{1-\frac{\sum_{k\in [N]}A_{j,k}^{2p}}{s_{2p}(A)}}\ket{M+N+1,j},$$
and for $k\in [N]$, define 
$$\ket{\phi_{M+k}}=\frac{\sum_{j\in [M]} A_{j,k}^{1-p}\ket{j,M+k}}{\sqrt{s_{2(1-p)}(A^T) }}+\sqrt{1-\frac{\sum_{j\in[M]} A_{j,k}^{2(1-p)}}{s_{2(1-p)}(A^T)}}\ket{M+N+1,M+k}.$$

Then for $j,j'\in [M]$, and $k,k'\in[N]$, we have
$$\braket{\psi_j}{\phi_{j'}}=\braket{\psi_{M+k}}{\phi_{M+k'}}=0,$$
but for $(j,k)\in [M]\times [N]$, we have:
$$\braket{\psi_j}{\phi_{M+k}}=
\frac{A_{j,k}}{\mu_p(A)} \; \mbox{and} \; \braket{\psi_{M+k}}{\phi_j}=\frac{A_{j,k}}{\mu_p(A)}=\frac{A_{k,j}^T}{\mu_p(A)}.$$
Thus, for any $i,i'\in [M+N]$, $\braket{\psi_i}{\phi_{i'}}=\overline{A}_{i,i'}/\mu_p(A)$. 

Letting $\ell=\ceil{\log (M+N+1)}$, we define a unitary $U_R$ on $\mathbb{C}^{(M+N)\times 2^{\ell}}$ by 
$$U_R:\ket{i}\ket{0^\ell}\mapsto \ket{\psi_i}$$
for all $i\in [M+N]$. Similarly, we define a unitary $U_L$ on $\mathbb{C}^{(M+N)\times 2^{\ell}}$ by
$$U_L:\ket{i}\ket{0^{\ell}}\mapsto \ket{\phi_i}$$
for all $i\in [M+N]$. Then the first result follows.

For the second result, we define, for each $j\in [M]$, 
$$\ket{\psi_j} = \sum_{k\in [N]}\frac{A_{j,k}}{\nrm{A_{j,\cdot}}}\ket{j,M+k}\;\mbox{and}\; 
\ket{\phi_j} = \sum_{k\in [N]}\frac{A_{j,k}}{\nrm{A_{j,\cdot}}}\ket{M+k,j},$$
and for each $k\in[N]$,
$$\ket{\psi_{M+k}}=\sum_{j\in[M]}\frac{\nrm{A_{j,\cdot}}}{\nrm{A}_F}\ket{M+k,j}
\;\mbox{and} \;
\ket{\phi_{M+k}}=\sum_{j\in[M]}\frac{\nrm{A_{j,\cdot}}}{\nrm{A}_F}\ket{j,M+k}.$$
These vectors can be constructed from the quantum-accessible data structure described in Theorem \ref{theorem:data_structure}, and we have, for any $j,j'\in [M]$ and $k,k'\in [N]$:
$$\braket{\psi_j}{\phi_{j'}}=\braket{\psi_{M+k}}{\phi_{M+k'}}=0,\;  \braket{\psi_j}{\phi_{M+k}} = \frac{A_{j,k}}{\nrm{A}_F}\;\mbox{and}\; \braket{\psi_{M+k}}{\phi_j}=\frac{A_{k,j}^T}{\nrm{A}_F}.$$
The second result follows, similarly to the first.
\end{proof}

\subsection{Hamiltonian simulation with quantum data structure}\label{sec:Ham-sim}

Low and Chuang~\cite{LowChuangQubitization2016} showed how to implement an optimal Hamiltonian simulation algorithm given a block-encoding of the Hamiltonian (Theorem~\ref{thm:blockHamSim}). Their result combined with Lemma~\ref{lem:kp} gives the following:
\begin{theorem}[Hamiltonian simulation using quantum data structure]\label{thm:dataHamSim}
For any $t\in\mathbb{R}$ and $\eps\in(0,1/2)$, we have the following:
\begin{enumerate}
\item Fix $p\in[0,1]$. Let $H\in\mathbb{C}^{N\times N}$ be a Hermitian matrix, and suppose $H^{(p)}$ and $(H^{(1-p)})^\dagger$ are stored in quantum-accessible data structures\textsuperscript{\emph{\ref{foot:precData}}}. Then we can implement a unitary $\widetilde{U}$ that is a $(1,n+3,\eps)$-block-encoding of $e^{it H}$ in time $\bigOt{t\mu_p(A)\polylog(N/\eps)}$. 
\item If $H$ is stored in a quantum-accessible data structure\textsuperscript{\emph{\ref{foot:precData}}}, then we can implement a unitary $\widetilde{U}$ that is a $(1,n+3,\eps)$-block-encoding of $e^{itH}$ in time $\bigOt{t\nrm{A}_F\polylog(N/\eps)}$. 
\end{enumerate}
\end{theorem}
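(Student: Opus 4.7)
The plan is to combine Lemma~\ref{lem:kp}, which turns a quantum data structure into a block-encoding, with the block-Hamiltonian simulation result of Theorem~\ref{thm:blockHamSim}. The parenthetical remark following Lemma~\ref{lem:kp} states that an entirely analogous construction produces a block-encoding of $H$ itself (rather than of $\overline{H}$) whenever $H$ is Hermitian, so in both parts we may assume a block-encoding of $H$ directly. This is convenient because $e^{itH}$ is the object we want to simulate, and working with $\overline{H}$ would only add a constant-size ancilla overhead.

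First, I would invoke Lemma~\ref{lem:kp} with precision parameter $\delta := \eps/(2|t|)$ to obtain a unitary $U$ that is a $(\mu_p(H),\lceil\log(N+1)\rceil,\delta)$-block-encoding of $H$ in part~1, or a $(\nrm{H}_F,\lceil\log N\rceil,\delta)$-block-encoding in part~2. The lemma guarantees that each application of $U$ (or its controlled version) can be implemented in time $\polylog(N/\delta)=\polylog(N/\eps)$, since $|t|$ contributes only logarithmically.

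Next, I would feed $U$ into Theorem~\ref{thm:blockHamSim} with $\alpha=\mu_p(H)$ (resp.\ $\nrm{H}_F$) and the same target error $\eps$. The theorem then yields a $(1,a+2,\eps)$-block-encoding $\widetilde{U}$ of $e^{itH}$ with $a=\lceil\log(N+1)\rceil+1$ (counting the one extra ancilla used to symmetrize if needed), matching the $n+3$ ancilla count claimed. It uses $\mathcal{O}(|\alpha t|+\log(1/\eps))$ controlled applications of $U$, each of which costs $\polylog(N/\eps)$, so the total cost is
\[
\mathcal{O}\!\left((|\alpha t|+\log(1/\eps))\cdot \polylog(N/\eps)\right)=\widetilde{\mathcal{O}}\!\left(|t|\,\alpha\,\polylog(N/\eps)\right),
\]
which is exactly the claimed bound with $\alpha=\mu_p(H)$ for part~1 and $\alpha=\nrm{H}_F$ for part~2.

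The main thing to be careful about is propagating the $\delta$-error through Hamiltonian simulation so that the final block-encoding achieves precision $\eps$; this is handled by the built-in error analysis of Theorem~\ref{thm:blockHamSim} (the choice $\delta=\eps/(2|t|)$ is precisely what the theorem requires) and by the fact that the data-structure query cost scales only polylogarithmically with $1/\delta$, so replacing $\delta$ by $\eps/(2|t|)$ costs only an extra $\polylog(|t|/\eps)\subseteq\polylog(N/\eps)$ factor already absorbed into the tilde-$\mathcal{O}$. No other step poses difficulty, as the construction is essentially a direct composition of the two ingredients.
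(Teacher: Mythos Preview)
Your proposal is correct and follows exactly the approach the paper takes: the paper simply states that the theorem follows by combining Lemma~\ref{lem:kp} (data structure $\Rightarrow$ block-encoding) with Theorem~\ref{thm:blockHamSim} (block-Hamiltonian simulation), and your write-up spells out precisely this composition with the appropriate choice $\delta=\eps/(2|t|)$. If anything, your argument is more detailed than the paper's one-line justification; the only minor quibble is that the extra $\polylog(|t|)$ from $\delta=\eps/(2|t|)$ is absorbed by the $\widetilde{\mathcal{O}}$ (which hides polylog factors in its argument), not necessarily by $\polylog(N/\eps)$.
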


\subsection{Quantum singular value estimation}

The quantum singular value estimation (QSVE) problem is the following\footnote{Note that the presented definition is a somewhat relaxed version of singular value estimation. Here we allow producing an entangled auxiliary/garbage state, which can be undesirable in certain scenarios. There are ways around this issue, see for example Ta-Shma's consistent phase estimation \cite{TaShma2013PhaseEst} or the singular value transformation results of Gilyén et al.~\cite{gilyenBlockMatrices}.}: Given access to a matrix $A\in \mathbb{R}^{M\times N}$ with singular value decomposition $A=\sum_j\sigma_j\ket{u_j}\bra{v_j}$, and given input $\ket{\psi}=\sum_j c_j\ket{u_j}$, output $\sum_j c_j\ket{\phi(\sigma_j)}\ket{u_j}$, where $\ket{\phi(\sigma_j)}$ is a unit vector on a space with a phase register and an auxiliary register, such that, when the phase register is measured, it outputs an estimate of $\sigma_j$, $\tilde{\sigma}_j$ such that with probability $1-\eps$, $|\sigma_j-\tilde{\sigma}_j|\leq \Delta$. 

Kerenidis and Prakash~\cite{kerenidis2016quantum} gave a quantum algorithm for estimating singular values wherein they showed that if a matrix $A$ is stored in a quantum-accessible data structure, the singular values of $A$ can be estimated to a precision $\delta$ in time $\bigOt{(\mu/\delta)\polylog(MN)}$, where $\mu=\nrm{A}_F$, or if $A^{(p)}$ and $A^{(1-p)}$ are stored in quantum-accessible data structures for some $p$, $\mu=\mu_p(A)$. We provide an alternative quantum algorithm for singular value estimation when the matrix $A$ is given as a block-encoding --- the quantum-accessible data structure case considered by Kerenidis and Prakash is a special case of this. In the scenario where $A$ is stored in a quantum-accessible data structure, we recover the same running time as Kerenidis and Prakash. 

A subsequent improved version of this algorithm, without the need for an auxiliary register in addition to the phase register, can be found in \cite{gilyenBlockMatrices}.

\begin{theorem}[Quantum singular value estimation]\label{thm:sve}
Let $\eps,\Delta\in (0,1)$, and $\eps' = \frac{\eps\Delta}{4\log^2(1/\Delta)}$. Let $U$ be an $(\alpha,a,\eps')$-block-encoding of a matrix $A$ that can be implemented in cost $T_U$.  Then we can implement a quantum algorithm that solves QSVE of $A$ in complexity
$$\bigO{\frac{\alpha}{\Delta}(a+T_U)\mathrm{polylog}(1/\eps)}.$$
\end{theorem}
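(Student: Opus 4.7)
The plan is to combine block-Hamiltonian simulation (Theorem~\ref{thm:blockHamSim}) with quantum phase estimation applied to the symmetrized matrix $\overline{A}$, whose spectrum encodes the singular values of $A$.

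\textbf{Step 1: Symmetrize.} From the $(\alpha,a,\eps')$-block-encoding $U$ of $A$, I would use the complementing construction (the lemma preceding Theorem~\ref{thm:blockHamSim}) to obtain an $(\alpha,a+1,\eps')$-block-encoding of $\overline{A}$ in cost $\bigO{T_U+a}$. Recall that $\overline{A}=\sum_j \sigma_j\bigl(\ketbra{e_j^+}{e_j^+}-\ketbra{e_j^-}{e_j^-}\bigr)$ where $\ket{e_j^\pm}=\tfrac{1}{\sqrt{2}}(\ket{0}\ket{u_j}\pm\ket{1}\ket{v_j})$, so in particular $\ket{0}\ket{u_j}=\tfrac{1}{\sqrt{2}}(\ket{e_j^+}+\ket{e_j^-})$.

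\textbf{Step 2: Simulate.} Pick $t=\pi/\alpha$ so that $\pm t\sigma_j\in[-\pi,\pi]$ and phase estimation does not wrap around. Invoke Theorem~\ref{thm:blockHamSim} to realise a $(1,O(a),\eps_H)$-block-encoding of $e^{it\overline{A}}$ (with a precision parameter $\eps_H$ to be fixed below). Since $\alpha t=\pi$ is constant, each use costs $\bigO{T_U+a+\log(1/\eps_H)}$. The controlled version needed by phase estimation is obtained analogously via Lemma~\ref{lem:controlled}.

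\textbf{Step 3: Phase-estimate and post-process.} Prepend $\ket{0}$ to the input $\ket{\psi}=\sum_j c_j\ket{u_j}$ to get $\sum_j c_j\tfrac{1}{\sqrt{2}}(\ket{e_j^+}+\ket{e_j^-})$, then run phase estimation on the controlled simulation to phase-precision $\delta=\Delta t/(2\pi)=\Delta/(2\alpha)$ with success probability $\geq 1-\eps$. Up to small errors the output is
\[
\sum_j c_j\tfrac{1}{\sqrt{2}}\bigl(\ket{e_j^+}\ket{\widetilde{t\sigma_j/(2\pi)}}+\ket{e_j^-}\ket{\widetilde{-t\sigma_j/(2\pi)}}\bigr).
\]
Coherently compute $\tilde\sigma_j=(2\pi/t)|\mathrm{phase}|$ into a fresh register. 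Measuring this register returns an estimate satisfying $|\sigma_j-\tilde\sigma_j|\leq \Delta$ with probability $\geq 1-\eps$; rewriting $\ket{e_j^\pm}$ in the $\{\ket{0}\ket{u_j},\ket{1}\ket{v_j}\}$ basis shows that the remaining registers (sign qubit, eigenvector space, raw phase) together form an entangled auxiliary state $\ket{\phi(\sigma_j)}$, which is admissible under the relaxed QSVE notion of the footnote.

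\textbf{Step 4: Error accounting and complexity.} The phase estimation subroutine needs $K=O(\log(\alpha/\Delta))=O(\log(1/\Delta))$ controlled-$U^{2^k}$ operations (for $k=0,\dots,K-1$), which I would implement directly via Theorem~\ref{thm:blockHamSim} as a Hamiltonian simulation for time $2^k t\leq O(1/\Delta)$. The dominant simulation cost is at $k=K-1$, contributing $\bigO{(T_U+a)\mathrm{polylog}(1/\eps)}$ queries per controlled call, summed over $k$ and over $O(\log(1/\eps))$ median rounds this yields a total of $\bigO{(\alpha/\Delta)(T_U+a)\mathrm{polylog}(1/\eps)}$. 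The block-Hamiltonian-simulation errors add up across the $O(K\log(1/\eps))$ controlled operations, so setting $\eps_H=O(\eps/K)$ suffices; Theorem~\ref{thm:blockHamSim} then demands block-encoding error $\eps'\lesssim \eps_H/(2\cdot 2^{K-1}t)=O(\eps\Delta/\log^2(1/\Delta))$, matching the hypothesis.

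\textbf{Main obstacle.} The chief subtlety is the two-fold sign ambiguity of the spectrum of $\overline{A}$: phase estimation returns $\pm\sigma_j$ entangled with a superposition of $\ket{u_j}$ and $\ket{v_j}$, so the output is not the ``clean'' tensor-product state of the strict QSVE definition. This is handled by pushing the sign into the garbage after extracting $|\sigma_j|$, which the relaxed definition tolerates (pointers to the cleaner fix via consistent phase estimation or singular-value transformation are given in the paper). The only other delicate point is balancing the Hamiltonian-simulation precision against the number of controlled powers used by phase estimation in order to hit the stated $\log^2(1/\Delta)$ bound on $\eps'$; this is a routine bookkeeping exercise using the explicit error scaling of Theorem~\ref{thm:blockHamSim}.
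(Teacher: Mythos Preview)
Your proposal is correct and follows essentially the same route as the paper: symmetrize $A$ to $\overline{A}$, block-Hamiltonian-simulate $e^{it\overline{A}}$, run phase estimation to read off $\pm\sigma_j$, and post-process by taking the absolute value. The only presentational difference is that the paper carries out the phase-estimation step explicitly via a Fourier transform and Dirichlet-kernel analysis (together with an extra $\ket{\pm}$-ancilla trick that flags the $\ket{u_j}$ versus $\ket{v_j}$ branches with $\cos$/$\sin$ amplitudes) rather than invoking it as a black box, but the resulting output state, complexity, and reliance on the relaxed QSVE definition match your sketch.
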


\begin{proof} At a high-level, the algorithm works by using phase estimation of Hamiltonian simulation of $A$, however, $A$ is not necessarily Hermitian, so we instead use $\overline{A}$.

\paragraph{Hamiltonian simulation of $\overline{A}$:}
	Let $A=\sum_{j=1}^r \sigma_j\ket{u_j}\bra{v_j}$, where $r=\mathrm{min}\{M,N\}$, $\{\sigma_j\}_j$ are the singular values of $A$, while $\ket{u_j}$ ($\ket{v_j}$) are the left (right) singular vectors of $A$. Then the matrix
	$$\overline{A}=\begin{bmatrix}
	0 & A\\
	A^\dagger & 0
	\end{bmatrix},$$ 
	has non-zero eigenvalues $\{\pm\sigma_1,....,\pm\sigma_r\}$ and corresponding eigenvectors $$\ket{\lambda^{\pm}_j}=\dfrac{1}{\sqrt{2}}\left(\ket{0}\ket{u_j}\pm\ket{1}\ket{v_j}\right).$$

	Observe that for all $j\in [r]$, $\ket{0}\ket{u_j}=\left(\ket{\lambda^+_j}+\ket{\lambda^-_j}\right)/\sqrt{2}$. The remaining zero eigenvalues of $\overline{A}$ belong to $\mathrm{span}\{\ket{v_1},...,\ket{v_r}\}^\perp$. So any quantum state $\ket{\psi}$ that is spanned by the right singular vectors will have no support on the zero eigenspace of $\overline{A}$.

	If $U$ is an $(\alpha,a,\eps')$-block-encoding of $A$, then the unitary 
	$$\overline{U}=\begin{bmatrix}
	0 &  U\\
	U^\dag & 0
	\end{bmatrix},
	$$
	composed with appropriate SWAP gates is an $(\alpha, 2a, \eps')$-block encoding of $\overline{A}$. 
	So if $T_U$ is the cost of implementing the unitary $U$, then the cost of implementing $\overline{U}$ is $2T_U+\mathcal{O}(1)$. 

	From Lemma~\ref{lem:controlled}, there exists a unitary $V$ that is an $(1,2a+2,\eps/2)$-block encoding of $\sum_{t=-\frac{T-1}{2}}^{\frac{T-1}{2}}\ket{t}\bra{t}\otimes e^{i t\overline{A}}$ that can be implemented in complexity 
	$$\mathcal{O}\left(\left(\frac{\alpha}{\Delta}+\log\left(\frac{1}{\eps}\log\frac{1}{\Delta}\right)\right)(a+T_U)\right).$$
This will be our main subroutine.

\paragraph{Dirichlet kernels:} We will use the fact that for all $x$, $|\sin x| \leq |x|$, and for all $x\in[-\pi/2,\pi/2]$, $|\sin x|\geq |2x/\pi|$.

Let $D_n(x)$ be the Dirichlet kernel, defined:
$$D_n(x)=\frac{\sin((n+1/2)x)}{\sin(x/2)}=\sum_{t=-n}^n \cos(2tx)$$
This function is peaked around 0, with the peak becoming more extreme as $n$ increases. 
We will make use of a few easily verified facts about $D_n(x)$:
\begin{itemize}
\item $D_n(x)=D_n(-x)$
\item $D_n(x)\leq \frac{1}{\sin(x/2)}\leq \frac{\pi}{|x|}$ for $x\in [-\pi,\pi]$
\item $D_n(x) = \frac{\sin((n+1/2)x)}{\sin(x/2)}\geq \frac{(2n+1)\delta}{\pi \delta/2}=\frac{2(2n+1)}{\pi}$ for $x\in [0,\frac{\pi}{2n+1}]$.
\end{itemize}

\paragraph{Algorithm:} We now describe the sve algorithm. Let $T$ be an odd number such that $T\geq 2\pi/\Delta$. Begin by generating the state:
\begin{eqnarray*}
 \sum_{t=-\frac{T-1}{2}}^{\frac{T-1}{2}}\frac{1}{\sqrt{T}}\ket{t}\ket{+}\ket{0}\ket{\psi}
&=& \sum_jc_j \sum_{t=-\frac{T-1}{2}}^{\frac{T-1}{2}}\frac{1}{\sqrt{2T}}\ket{t}\left(\ket{0}\ket{0}\ket{u_j}+\ket{1}\ket{0}\ket{u_j}\right)\\
&=& \sum_jc_j \sum_{t=-\frac{T-1}{2}}^{\frac{T-1}{2}}\frac{1}{2\sqrt{T}}\ket{t}\Big(\ket{0}(\ket{\lambda_j^+}+\ket{\lambda_j^-})+\ket{1}(\ket{\lambda_j^+}+\ket{\lambda_j^-}).
%&&\qquad\qquad +\ket{01} (\ket{\tau_j^+}-\ket{\tau_j^-}) + \ket{11}(\ket{\tau_j^+}-\ket{\tau_j^-})\Big).
\end{eqnarray*}
Next, we apply $e^{(-1)^bt\overline{A}}$ conditioned on $\ket{t}\ket{b}$ in the first registers, %and $e^{(-1)^bt\overline{A^T}}$ conditioned on $\ket{t}\ket{b1}$ in the first registers, 
to get:
\begin{eqnarray*}
&&\sum_jc_j \sum_{t=-\frac{T-1}{2}}^{\frac{T-1}{2}}\frac{1}{2\sqrt{T}}\ket{t}\Big(\ket{0}(e^{it\sigma_j}\ket{\lambda_j^+}+e^{-it\sigma_j}\ket{\lambda_j^-})+\ket{1}(e^{-it\sigma_j}\ket{\lambda_j^+}+e^{it\sigma_j}\ket{\lambda_j^-}).
\end{eqnarray*}
We perform a Hadamard gate on the second register, to get:
\begin{eqnarray}
&&\sum_jc_j \sum_{t=-\frac{T-1}{2}}^{\frac{T-1}{2}}\frac{1}{\sqrt{2T}}\ket{t}\left(\cos(t\sigma)\ket{0}(\ket{\lambda_j^+}+\ket{\lambda_j^-})+i\sin(t\sigma)\ket{1}(\ket{\lambda_j^+}-\ket{\lambda_j^-})\right)\nonumber\\
&=& \sum_jc_j \sum_{t=-\frac{T-1}{2}}^{\frac{T-1}{2}}\frac{1}{\sqrt{T}}\ket{t}\left(\cos(t\sigma)\ket{0}\ket{0}\ket{u_j}+i\sin(t\sigma)\ket{1}\ket{1}\ket{v_j}\right).
\label{eq:pre-amp-amp}
\end{eqnarray}

We are interested in the part of the state with $\ket{0}$ in the second register. This part of the state has squared amplitude:
\begin{eqnarray*}
\beta^2 := \sum_{t=-\frac{T-1}{2}}^{\frac{T-1}{2}}\frac{1}{{T}}\cos^2(t\sigma_j) &=& \frac{1}{2T}\sum_{t=-\frac{T-1}{2}}^{\frac{T-1}{2}}\left(1+\cos(2t\sigma_j)\right)\\
&= & \frac{1}{2}+\frac{1}{2T}D_{\frac{T-1}{2}}(2\sigma_j) %\qquad\mbox{by Lagrange's identity}\\
\;\geq \; \frac{1}{2} - \frac{1}{2T}\frac{T-1}{4} 
\; = \; \frac{3}{8},
\end{eqnarray*}
where in the last line, we have used a lower bound on the Dirichlet kernel, $D_{\frac{T-1}{2}}(2\sigma_j)\geq - C\frac{T-1}{2}$, where $C<1/2$ is the absolute minimum of $2\sin(x)/x$ (See, for example, \cite{dirichlet-note}). %Thus, if we repeat this procedure in parallel $\log(1/\eps)$ times (each repetition using the same copy of $\ket{\psi}$, but fresh ancilla for all other registers), then each 
%measuring the second register in \eqref{eq:pre-amp-amp}, and post-selecting on $\ket{0}$, with probability $1-\eps$, we will succeed in measuring $\ket{0}$, leaving the state:
We will consider the $\ket{1}$ part of the state the ``bad'' part of the state, and the $\ket{0}$ part of the state the ``good'' part of the state. We analyze the remainder of the algorithm's action on the ``good'' part of the state, which is:
\begin{equation*}
\beta^{-1}\sum_jc_j \sum_{t=-\frac{T-1}{2}}^{\frac{T-1}{2}}\frac{1}{\sqrt{2T}}\ket{t}\cos(t\sigma_j)\ket{0}(\ket{\lambda_j^+}+\ket{\lambda_j^-}) = \beta^{-1}\sum_jc_j \sum_{t=-\frac{T-1}{2}}^{\frac{T-1}{2}}\frac{1}{\sqrt{T}}\ket{t}\cos(t\sigma_j)\ket{0}\ket{u_j}.
\end{equation*}
Discarding the second register, and performing a Fourier transform on the first, we get:
\begin{eqnarray}
&& \beta^{-1}\sum_jc_j \sum_{z=-\frac{T-1}{2}}^{\frac{T-1}{2}}\left(\frac{1}{T}\sum_{t=-\frac{T-1}{2}}^{\frac{T-1}{2}}\cos(t\sigma_j)e^{i2\pi tz/T}\right)\ket{z}\ket{u_j}\nonumber\\
&=& \beta^{-1}\sum_jc_j \sum_{z=-\frac{T-1}{2}}^{\frac{T-1}{2}}\left( \frac{1}{2T}\sum_{t=-\frac{T-1}{2}}^{\frac{T-1}{2}} e^{it(2\pi z/T+\sigma_j)} 
+\frac{1}{2T}\sum_{t=-\frac{T-1}{2}}^{\frac{T-1}{2}} e^{it(2\pi z/T-\sigma_j)}
\right)\ket{z}\ket{u_j}\nonumber\\
&=& \beta^{-1}\sum_jc_j \sum_{z=-\frac{T-1}{2}}^{\frac{T-1}{2}} \frac{1}{2T}\left(D_{\frac{T-1}{2}}\left(\frac{2\pi z}{T}+\sigma_j\right) 
+D_{\frac{T-1}{2}}\left(\frac{2\pi z}{T}-\sigma_j\right)
\right)\ket{z}\ket{u_j}
.\nonumber
\end{eqnarray}
Thus, we are adding two functions of $\ket{z}$: one peaking around $z\approx -\frac{T\sigma_j}{2\pi}$, and the other around $z\approx \frac{T\sigma_j}{2\pi}$. Finally, we can reversibly map $\ket{z}$ to $\ket{\mathrm{sgn}(z)}\ket{|z|}$, where $\ket{\mathrm{sgn}(z)}$ is 0 if $z=0$, $+$ if $z>0$ and $-$ if $z<0$. Then we have:
\begin{equation}
\ket{\phi(\sigma_j)}=\beta^{-1}\sum_{z=-\frac{T-1}{2}}^{\frac{T-1}{2}} \frac{1}{2T}\left(D_{\frac{T-1}{2}}\left(\frac{2\pi z}{T}+\sigma_j\right) 
+D_{\frac{T-1}{2}}\left(\frac{2\pi z}{T}-\sigma_j\right)
\right)\ket{\mathrm{sgn}(z)}\ket{|z|}.\nonumber
\end{equation}
We will interpret $|z|$ as an estimate $\pi|z|/T$ of $\sigma_j$. 

\paragraph{Correctness:} Fix some $j$, and let $z^*$ be the closest integer to $\frac{T\sigma_j}{2\pi}$. Define $\delta = \sigma_j - \frac{2\pi z^*}{T}$, so $|\delta|\leq \frac{\pi}{T}$. We will first argue that if we measure the last register of $\ket{\phi(\sigma_j)}$, we will get $z^*$ with probability at least $\frac{1}{4}(2/\pi - 1/4)^2>.037$. 
Suppose $z^*\neq 0$. In that case, we can measure either $\ket{-,z^*}$ or $\ket{+,z^*}$, with respective probabilities at least:
$$\beta^{-2}\left| \frac{1}{2T}\left( D_{\frac{T-1}{2}}\left(\frac{2\pi z^*}{T}+\sigma_j\right)+D_{\frac{T-1}{2}}\left(\frac{2\pi z^*}{T}-\sigma_j\right)\right)\right|^2$$
and
$$\qquad\qquad  \beta^{-2}\frac{1}{2T}\left|\left( D_{\frac{T-1}{2}}\left( - \frac{2\pi z^*}{T}+\sigma_j\right)+D_{\frac{T-1}{2}}\left(-\frac{2\pi z^*}{T}-\sigma_j\right) \right)\right|^2.$$
These are equal, by the symmetry of $D_{\frac{T-1}{2}}$, so the total probability of measuring $z^*$ is:
\begin{eqnarray*}
&& 2\beta^{-2}\left| \frac{1}{2T}\left( D_{\frac{T-1}{2}}\left(\frac{2\pi z^*}{T}+\sigma_j\right)+D_{\frac{T-1}{2}}\left(\frac{2\pi z^*}{T}-\sigma_j\right)\right)\right|^2\\
&\geq & \frac{1}{2T^2}\left| D_{\frac{T-1}{2}}(|\delta|) + D_{\frac{T-1}{2}}\left(\frac{2\pi z^*}{T}+\sigma_j\right) \right|^2\qquad\mbox{ since $\beta\leq 1$}\\
&\geq & \frac{1}{2T^2}\left| \frac{2T}{\pi} - \frac{T-1}{4} \right|^2
\; \geq \; \frac{1}{2}\left(2/\pi - 1/4\right)^2 \qquad\mbox{ since $|\delta|\leq \pi/T$}. 
\end{eqnarray*}
 The case for $z^*=0$ is similar, but the probability is only half as much, since there is only one contribution.

Next, for an integer $d$, let $p_d$ be the probability we measure $z^*+d$. We can upper bound this as
\begin{eqnarray*}
p_d &=& 2\beta^{-2}\left| \frac{1}{2T}\left( D_{\frac{T-1}{2}}\left(\frac{2\pi (z^*+d)}{T}+\sigma_j\right)+D_{\frac{T-1}{2}}\left(\frac{2\pi (z^*+d)}{T}-\sigma_j\right)\right)\right|^2\\
&=& 2\beta^{-2}\left| \frac{1}{2T}\left( D_{\frac{T-1}{2}}\left(\frac{2\pi (z^*+d)}{T}+\sigma_j\right)+D_{\frac{T-1}{2}}\left(\frac{2\pi d}{T}+\delta\right)\right)\right|^2\\
&\leq & \frac{1}{2T^2}\frac{8}{3}\left| \frac{T}{|2(z^*+d)+T\sigma_j/\pi|} + \frac{T}{|2d+T\delta/\pi|} \right|^2\\
&\leq & \frac{4}{3}\left| \frac{2}{2|d|-T|\delta|/\pi} \right|^2
\;\leq\; \frac{16}{3(2|d|-1)^2}
\; \leq \; \frac{16}{3d^2}
.
\end{eqnarray*}
Then the probability of measuring a $z$ such that $|z^* - z| \geq k$ for some $k<T$ is at most
$$2\sum_{d=k}^{\frac{T-1}{2}-z^*}\frac{16}{3d^2} =O(1/k).$$ %\anote{Could you include approximate value for $k=3$?}
Thus, as in \cite{CEMM98}, we can boost the success probability to $\eps$ of measuring $z$ such that $|z-z^*|<2$ at a cost of $\log(1/\eps)$ parallel repetitions. Note that in each repetition, only $\geq 3/8$ of the state will be ``good''. In any branch of the superposition, we will only compute the final estimate off of those repetitions where we're in the good state. 
The probability that we measure $z$ such that $\tilde\sigma = \pi z/T$ satisfies $|\tilde\sigma - \sigma_j|\leq \Delta$ is at least the probability that we measure $z$ such that $|\pi z^*/ T - \pi z/T| < 2\pi/T$, or equivalently, $|z^* - z| < 2$.
\end{proof}

In the case where $\ket{\psi}$ and $A$ are stored in a quantum-accessible data structure, we have that $\alpha=\mu$, where $\mu=\nrm{A}_F$ or if $A^{(p)}$ and $A^{(1-p)}$ are stored in quantum-accessible data structures for some $p$, $\mu=\mu_p(A)$. In each such case, we have $T_U=\bigO{\polylog(MN/\eps)}$ and $T_\psi=\bigO{\polylog(MN/\eps)}$, which gives us a running time of 
$$\bigOt{\dfrac{\mu}{\Delta}\polylog(MN/\eps)},$$
and thus we recover the running time of the QSVE algorithm by Kerenidis and Prakash \cite{kerenidis2016quantum}.

\subsection{Quantum linear system solvers}\label{sec:qls}

The quantum linear system problem (QLS problem) is the following. Given access to an $N\times N$ matrix $A$, and a procedure for computing a quantum state $\ket{b}$ in the image of $A$, prepare a state that is within $\eps$ of $A^+\ket{b}/\nrm{A^+\ket{b}}$. As with Hamiltonian simulation, several methods of encoding the part of the input $A$ have been considered. We consider the case where $A$ is given as a block-encoding. In that case, as a special case of Lemma \ref{lem:negative-power-restated} when $c=1$, given a $(\alpha,a,\delta)$-block-encoding $U$ of $A$ with implementation cost $T_U$, we can implement a $(2\kappa, a+\bigO{\log(\kappa\log({1}/{\eps}))},\eps)$-block-encoding of $A^{-1}$, assuming $\delta = o(\eps/(\kappa^2\log^3\frac{\kappa}{\eps}))$, in complexity $\bigO{\alpha\kappa(a+T_U)\polylog(\kappa/\eps)}$. \footnote{\shnote{added this~}Note that when the eigenvalues of $A$ are between $[-1,-1/\kappa]$, we follow the same argument as before by replacing $A$ with $-A$. Then the whole procedure involves applying quantum phase estimation to separate out the projection of $\ket{b}$ on to the positive eigenspace of $A$ from the projection of $\ket{b}$ on to the negative eigenspace of $A$. Controlled on this, we apply the procedure to implement  $A^{-1}$ or $-(-A)^{-1}$.}

From this block-encoding of $A^{-1}$, we can solve the QLS problem by applying the unitary $U$ to the state $\ket{b}$, and then doing amplitude amplification. However, since $U$ is a $(2\kappa,a+\bigO{\log(\kappa\log({1}/{\eps}))},\eps)$-block-encoding, we will require a number of amplitude amplification rounds that is linear in $\kappa$, giving an overall quadratic dependence on $\kappa$. Quantum linear system solvers in the sparse-access input model have only a linear dependence on $\kappa$, thanks to techniques of Ambainis \cite{AmbainisVariableTime12}, which were also successfully applied in a setting more similar to ours by Childs, Kothari and Somma \cite{childs2015quantum}. Using these techniques, we can also reduce our dependence on $\kappa$ to linear. 

\paragraph{Reducing the dependence on condition number.} To reduce the dependence on $\kappa$, we use the technique of variable time amplitude amplification (VTAA) instead of standard amplitude amplification. For this, we need to adapt the quantum linear systems algorithm to be a \emph{variable-stopping-time algorithm}, to which VTAA can be applied (see Section \ref{sec:variable-stopping}). Our setting is similar to that of Childs, Kothari and Somma \cite{childs2015quantum}, so we will follow their notation and proof.

First we formally state a version of quantum phase estimation that determines whether an eigenphase $\lambda\in [-1,1]$ of a given unitary $U$ satisfies $0\leq |\lambda|\leq\phi$ or $2\phi\leq |\lambda|\leq 1$. This is known as gapped quantum phase estimation (GPE) and was introduced in \cite{childs2015quantum}. We restate  it here. 
\begin{lemma}[Gapped phase estimation~\cite{childs2015quantum}]\label{lem:gapped-qpe} 
Let $U$ be a unitary such that $U\ket{\lambda}=e^{i\lambda}\ket{\lambda}$ and $\lambda\in [-1,1]$. Let $\phi\in(0,1/4]$ and $\epsilon>0$. Then there exists a quantum algorithm that performs the transformation
$$\ket{0}_C\ket{0}_P\ket{\lambda}_I\mapsto \alpha_0\ket{0}_C\ket{g_0}_P\ket{\lambda}_I+\alpha_1\ket{1}_C\ket{g_1}_P\ket{\lambda}_I,$$
for some unit vectors $\ket{g_0}$ and $\ket{g_1}$, where $C$ and $P$ are registers of $1$ and $\bigO{\frac{1}{\phi}\log\frac{1}{\epsilon}}$ qubits, respectively, $|\alpha_0|^2+|\alpha_1|^2=1$ and
\begin{itemize}
\item[$\bullet$] if $0\leq |\lambda|\leq \phi$ then $|\alpha_1|\leq\epsilon$ and
\item[$\bullet$] if $2\phi\leq |\lambda|\leq 1$ then $|\alpha_0|\leq\epsilon$.
\end{itemize}
If $T_U$ is the cost of implementing $U$, then the cost of this quantum algorithm is $\bigO{\frac{T_U}{\phi}\log\frac{1}{\epsilon}}$. 
\end{lemma}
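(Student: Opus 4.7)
The plan is to build the gapped phase estimation algorithm by combining a boosted standard phase estimation subroutine with a coherent thresholding step that writes the desired bit into the register $C$. Since the lemma distinguishes $|\lambda|\leq\phi$ from $|\lambda|\geq 2\phi$, it suffices to obtain an estimate $\tilde\lambda$ of $\lambda$ with additive error at most $\phi/2$, except with failure probability $\leq\epsilon$, and then compare $|\tilde\lambda|$ to the midpoint $3\phi/2$.

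First I would run standard QFT-based phase estimation of $U$ on the input $\ket{0}_P\ket{\lambda}_I$ at precision $\phi/2$, using $\bigO{1/\phi}$ controlled applications of $U$ to achieve constant success probability, and then boost to failure probability $\epsilon$ by running $\bigO{\log(1/\epsilon)}$ independent copies in parallel and coherently computing their median into a dedicated subregister of $P$. Since $\ket{\lambda}_I$ is an eigenvector, this leaves $\ket{\lambda}_I$ intact and produces a state of the form $\sum_k \beta_k\ket{k}_P\ket{\lambda}_I$, where $\ket{k}_P$ records both the raw phase-estimation outcomes and the resulting median estimate $\tilde\lambda_k\in[-1,1]$, and where
$$\sum_{k\,:\,|\tilde\lambda_k-\lambda|>\phi/2}|\beta_k|^2 \leq \epsilon^2.$$
The total cost is $\bigO{(T_U/\phi)\log(1/\epsilon)}$, and the workspace $P$ fits inside the stated $\bigO{(1/\phi)\log(1/\epsilon)}$ qubits (with room to spare, since $\log(1/\phi)\leq 1/\phi$).

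Next I would introduce the fresh qubit $C$ initialized to $\ket{0}$ and coherently compute into it the indicator bit $f(k)=\mathbf{1}[\,|\tilde\lambda_k|>3\phi/2\,]$ by a standard reversible classical circuit acting on the median subregister of $P$; the gate cost for this step is $\polylog(1/\phi)$ and is dominated by the phase estimation cost. Setting $\alpha_b^2:=\sum_{k:f(k)=b}|\beta_k|^2$ and $\ket{g_b}_P:=\alpha_b^{-1}\sum_{k:f(k)=b}\beta_k\ket{k}_P$, the output is exactly $\alpha_0\ket{0}_C\ket{g_0}_P\ket{\lambda}_I+\alpha_1\ket{1}_C\ket{g_1}_P\ket{\lambda}_I$. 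The case analysis is then a one-line triangle-inequality check: when $|\lambda|\leq\phi$ any accurate estimate satisfies $|\tilde\lambda_k|\leq|\lambda|+\phi/2\leq 3\phi/2$, so every ``good'' $k$ contributes only to $\alpha_0$ and hence $|\alpha_1|^2\leq\epsilon^2$; symmetrically, when $2\phi\leq|\lambda|\leq 1$ any accurate estimate gives $|\tilde\lambda_k|\geq 3\phi/2$, so $|\alpha_0|^2\leq\epsilon^2$.

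The main point requiring care is keeping the whole procedure coherent. Phase estimation necessarily leaks information about $\lambda$ into $P$, and once $C$ has been written conditioned on the median estimate, the $P$ register can no longer be uncomputed without destroying the very information that $C$ depends on. This is precisely why the lemma allows $\ket{g_0},\ket{g_1}$ to be arbitrary unit vectors carrying away the phase-estimation garbage, and why $P$ must remain in the output. The only other subtle point is ensuring that the median boosting is implemented coherently (no intermediate measurements) so that the overall map is unitary and compatible with later use as a subroutine; this is standard and does not affect the stated complexity.
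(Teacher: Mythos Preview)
Your proof is correct and is essentially the standard argument for gapped phase estimation: boosted phase estimation followed by a coherent threshold, with the garbage left in $P$. The paper itself does not give a proof of this lemma; it is quoted from \cite{childs2015quantum} and merely restated, so there is nothing to compare against beyond noting that your construction matches the original source's approach.
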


As a corollary to Lemma \ref{lem:negative-power-restated}, which, in the special case when $c=1$, says we can get a block-encoding of $H^{-1}$ from a block-encoding of $H$, we have the following, which allows us to invert $H$ on a certain range of its eigenspaces:
\begin{corollary}[Efficient inversion of a block-encoded matrix]\label{cor:block-encoding-inverse-patch}
Let $\eps,\lambda>0$ and $H$ an $N\times N$ Hermitian matrix. Suppose that for $\delta=o\left(\eps\lambda^2/\log^3\frac{1}{\lambda\eps}\right)$ the unitary $U$ is an $(\alpha,a,\delta)$-block-encoding of $H$ that can be implemented using $T_U$ elementary gates. Then for any state $\ket{\psi}$ that is spanned by eigenvectors of $H$ with eigenvalues in the range $[-1,-\lambda]\bigcup [\lambda,1]$, there exists a unitary $W(\lambda,\eps)$ that implements
\begin{equation*}
W(\lambda,\eps)\ket{0}_F\ket{0}_Q\ket{\psi}_I=\dfrac{1}{\alpha_\mathrm{max}}\ket{1}_F\ket{0}_Q f(H)\ket{\psi}_I +\ket{0}_F\ket{\widetilde\psi^\perp}_{QI}
\end{equation*}
where $\alpha_\mathrm{max}\leq \lambda$ is a constant, $\ket{\widetilde\psi^\perp}_{QI}$ is an unnormalized quantum state, orthogonal to $\ket{0}_Q$ and $\nrm{f(H)\ket{\psi}-H^{-1}\ket{\psi}}\leq \eps$. Here $F$, $Q$ and $I$ are registers of $1$ qubit, $\alpha$ qubits and $\log(N)$ qubits respectively. The cost of implementing $W(\lambda,\eps)$ is 
\begin{equation}
\bigO{\alpha \lambda^{-1}\log^2\left(\frac{1}{\lambda\eps}\right)(a+T_U)}.\label{eq:patch-cost}
\end{equation}
\end{corollary}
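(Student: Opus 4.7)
The plan is to reduce the problem to two invocations of Lemma~\ref{lem:negative-power-restated}, which only accepts positive-definite inputs, by first separating the positive and negative eigenspaces of $H$ using quantum phase estimation. Specifically, I would use Theorem~\ref{thm:blockHamSim} to simulate $e^{iH/2}$ controlled on a time register (rescaling by $1/2$ ensures all eigenphases lie strictly inside $[-1/2,1/2]$, avoiding the wrap-around ambiguity near $\pm\pi$), then run phase estimation with precision $\lambda/8$ and error probability $\bigO{\eps}$, writing the sign of the eigenvalue estimate into a fresh ancilla qubit $S$. Because every eigenvalue of $H$ on the support of $\ket\psi$ has absolute value at least $\lambda$, the sign bit is set correctly with probability $1-\bigO{\eps}$, so after phase estimation the state is close to $\ket{0}_S \Pi_+\ket\psi + \ket{1}_S \Pi_-\ket\psi$ (plus phase-estimation garbage in an auxiliary register), where $\Pi_\pm$ project onto the $[\lambda,1]$ and $[-1,-\lambda]$ eigenspaces.

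Next, controlled on $S=0$, I apply Lemma~\ref{lem:negative-power-restated} with $c=1$ and $\kappa=1/\lambda$ to the given block-encoding $U$ of $H$; this yields a $(2/\lambda, a+\bigO{\log(1/(\lambda\eps))}, \eps/3)$-block-encoding of $H^{-1}$ accurate on the positive eigenspace. Controlled on $S=1$, I apply the same lemma to a trivially-obtained block-encoding of $-H$ (flip a sign on the block-encoding ancilla) to realize $(-H)^{-1}$, and then negate the coefficient conditional on $S=1$, since $H^{-1} = -(-H)^{-1}$ on the negative eigenspace. After this, I uncompute $S$ and the phase-estimation garbage by inverting the phase-estimation circuit; because the intermediate controlled operations leave the $I$-register eigenvalue branches intact up to the small block-encoding error, this uncomputation succeeds up to additional error $\bigO{\eps}$. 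Finally, I introduce the $F$ register and coherently flip $\ket{0}_F \mapsto \ket{1}_F$ exactly when the $Q$ register, housing the ancillas of Lemma~\ref{lem:negative-power-restated}, is in $\ket{0}$. The overall amplitude on the good branch $\ket{1}_F\ket{0}_Q H^{-1}\ket\psi$ is $\lambda/2$, so the constant appearing in the statement is $1/\alpha_\mathrm{max}=\lambda/2$ (the stated inequality $\alpha_\mathrm{max}\leq\lambda$ appears to be intended as $1/\alpha_\mathrm{max}\leq\lambda$, consistent with the good-branch amplitude scaling linearly in $\lambda$).

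For the cost, the two controlled invocations of Lemma~\ref{lem:negative-power-restated} dominate, each contributing $\bigO{\alpha\lambda^{-1}(a+T_U)\log^2(1/(\lambda\eps))}$, which matches~\eqref{eq:patch-cost}. Phase estimation via Theorem~\ref{thm:blockHamSim} costs only $\bigO{\alpha\lambda^{-1}(a+T_U)\log(1/\eps)}$ and is subleading.

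The main obstacle is careful error propagation. I need to verify that the precondition $\delta = \littleo{\eps\lambda^2/\log^3(1/(\lambda\eps))}$ on $U$ is strong enough to satisfy the Lemma~\ref{lem:negative-power-restated} requirement $\delta = \littleo{\eps/(\kappa^2\log^3(\kappa/\eps))}$ with $\kappa=1/\lambda$ and $c=1$ (which it is, up to constants inside the logarithm); that the phase-estimation misclassification probability contributes only $\bigO{\eps}$ to the final trace-distance error; and that the sign ancilla uncomputes cleanly despite the intermediate block-encoded inversions perturbing each eigenvector branch by $\bigO{\eps}$. Each of these error terms can be driven below $\eps/3$ by adjusting hidden constants, leaving a routine but bookkeeping-intensive verification.
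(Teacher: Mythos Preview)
Your approach is correct and is essentially the one the paper has in mind. The paper's written proof of this corollary is much terser: it simply invokes Lemma~\ref{lem:negative-power-restated} with $c=1$ and $\kappa=\lambda^{-1}$ to obtain a $(2\lambda^{-1},a+\bigO{\log(\lambda^{-1}\log\frac{1}{\lambda\eps})},\eps)$-block-encoding of $H_\lambda^{-1}$, then appends the flag qubit $F$ and flips it controlled on $Q=\ket{0}$. The sign-separation you carry out via phase estimation is exactly what the paper defers to a footnote in the surrounding QLS discussion (``apply quantum phase estimation to separate out the projection \ldots\ controlled on this, we apply the procedure to implement $A^{-1}$ or $-(-A)^{-1}$''), so you have simply made that implicit step explicit. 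Your uncomputation argument is fine: on the $\ket{0}_Q$ branch the $I$-register remains an $H$-eigenvector, so the inverse phase-estimation cleans $S$ and the PE register exactly there, and on the $Q\neq 0$ branch the residual can be absorbed into $\ket{\widetilde\psi^\perp}_{QI}$ by counting $S$ and the PE register as part of $Q$.

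One small point you omit that the paper includes: after flipping $F$, the paper applies a further controlled rotation
\[
\ket{1}_F \;\longmapsto\; \frac{2}{\lambda\,\alpha_{\mathrm{max}}}\ket{1}_F + \sqrt{1-\tfrac{4}{\lambda^2\alpha_{\mathrm{max}}^2}}\,\ket{0}_F
\]
so that the amplitude on the good branch becomes $1/\alpha_{\mathrm{max}}$ for a \emph{single} constant $\alpha_{\mathrm{max}}=\bigO{\kappa}$ chosen independently of $\lambda$. Your choice $1/\alpha_{\mathrm{max}}=\lambda/2$ satisfies the corollary as stated, but the $\lambda$-independence is what makes the corollary usable inside the variable-time algorithm of Theorem~\ref{thm:qls-vtaa}, where $W(2^{-j},\eps')$ is called for many different $j$ and all branches must carry the same normalization. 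You correctly flagged the typo in the statement: indeed the intended inequality is $\alpha_{\mathrm{max}}^{-1}\leq \lambda/2$ (equivalently $\alpha_{\mathrm{max}}\geq 2/\lambda$), consistent with the paper's own footnote that one may assume $\lambda\alpha_{\mathrm{max}}\geq 2$.
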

\begin{proof}
Let $H_{\lambda}$ be the restriction of $H$ to its eigenspaces with corresponding eigenvalues in $[-1,-\lambda]\bigcup[\lambda,1]$. An application of Lemma~\ref{lem:negative-power-restated} with $c=1$ and $\kappa=\lambda^{-1}$ yields a $(2\lambda^{-1},a+\bigO{\log(\lambda^{-1}\log\frac{1}{\lambda\eps})},\eps)$ block encoding of $H_{\lambda}^{-1}$. That is, there exists a unitary $\widetilde{U}$ such that
\begin{equation*}
%\label{eq:block-encoding-inverse-patch}
\widetilde{U}\ket{0}_Q\ket{\psi}_I=\dfrac{\lambda}{2}\ket{0}_Qf(H)\ket{\psi}_I+\ket{\widetilde{\psi}^\perp}_{QI},
\end{equation*}
where $\nrm{f(H)\ket{\psi}-H^{-1}\ket{\psi}}\leq\eps$ whenever $\ket{\psi}$ is a unit vector in the span of eigenvectors of $H$ with eigenvalues in $[-1,-\lambda]\cup [\lambda,1]$. By Lemma \ref{lem:negative-power-restated}, such a $\widetilde{U}$ can be implemented in complexity given by the expression in \eqref{eq:patch-cost}.

If we add a single qubit flag register, initialized to $\ket{0}_F$ to the aforementioned procedure, and flip this register controlled on the register $Q$ being in the state $\ket{0}$, then the resulting unitary $\widetilde{U'}$ acts as
\begin{equation*}
%\label{eq:block-encoding-inverse-with-control}
\ket{0}_F\ket{0}_Q\ket{\psi}_I\mapsto\dfrac{\lambda}{2}\ket{1}_F\ket{0}_Q f(H)\ket{\psi}_I+\ket{0}_F\ket{\widetilde{\psi}^\perp}_{QI}.
\end{equation*}  
Finally, we implement the following rotation controlled on register $Q$ being in $\ket{0}$ that replaces $\lambda/2$ with some constant $\alpha_{\mathrm{max}}^{-1}$, independent of $\lambda$, such that\footnote{Note that we can assume without loss of generality that $\lambda\alpha_{\max}\geq 2$.} $\alpha_{\max}=\bigO{\kappa}$. That is we implement
\begin{equation*}
\ket{1}_F\mapsto \dfrac{2}{\lambda\alpha_\mathrm{max}}\ket{1}_F+\sqrt{1-\dfrac{4\lambda^2}{\alpha^2_\mathrm{max}}}\ket{0}_F \,.
\end{equation*}
These two operations together give us $W(\lambda,\eps)$ and the cost of implementing this is of the same order as that of implementing $\widetilde{U}$.
\end{proof}
\paragraph{Variable-time algorithm.} Now we will describe a variable time algorithm $\mathcal{A}$ that, given a block-encoding of an $N\times N$ matrix $A$, can be applied to an input state $\ket{\psi}$ to produce a state close to $A^{-1}\ket{\psi}$. The algorithm $\mathcal{A}$ can be thought of as a sequence of steps $\mathcal{A}_1,\dots,\mathcal{A}_m$, where $m=\ceil{\log_2\kappa}+1$. The goal is that the whole algorithm retains a block-encoded form so that it enables us to use this easily in applications in subsequent sections. $\mathcal{A}$ will work on the following registers:
\begin{itemize}
\item $m$ single-qubit clock registers $C_1,\dots,C_m$, collectively referred to as $C$.
\item An input register $I$, initialized to $\ket{\psi}$. 
\item A single qubit flag register $F$, used to indicate success.
\item $m$ registers $P_1,\dots,P_m$ used as ancilla for GPE.
\item An ancilla register $Q$ required for the block-encoding, initialized to $\ket{0}^{\otimes a}$.
\end{itemize}
Let $\epsilon'=\eps/(m\alpha_{\mathrm{max}})$. We define algorithm $\mathcal{A}_j$, as follows:
\begin{enumerate}
\item If $C_1,\dots,C_{j-1}$ is in the state $\ket{0}^{\otimes (j-1)}$, apply GPE to $e^{iA}$, defined in Lemma~\ref{lem:gapped-qpe}, with precision $2^{-j}$ and accuracy $\epsilon'$ to input $\ket{\psi}$, using $P_j$ as workspace, and writing the output qubit in $C_j$. 
\item If $C_j$ is now in the state $\ket{1}$, apply the unitary $W(2^{-j},\epsilon')$, as defined in Corollary~\ref{cor:block-encoding-inverse-patch}, on $I\otimes F\otimes Q$.
\end{enumerate}
We shall also require algorithms $\mathcal{A}'=\mathcal{A}'_m\dots \mathcal{A}'_1$ that are similar to ${\cal A}$ except that in step 2, ${\cal A}'_j$ implements the following:
\begin{equation*}
%\label{eq:step-2-a-prime-j}
W'\ket{\psi}_I\ket{0}_F\ket{0}_Q=\ket{\psi}_I\ket{1}_F\ket{0}_Q.
\end{equation*}
Then we can define the final variable time algorithm formally using the following lemma.\footnote{Note that the construction of the variable-time amplification algorithm can have a logarithmically higher complexity than the actual variable-time amplification algorithm itself, as in Theorem~\ref{thm:efficient-vtaa-vtae}. For simplicity throughout this section we only discuss the complexity of the resulting variable-time amplification algorithm. This is also justified by the fact that for a fixed input state preparation unitary we only need to construct the variable-time amplification algorithm once.}

\anote{We should change the statement stating the result for $M\times N$ matrices, since we use it in such a way later.}
\begin{theorem}[Variable-time quantum linear systems algorithm]\label{thm:qls-vtaa}%\shnote{thm:qls-vtaa}
Let $\kappa\geq 2$, and $H$ be an $N\times N$ Hermitian matrix\footnote{Since for any matrix $C\in\C^{M'\times N'}$ we have that $\overline{C}\in\C^{(M'+N')\times (M'+N')}$ is Hermitian, and the eigenvalues of $\overline{C}$ are $\pm 1$ times the singular values of $C$, this statement and its corollaries also apply to non-symmetric matrices.} such that the non-zero eigenvalues of $H$ lie in the range $[-1,-1/\kappa]\bigcup [1/\kappa,1]$. Suppose that for $\delta = \littleo{\eps/(\kappa^2\log^3\frac{\kappa}{\eps})}$ we have a unitary $U$ that is a $(\alpha,a,\delta)$-block-encoding of $H$ that can be implemented using $T_U$ elementary gates.  Also suppose that we can prepare an input state $\ket{\psi}$ which spans the eigenvectors of $H$ in time $T_{\psi}$. Then there exists a variable time quantum algorithm that outputs a state that is ${\eps}$-close to
$H^{-1}\ket{\psi}/\nrm{H^{-1}\ket{\psi}}$ at a cost 
$$\bigO{\kappa\left(\alpha \big(T_U+a\big)\log^2\left(\dfrac{\kappa}{\eps}\right)+ T_\psi\right)\log(\kappa)}.$$
%$$\mathcal{O}\left(\alpha \kappa (a+T_U)\log^2\left(\dfrac{\kappa\log\kappa}{\eps}\right)\log\left(\alpha\kappa\log^2\left(\dfrac{\kappa\log\kappa}{\eps}\right)(a+T_U)\right)\right).$$ 
\end{theorem}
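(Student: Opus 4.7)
The plan is to apply the variable-time amplitude amplification machinery of Theorem~\ref{thm:efficient-vtaa-vtae} to the variable-stopping-time algorithm $\mathcal{A}=\mathcal{A}_m\cdots\mathcal{A}_1$ described just above, following the Childs--Kothari--Somma approach~\cite{childs2015quantum}. The reason this beats direct amplification of $A^{-1}/(2\kappa)$ is that the variable-time structure lets us pay the full $\widetilde{\mathcal{O}}(\kappa)$ inversion cost only on the small fraction of $\ket{\psi}$ supported on the smallest-magnitude eigenvalues of $H$, rather than on all of $\ket{\psi}$, reducing the total dependence on $\kappa$ from quadratic to linear.

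First I would establish correctness of $\mathcal{A}$. Step $\mathcal{A}_j$ applies gapped phase estimation of $e^{iH}$ at threshold $\phi=2^{-j}$ and accuracy $\epsilon'$ (Lemma~\ref{lem:gapped-qpe}), tagging $C_j=\ket{1}$ precisely on those eigencomponents of $\ket{\psi}$ with $|\lambda|\geq 2^{-j+1}$ (up to amplitude error $\epsilon'$), while components with $|\lambda|\leq 2^{-j}$ pass through with $C_j=\ket{0}$ to $\mathcal{A}_{j+1}$. Conditioned on $C_j=\ket{1}$, the unitary $W(2^{-j},\epsilon')$ of Corollary~\ref{cor:block-encoding-inverse-patch} then writes $\alpha_{\max}^{-1}\ket{1}_F\ket{0}_Q f(H)\ket{\psi}$ with $\|f(H)-H^{-1}\|\leq\epsilon'$ on that subspace. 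Since $m=\lceil\log_2\kappa\rceil+1$ and every nonzero eigenvalue satisfies $|\lambda|\geq 1/\kappa\geq 2^{-m}$, the entirety of $\ket{\psi}$ has been classified by stage $m$. Summing the $\epsilon'$ per-stage errors over $m$ stages and accounting for the block-encoding amplification factor $\alpha_{\max}=\mathcal{O}(\kappa)$, the choice $\epsilon'=\eps/(m\alpha_{\max})$ gives total error $\eps$ on the normalized good-flag state.

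Next I would compute the parameters $t_j$, $p_{\mathrm{succ}}$, and $\nrm{T}_2$ required by Theorem~\ref{thm:efficient-vtaa-vtae}. Combining Lemma~\ref{lem:gapped-qpe} and Corollary~\ref{cor:block-encoding-inverse-patch}, the incremental cost of stage $j$ is $\mathcal{O}(2^j\alpha(T_U+a)\log^2(\kappa/\eps))$, so $t_j$ is of the same order and $T_{\max}=t_m=\mathcal{O}(\kappa\alpha(T_U+a)\log^2(\kappa/\eps))$. Decomposing $\ket{\psi}=\sum_\lambda c_\lambda\ket{v_\lambda}$, the branch labeled by $\ket{v_\lambda}$ is (up to $\epsilon'$ error) first tagged at the unique stage $j_\lambda$ with $|\lambda|\in[2^{-j_\lambda},2^{-j_\lambda+1})$, whence $p_{\mathrm{stop}=t_j}\approx\sum_{\lambda:j_\lambda=j}|c_\lambda|^2$, and
\[
\nrm{T}_2^2 \;=\; \sum_{j=1}^{m} t_j^2\, p_{\mathrm{stop}=t_j} \;=\; \mathcal{O}\!\left(\big(\alpha(T_U+a)\log^2(\kappa/\eps)\big)^2 \sum_\lambda \frac{|c_\lambda|^2}{\lambda^2}\right) \;=\; \mathcal{O}\!\left(\big(\alpha(T_U+a)\log^2(\kappa/\eps)\big)^2\nrm{H^{-1}\ket{\psi}}^2\right).
\]
The final amplitude on the good-flag is $\Theta(\nrm{H^{-1}\ket{\psi}}/\alpha_{\max})$, giving $p_{\mathrm{succ}}=\Theta(\nrm{H^{-1}\ket{\psi}}^2/\kappa^2)$ and thus $\nrm{T}_2/\sqrt{p_{\mathrm{succ}}}=\mathcal{O}(\kappa\alpha(T_U+a)\log^2(\kappa/\eps))$, which matches $T_{\max}$ up to constants.

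Plugging these into Theorem~\ref{thm:efficient-vtaa-vtae}, with state-preparation cost $T_\psi$ and $p_{\mathrm{prep}}=1$ (we are given $\ket{\psi}$ exactly), and using $T'_{\max}=\mathcal{O}(\kappa\polylog(\kappa/\eps))$ so that $\log T'_{\max}=\mathcal{O}(\log\kappa)$, the VTAA complexity simplifies to $\mathcal{O}(\kappa(\alpha(T_U+a)\log^2(\kappa/\eps)+T_\psi)\log\kappa)$, as claimed. The main technical obstacle is carefully tracking how the GPE approximation error ($\epsilon'$ per stage) interacts with the soft threshold at $|\lambda|\in(2^{-j},2^{-j+1})$: eigencomponents in that grey zone may be tagged at either stage $j$ or stage $j{+}1$, so one must verify this only shifts a small amount of probability mass between adjacent stopping times---cheap because $t_{j+1}=\Theta(t_j)$---and hence changes neither $p_{\mathrm{succ}}$ nor $\nrm{T}_2^2$ by more than a constant factor; the geometric spacing of the $t_j$ and the choice of $\epsilon'$ exactly suffice.
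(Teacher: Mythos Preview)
Your proposal is correct and follows essentially the same approach as the paper's proof: both apply the variable-time amplitude amplification of Theorem~\ref{thm:efficient-vtaa-vtae} to the variable-stopping-time algorithm $\mathcal{A}=\mathcal{A}_m\cdots\mathcal{A}_1$, compute $t_j$, $T_{\max}$, $\nrm{T}_2$, and $p_{\mathrm{succ}}$ in the same way, and observe that $\nrm{T}_2/\sqrt{p_{\mathrm{succ}}}$ and $T_{\max}$ are both $\mathcal{O}(\kappa\alpha(T_U+a)\log^2(\kappa/\eps))$. The one detail you omit is the uncomputation step: after VTAA, the paper applies $(\mathcal{A}')^\dagger$ (where $\mathcal{A}'$ is identical to $\mathcal{A}$ but with $W$ replaced by the trivial flag-setting unitary) to erase the ancilla registers $C,P$, so that the output is a clean state on $I\otimes Q$; this costs at most twice the cost of $\mathcal{A}$ and so does not affect the asymptotic complexity.
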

\begin{proof}
Given an $(\alpha,a,\delta)$-block encoding of $H$, we append some ancilla qubits in order to be in the framework for applying VTAA to algorithm $\mathcal{A}$. At the end of the algorithm we will discard these additional registers. We append a single qubit flag register $F$, and registers $C$, $P$ and $Q$ (defined previously) all initialized in $\ket{0}$. So now we are in a framework where the VTAA can be applied to algorithm $\mathcal{A}$ to the state $\ket{\psi}_I\ket{0}_{CPFQ}$.  
The final algorithm $\mathcal{V}$ involves using VTAA from Theorem~\ref{thm:efficient-vtaa-vtae} to $\mathcal{A}$. The resulting output is a state that has performed $f(H)\ket{\psi}$ conditioned on the flag register being in $\ket{1}_F$. Subsequently, we apply the unitary $(\mathcal{A}')^\dag$ that erases the ancillary states. The final algorithm results in the following transformation
 
\begin{equation}
\label{eq:transformation-vtaa-qls}
\mathcal{V}\ket{\psi}_I\ket{0}_{CFPQ}\mapsto \dfrac{f(H)\ket{\psi}_I}{\nrm{f(H)\ket{\psi}_I}}\ket{0}_{CFPQ},
\end{equation}
such that $\nrm{\dfrac{f(H)}{\nrm{f(H)\ket{\psi}}}-\dfrac{H^{-1}}{\nrm{H^{-1}\ket{\psi}}}}\leq 	\mathcal{O}\left(\eps\right)$. We can then discard ancilla registers $C, F, \mathrm{and}~P$. So the transformation in the space $I\otimes Q$ is
$$\ket{\psi}_I\ket{0}_{Q}\mapsto \dfrac{f(H)\ket{\psi}_I}{\nrm{f(H)\ket{\psi}_I}}\ket{0}_{Q}.$$
%We can still consider this as a $\left(1,a+\bigO{\log(\kappa\log\frac{\kappa}{\eps})},\mathcal{O}(\eps)\right)$-blocking encoding of $H^{-1}/\nrm{H^{-1}\ket{\psi}}$, however, here the block encoding is trivial in the sense that it is encoded in the entire unitary and not in some block.

The correctness of this algorithm is similar to that of Childs, Kothari and Somma \cite{childs2015quantum}. Let the input quantum state $\ket{\psi}=\sum_k c_k\ket{\lambda_k}$, where $\ket{\lambda_k}'\mathrm{s}$ are the eigenstates of $H$. Let us consider an eigenstate of $H$, say $\ket{\lambda}$ with eigenvalue $\lambda\in [-1,1]$ such that $2^{-j}<|\lambda|<2^{1-j}$ for $1\leq j \leq m$. Such a $j$ exists because $1/\kappa\leq |\lambda|\leq 1$. For such a $\lambda$, applying $\mathcal{A}_{j-1}\dots\mathcal{A}_1$ to the state $\ket{\lambda}_I\ket{0}_{CFPQ}$, does nothing but modify the ancilla registers $P_{j-1},\dots,P_1$ due to the output of GPE. This is because the precision of GPE for any of $\mathcal{A}_k$ such that $1\leq k\leq j-1$ is greater than $2^{-j}$ and the register $C_k$ for $1\leq k\leq j-1$ is always in $\ket{0}$.

When $\mathcal{A}_j$ is applied, however, the output of GPE is in a superposition of $\ket{0}$ and $\ket{1}$ on $C_j$, as $2^{-j}<|\lambda|<2^{1-j}$. So in the part of the resulting state where register $C$ is not in $\ket{0}^{\otimes m}$, step 2 of $\mathcal{A}_j$ is implemented and $W(2^{-j},\epsilon')$ is applied to $I\otimes F\otimes Q$. The computation stops on this part of the resulting state as register $C$ is non-zero. On the other hand, on the part where register $C$ is in $\ket{0}^{\otimes m}$, the computation continues. Applying $\A_{j+1}$ to this part, first results in $\ket{1}$ in $C_{j+1}$ with a very high probability as $|\lambda|>2^{-j}$. Applying step 2 of $\mathcal{A}_{j+1}$ again implements $W(2^{-j-1},\epsilon')$ on $I\otimes F\otimes Q$. Since the resulting state has no overlap with $\ket{0}_C$, $\mathcal{A}_{j+2}\dots \mathcal{A}_m$ has no effect.

We observe that actually for any $2^{-j}<|\lambda|<2^{1-j}$, only $\mathcal{A}_{j}$ and $\mathcal{A}_{j+1}$ implements $H^{-1}$ through the unitary $W$ defined in Corollary \ref{cor:block-encoding-inverse-patch}. The requirements of Corollary \ref{cor:block-encoding-inverse-patch} are satisfied as $\lambda$ lies between $[-1,2^{-j}]\bigcup[2^{-j},1]$ and also between $[-1,2^{-j-1}]\bigcup[2^{-j-1},1]$.

By linearity on $\ket{\psi}=\sum_k c_k\ket{\lambda_k}$, the algorithm $\mathcal{A}$ implements $H^{-1}/{\alpha_{\mathrm{max}}}$ on register $I$ conditioned on the flag register being in $\ket{1}_F$. Next VTAA is applied to the resulting state and following that $(\mathcal{A}')^\dag$ is used to erase the ancilla registers and output the state in \eqref{eq:transformation-vtaa-qls}. For more details, readers can refer to \cite{childs2015quantum}. 

Next, we analyse the complexity of this algorithm. Note that the complexity of $\mathcal{V}$ is the same order as the cost of applying VTAA to algorithm $\mathcal{A}$ as the cost of running algorithm $(\mathcal{A}')^\dag$ is at most twice that of $\mathcal{A}$. So the contribution of $(\mathcal{A}')^\dag$ to the overall complexity can be ignored.

To estimate the cost of implementing each algorithm $\mathcal{A}_j$ we first observe that the cost of implementing GPE with precision $2^{-j}$ and error probability $\epsilon'=\eps/(m\alpha_{\mathrm{max}})$ is 
$$\mathcal{O}\left(\alpha 2^j\log \left(1/\epsilon'\right)(a+T_U)\right),$$
up to additive $\log$ factors. 
The cost of implementing $W(2^{-j},\epsilon')$ is given by Corollary~\ref{cor:block-encoding-inverse-patch} as 
$$\mathcal{O}\left(\alpha 2^j\log^2\dfrac{2^j}{\epsilon'}(a+T_U)\right).$$
So the time required to implement $\mathcal{A}_j$ is 
$$\mathcal{O}\left(\alpha 2^j\log^2\dfrac{2^j}{\epsilon'}(a+T_U)\right).$$
This implies that the time $t_j$ required to implement $\mathcal{A}_j\dots\mathcal{A}_1$ is also 
$$\mathcal{O}\left(\alpha 2^j\log^2\dfrac{\kappa}{\epsilon'}(a+T_U)\right).$$
Also, $T_{\mathrm{max}}$, the time required to execute $\mathcal{A}_m\dots\mathcal{A}_1$ is
\begin{equation}
\label{eq:t-max-vtaa}
T_{\mathrm{max}}=\mathcal{O}\left(\alpha \kappa\log^2\dfrac{\kappa}{\epsilon'}(a+T_U)\right)=\mathcal{O}\left(\alpha \kappa (a+T_U)\log^2\left(\dfrac{\kappa\log\kappa}{\eps}\right)\right).
\end{equation}
Now in order to upper bound the cost of applying VTAA to the algorithm $\mathcal{A}$, we need to now upper bound the probability that $\mathcal{A}$ stops at the $j$-th step. This is given by $p_j=\nrm{\Pi_{C_j}\mathcal{A}_j\dots \mathcal{A}_1\ket{\psi}_I\ket{0}_{CFPQ}}^2$,\footnote{$p_j$ is called $p_{\mathrm{stop}=t_j}$ in Section \ref{sec:vtae}.} where $\Pi_{C_j}$ denotes the projector on to $\ket{1}_{C_j}$. Then we can calculate the $l_2$-averaged stopping time of $\mathcal{A}$, $\nrm{T}_2$ as
\begin{align}
\nrm{T}_2^2&=\sum_j p_j t_j^2\nonumber\\
           &=\sum_j \nrm{\Pi_{C_j}\mathcal{A}_j\dots \mathcal{A}_1\ket{\psi}_I\ket{0}_{CFPQ}}^2t_j^2\nonumber\\
           &=\sum_k |c_k|^2\sum_j \left(\nrm{\Pi_{C_j}\mathcal{A}_j\dots \mathcal{A}_1\ket{\lambda_k}_I\ket{0}_{CFPQ}}^2t_j^2\right)\nonumber\\
           &=\mathcal{O}\left(\alpha^2 (a+T_U)^2 \sum_k \dfrac{|c_k|^2}{\lambda_k^2}\log^4\dfrac{1}{\lambda_k\epsilon'} \right)\nonumber\\
           \label{eq:l2-stopping-time} 
\implies \nrm{T}_2&\leq \alpha(a+T_U)\log^2\left(\dfrac{\kappa\log\kappa}{\eps}\right)\sqrt{\sum_k\dfrac{|c_k|^2}{\lambda_k^2}}.         
\end{align} 
The final thing that we need for calculating the final complexity of VTAA applied to $\mathcal{A}$ is the success probability, $p_{\mathrm{succ}}$ which can be written as 
\begin{align}
\sqrt{p_{\mathrm{succ}}}&=\nrm{\Pi_F \dfrac{H^{-1}}{\alpha_{\mathrm{max}}}\ket{\psi}_I\ket{\Phi}_{CFPQ}}+\mathcal{O}\left(m\epsilon'\right)\nonumber \\
                        &=\dfrac{1}{\alpha_{\mathrm{max}}}\left(\sum_k\dfrac{|c_k|^2}{\lambda^2_k}\right)^{1/2}+\mathcal{O}\left(\dfrac{\eps}{\alpha_{\mathrm{max}}}\right)\nonumber\\
\label{eq:succ-prob-vtaa}
                        &=\Omega\left(\dfrac{1}{\kappa}\sqrt{\sum_k\dfrac{|c_k|^2}{\lambda^2_k}}\right).
\end{align}
So the final complexity of applying VTAA to algorithm $\mathcal{A}$ is given by Theorem~\ref{thm:efficient-vtaa-vtae}. Thus the overall cost is given by (neglecting constants):
\begin{align*}
&T_{\max}+T_{\psi}+\frac{\left(\nrm{T}_2+T_\psi\right)\log(T'_{\max})}{\sqrt{p_\mathrm{succ}}}\nonumber\\
=&\bigO{\alpha\kappa\log^2\left(\dfrac{\kappa\log\kappa}{\eps}\right)(a+T_U)+\kappa\left(\alpha(a+T_U)\log^2\left(\dfrac{\kappa\log\kappa}{\eps}\right)+ T_\psi\right)\log\left(\kappa\right)}
\nonumber
\\
%\label{eq:final-complexity-vtaa-qls}
=&\bigO{\kappa\left(\alpha \big(T_U+a\big)\log^2\left(\dfrac{\kappa}{\eps}\right)+ T_\psi\right)\log(\kappa)}.
\end{align*}
\end{proof}
Next we show that in the scenario where the state $\ket{\psi}$ does not belong entirely to the range of $H$, i.e. $\nrm{\Pi{\mathrm{col}(H)}\ket{\psi}}<1$, we can prepare the state $H^{+}\ket{\psi}/\nrm{H^+\ket{\psi}}$. We only assume that a lower bound for $\nrm{\Pi{\mathrm{col}(H)}\ket{\psi}}$ is known. 

\begin{corollary}[Complexity of pseudoinverse state preparation]\label{cor:pseudoinv-vtaa}%\shnote{cor:pseudoinv-vtaa}
Let $\kappa\geq 2$, and $H$ be an $N\times N$ Hermitian matrix such that the non-zero eigenvalues of $H$ lie in the range $[-1,-1/\kappa]\bigcup [1/\kappa,1]$. Suppose that for $\delta = \littleo{\eps/(\kappa^2\log^3\frac{\kappa}{\eps})}$we have a unitary $U$ that is a $(\alpha,a,\delta)$-block-encoding of $H$ that can be implemented using $T_U$ elementary gates.  Also suppose that we can prepare a state $\ket{\psi}$ in time $T_{\psi}$ such that $\nrm{\Pi_{\mathrm{col}(H)}\ket{\psi}}\geq \sqrt{\gamma}$. Then there exists a variable time quantum algorithm that outputs a state that is $\eps$-close to
$H^{+}\ket{\psi}/\nrm{H^{+}\ket{\psi}}$ at a cost 
$$\bigO{\kappa\left(\alpha\big(T_U+a\big)\log^2\left(\dfrac{\kappa}{\eps}\right)+ T_{\psi}\right)\frac{\log\left(\kappa\right)}{\sqrt{\gamma}}}.$$
%$$\mathcal{O}\left(\alpha \kappa (a+T_U)\log^2\left(\dfrac{\kappa\log\kappa}{\eps}\right)\log\left(\alpha\kappa\log^2\left(\dfrac{\kappa\log\kappa}{\eps}\right)(a+T_U)\right)\right).$$
\end{corollary}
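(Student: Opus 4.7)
The plan is to reuse the variable-stopping-time algorithm $\mathcal{A}$ constructed in the proof of Theorem~\ref{thm:qls-vtaa} with only a small modification to handle the part of $\ket{\psi}$ lying in $\ker(H)$. Decompose $\ket{\psi}=\sqrt{\gamma'}\ket{\psi_\parallel}+\sqrt{1-\gamma'}\ket{\psi_\perp}$ with $\ket{\psi_\parallel}\in \mathrm{col}(H)$, $\ket{\psi_\perp}\in\ker(H)$, and $\gamma'\geq\gamma$. Since every stage $\mathcal{A}_j$ uses GPE with precision $2^{-j}\geq 1/\kappa$, any eigenstate of $H$ with eigenvalue $0$ will, with overwhelming probability, leave every clock register $C_j$ in state $\ket{0}$ and the flag register $F$ in state $\ket{0}$, so the inversion unitary $W(2^{-j},\epsilon')$ is never triggered on the kernel part. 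In particular, $\mathcal{A}$ applied to a non-zero eigenstate $\ket{\lambda_k}$ of $H$ produces amplitude $1/(\alpha_{\mathrm{max}}\lambda_k)$ on the ``good'' subspace $F=\ket{1}$, just as in Theorem~\ref{thm:qls-vtaa}.

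First I would append one extra clock register $C_{m+1}$ together with a trivial stage $\mathcal{A}_{m+1}$ that flips $C_{m+1}$ from $\ket{0}$ to $\ket{1}$ controlled on $C_1\cdots C_m=\ket{0}^{\otimes m}$. This cleanup step forces every branch of the superposition to stop: kernel branches stop at time $t_{m+1}=t_m+\bigO{\log\kappa}$ with $F=\ket{0}$, and are therefore correctly classified as ``bad'' within the variable-time framework, while non-kernel branches stop earlier with essentially the same behaviour as in the original analysis. Since the contributions for distinct eigenvalues live on orthogonal $\ket{\lambda_k}$-registers, the success probability satisfies
\begin{equation*}
p_{\mathrm{succ}}=\frac{1}{\alpha_{\mathrm{max}}^2}\sum_{k:\lambda_k\neq 0}\frac{|c_k|^2}{\lambda_k^2}+\bigO{\eps}\geq \Omega\!\left(\frac{\gamma}{\kappa^2}\right),
\end{equation*}
using $|\lambda_k|\leq 1$ and $\sum_{k:\lambda_k\neq 0}|c_k|^2=\gamma'\geq\gamma$. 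The $l_2$-averaged stopping time computed in~\eqref{eq:l2-stopping-time} is unaffected up to constants by the extra stage.

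Finally I would plug the modified algorithm into Theorem~\ref{thm:efficient-vtaa-vtae} with $p_{\mathrm{prep}}=1$, since no preamplification of $\ket{\psi}$ itself is needed. The ratio $\nrm{T}_2/\sqrt{p_{\mathrm{succ}}}$ is still $\bigO{\kappa\alpha(T_U+a)\log^2(\kappa/\eps)}$, because both $\nrm{T}_2$ and $\sqrt{p_{\mathrm{succ}}}$ scale with $\sqrt{\sum_k |c_k|^2/\lambda_k^2}$, so no $1/\sqrt{\gamma}$ factor appears there; the state-preparation contribution, however, becomes $T_\psi/\sqrt{p_{\mathrm{succ}}}=\bigO{\kappa T_\psi/\sqrt{\gamma}}$. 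After multiplying by the $\log(\kappa)$ VTAA overhead and uncomputing the ancillas via $(\mathcal{A}')^\dagger$ exactly as in Theorem~\ref{thm:qls-vtaa}, the claimed bound follows (loosely factoring $1/\sqrt{\gamma}$ across both summands). The main obstacle, and the reason the cleanup stage is essential, is that without it the kernel component of $\ket{\psi}$ remains in the ``maybe good'' subspace throughout the algorithm, corrupting the renormalization performed by VTAA; explicitly stopping the kernel branch in $\mathcal{A}_{m+1}$ cleanly partitions it into the ``bad'' subspace and allows the standard variable-time success-probability analysis to go through verbatim.
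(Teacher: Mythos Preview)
Your approach mirrors the spirit of the paper's one-line proof (``decrease $p_{\mathrm{succ}}$ by a factor of~$\gamma$'') and is in fact more careful: you correctly notice that the kernel component of $\ket{\psi}$ never triggers any clock flip, so without a cleanup stage the VTAA hypothesis $p_{\mathrm{stop}\le t_m}=1$ is violated. However, the cleanup stage $\mathcal{A}_{m+1}$ undermines the very estimate you then invoke. Once the kernel branch (weight $1-\gamma'$) is forced to stop at time $t_{m+1}\approx T_{\max}$, it contributes $(1-\gamma')\,T_{\max}^2$ to $\nrm{T}_2^2=\sum_j p_{\mathrm{stop}=t_j}\,t_j^2$, and this term does \emph{not} scale with $S:=\sum_{k:\lambda_k\ne0}|c_k|^2/\lambda_k^2$. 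Your assertion that ``both $\nrm{T}_2$ and $\sqrt{p_{\mathrm{succ}}}$ scale with $\sqrt{\sum_k|c_k|^2/\lambda_k^2}$'' is therefore false for $\nrm{T}_2$, and the ratio $\nrm{T}_2/\sqrt{p_{\mathrm{succ}}}$ need not be $\bigO{\kappa\alpha(T_U+a)\log^2(\kappa/\eps)}$. A concrete bad instance: take $\gamma'=\tfrac12$ with all nonzero eigenvalues equal to $1$, so $S=\tfrac12$; then $\nrm{T}_2\approx T_{\max}/\sqrt2$ while $\sqrt{p_{\mathrm{succ}}}=\Theta(1/\kappa)$, giving
\[
\frac{\nrm{T}_2}{\sqrt{p_{\mathrm{succ}}}}=\Theta(\kappa\,T_{\max})=\Theta\!\bigl(\kappa^2\alpha(T_U+a)\log^2(\kappa/\eps)\bigr),
\]
an extra factor of~$\kappa$ beyond what you claim.

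The paper's proof, read literally, keeps the $\nrm{T}_2$ bound of~\eqref{eq:l2-stopping-time} (which sums only over nonzero eigenvalues) and simply divides $p_{\mathrm{succ}}$ by~$\gamma$; that does yield the stated ratio, but only because it silently leaves the kernel branch unstopped---exactly the defect you set out to repair. So your cleanup stage makes the algorithm sit correctly inside the VTAA framework but breaks the stopping-time estimate, while the paper's sketch preserves the estimate but leaves the framework hypothesis unsatisfied. A fully rigorous derivation of the claimed $\kappa/\sqrt\gamma$ scaling (as opposed to $\kappa^2/\sqrt\gamma$) requires more than either argument currently provides.
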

\begin{proof}
The result follows similarly to Theorem \ref{thm:qls-vtaa} after decreasing $p_{\mathrm{succ}}$ by a factor of $\gamma$.
\end{proof}

Often, in several applications the norm of the output of the QLS problem needs to be estimated. In such cases, one needs to replace amplitude amplification with amplitude estimation. We shall use the variable time amplitude estimation algorithm (VTAE) defined in Theorem~\ref{thm:efficient-vtaa-vtae} in order to estimate the norm of the output of QLS which gives us an improved dependence on $\kappa$ as compared to ordinary amplitude estimation. In order to implement this, we convert the QLS algorithm to a variable-time algorithm in the same way as the case of applying VTAA. Then we have the following corollary.
\begin{corollary}[Complexity of pseudoinverse state preparation and its amplitude estimation]\label{cor:qls-vtae} %\shnote{cor:qls-vtae}
Let $\eps>0$. Then under the same assumptions as in Corollary~\ref{cor:pseudoinv-vtaa}, there exists a variable time quantum algorithm that outputs a number $\Gamma$ such that 
$$1-\eps\leq \dfrac{\Gamma}{\nrm{H^{+}\ket{\psi}}}\leq 1+\eps,$$ 
at a cost
$$\bigO{\dfrac{\kappa}{\eps} \left(\alpha\big(T_U+a\big)\log^2\left(\dfrac{\kappa}{\eps}\right)+ T_\psi\right)\frac{\log^3\left(\kappa\right)}{\sqrt{\gamma}}\log\left(\dfrac{\log\left(\kappa\right)}{\delta}\right)},$$
with success probability at least $1-\delta$.
\end{corollary}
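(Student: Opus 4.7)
The plan is to apply the variable-time amplitude estimation machinery of Theorem~\ref{thm:efficient-vtaa-vtae} to exactly the variable-stopping-time algorithm $\mathcal{A}=\mathcal{A}_m\cdots\mathcal{A}_1$ constructed in the proof of Theorem~\ref{thm:qls-vtaa} (lifted to the pseudoinverse setting as in Corollary~\ref{cor:pseudoinv-vtaa}). Recall that $\mathcal{A}$ acts as
\[
\mathcal{A}\ket{\psi}_I\ket{0}_{CFPQ}\ =\ \tfrac{1}{\alpha_{\max}}\ket{1}_F\ket{0}_{CPQ}\,f(H)\ket{\psi}_I\ +\ \ket{0}_F\ket{\Phi^{\perp}},
\]
with $f(H)$ an $\eps'$-approximation of $H^{+}$ on $\mathrm{col}(H)$, so that the ``good'' flag amplitude is precisely $\sqrt{p_{\mathrm{succ}}}=\nrm{f(H)\ket{\psi}}/\alpha_{\max}$. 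Thus an $\eps$-multiplicative estimate of $\sqrt{p_{\mathrm{succ}}}$, multiplied by the known constant $\alpha_{\max}$, immediately yields an $\eps$-multiplicative estimate of $\nrm{f(H)\ket{\psi}}$, which in turn approximates $\nrm{H^{+}\ket{\psi}}$ provided we take the internal precision $\eps'$ sufficiently small (specifically $\eps'=\Theta(\eps\sqrt{\gamma}/\kappa)$ suffices, since $\nrm{H^{+}\ket{\psi}}\geq \sqrt{\gamma}/\kappa$).

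Next I plug our specific parameters into the mindful-amplification bound of Theorem~\ref{thm:efficient-vtaa-vtae}. The state-preparation unitary is just the circuit producing $\ket{\psi}$, so $T_U\!\leftrightarrow\!T_\psi$, $k=\bigO{\log N}$, and $p_{\mathrm{prep}}=1$. From the proof of Theorem~\ref{thm:qls-vtaa}/Corollary~\ref{cor:pseudoinv-vtaa} I already have the three ingredients:
\begin{align*}
T_{\max}&=\bigO{\alpha\kappa(T_U+a)\log^2(\kappa/\eps)},\\
\nrm{T}_2&=\bigO{\alpha(T_U+a)\log^2(\kappa/\eps)\cdot\alpha_{\max}\sqrt{p_{\mathrm{succ}}}},\\
\sqrt{p_{\mathrm{succ}}}&=\Omega(\sqrt{\gamma}/\kappa),\qquad \alpha_{\max}=\bigO{\kappa}.
\end{align*}
Consequently $\nrm{T}_2/\sqrt{p_{\mathrm{succ}}}=\bigO{\alpha\kappa(T_U+a)\log^2(\kappa/\eps)}$ and $T_\psi/\sqrt{p_{\mathrm{succ}}}=\bigO{\kappa T_\psi/\sqrt{\gamma}}$, and $\log(T'_{\max})=\bigO{\log\kappa}$. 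Feeding these into the mindful-amplification complexity $\bigO{(Q/\eps)\log^{2}(T'_{\max})\log(\log(T'_{\max})/\delta)}$ of Theorem~\ref{thm:efficient-vtaa-vtae} collapses exactly to the bound claimed in the corollary.

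The only real subtlety, and the step I would write most carefully, is the propagation of precision: I need (i) the block-encoding error $\delta$ in the hypotheses to be small enough that the $f(H)$ delivered by Corollary~\ref{cor:block-encoding-inverse-patch} is $\eps'$-close to $H^{-1}$ in operator norm on the relevant subspace, and (ii) the multiplicative precision of VTAE on $\sqrt{p_{\mathrm{succ}}}$ to compose cleanly with the additive error $\eps'$ between $\nrm{f(H)\ket{\psi}}$ and $\nrm{H^{+}\ket{\psi}}$ to yield a $(1\pm\eps)$-multiplicative estimate of the latter. Since $\nrm{H^{+}\ket{\psi}}\geq\sqrt{\gamma}/\kappa$, demanding $\eps'=\Theta(\eps\sqrt{\gamma}/\kappa)$ already satisfied by the hypothesis $\delta=\littleo{\eps/(\kappa^{2}\log^{3}(\kappa/\eps))}$ (up to the constant hidden in $\eps'$) suffices. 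The overall success probability is obtained by a union bound over the mindful-amplification failure and the internal failures, charging an extra $\log(1/\delta)$ factor absorbed into the stated bound.
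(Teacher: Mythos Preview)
Your proposal is correct and follows essentially the same route as the paper: reuse the variable-stopping-time algorithm $\mathcal{A}$ from Theorem~\ref{thm:qls-vtaa}, inherit the bounds on $T_{\max}$, $\nrm{T}_2$, and $p_{\mathrm{succ}}$ (with $p_{\mathrm{succ}}$ reduced by the factor $\gamma$ as in Corollary~\ref{cor:pseudoinv-vtaa}), and invoke the variable-time amplitude \emph{estimation} part of Theorem~\ref{thm:efficient-vtaa-vtae} instead of VTAA. The paper's proof is in fact terser than yours; your added discussion of recovering $\nrm{H^{+}\ket{\psi}}$ from $\sqrt{p_{\mathrm{succ}}}$ via the known constant $\alpha_{\max}$, and of the internal precision $\eps'$ needed so that the additive error $\nrm{f(H)\ket{\psi}}-\nrm{H^{+}\ket{\psi}}$ becomes multiplicatively negligible, is a welcome elaboration. (A minor remark: since $\nrm{H}\le 1$, one actually has $\nrm{H^{+}\ket{\psi}}\ge\nrm{\Pi_{\mathrm{col}(H)}\ket{\psi}}\ge\sqrt{\gamma}$, not merely $\sqrt{\gamma}/\kappa$; your weaker bound is of course still sufficient.)
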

\begin{proof}
The framework is the same as that in Theorem~\ref{thm:qls-vtaa}, except instead of VTAA we use VTAE algorithm defined in Theorem~\ref{thm:efficient-vtaa-vtae} to obtain $\Gamma$. Thus, the quantities $T_{\mathrm{max}}$, $T'_{\mathrm{max}}$, $\nrm{T}_2$ and $\sqrt{p_{\mathrm{succ}}}$ are the same as in \eqref{eq:t-max-vtaa}, \eqref{eq:l2-stopping-time} and \eqref{eq:succ-prob-vtaa}, except $p_{\mathrm{succ}}$ is decreased by a factor of $\gamma$. Thus the overall complexity is given by Theorem~\ref{thm:efficient-vtaa-vtae} which is
\begin{equation}
\label{eq:final-complexity-vtae}
\qquad=\bigO{\dfrac{\kappa}{\eps} \left(\alpha\big(T_U+a\big)\log^2\left(\dfrac{\kappa}{\eps}\right)+ T_\psi\right)\frac{\log^3\left(\kappa\right)}{\sqrt{\gamma}}\log\left(\dfrac{\log\left(\kappa\right)}{\delta}\right)}.
\end{equation} 
\end{proof}

%%%%%%%%%%%%%%%%%%%%%%%%%%%%%%%%%%%%%%%%%%%%%%%%%%%%%%%%%%%%%%%%%%%%%%%%%%%%%%%%%%%%%%%%%%
Observe that VTAA or VTAE algorithms can be applied to a variable-time version of the algorithm that implements a block encoding of $H^{-c}$, for any $c>0$. Consider the quantum algorithm to implement a block encoding of $H^{-c}$ as in Lemma~\ref{lem:negative-power-restated}.

In order to amplify the amplitude of the output state, we use VTAA by converting this to a variable-stopping-time algorithm. As seen before, we need to apply this procedure in certain patches of the overall domain of $H$. For this we use Corollary~\ref{cor:block-encoding-inverse-patch} and simply replace the value of $\delta$ there with
$\delta=\littleo{\eps/\left(\kappa^{1+c}(1+c)\log^3\frac{\kappa^{1+c}}{\eps}\right)}$ and $\alpha_{\mathrm{max}}=\bigO{\kappa^c}$. So now $W(\lambda,\eps)$ implements the following transformation
\begin{equation}
\label{eq:transformation-patch-negative-power}
W(\lambda,\eps)\ket{0}_F\ket{0}_Q\ket{\psi}_I=\dfrac{1}{\alpha_\mathrm{max}}\ket{1}_F\ket{0}_Q f(H)\ket{\psi}_I +\ket{0}_F\ket{\psi^\perp}_{QI},
\end{equation}
where $\alpha_\mathrm{max}=\mathcal{O}(\kappa^c)$ and $\nrm{f(H)-H^{-c}}\leq \eps$ while the rest of the parameters are the same as Corollary~\ref{cor:block-encoding-inverse-patch}. By using Lemma~\ref{lem:negative-power-restated} we get the cost of implementing $W(\lambda,\eps)$ is 
$$\bigO{\frac{\alpha}{\lambda}\big(T_U+a\big)(1+c)\log^2\!\left(\frac{\kappa^{1+c}}{\eps}\right)}.$$

%$$\bigO{\alpha\lambda^{-1}\mathrm{max}(1,c)\log\left(\frac{\lambda^{-c}}{\eps}\right)(a+T_U)+\lambda^{-1}\mathrm{max}(1,c)\log^2\frac{\lambda^{-1-c}\mathrm{max}(1,c)}{\eps}}.$$

The variable-stopping-time algorithm can be defined $\mathcal{A}=\mathcal{A}_m...\mathcal{A}_1$ for $m=\ceil{\log_2\kappa}+1$. Each $\mathcal{A}_j$ can be defined in the same way as for $H^{-1}$. So we can define a variable-time quantum algorithm, similar to Theorem~\ref{thm:qls-vtaa}, to implement $H^{-c}$. 

\begin{restatable}{theorem}{negpowvtaa}\emph{(Variable-time quantum algorithm for implementing negative powers)}% of Hamiltonian]
\label{thm:negative-power-vtaa}%\shnote{thm:negative-power-vtaa}
~Let $\kappa\geq 2$, $c\in (0,\infty)$, $q=\mathrm{max}(1,c)$, and $H$ be an $N\times N$ Hermitian matrix such that the eigenvalues of $H$ lie in the range $[-1,-1/\kappa]\bigcup[1/\kappa,1]$. Suppose that for $\delta =\littleo{\eps/\left(\kappa^q q \log^3\frac{\kappa^q}{\eps}\right)}$ we have a unitary $U$ that is a $(\alpha,a,\delta)$-block-encoding of $H$ which can be implemented using $T_U$ elementary gates. Also suppose that we can prepare an input state $\ket{\psi}$ that is spanned by the eigenvectors of $H$ in time $T_\psi$. Then there exists a variable time quantum algorithm that outputs a state that is ${\eps}$-close to $H^{-c}\ket{\psi}/\nrm{H^{-c}\ket{\psi}}$ with a cost of 
$$\mathcal{O}\left(\left(\alpha  \kappa^{q} \big(T_U+a\big)q\log^2\left(\dfrac{\kappa^{q}}{\eps}\right)+\kappa^{c}T_\psi\right)\log\left(\kappa\right)\right).$$

Also, there exists a variable time quantum algorithm that outputs a number $\Gamma$ such that 
$$1-\eps\leq \dfrac{\Gamma}{\nrm{H^{-c}\ket{\psi}}}\leq 1+\eps,$$ 
at a cost
$$\mathcal{O}\left(\dfrac{1}{\eps} \left(\alpha \kappa^q  \big(T_U+a\big)q\log^2\left(\dfrac{\kappa^q}{\eps}\right)+\kappa^c T_\psi\right)\log^3\left( \kappa\right)\log\left(\dfrac{\log\left( \kappa\right)}{\delta}\right)\right),$$
with success probability at least $1-\delta$.
\end{restatable}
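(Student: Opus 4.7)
The plan is to generalize the variable-stopping-time algorithm from Theorem~\ref{thm:qls-vtaa} by replacing the $c=1$ patch operator with a $c$-dependent version. I keep the skeleton $\A=\A_m\cdots\A_1$ with $m=\lceil\log_2\kappa\rceil+1$: $\A_j$ first runs gapped phase estimation (Lemma~\ref{lem:gapped-qpe}) with precision $2^{-j}$ and error $\eps'=\eps/(m\alpha_{\max})$ onto a clock qubit $C_j$, then conditional on $C_j=\ket{1}$ applies the patch operator $W(2^{-j},\eps')$ defined in~\eqref{eq:transformation-patch-negative-power}. The patch comes from instantiating Lemma~\ref{lem:negative-power-restated} on the spectrum restricted to $[-1,-\lambda]\cup[\lambda,1]$, supplemented with a flag qubit $F$ as in Corollary~\ref{cor:block-encoding-inverse-patch} and a rescaling by $\alpha_{\max}=\bigO{\kappa^c}$. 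Lemma~\ref{lem:negative-power-restated} delivers $W(\lambda,\eps')$ in cost $\bigO{\alpha\lambda^{-1}(T_U+a)q\log^2(\kappa^q/\eps)}$, which gives $t_j=\bigO{\alpha 2^j(T_U+a)q\log^2(\kappa^q/\eps)}$ for $\A_j\cdots\A_1$ and $T_{\max}=\bigO{\alpha\kappa(T_U+a)q\log^2(\kappa^q/\eps)}$.

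Correctness transfers verbatim from Theorem~\ref{thm:qls-vtaa}: for an eigenstate $\ket{\lambda_k}$ with $2^{-j}<|\lambda_k|<2^{1-j}$, stages prior to $j$ leave $C$ in $\ket{0}$ (their GPE resolution is too coarse to detect $\lambda_k$); stage $j$ branches on $C_j$ and fires $W(2^{-j},\eps')$ on the triggered branch, while the untriggered branch is caught by $\A_{j+1}$ firing $W(2^{-j-1},\eps')$ with near-unit probability. Both invocations satisfy the spectral hypothesis of Corollary~\ref{cor:block-encoding-inverse-patch}, so each produces an $\eps$-approximation of $H^{-c}\ket{\lambda_k}/\alpha_{\max}$. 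By linearity the algorithm flags $f(H)\ket{\psi}/\alpha_{\max}$ with $\nrm{(f(H)-H^{-c})\ket{\psi}}\leq\eps$ on the success branch of $F$.

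The technical core is bounding $\nrm{T}_2/\sqrt{p_{\mathrm{succ}}}$ in order to invoke Theorem~\ref{thm:efficient-vtaa-vtae}. Each $\ket{\lambda_k}$ stops at time $t_{j_k}=\bigO{\alpha(T_U+a)q\log^2(\kappa^q/\eps)/|\lambda_k|}$, so
\[
\nrm{T}_2=\bigO{\alpha(T_U+a)q\log^2(\kappa^q/\eps)\,\sqrt{\textstyle\sum_k|c_k|^2/\lambda_k^2}},
\]
and the $\alpha_{\max}^{-1}$ normalization gives $\sqrt{p_{\mathrm{succ}}}=\Omega(\kappa^{-c}\sqrt{\sum_k|c_k|^2/\lambda_k^{2c}})$. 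The key inequality is $\sqrt{\sum_k|c_k|^2/\lambda_k^2}\leq \kappa^{q-c}\sqrt{\sum_k|c_k|^2/\lambda_k^{2c}}$: for $c\leq 1$ (where $q-c=1-c$) one uses $|\lambda_k|^{-2}=|\lambda_k|^{-2c}\cdot|\lambda_k|^{-2(1-c)}\leq \kappa^{2(1-c)}|\lambda_k|^{-2c}$; for $c>1$ (where $q-c=0$) one uses $|\lambda_k|^{-2}\leq |\lambda_k|^{-2c}$ since $|\lambda_k|\leq 1$. Both regimes give $\nrm{T}_2/\sqrt{p_{\mathrm{succ}}}=\bigO{\alpha\kappa^q(T_U+a)q\log^2(\kappa^q/\eps)}$; reconciling the two cases into a single $\kappa^q$-scaling is the main obstacle.

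Plugging $T_{\max}$, $\nrm{T}_2$, $\sqrt{p_{\mathrm{succ}}}$, $\log T'_{\max}=\bigO{\log\kappa}$, and the crude bound $T_\psi/\sqrt{p_{\mathrm{succ}}}\leq \kappa^c T_\psi$ (valid because $\sum_k|c_k|^2/\lambda_k^{2c}\geq 1$) into the VTAA clause of Theorem~\ref{thm:efficient-vtaa-vtae}, together with an $(\A')^\dagger$ uncomputation of the ancillas exactly as in Theorem~\ref{thm:qls-vtaa}, yields the first complexity bound. The second bound follows identically using the VTAE clause of Theorem~\ref{thm:efficient-vtaa-vtae} in place of VTAA, which contributes the advertised $1/\eps$, $\log^3\kappa$, and $\log(\log\kappa/\delta)$ factors.
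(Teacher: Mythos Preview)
Your proposal is correct and follows essentially the same approach as the paper's proof: both reuse the variable-stopping-time skeleton of Theorem~\ref{thm:qls-vtaa} with the $c$-dependent patch operator~\eqref{eq:transformation-patch-negative-power}, compute the same $T_{\max}$, $\nrm{T}_2$, and $\sqrt{p_{\mathrm{succ}}}$, and prove the identical key inequality $\sqrt{\sum_k|c_k|^2/\lambda_k^2}\leq \kappa^{q-c}\sqrt{\sum_k|c_k|^2/\lambda_k^{2c}}$ by the same case split on $c\lessgtr 1$. The paper phrases the inequality as a direct lower bound on $\sqrt{p_{\mathrm{succ}}}$ and invokes Corollary~\ref{cor:qls-vtae} for the estimation part, but these are cosmetic differences.
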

\begin{proof}
Refer to Sec.~\ref{app:negpow-vtaa} of the Appendix.
\end{proof}
%%%%%%%%%%%%%%%%%%%%%%%%%%%%%%%%%%%%%%%%%%%%%%%%%%%%%%%%%%%%%%%%%%%%%%%%%%%%%

Wossnig, Zhao and Prakash \cite{wossnig2018quantum} introduced a new quantum linear system solver based on decomposing $A$ into a product of isometries and using a Szegedy walk to perform singular value estimation. In the setting where $H$ is given by a data structure as in Theorem~\ref{theorem:data_structure}, this decomposition is generic, and both isometries can be implemented efficiently given the data structure storing $H$. The complexity of this algorithm has a better dependence on the sparsity of $H$ as compared to previous algorithms for solving quantum linear systems. Thus the algorithm of \cite{wossnig2018quantum} provides a polynomial advantage in the scenario where $H$ is non-sparse. However this algorithm has a quadratic dependence on the condition number of $H$ and a polynomial dependence on the precision of the output state. As an application of Theorem~\ref{thm:qls-vtaa}, we give a new quantum linear system solver in this setting, with an exponentially better dependence on precision and a linear dependence on the condition number. 

We have a $N\times N$ Hermitian matrix $A$. In this setting either (i) $A$ is stored in the quantum-accessible data structure defined in Theorem~\ref{theorem:data_structure} or (ii) given some $p\in [0,1]$, $A^{(p)}$ and $A^{(1-p)}$ are stored in quantum-accessible data structures, as was considered in \cite{kerenidis:quantumgraddescent}. For the QLS problem and its subsequent applications, it may be the case that $A$ is not Hermitian. In such a case either we store (i) $A$ and $A^\dagger$ in the quantum-accessible data structure or (ii) $A^{(p)}$ and $(A^{(1-p)})^\dagger$ are stored in quantum-accessible data structures so that Lemma~\ref{lem:kp} is applicable. So henceforth it suffices to consider that $A$ is Hermitian.

\begin{theorem}[Quantum Linear System solver with data structure]\label{thm:QLS-data-structure}%\snote{(thm:QLS-data-structure)}
Let $\eps\in (0,1/2)$, suppose that $\nrm{A}\leq 1$, $\nrm{A^{-1}}\leq \kappa$, and either (1) $A$ is stored in a quantum-accessible data structure, in which case, let $\mu=\nrm{A}_F$; or (2) for some $p\in [0,1]$, $A^{(p)}$ and $A^{(1-p)}$ are stored in quantum-accessible data structures, in which case, let $\mu=\mu_p(A)$. Also assume that there is a unitary $U$ which acts on $\polylog(MN/\eps)$ qubits and prepares the state $\ket{b}$ with complexity $T_b$. Then

\begin{itemize}
\item[(i)] The QLS problem can be solved in time $\bigOt{\kappa\left(\mu+T_b\right)\polylog(MN/\eps)}$.

\item[(ii)] If $\eps\in (0,1)$, then an $\eps$-multiplicative approximation of $\nrm{A^+\ket{b}}$ can be obtained in time $\bigOt{\dfrac{\kappa}{\eps} \left(\mu+T_b\right)\polylog(MN)}$
\end{itemize}
\end{theorem}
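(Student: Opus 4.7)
The plan is to reduce both parts of the theorem to their block-encoded counterparts, namely Theorem~\ref{thm:qls-vtaa} for part (i) and Corollary~\ref{cor:qls-vtae} for part (ii). Following the remark preceding the theorem, I will assume $A$ is Hermitian; otherwise one replaces $A$ by $\overline{A}$, which only changes $\mu$, $\kappa$, and $\nrm{A}$ by constant factors and preserves the asymptotic cost.

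First I would invoke Lemma~\ref{lem:kp}. In either input case (case (1) with $\mu=\nrm{A}_F$ or case (2) with $\mu=\mu_p(A)$), this produces an $(\mu, O(\log(MN)), \delta)$-block-encoding of $A$ implementable in cost $T_U = \polylog(MN/\delta)$. I would then pick $\delta = \Theta\bigl(\eps/(\kappa^2 \log^3(\kappa/\eps))\bigr)$ so that the precondition $\delta = \littleo{\eps/(\kappa^2 \log^3(\kappa/\eps))}$ of Theorem~\ref{thm:qls-vtaa} (and, via Corollary~\ref{cor:pseudoinv-vtaa}, of Corollary~\ref{cor:qls-vtae}) is satisfied. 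With this choice, $\log(1/\delta) = O(\log(MN\kappa/\eps))$, so $T_U = \polylog(MN/\eps)$.

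For part (i), I would plug $\alpha = \mu$, $a = O(\log(MN))$, $T_U = \polylog(MN/\eps)$, and the given $T_b$ into the complexity bound $\bigO{\kappa(\alpha(T_U + a)\log^2(\kappa/\eps) + T_b)\log\kappa}$ of Theorem~\ref{thm:qls-vtaa}. Collecting all polylogarithmic terms yields $\bigOt{\kappa(\mu + T_b) \polylog(MN/\eps)}$, as required. For part (ii), feeding the same block-encoding into Corollary~\ref{cor:qls-vtae} with $\gamma = \Omega(1)$ — valid because the hypothesis $\nrm{A^{-1}} \leq \kappa$ forces $A$ to be invertible, so $\ket{b}$ lies entirely in $\mathrm{col}(A)$ — produces an $\eps$-multiplicative estimate of $\nrm{A^+\ket{b}}$ in cost $\bigOt{(\kappa/\eps)(\mu + T_b)\polylog(MN)}$.

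The only real work will be bookkeeping of logarithmic factors when collapsing the product $\kappa \cdot \mu \cdot \polylog(MN/\eps) \cdot \log^2(\kappa/\eps) \cdot \log\kappa$ into the claimed $\bigOt{\cdot}$ form; there is no conceptual obstacle beyond matching the block-encoding precision to the precondition of the variable-time algorithms developed earlier in the paper.
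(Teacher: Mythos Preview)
Your proposal is correct and follows essentially the same approach as the paper: invoke Lemma~\ref{lem:kp} to obtain a $(\mu,\polylog(MN),\delta)$-block-encoding of $A$, then feed this into Theorem~\ref{thm:qls-vtaa} for part~(i) and Corollary~\ref{cor:qls-vtae} for part~(ii). The paper's proof is in fact even terser than yours, and your added remarks about the Hermitian reduction and the $\gamma=\Omega(1)$ justification are accurate elaborations. One tiny quibble: writing $\delta=\Theta(\eps/(\kappa^2\log^3(\kappa/\eps)))$ does not literally give $\delta=\littleo{\eps/(\kappa^2\log^3(\kappa/\eps))}$; just take $\delta$ a sufficiently small constant multiple (or divide by an extra $\log$ factor), which leaves the $\polylog$ bookkeeping unchanged.
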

\begin{proof}
For (i), by Lemma~\ref{lem:kp} and Theorem~\ref{thm:qls-vtaa} we can solve QLS with complexity 
$$\bigO{\left(\mu \kappa\log^2\left(\dfrac{\kappa}{\eps}\right) \polylog(MN/\eps)+\kappa T_b\right)\log\left(\kappa\right)}.$$
As shown by Corollary~\ref{cor:qls-vtae} using VTAE we can estimate $\nrm{A^+\ket{b}}$ with the stated complexity. 
\end{proof}

%\begin{corollary}
%\label{cor:norm_QLS_datastructure}
%Under the same assumptions as Theorem \ref{thm:QLS-data-structure}, there exists a quantum algorithm that approximates $\nrm{A^+\ket{b}}$ in time $\bigOt{\dfrac{\kappa}{\eps} \left(\mu+T_b\right)\polylog(MN)}$. 
%\end{corollary}
%\begin{proof}
%This follows from Corollary \ref{cor:qls-vtae}. 
%\end{proof}

Note that in the scenario where the vector $\overrightarrow{b}=(b_1,\dots,b_N)^T$, is also stored in a quantum-accessible data structure, then from Theorem \ref{theorem:data_structure} we can prepare the state $\ket{b}=\sum_i b_i\ket{i}/\nrm{\overrightarrow{b}}$ in time $T_b=\bigO{\polylog(N/\eps)}$. Thus the complexity of solving (i) in Theorem \ref{thm:QLS-data-structure} in that case, is 
$$\bigOt{\kappa\mu\polylog(MN/\eps)},$$
while that of (ii) is $\bigOt{\dfrac{\kappa\mu}{\eps}\polylog(MN)}$.

\section{Applications}\label{sec:app}

In this section, we apply the QLS algorithm of Section \ref{sec:qls} to solve the least squares problem, which is used in several machine learning applications. We present improved quantum algorithms for the weighted least squares problem (Section \ref{sec:WLS}) and new quantum algorithms for the generalized least squares problem (Section \ref{sec:GLS}). Finally, we apply the QLS solver to design new quantum algorithms for estimating electrical network quantities (Section \ref{sec:electric}).

\subsection{Least squares}\label{sec:least-squares}

The problem of \emph{ordinary least squares} is the following. 
Given data points $\{(\vec{x}^{(i)},y^{(i)})\}_{i=1}^M$ for $\vec{x}^{(1)},\dots,\vec{x}^{(M)}\in \mathbb{R}^N$ and $y^{(1)},\dots,y^{(M)}\in\mathbb{R}$, find $\vec{\beta}\in \mathbb{R}^N$ that minimizes:
\begin{equation}
\sum_{i=1}^M(y^{(i)}-\vec{\beta}^T\vec{x}^{(i)})^2.\label{eq:ols}
\end{equation}
The motivation for this task is the assumption that the samples $(\vec{x}^{(i)},y^{(i)})$ are obtained from some process such that for every  $i$, $y^{(i)}$ depends linearly on $\vec{x}^{(i)}$, up to some random noise, so $y^{(i)}$ is drawn from a random variable $\vec{\beta}^T\vec{x}^{(i)}+E_i$, where $E_i$ is a random variable with mean 0, for example, a Gaussian. The vector $\vec{\beta}$ that minimizes \eqref{eq:ols} represents the underlying linear function. We assume $M\geq N$ so that it is feasible to recover this linear function. 

In particular, if $X\in \mathbb{R}^{M\times N}$ is the matrix with $\vec{x}^{(i)}$ as its $i$-th row, for each $i$, and $\vec{y}\in\mathbb{R}^M$ has $y^{(i)}$ as its $i$-th entry, assuming $X^TX$ is invertible, the optimal $\vec{\beta}$ satisfies:
$$\vec{\beta}=(X^TX)^{-1}X^T\vec{y}.$$
The assumption that $X^TX$ is invertible, or equivalently, that $X$ has rank $N$, is very reasonable, and is generally used in least squares algorithms. This is because $X^TX \in \mathbb{R}^{N\times N}$ is a sum of $M\geq N$ terms, and so it is unlikely to have rank less than $N$. 

We can generalize this task to settings in which certain samples are thought to be of higher quality than others, for example, because the random variables $E_i$ are not identical. We express such a belief by assigning a positive weight $w_i$ to each sample, and minimizing
\begin{equation}
\sum_{i=1}^Mw_i(y^{(i)}-\vec{\beta}^T\vec{x}^{(i)})^2.\label{eq:wls2}
\end{equation}
If $W\in\mathbb{R}^{M\times M}$ denotes the diagonal matrix in which $w_i$ appears in the $i$-th diagonal entry, the vector $\vec{\beta}$ that minimizes \eqref{eq:wls2} is given by:
\begin{equation}
\vec{\beta}=(X^TWX)^{-1}X^TW\vec{y},\label{eq:beta_linear_systems}
\end{equation}
under the justified assumption that $X^TWX$ is invertible. 
Finding $\vec{\beta}$ given $X$, $W$ and $\vec{y}$ is the problem of \emph{weighted least squares}.

We can further generalize to settings in which the random variables $E_i$ for sample $i$ are correlated. In the problem of \emph{generalized least squares}, the presumed correlations in error between pairs of samples are given in a non-singular covariance matrix $\Omega$. We then want to find the vector $\vec{\beta}$ that minimizes
\begin{equation}
\sum_{i,j=1}^M\Omega_{i,j}^{-1}(y^{(i)}-\vec{\beta}^T\vec{x}^{(i)})(y^{(j)}-\vec{\beta}^T\vec{x}^{(j)}).%\label{eq:gls}
\end{equation}
As long as $X^T\Omega^{-1} X$ is invertible, this minimizing vector is given by 
$$\vec{\beta}=(X^T\Omega^{-1} X)^{-1}X^T\Omega^{-1} \vec{y}.$$

In this section, we will consider solving \emph{quantum} versions of these problems. Specifically, a \emph{quantum WLS solver} is given access to $\vec{y}\in\mathbb{R}^M$, $X\in \mathbb{R}^{M\times N}$, and positive weights $w_1,\dots,w_M$, in some specified manner, and outputs a quantum state 
$$(X^TWX)^{-1}X^TW\ket{y}/\nrm{(X^TWX)^{-1}X^TW\ket{y}},$$
up to some specified error $\eps$. 

Similarly, a \emph{quantum GLS solver} is given access to $\vec{y}\in\mathbb{R}^M$, $X\in \mathbb{R}^{M\times N}$, and a positive definite $\Omega\in\mathbb{R}^{M\times M}$, in some specified manner, and outputs a quantum state 
$$(X^T\Omega^{-1}X)^{-1}X^T\Omega^{-1}\ket{y}/\nrm{(X^T\Omega^{-1}X)^{-1}X^T\Omega^{-1}\ket{y}},$$
up to some specified error $\eps$.

\subsubsection{Weighted least squares}\label{sec:WLS}

In this section we describe a quantum algorithm for the weighted least squares problem using our new quantum linear system solver, before considering generalized least squares in the next section. In particular, letting $\overline{SS}_{\mathrm{res}}^W$ be the normalized weighted sum of squares residual (defined shortly),
we prove the following:
\begin{theorem}[Quantum WLS solver using data structure input]\label{thm:WLS}
Let $A=\sqrt{W}X$ such that $\nrm{A^{+}}\leq \kappa_A$. %$\nrm{A}\leq 1$ and 
Suppose $\sqrt{W}\vec{y}$ is stored in a quantum-accessible data structure, and either (1) $A$ is stored in a quantum-accessible data structure, in which case, let $\mu(A)=\nrm{A}_F$; or (2) for some $p\in[0,1]$, $A^{(p)}$ and $A^{(1-p)}$ are stored in quantum-accessible data structures, in which case, let $\mu(A)=\mu_p(A)$. Finally, suppose the data points satisfy $\overline{SS}_{\mathrm{res}}^W\leq \eta$. Then we can implement a quantum WLS solver with error $\eps$ in complexity:
%Let $\ket{\beta}=(X^TWX)^{-1}X^TW\vec{y}$. Then for any $\eps\in(0,1/2)$, there is a quantum algorithm that outputs a state that is $\eps$-close to $\ket{\beta}/\nrm{\ket{\beta}}$ in complexity:
$$\bigOt{\frac{\kappa_A\mu(A)}{\sqrt{1-\eta}}\polylog\left(MN/\eps\right)}.$$
\end{theorem}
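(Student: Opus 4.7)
The plan is to reduce the problem to applying a pseudoinverse and invoke Corollary~\ref{cor:pseudoinv-vtaa}. The key algebraic observation is that if $A=\sqrt{W}X$, then $A^+ = (A^\dagger A)^{-1}A^\dagger = (X^T W X)^{-1}X^T\sqrt{W}$, so the target vector satisfies
$$\vec{\beta}= (X^TWX)^{-1}X^TW\vec{y} = A^+(\sqrt{W}\vec{y}),$$
and the target quantum state is $A^+\ket{b}/\nrm{A^+\ket{b}}$ with $\ket{b}:=\ket{\sqrt{W}\vec{y}}$. Since $A$ is not square we work through its Hermitian dilation $\overline{A}$, using the block identity $\overline{A}^+(\ket{0}\ket{b})=\ket{1}(A^+\ket{b})$ to convert pseudoinverse application on $A$ into pseudoinverse application on the Hermitian matrix $\overline{A}$ (whose non-zero eigenvalues are $\pm$ the singular values of $A$, hence lie in $[-1,-1/\kappa_A]\cup[1/\kappa_A,1]$ under the standing assumption $\nrm{A}\leq 1$).

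First, I would construct the needed block-encoding. By Lemma~\ref{lem:kp}, from the quantum-accessible data structure(s) storing $A$ (respectively, $A^{(p)}$ and $A^{(1-p)}$), one can implement a $(\mu(A),\polylog(MN/\eps'),\eps')$-block-encoding of $\overline{A}$ in time $\polylog(MN/\eps')$, where $\eps'$ is chosen as $\littleo{\eps/(\kappa_A^2\log^3(\kappa_A/\eps))}$ to meet the precision hypothesis of Corollary~\ref{cor:pseudoinv-vtaa}; this only contributes logarithmic overhead. Next, since $\sqrt{W}\vec y$ is stored in a quantum-accessible data structure, Theorem~\ref{theorem:data_structure} prepares $\ket{b}$ with cost $T_b=\polylog(M/\eps)$, and then $\ket{0}\ket{b}$ (the correct input for the dilation) is prepared at the same cost.

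The main (only) non-routine step is accounting for the $\sqrt{1-\eta}$ factor, which quantifies the overlap of $\ket{b}$ with $\mathrm{col}(A)$. Writing the weighted residual as $SS_{\mathrm{res}}^W=\nrm{\sqrt{W}\vec y - A\vec\beta}^2=\nrm{(I-\Pi_{\mathrm{col}(A)})\sqrt{W}\vec y}^2$ (using $A\vec\beta = AA^+\sqrt{W}\vec y = \Pi_{\mathrm{col}(A)}\sqrt{W}\vec y$), the normalized version gives
$$\overline{SS}_{\mathrm{res}}^W = \frac{\nrm{\Pi_{\mathrm{col}(A)^\perp}\ket{b}}^2}{1}\leq \eta,$$
so equivalently $\nrm{\Pi_{\mathrm{col}(A)}\ket{b}}\geq \sqrt{1-\eta}$. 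This is precisely the parameter $\sqrt{\gamma}$ appearing in Corollary~\ref{cor:pseudoinv-vtaa}, with $\gamma=1-\eta$. Correspondingly, the analogue overlap condition holds for $\overline{A}$ acting on $\ket{0}\ket{b}$.

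Finally, I would apply Corollary~\ref{cor:pseudoinv-vtaa} with parameters $\alpha=\mu(A)$, $a=\polylog(MN/\eps)$, $T_U=\polylog(MN/\eps)$, $\kappa=\kappa_A$, $T_\psi=\polylog(M/\eps)$, and $\gamma=1-\eta$, obtaining output $\eps$-close to $\overline{A}^+(\ket{0}\ket{b})/\nrm{\overline{A}^+(\ket{0}\ket{b})} = \ket{1}\otimes A^+\ket{b}/\nrm{A^+\ket{b}}$. Discarding the flag qubit yields the WLS state, and the total cost is
$$\bigO{\kappa_A\left(\mu(A)\,\polylog(MN/\eps)\cdot \log^2(\kappa_A/\eps) + \polylog(M/\eps)\right)\frac{\log\kappa_A}{\sqrt{1-\eta}}} = \bigOt{\frac{\kappa_A\,\mu(A)}{\sqrt{1-\eta}}\polylog(MN/\eps)},$$
as claimed.
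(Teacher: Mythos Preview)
Your proof is correct and follows essentially the same approach as the paper: reduce to applying $A^+$ to $\ket{b}=\ket{\sqrt{W}\vec y}$, build the block-encoding of $\overline{A}$ via Lemma~\ref{lem:kp}, prepare $\ket{b}$ from the data structure, relate the overlap $\nrm{\Pi_{\mathrm{col}(A)}\ket{b}}^2$ to $1-\overline{SS}_{\mathrm{res}}^W\geq 1-\eta$, and invoke Corollary~\ref{cor:pseudoinv-vtaa}. You even spell out a few details (the identity $\vec\beta=A^+\sqrt{W}\vec y$ and the dilation bookkeeping) that the paper leaves implicit.
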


Our weighted least squares algorithm improves over the previous best quantum algorithm for this problem, due to \cite{kerenidis:quantumgraddescent}, which has complexity $\bigO{\frac{1}{\eps}\kappa_A^6\mu(A)\log^3\frac{\kappa_A}{\eps}\polylog(MN)}$ (assuming $\nrm{A}=1$ and $\eta$ is bounded by a constant $<1$). Compared to this previous result, our algorithm has an exponential improvement in the dependence on $\eps$, and a 6th power improvement in the dependence on $\kappa_A$. Before proving Theorem \ref{thm:WLS}, we first give a high-level overview of the algorithm.

Let $\ket{y}=\sum_{i=1}^My_i\ket{i}/\nrm{\vec{y}}$. As in \cite{kerenidis:quantumgraddescent}, our algorithm works by first constructing the state $\ket{b}=\sqrt{W}\ket{y}/\nrm{\sqrt{W}\ket{y}}$, and then applying $A^+ = (\sqrt{W}X)^+$. Given a block-encoding of $A$, we can use Corollary \ref{cor:pseudoinv-vtaa} to obtain the state $A^+\ket{b}/\nrm{A^+\ket{b}}$. However, in general, $\ket{b}$ will not be in the rowspace of $A^+$, so $A^+\ket{b}$ might be much smaller than $\sigma_{\min}(A^+)=\nrm{A}^{-1}$. However, as long as the data is not too far from linear --- that is, the fit is not too bad --- the overlap of $\ket{b}$ with $\mathrm{row}(A^+)=\mathrm{col}(A)$ will be high, and so $\nrm{A^+\ket{b}}$ won't be much smaller than $\nrm{A}^{-1}$. Before proving the main theorem of this section, we relate the size of $\Pi_{\mathrm{col}(A)}\ket{b}$ to the quality of the fit.

Define the \emph{weighted sum of squared residuals} with respect to weights $W$ by 
$$SS_{\mathrm{res}}^W = \nrm{(I-\Pi_{\mathrm{col}(A)})\sqrt{W}\vec{y}}^2.$$
This measures the sum of squared errors --- i.e.\ discrepancies between the observed and predicted data points --- weighted by $W$. To make sense of this value, we can define the \emph{normalized} weighted sum of squared residuals:
$$\overline{SS}_{\mathrm{res}}^W = \frac{\nrm{(I-\Pi_{\mathrm{col}(A)})\sqrt{W}\vec{y}}^2}{\nrm{\sqrt{W}\vec{y}}^2} = \frac{\nrm{(I-\Pi_{\mathrm{col}(A)})\ket{b}}^2}{\nrm{\ket{b}}^2} = 1-\nrm{\Pi_{\mathrm{col}(A)}\ket{b}}^2.$$
It's reasonable to assume that $\overline{SS}_{\mathrm{res}}^W$ is not too small, because otherwise, the data is very poorly fit by a linear function. In particular, if $\overline{SS}_{\mathrm{res}}^W\geq \eta$, then $R^2\leq 1-\eta$, where $R^2$ is the coefficient of determination, commonly used to measure the goodness of the fit. 

\paragraph{Proof of Thereom \ref{thm:WLS}:}
We now prove our main theorem of the section.
%The proof will be broken into several lemmas. 
Let $\delta=o(\eps/(\kappa_A^2\log^3\frac{\kappa_A}{\eps}))$.
%First, we have the following fact, which follows directly from Lemma~\ref{lem:kp}:
%\begin{fact}\label{fact:WLS-A}%\snote{fact:WLS-A}
%Let $A=\sqrt{W}X$, and suppose either: (1) $A$ is stored in a quantum-accessible data structure, in which case, let $\mu(A)=\nrm{A}_F$; or (2) for some $p\in [0,1]$, $A^{(p)}$ and $A^{(1-p)}$ are stored in quantum-accessible data structures, in which case let $\mu(A)=\mu_p(A)$. Then t
%There is an $(\bigO{\mu(A)},\lceil 
%\log(N+M+1)\rceil,\delta)$-block-encoding of $\overline{A}$ with implementation cost $\bigO{\polylog(MN/\delta)}$.
%\end{fact}
By Lemma~\ref{lem:kp} we know how to implement a $(\bigO{\mu(A)},\lceil 
	\log(N+M+1)\rceil,\delta)$-block-encoding of $\overline{A}$ with complexity $\bigO{\polylog(MN/\delta)}$.
%The final fact we need is the following, which follows from the assumption that $\sqrt{W}\vec{y}$ is stored in a quantum-accessible data structure.
%\begin{fact}\label{fact:WLS-b}
%	The state $\ket{b}$ can be generated in cost $\polylog(MN/\delta)$.
%\end{fact}
Since $\sqrt{W}\vec{y}$ is stored in a quantum-accessible data structure,
the state $\ket{b}$ can be generated in cost $\polylog(MN/\delta)$.
Using these ingredients Corollary~\ref{cor:pseudoinv-vtaa} implies that we can prepare am $\eps$-approximation of the quantum state $A^+\ket{b}/\nrm{A^+\ket{b}}$ in complexity:
\begin{equation}
\bigOt{\frac{\kappa_A\mu(A)}{\sqrt{1-\eta}}\polylog\left(MN/\eps\right)}.
%\bigO{\frac{\mu(A)\kappa_A}{\nrm{A}\sqrt{1-\eta}}\log^2\left(\frac{\kappa_A}{\eps}\right)\log\left(\frac{\mu(A)\kappa_A}{\nrm{A}\eps}\right)\polylog(MN)}.
\end{equation}
In applying Corollary \ref{cor:pseudoinv-vtaa}, we used the fact that 
$$
\nrm{\Pi_{\mathrm{col}(A)}\ket{b}}^2 = 1-\overline{SS}_{\mathrm{res}}^W\geq 1-\eta.
$$

In some applications it might not be natural to assume that we store $A$ in quantum memory. Therefore we also prove a version where $X$ and $W$ are accessed separately, as a special case of the GLS solver we prove in the next subsection.

\subsubsection{Generalized least squares}\label{sec:GLS}

In this section, we give a quantum GLS solver when the input is given in the block-encoding framework. Given block-encodings of $X$ and $\Omega$, it is straightforward to implement a block-encoding of $(X^T\Omega^{-1}X)^{-1}X^T\Omega^{-1}$ using the following: 1) Given a block-encoding of $A$, we can implement a block-encoding of $A^{-1}$; and 2) Given block-encodings of $A$ and $B$, we can implement a block-encoding of $AB$. The resulting block-encoding can then be applied to $\ket{y}$ to get a state proportional to $\vec{\beta}$, the desired output. (For a detailed analysis of this approach, see \cite{ourpaper}). While this approach is conceptually quite simple, we can get a simpler algorithm with better complexity by observing that if $A=\Omega^{-1/2}X$, then $(X^T\Omega^{-1}X)^{-1}X^T\Omega^{-1} = A^+\Omega^{-1/2}$. 

	\begin{restatable}{theorem}{GLSThmBlockGen}\emph{(Quantum GLS solver using block-encodings)}\label{thm:GLSBlockGen}
		Suppose that we have a unitary $U_y$ preparing a quantum state proportional to $\vec{y}$ in complexity $T_y$. Suppose $X\in\mathbb{R}^{M\times N}$, $\Omega\in \mathbb{R}^{M\times M}$ are such that
		$\nrm{X}\leq 1$, $\nrm{\Omega}\leq 1$ and $\Omega\succ 0$ is positive definite. Suppose that we have access to $U_X$ that is an $(\alpha_X,a_X,0)$-block-encoding of $X$ which has complexity $T_X\geq a_X$, and similarly we have access $U_\Omega$ that is an $(\alpha_\Omega,a_\Omega,0)$-block-encoding of $\Omega^{-\frac{1}{2}}$ which has complexity $T_\Omega\geq a_\Omega$. Let $A:=\Omega^{-\frac{1}{2}}X$, and suppose we have the following upper bounds: $\nrm{A^+}\leq \kappa_A$, $\nrm{\Omega^{-1}}\leq \kappa_\Omega$,
		%$\sigma^{-1}_{\min}(X)\leq \kappa_\Omega$, where $\sigma_{\min}(X)$ is the smallest non-zero singular value of $X$.
		and $\overline{SS}_\mathrm{res}^{\Omega}\leq \eta$. Then we can implement a quantum GLS-solver with error $\eps$ in complexity 
		\begin{equation*}
		\bigO{\frac{\kappa_A\log\left(\kappa_A\right)}{\sqrt{1-\eta}}\left(\left(\sqrt{\kappa_\Omega}\alpha_XT_X+\alpha_\Omega T_\Omega\right)\log^3\left(\dfrac{\kappa_A}{\eps}\right)+\sqrt{\kappa_\Omega}T_y\right)}.
		\end{equation*}
	\end{restatable}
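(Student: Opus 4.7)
The starting point is the algebraic identity
\[
(X^T\Omega^{-1}X)^{-1}X^T\Omega^{-1}=A^+\Omega^{-\frac{1}{2}},\qquad A:=\Omega^{-\frac{1}{2}}X,
\]
so that $\vec{\beta}/\nrm{\vec{\beta}}=A^+\ket{b}/\nrm{A^+\ket{b}}$ with $\ket{b}:=\Omega^{-1/2}\ket{y}/\nrm{\Omega^{-1/2}\vec{y}}$. The algorithm therefore has two ingredients: a block-encoding of $A$, and a state-preparation routine for $\ket{b}$. Given these, I will invoke the pseudoinverse preparation routine of Corollary~\ref{cor:pseudoinv-vtaa} to produce $A^+\ket{b}/\nrm{A^+\ket{b}}$.

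\noindent Since Lemma~\ref{lem:block-encoding-product-non-square-pre-amp} requires both factors to have spectral norm bounded by $1$, and $\nrm{\Omega^{-1/2}}$ may be as large as $\sqrt{\kappa_\Omega}$, I first rescale: $U_\Omega$ is also a $(\alpha_\Omega/\sqrt{\kappa_\Omega},a_\Omega,0)$-block-encoding of $\widetilde{A}:=\Omega^{-1/2}/\sqrt{\kappa_\Omega}$, which satisfies $\nrm{\widetilde{A}}\leq 1$. Applying Lemma~\ref{lem:block-encoding-product-non-square-pre-amp} to $\widetilde{A}$ and $X$ yields a $(2,a_\Omega+a_X+2,\sqrt{2}\gamma)$-block-encoding of $\widetilde{A}X=A/\sqrt{\kappa_\Omega}$, i.e.\ a $(2\sqrt{\kappa_\Omega},a_\Omega+a_X+2,\sqrt{2\kappa_\Omega}\,\gamma)$-block-encoding of $A$, in complexity
\[
\bigO{\bigl((\alpha_\Omega/\sqrt{\kappa_\Omega})T_\Omega+\alpha_X T_X\bigr)\log(1/\gamma)}.
\]
For the state $\ket{b}$, I apply Lemma~\ref{lem:block-encoding-to-state} to $\widetilde{A}$ acting on $\ket{y}$. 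Since the eigenvalues of $\Omega^{-1/2}$ lie in $[1,\sqrt{\kappa_\Omega}]$, one has $\nrm{\widetilde{A}\ket{y}}\geq 1/\sqrt{\kappa_\Omega}$, so the second complexity bound of that lemma gives
\[
T_b=\bigO{\alpha_\Omega T_\Omega\log(\kappa_\Omega/\eps)+\sqrt{\kappa_\Omega}\,T_y}.
\]

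\noindent Finally, the normalized residual hypothesis $\overline{SS}_\mathrm{res}^{\Omega}\leq\eta$ gives
$\nrm{\Pi_{\mathrm{col}(A)}\ket{b}}^2=1-\overline{SS}_\mathrm{res}^{\Omega}\geq 1-\eta$, which supplies the projection parameter $\gamma=1-\eta$ required by Corollary~\ref{cor:pseudoinv-vtaa}. Choosing $\gamma$ in the block-encoding of $A$ polynomially small in $\eps/\kappa_A$ to meet the precision requirement of Corollary~\ref{cor:pseudoinv-vtaa} (which costs only an extra $\log(\kappa_A/\eps)$ factor), plugging in $\alpha=2\sqrt{\kappa_\Omega}$ and the $T_U,T_b$ above into
\[
\bigO{\frac{\kappa_A\log(\kappa_A)}{\sqrt{1-\eta}}\bigl(\alpha(T_U+a)\log^2(\kappa_A/\eps)+T_b\bigr)}
\]
and absorbing the $\alpha_\Omega T_\Omega\log(\kappa_\Omega/\eps)$ summand of $T_b$ into the larger $\alpha_\Omega T_\Omega\log^3(\kappa_A/\eps)$ term of $\alpha T_U\log^2(\kappa_A/\eps)$, yields exactly the claimed complexity. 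The main technical care-point is the rescaling step that pushes a $\sqrt{\kappa_\Omega}$ factor (rather than a $\kappa_\Omega$ or $\alpha_\Omega$ factor) into the product block-encoding and into the preparation of $\ket{b}$; without this rescaling one would lose a factor $\sqrt{\kappa_\Omega}$ in front of $\alpha_\Omega T_\Omega$, and a factor $\alpha_\Omega$ rather than $\sqrt{\kappa_\Omega}$ in front of $T_y$.
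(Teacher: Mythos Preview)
Your proof is correct and follows essentially the same approach as the paper's: both use the identity $(X^T\Omega^{-1}X)^{-1}X^T\Omega^{-1}=A^+\Omega^{-1/2}$, prepare $\ket{b}=\Omega^{-1/2}\ket{y}/\nrm{\Omega^{-1/2}\vec y}$ via Lemma~\ref{lem:block-encoding-to-state}, build a $(2\sqrt{\kappa_\Omega},\cdot,\cdot)$-block-encoding of $A$ via Lemma~\ref{lem:block-encoding-product-non-square-pre-amp}, and finish with Corollary~\ref{cor:pseudoinv-vtaa} using $\gamma=1-\eta$. You are in fact more explicit than the paper about the $1/\sqrt{\kappa_\Omega}$ rescaling of $\Omega^{-1/2}$ needed to satisfy the $\nrm{\cdot}\le 1$ hypotheses of those two lemmas, which is the point that drives the favorable $\sqrt{\kappa_\Omega}$ scaling; the paper applies the same rescaling but leaves it implicit.
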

		\begin{proof}
			The goal is to implement a unitary preparing a state proportional to $$(X^T\Omega^{-1}X)^{-1} X^T\Omega^{-1}\ket{y}=\left(\Omega^{-\frac{1}{2}}X\right)^{\!\!+} \Omega^{-\frac{1}{2}}\ket{y}.$$
			
			By Lemma~\ref{lem:block-encoding-to-state} we can implement a unitary $U_{\psi}$, that prepares a $\delta$-approximation of 
			$$\ket{\psi}:=
			\frac{\Omega^{-\frac{1}{2}}\ket{\vec{y}}}{\nrm{\Omega^{-\frac{1}{2}}\vec{y}}} \text{ with complexity }
			T_\psi:=
			\bigO{\alpha_\Omega T_\Omega \log\left(\frac{1}{\delta}\right)+\sqrt{\kappa_\Omega}T_y}.$$
			%\bigO{\alpha_\Omega(T_\Omega+T_y)}.$$
			%\bigO{\alpha_\Omega\min\left((T_\Omega+T_y), T_\Omega\log\left(\frac{1}{\delta}\right)+\frac{\sqrt{\kappa_\Omega}}{\alpha_\Omega}T_y\right)}.$$
			
			Let $:=a_X+a_\Omega+2$, by Lemma~\ref{lem:block-encoding-product-non-square-pre-amp} we can combine the block-encodings of $\Omega^{-\frac{1}{2}}$ and $X$ to implement a unitary $U_A$, that is a $(2\sqrt{\kappa_\Omega},a_A,\delta)$ block-encoding of $A$ in complexity
			$$
			T_A:=\bigO{\left(\alpha_X(T_X+a_X)+\frac{\alpha_\Omega}{\sqrt{\kappa_\Omega}}(T_\Omega+a_\Omega)\right)\log\left(\frac{\kappa_\Omega}{\delta}\right)}.
			$$

			Finally by choosing $\delta=\littleo{\eps\kappa_A^{-2}\log^{-3}(\frac{\kappa_A}{\eps})}$
			and defining $\alpha_A:=\sqrt{\kappa_\Omega}$, using Corollary~\ref{cor:pseudoinv-vtaa} we get that a quantum state proportional to $A^+\ket{\psi}$ can be prepared with $\eps$-precision in complexity 	
			\begin{align*}
				&\quad\,\,\bigO{\left(\alpha_A \kappa_A (a_A+T_A)\log^2\left(\dfrac{\kappa_A}{\eps}\right)+\kappa_A T_{\psi}\right)\frac{\log\left(\kappa_A\right)}{\sqrt{\gamma}}}\\
				&= \bigO{\frac{\kappa_A\log\left(\kappa_A\right)}{\sqrt{1-\eta}}\left(\sqrt{\kappa_\Omega}T_A\log^2\left(\dfrac{\kappa_A}{\eps}\right)+T_\psi\right)}\\
				&= \bigO{\frac{\kappa_A\log\left(\kappa_A\right)}{\sqrt{1-\eta}}\left(\left(\sqrt{\kappa_\Omega}\alpha_XT_X+\alpha_\Omega T_\Omega\right)\log^3\left(\dfrac{\kappa_A}{\eps}\right)+\sqrt{\kappa_\Omega}T_y\right)}.\qedhere
			\end{align*}
		\end{proof}
		Note that the above theorem requests $0$-error block-encoding inputs, however if the algorithm uses $T$ queries to the block-encodings, the error blows up only linearly in $T$, so if we allow a $\delta=c\eps^2/T$ initial error (for some small enough $c\in\R_+$ constant) in the block-encodings, then we do not make more than $\eps/2$ overall error.\footnote{For more details about this argument see~\cite{gilyenBlockMatrices}.}

	\begin{corollary}[Quantum WLS solver using data structure or sparse oracles -- alternate input]\label{cor:WLS-alt} %\snote{thm:WLS-alt}
		Let $W$ be a diagonal matrix such that $1\leq w_i\leq w_{\max}$ for each $i$, moreover $\nrm{X}\leq 1$.
		Let $A=\sqrt{W}X$ and suppose that $\nrm{A^{+}}\leq \kappa_A$.
		Suppose $\vec{y}$ is stored in a quantum data structure, and the diagonal entries of $W$ are stored in QROM so that we can compute $\ket{i}\mapsto \ket{i}\ket{w_i}$ in $\polylog(MN/\eps)$, as well as $w_{\max}$. Further, suppose either (1) $X$ is stored in a quantum-accessible data structure, in which case, let $\mu(X)=\nrm{X}_F$; or (2) for some $p\in[0,1]$, $X^{(p)}$ and $X^{(1-p)}$ are stored in quantum-accessible data structures, in which case, let $\mu(X)=\mu_p(X)$. Finally, suppose the data points satisfy $\overline{SS}_{\mathrm{res}}^W\leq \eta$. Then we can implement a quantum WLS solver with error $\eps$ in complexity:
		%Then for any $\eps\in (0,1/2)$, there is a quantum algorithm that outputs a state that is $\eps$-close to $\ket{\beta}/\nrm{\ket{\beta}}$ in complexity:
		$$\bigOt{\frac{\kappa_A\sqrt{w_{\max}}}{\sqrt{1-\eta}}\mu(X)\polylog(MN/\eps)}.$$
		Similarly, if we are given sparse access to $X$ which has row and column sparsity at most $s^r_X$ and $s^c_X$ respectively, and a unitary $U_y$ preparing $\ket{y}$ in complexity $T_y$, then we can implement a quantum WLS solver with error $\eps$ in complexity:
		$$\bigOt{\frac{\kappa_A\sqrt{w_{\max}}}{\sqrt{1-\eta}}\left(\sqrt{s^r_X s^c_X}+T_y\right)\polylog(MN/\eps)}.$$
	\end{corollary}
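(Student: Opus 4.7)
The plan is to recognize weighted least squares as the special case of generalized least squares with covariance matrix $\Omega = W^{-1}$, and then invoke Theorem~\ref{thm:GLSBlockGen}. With this identification we have $\Omega^{-\frac{1}{2}} = \sqrt{W}$, so $A = \Omega^{-\frac{1}{2}}X = \sqrt{W}X$ and $\kappa_\Omega = \nrm{\Omega^{-1}} = \nrm{W} = w_{\max}$. The hypothesis $\nrm{\Omega}\leq 1$ of Theorem~\ref{thm:GLSBlockGen} follows because $w_i\geq 1$ implies $\nrm{W^{-1}}\leq 1$. Moreover, a direct calculation shows
\[\overline{SS}_{\mathrm{res}}^{\Omega} = \frac{\nrm{(I-\Pi_{\mathrm{col}(A)})\Omega^{-\frac{1}{2}}\vec{y}}^2}{\nrm{\Omega^{-\frac{1}{2}}\vec{y}}^2} = \frac{\nrm{(I-\Pi_{\mathrm{col}(A)})\sqrt{W}\vec{y}}^2}{\nrm{\sqrt{W}\vec{y}}^2} = \overline{SS}_{\mathrm{res}}^{W} \leq \eta,\]
so the residual hypothesis matches as well.

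The first ingredient needed to apply Theorem~\ref{thm:GLSBlockGen} is a block-encoding of $\sqrt{W}$. I would construct this directly from the QROM oracle $\ket{i}\ket{0}\mapsto\ket{i}\ket{w_i}$ and knowledge of $w_{\max}$: using a controlled rotation followed by uncomputation of the $w_i$ register, implement the unitary
\[\ket{0}\ket{i}\longmapsto \sqrt{w_i/w_{\max}}\,\ket{0}\ket{i} + \sqrt{1-w_i/w_{\max}}\,\ket{1}\ket{i}.\]
This is a $(\sqrt{w_{\max}},1,0)$-block-encoding of $\sqrt{W}$, implementable in $T_\Omega = \polylog(MN/\eps)$ gates. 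Thus $\alpha_\Omega = \sqrt{w_{\max}} = \sqrt{\kappa_\Omega}$, saturating the dependence expected in the final bound.

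The second ingredient is a block-encoding of $X$, which I would obtain in one of three ways according to the input model. In the quantum-data-structure cases (1) and (2), Lemma~\ref{lem:kp} furnishes an $(\alpha_X,a_X,0)$-block-encoding with $\alpha_X = \mu(X)$, $a_X = \bigO{\log(MN)}$ and $T_X = \polylog(MN/\eps)$ (with appropriately precise classical storage, the error can be made arbitrarily small so that the $0$-error requirement of Theorem~\ref{thm:GLSBlockGen} is effectively met up to the tolerated blow-up discussed in the remark after that theorem). In the sparse-access case, Lemma~\ref{lem:sparse-block} gives the analogous block-encoding with $\alpha_X = \sqrt{s^r_X s^c_X}$. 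Finally, the state $\ket{y}$ is prepared in $T_y = \polylog(MN/\eps)$ by invoking Theorem~\ref{theorem:data_structure} on the data structure storing $\vec{y}$ (respectively, using the supplied $U_y$ in the sparse-access setting).

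Plugging $\alpha_\Omega = \sqrt{w_{\max}}$, $T_\Omega, T_y = \polylog(MN/\eps)$ (or $T_y$ in the sparse setting), and the appropriate choice of $(\alpha_X,T_X)$ into the complexity bound of Theorem~\ref{thm:GLSBlockGen}, the two dominant terms $\sqrt{\kappa_\Omega}\alpha_X T_X$ and $\alpha_\Omega T_\Omega$ both contribute at most $\sqrt{w_{\max}}\,\mu(X)\polylog(MN/\eps)$ in the data-structure cases, yielding the claimed complexity; in the sparse-access case the analogous simplification produces $\sqrt{w_{\max}}(\sqrt{s^r_X s^c_X}+T_y)\polylog(MN/\eps)$. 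I do not anticipate any real obstacle; the only care needed is to absorb the $\delta$-error required by Theorem~\ref{thm:GLSBlockGen}'s $0$-error convention into the $\polylog(1/\eps)$ factors when implementing the block-encoding of $\sqrt{W}$ (and in cases (1)--(2), the data-structure block-encoding of $X$) to sufficient classical precision, which is standard.
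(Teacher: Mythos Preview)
Your approach is correct and is exactly what the paper intends: the corollary is stated immediately after Theorem~\ref{thm:GLSBlockGen} with no separate proof, so it is meant to be read as the specialization $\Omega=W^{-1}$ of that theorem, with the block-encoding of $\Omega^{-1/2}=\sqrt{W}$ built directly from the QROM oracle via a controlled rotation (yielding $\alpha_\Omega=\sqrt{w_{\max}}=\sqrt{\kappa_\Omega}$), and the block-encoding of $X$ supplied by Lemma~\ref{lem:kp} or Lemma~\ref{lem:sparse-block}. Your verification that the terms $\sqrt{\kappa_\Omega}\alpha_X T_X$, $\alpha_\Omega T_\Omega$, and $\sqrt{\kappa_\Omega}T_y$ all collapse to the stated bounds is accurate, and your handling of the $0$-error convention is the standard one noted in the paper right after Theorem~\ref{thm:GLSBlockGen}.
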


	\begin{corollary}[Quantum GLS solver using block-encodings -- alternate input]\label{cor:GLSBlockAlt}
    Suppose that $X$, $\Omega$, and $\vec{y}$ are as in Theorem~\ref{thm:GLSBlockGen}, except we have access to $U_\Omega$ that is an $(\alpha_\Omega,a_\Omega,0)$-block-encoding of $\Omega$ which has complexity $T_\Omega\geq a_\Omega$.
	Then we can implement a quantum GLS-solver with error $\eps$ in complexity 
		\begin{equation*}
		\bigOt{\frac{\kappa_A\sqrt{\kappa_\Omega}}{\sqrt{1-\eta}}\left(\alpha_XT_X+\alpha_\Omega \kappa_\Omega T_\Omega+ T_y\right)\polylog\left(\frac{1}{\eps}\right)}.
		\end{equation*}
	\end{corollary}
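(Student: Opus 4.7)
The plan is to reduce Corollary~\ref{cor:GLSBlockAlt} to Theorem~\ref{thm:GLSBlockGen} by first constructing a block-encoding of $\Omega^{-1/2}$ from the given block-encoding $U_\Omega$ of $\Omega$, and then feeding the result into the GLS solver of Theorem~\ref{thm:GLSBlockGen}. The hypotheses $\nrm{\Omega}\leq 1$ and $\nrm{\Omega^{-1}}\leq \kappa_\Omega$ give $I/\kappa_\Omega \preceq \Omega \preceq I$, which is exactly the spectral range required by Lemma~\ref{lem:negative-power-restated}.

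First I would apply Lemma~\ref{lem:negative-power-restated} with $c=1/2$ and $\kappa=\kappa_\Omega$ to the block-encoding $U_\Omega$. Choosing its internal precision parameter to be $\delta' = \littleo{\eps'/(\kappa_\Omega^{3/2}\log^3(\kappa_\Omega^{3/2}/\eps'))}$, where $\eps'$ is the precision ultimately demanded by the invocation of Theorem~\ref{thm:GLSBlockGen} below, this produces a unitary $\widetilde{U}_\Omega$ that is a $(2\sqrt{\kappa_\Omega},\, a_\Omega + \bigO{\log(\kappa_\Omega^{3/2}\log(1/\eps'))},\, \eps')$-block-encoding of $\Omega^{-1/2}$, implementable in cost
\begin{equation*}
\widetilde{T}_\Omega \;=\; \bigOt{\alpha_\Omega\,\kappa_\Omega\,(a_\Omega + T_\Omega)\,\polylog(1/\eps')}.
\end{equation*}
Since Theorem~\ref{thm:GLSBlockGen} is stated for exact (zero-error) block-encodings, I would invoke the remark immediately following its proof: since the GLS algorithm uses the block-encoding only polynomially many times, it suffices to take $\eps'$ polynomially small in the target error $\eps$ and the other parameters, which only contributes a $\polylog$ overhead to $\widetilde{T}_\Omega$.

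Next I would plug $\widetilde{U}_\Omega$ into Theorem~\ref{thm:GLSBlockGen}, using the parameters $\alpha_\Omega \mapsto 2\sqrt{\kappa_\Omega}$ and $T_\Omega \mapsto \widetilde{T}_\Omega$, while keeping the block-encoding of $X$ and the state-preparation unitary $U_y$ unchanged. Substituting into the complexity bound of Theorem~\ref{thm:GLSBlockGen} yields
\begin{equation*}
\bigOt{\frac{\kappa_A}{\sqrt{1-\eta}}\!\left(\sqrt{\kappa_\Omega}\,\alpha_X T_X + \sqrt{\kappa_\Omega}\,\widetilde{T}_\Omega + \sqrt{\kappa_\Omega}\,T_y\right)\polylog(1/\eps)},
\end{equation*}
and absorbing $\widetilde{T}_\Omega = \bigOt{\alpha_\Omega \kappa_\Omega T_\Omega \polylog(1/\eps)}$ gives precisely the stated bound
\begin{equation*}
\bigOt{\frac{\kappa_A\sqrt{\kappa_\Omega}}{\sqrt{1-\eta}}\!\left(\alpha_X T_X + \alpha_\Omega \kappa_\Omega T_\Omega + T_y\right)\polylog(1/\eps)}.
\end{equation*}

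The only genuinely delicate point is bookkeeping the precision: we must ensure that the error $\eps'$ in the constructed block-encoding of $\Omega^{-1/2}$ is small enough that after being propagated through the product construction (Lemma~\ref{lem:block-encoding-product-non-square-pre-amp}) and through the inversion step (Corollary~\ref{cor:pseudoinv-vtaa}) inside the proof of Theorem~\ref{thm:GLSBlockGen}, the final state is still within $\eps$ of the ideal output. Since each of those transformations blows up the input error by at most a polynomial factor in $\kappa_A,\kappa_\Omega,\alpha_X,\alpha_\Omega$, taking $\eps'$ inverse-polynomially small in these quantities suffices and contributes only a $\polylog$ overhead, which is already absorbed in the $\bigOt{\cdot}$ notation.
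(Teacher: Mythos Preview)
Your proof is correct and follows essentially the same approach as the paper: apply Lemma~\ref{lem:negative-power-restated} with $c=1/2$ to convert the block-encoding of $\Omega$ into a $(2\sqrt{\kappa_\Omega},\cdot,\cdot)$-block-encoding of $\Omega^{-1/2}$ with cost $\bigOt{\alpha_\Omega\kappa_\Omega T_\Omega}$, then invoke Theorem~\ref{thm:GLSBlockGen} with the substituted parameters. The paper's proof is terser about the precision bookkeeping (it simply takes $\delta=\littleo{\mathrm{poly}(\kappa_A/\eps)}$ and cites the remark after Theorem~\ref{thm:GLSBlockGen}), but your more explicit treatment of the error propagation is fine.
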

	\begin{proof}
		By Lemma~\ref{lem:negative-power-restated} we can implement a unitary $U_\Omega'$ that is a $(2\sqrt{\kappa_\Omega},a_\Omega+\bigO{\log(\kappa_\Omega\log\frac{1}{\delta}},\delta/4)$ block-encoding of $\Omega^{-\frac{1}{2}}$
		in complexity $\bigO{\alpha_\Omega \kappa_\Omega(a_\Omega+T_\Omega)\log^2\!\left(\frac{\kappa_\Omega}{\delta}\right)}.$ Choosing $\delta=\littleo{\mathrm{poly}\left(\frac{\kappa_A}{\eps}\right)}$
		the result follows from Theorem~\ref{thm:GLSBlockGen}.
	\end{proof}

\begin{corollary}[Quantum GLS using quantum data structure or sparse oracles]\label{cor:GLSData}
	Suppose $X$, $\Omega$, and $\vec{y}$ are as in Corollary~\ref{cor:GLSBlockAlt}, and we are given access to $X$ as in Corollary~\ref{cor:WLS-alt}, and similarly to $\Omega$. Then in case of the database input model we can implement a quantum GLS-solver with error $\eps$ in complexity 
	\begin{equation*}
	\bigOt{\frac{\kappa_A\sqrt{\kappa_\Omega}}{\sqrt{1-\eta}}\left(\mu_X+\mu_\Omega\kappa_\Omega\right)\polylog\left(MN/\eps\right)}.
	\end{equation*}
	Similarly, in case of the sparse-access input model we can implement a quantum GLS-solver with error $\eps$ in complexity 
	\begin{equation*}
	\bigOt{\frac{\kappa_A\sqrt{\kappa_\Omega}}{\sqrt{1-\eta}}\left(\sqrt{s^r_X s^c_X}+s_\Omega\kappa_\Omega\right)\polylog\left(MN/\eps\right)}.
	\end{equation*}
\end{corollary}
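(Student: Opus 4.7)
The plan is to invoke Corollary~\ref{cor:GLSBlockAlt} as a black box and supply block-encodings of $X$ and $\Omega$, together with a state-preparation unitary for $\vec{y}$, whose complexities depend on which input model we are in. The complexity bounds of Corollary~\ref{cor:GLSBlockAlt} are
\[
\bigOt{\frac{\kappa_A\sqrt{\kappa_\Omega}}{\sqrt{1-\eta}}\left(\alpha_XT_X+\alpha_\Omega \kappa_\Omega T_\Omega+ T_y\right)\polylog\left(\frac{1}{\eps}\right)},
\]
so all that remains is to plug in the appropriate values of $(\alpha_X,T_X)$, $(\alpha_\Omega,T_\Omega)$, and $T_y$ for each input model, while confirming that the precision with which the block-encodings are implemented is good enough. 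Since Corollary~\ref{cor:GLSBlockAlt} was proven with zero-error block-encodings as input, I would first note (as in the remark following Theorem~\ref{thm:GLSBlockGen}) that errors propagate at worst linearly in the number of queries, so it suffices to prepare the block-encodings with precision inverse-polynomial in $\kappa_A/\eps$, which costs only a $\polylog(\kappa_A/\eps)$ overhead absorbed into the $\bigOt{\cdot}$ and $\polylog(1/\eps)$ factors.

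For the quantum data structure input model, I would apply Lemma~\ref{lem:kp} to $X$ (using either $\nrm{X}_F$ or $\mu_p(X)$ depending on which of the two data-structure variants is used), obtaining a block-encoding of $X$ with $\alpha_X=\mu_X$ and $T_X=\polylog(MN/\eps)$, and similarly a block-encoding of $\Omega$ with $\alpha_\Omega=\mu_\Omega$ and $T_\Omega=\polylog(MN/\eps)$. Since $\vec{y}$ is also stored in a quantum data structure, Theorem~\ref{theorem:data_structure} gives $T_y=\polylog(M/\eps)$. Substituting into the bound from Corollary~\ref{cor:GLSBlockAlt} and absorbing the $\polylog(MN/\eps)$ factors into $\polylog(MN/\eps)$ yields the claimed
\[
\bigOt{\frac{\kappa_A\sqrt{\kappa_\Omega}}{\sqrt{1-\eta}}\left(\mu_X+\mu_\Omega\kappa_\Omega\right)\polylog\left(MN/\eps\right)}.
\]

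For the sparse-access input model, I would instead invoke Lemma~\ref{lem:sparse-block} on $X$ to obtain a $(\sqrt{s^r_X s^c_X},\polylog(MN/\eps),\delta)$-block-encoding of $X$ at $T_X=\polylog(MN/\eps)$ query and gate cost, and the same lemma on $\Omega$ (which is symmetric, so $s^r_\Omega=s^c_\Omega=s_\Omega$) to get $\alpha_\Omega=s_\Omega$ and $T_\Omega=\polylog(M/\eps)$. The state $\ket{y}$ is generated at the stated cost $T_y=\polylog(MN/\eps)$ from the assumed access to $\vec{y}$. Plugging these into Corollary~\ref{cor:GLSBlockAlt} yields the second bound. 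The only mildly non-routine point is verifying that the precisions $\delta$ required in Lemma~\ref{lem:kp} and Lemma~\ref{lem:sparse-block} can be taken small enough to satisfy the zero-error hypothesis of Corollary~\ref{cor:GLSBlockAlt} up to a $\polylog$ overhead, which is straightforward since both lemmas have $\polylog(1/\delta)$ dependence; this is the main (still minor) obstacle, and it is what makes the $\polylog(MN/\eps)$ factors appear exactly as stated.
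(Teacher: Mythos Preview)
Your proposal is correct and follows exactly the approach the paper takes: the corollary is stated without an explicit proof, being an immediate consequence of plugging the block-encoding parameters from Lemma~\ref{lem:kp} (data-structure case) or Lemma~\ref{lem:sparse-block} (sparse-access case) into Corollary~\ref{cor:GLSBlockAlt}, together with the error-propagation remark after Theorem~\ref{thm:GLSBlockGen}. There is nothing substantive to add.
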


\anote{Note that if we would try to derive the WLS result from this complexity, we would almost get back the previous complexity -- the only extra term we get has an additional $w_{\max}$ factor.}

\subsection{Estimating electrical network quantities}\label{sec:electric}
\label{sec:electrical_networks}
Analysis of electrical networks finds widespread applications in a plethora of graph-based algorithms. For algorithms such as graph sparsification \cite{spielman2011graph}, computing maximum-flows \cite{christiano2011electrical,lee2013new} and for analyzing several classical random walk-based problems \cite{doyle1984random}, it turns out to be useful to treat the underlying graph as an electrical network.

%Electrical networks have also been studied in the context of several quantum algorithms. Belovs~\cite{belovs2013quantum} established a relationship between the problem of finding a marked node in a graph by quantum walk and the effective resistance of an electrical network. Building up on this analogy, several other quantum algorithms have been developed \cite{belovs2013time, montanaro2015quantum, moylett2017quantum}. Jeffery and Kimmel~\cite{Jeffery2017algorithmsgraph} showed that the problem of determining whether two nodes of a graph are connected is related to finding the effective resistance between them. 

In Ref.~\cite{wang2017efficient}, Wang presents two quantum algorithms for estimating certain quantities in large sparse electrical networks in the sparse-access input model: one is based on using a quantum linear systems algorithm for sparse matrices \cite{childs2015quantum} to invert the weighted signed incidence matrix, defined shortly, while the other is based on quantum walks. The estimated quantities include, among others, the power dissipated across a network, of which the effective resistance between two nodes is a special case. Wang uses the fact that these quantities can be obtained by estimating the norm of the output of a certain QLS problem. 

In this section, we give a quantum algorithm for estimating the dissipated power similar to Wang's linear-system-based algorithm, but in the block-encoding input model, replacing the QLS solver of \cite{childs2015quantum}, which Wang uses, by our QLS solver for block-encodings. In particular, rather than standard amplitude estimation, we make use of our new variable-time amplitude estimation (Corollary~\ref{cor:qls-vtae}). An immediate corollary of this is an algorithm in the sparse-access input model, which outperforms Wang's linear-system-based algorithm for all electrical networks, and in some parameter regimes, also improves on his quantum-walk-based algorithm for this problem. Additionally, our block-encoding algorithm implies the first algorithm for this problem in the quantum data structure input model. Our algorithms also apply to estimating the effective resistance, as a special case. 

It is worth noting that we can also obtain a speedup over the remaining algorithms introduced by Wang in \cite{wang2017efficient} that are based on only solving linear systems such as calculating the current across an edge and approximating voltage across two nodes. However, we do not include this analysis here. 

\noindent We begin by defining an electrical network and related quantities that shall be used subsequently. 

\paragraph{Problem setting and definitions.}
An electrical network is a weighted connected graph with the weight of each edge being the inverse of the resistance --- i.e., the conductance --- of the edge. Let $G(V,E,w)$ denote a connected graph with vertices $V$, edges $E$, and edge weights $w$. Let $N=|V|$ and $M=|E|$. We assume that  the weight of each edge $w_e$ is such that $1\leq w_e\leq w_{\mathrm{max}}$. The degree of $v$ is the number of vertices adjacent to $v$, and is denoted by $d(v)$. The maximum degree of $G$ is denoted $d=\max_{v\in V}d(v)$. As the network may be non-sparse,  $d$ can scale with the size of the network. The complexity of our quantum algorithms depend on the size of the network $N$, the maximum degree $d$, the spectral gap of the normalized Laplacian representing the network $\lambda$ (defined shortly), the precision parameter $\epsilon$, and the maximum edge weight $w_{\max}$.

Let $B_G\in\mathbb{R}^{N\times M}$ be the \emph{signed vertex-edge incidence matrix}, defined so that for each $e\in [M]$, the $e$-th column has a single $1$ and a single $-1$, in the rows corresponding to the two vertices incident to edge $e$, and 0s elsewhere; and let $W_G\in\mathbb{R}^{M\times M}$ be a diagonal matrix where the $e$-th diagonal entry represents the weight $w_e$ of edge $e$. The weighted signed vertex-edge incidence matrix is then $C_G=B_G\sqrt{W_G}$ and the graph Laplacian is $L_G=C_GC_G^T=B_GW_GB_G^T$.

$L_G$ is a positive semidefinite matrix with its minimum eigenvalue being 0 and the corresponding eigenvector being the uniform vector \cite{bollobas2013modern}. We denote the eigenvalues of $L_G$ as
\begin{equation}
\lambda_1(L_G)=0< \lambda_2(L_G)\leq....\lambda_N(L_G)\leq 2 w_{\mathrm{max}}d.\label{eq:lambda2}
\end{equation}
The weighted degree of a vertex is the sum of the weights of the edges incident to it, i.e.\ $\overline{d}_v=\sum_{e:v\in e}w_e$. Define the diagonal weighted degree matrix $D_G=\sum_{v\in V} \overline{d}_v\ket{v}\bra{v}$.
Then one can also define the normalized Laplacian of $G$ as $\mathcal{L}_G=D_G^{-1/2}L_G D_G^{-1/2}$. The spectrum of the normalized Laplacian is denoted
\[\lambda_1(\mathcal{L}_G)=0<\lambda_2(\mathcal{L}_G)\leq....\lambda_N(\mathcal{L}_G)\leq 2.\]
It is easy to show that since $\overline{d}_v\geq 1, \forall v\in V$, $\lambda_2(\mathcal{L}_G)\leq\lambda_2(L_G)$.

We now give a mathematical definition of the dissipated power of an external current applied to a network.

\begin{definition}[Dissipated power]
Given a weighted graph $G(V,E,w)$, and a \emph{current} $\vec{i}\in \mathbb{R}^M$ the \emph{dissipated power} of $\vec i$ is given by ${\cal E}(\vec{i})=\sum_{e\in E}\frac{\vec{i}(e)^2}{w_e} = \nrm{W_G^{-1/2}\vec{i}}^2$. 

An \emph{external current} $\vec{i}_{ext}\in\mathbb{R}^N$ is a real-valued function on $V$ that sums to 0. A positive value $\vec{i}_{ext}(v)$ represents current entering the network at $v$, and a negative value $\vec{i}_{ext}(v)$ represents current leaving the network at $v$. An external current $\vec{i}_{ext}$ on $G$ induces a \emph{potential} (voltage) $\vec{v}\in \mathbb{R}^N$ on the vertices of $G$, given by $\vec{v}=L_G^+\vec{i}_{ext}$. This voltage has a corresponding \emph{induced current} defined via Ohm's Law as $\vec{i}=W_GB_G^T\vec{v}$. The \emph{dissipated power of $\vec{i}_{ext}$} is defined as ${\cal E}(\vec{i})$.
\end{definition}

A well-known special case of the dissipated power is the \emph{effective resistance between $s$ and $t$} for $s,t\in V$, which is the power dissipated by the current induced by injecting a unit of current into $s$, and removing it at $t$. 

\begin{definition}[Effective resistance]
Given a weighted graph $G(V,E,w)$ and a pair of vertices $s,t\in V$, the \emph{effective resistance between $s$ and $t$} is just the dissipated power of the external current $\vec{i}_{ext}=\ket{s}-\ket{t}$. 
\end{definition}

Since the effective resistance is a special case of the dissipated power, algorithms for estimating the dissipated power can be applied to estimate the effective resistance between two nodes.

\paragraph{Algorithms for estimating dissipated power.} From \cite[Lemma 6]{wang2017efficient}, we have:
\begin{lemma}
\label{lemma:alternate_linear_systems}
Let $C_G\in\mathbb{R}^{N\times M}$ be the weighted signed vertex-edge incidence matrix of an electrical network $G(V,E,w)$. Then given an external current $\vec{i}_{ext}\in\mathbb{R}^N$ on $G$, if $\vec{i}$ denotes the induced current, we have
\[
\begin{bmatrix}
0 && C_G\\
C_G^T && 0
\end{bmatrix}^+
\left(
\begin{matrix}
\vec{i}_{ext}\\
0
\end{matrix}
\right)
=
\left(
\begin{matrix}
0\\
W_G^{-1/2}\vec{i}
\end{matrix}\right).
\]
\end{lemma}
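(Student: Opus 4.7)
The plan is to unfold both sides using the SVD of $C_G$ and the definitions already given, reducing the claim to the elementary identity $C^+ = C^T(CC^T)^+$. First I would observe that $\overline{C_G}$ is block-antidiagonal, and so its pseudoinverse is itself block-antidiagonal:
\[
\overline{C_G}^{\,+}
\;=\;
\begin{bmatrix}
0 & (C_G^+)^T\\
C_G^+ & 0
\end{bmatrix}.
\]
This can be seen either from the SVD $C_G=\sum_i\sigma_i u_i v_i^T$ (the eigenpairs of $\overline{C_G}$ being $\pm\sigma_i$ with eigenvectors $(u_i,\pm v_i)/\sqrt2$, from which the SVD of $\overline{C_G}^{\,+}$ is read off), or by verifying the four Moore--Penrose conditions directly. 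Applying this block form to $(\vec i_{ext},\vec 0)^T$ immediately outputs $(\vec 0,\; C_G^+\vec i_{ext})^T$, so the lemma reduces to showing $C_G^+\vec i_{ext}=W_G^{-1/2}\vec i$.

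Next I would just expand $\vec i$ using the definitions given in the problem setup. Since $\vec i=W_G B_G^T\vec v$ with $\vec v=L_G^+\vec i_{ext}$ and $L_G=C_GC_G^T$, and using $C_G=B_G W_G^{1/2}$, we get
\[
W_G^{-1/2}\vec i
\;=\; W_G^{1/2}B_G^T(C_G C_G^T)^+\vec i_{ext}
\;=\; C_G^T(C_G C_G^T)^+\vec i_{ext}.
\]
The last step is the general identity $C^T(CC^T)^+=C^+$, valid for any real matrix $C$. This is a one-line SVD calculation: if $C=\sum_i\sigma_i u_i v_i^T$ then $(CC^T)^+=\sum_i\sigma_i^{-2}u_i u_i^T$, hence $C^T(CC^T)^+=\sum_i\sigma_i^{-1}v_i u_i^T=C^+$. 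Combining the two displays yields $W_G^{-1/2}\vec i = C_G^+\vec i_{ext}$ and the lemma follows.

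There is no real obstacle here; the one subtlety worth double-checking is that the pseudoinverses are being applied to vectors in the correct subspaces. Specifically, $\vec i_{ext}$ sums to zero by definition of an external current, so it is orthogonal to the all-ones kernel vector of the connected-graph Laplacian $L_G=C_GC_G^T$, i.e.\ $\vec i_{ext}\in\mathrm{col}(L_G)=\mathrm{col}(C_G)$. Thus $(C_GC_G^T)^+$ inverts $L_G$ honestly on $\vec i_{ext}$, and no hidden projection onto $\mathrm{col}(C_G)$ is lost when passing through the identity $C^T(CC^T)^+=C^+$. Once this is noted the proof is essentially a two-line unpacking of definitions plus the SVD identity.
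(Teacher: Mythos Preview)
Your proof is correct. The paper does not actually give its own proof of this lemma; it simply cites \cite{wang2017efficient} for the statement. So there is nothing in the paper to compare against, and your argument stands on its own.

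A small remark: the final paragraph about $\vec{i}_{ext}\in\mathrm{col}(C_G)$ is a reasonable sanity check, but it is not strictly needed for the lemma as stated. The identity $C^T(CC^T)^+=C^+$ is a matrix identity valid for all inputs, so $C_G^T(C_GC_G^T)^+\vec{i}_{ext}=C_G^+\vec{i}_{ext}$ holds regardless of whether $\vec{i}_{ext}$ lies in $\mathrm{col}(C_G)$. That observation becomes relevant only later (as in the paper's Theorem~\ref{thm:dissipated}) when one wants to argue that $C_G^+\vec{i}_{ext}$ is not anomalously small.
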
 

Thus, to estimate the dissipated power of an external current $\vec{i}_{ext}$, it suffices to estimate $\nrm{\overline{C}_G^+\ket{0}\vec{i}_{ext}}^2 = \nrm{W_G^{-1/2}\vec{i}}^2$. This gives the following:

\begin{theorem}[Estimating dissipated power]\label{thm:dissipated}
Fix $\epsilon\in(0,1)$, $w_{\max}\geq 1$, $\lambda >0$, and $d\geq 1$. Fix any $\delta$ in $o\left(\frac{\epsilon \lambda}{dw_{\max}\log^2\frac{dw_{\max}}{\epsilon\lambda}}\right)$.
For a weighted network $G(V,E,w)$, with $|V|=N$, $|E|=M$, maximum degree $d$, $1\leq w_e\leq w_{\max}$ for all $e\in E$, and $\lambda_2({\cal L}_G)\geq \lambda$; and an external current $\vec{i}_{ext}\in\mathbb{R}^N$, suppose we are given the value $\nrm{\vec{i}_{ext}}=\mathrm{poly}(N)$, a unitary $U_{\vec i_{ext}}$ preparing a quantum state proportional to $\vec{i}_{ext}$ in complexity $T_{\vec{i}_{ext}}$, and an $(\alpha,a,\delta)$-block-encoding of $C_G$ that can be implemented in complexity $T_{C_G}$. Then the dissipated power of $\vec{i}_{ext}$ can be estimated to multiplicative accuracy $\epsilon$ with success probability at least $\frac{2}{3}$ in complexity 
$$\bigO{\frac{1}{\epsilon}\sqrt{\frac{dw_{\max}}{\lambda}}\left({\alpha}\big(T_{C_G}+a\big)\log^2\frac{dw_{\max}}{\lambda\epsilon}+ T_{\vec{i}_{ext}}\right)\log^4{\frac{dw_{\max}}{\lambda}}}.$$
In particular, if $\vec{i}_{ext}=\ket{s}-\ket{t}$, then we can estimate the effective resistance between $s$ and $t$ in the given complexity, even without assuming an input oracle for state preparation.
\end{theorem}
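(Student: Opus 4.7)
The strategy combines Lemma~\ref{lemma:alternate_linear_systems} with the variable-time norm-estimating QLS solver of Corollary~\ref{cor:qls-vtae}. The lemma gives $\mathcal{E}(\vec{i}) = \nrm{W_G^{-1/2}\vec{i}}^2 = \nrm{\overline{C}_G^{+}\ket{0}\vec{i}_{ext}}^2$, so it suffices to produce an $\eps/4$-multiplicative estimate of $\nrm{\overline{C}_G^{+}\ket{0}\vec{i}_{ext}}$ and then square it, rescaling by the known quantity $\nrm{\vec{i}_{ext}}$. First I would apply the complementing-block-encoding lemma to the given $(\alpha,a,\delta)$-block-encoding of $C_G$ to obtain an $(\alpha,a+1,\delta)$-block-encoding of $\overline{C}_G$ at no asymptotic cost, and reinterpret it as an $(\alpha/\beta,a+1,\delta/\beta)$-block-encoding of the normalized operator $H := \overline{C}_G/\beta$, where $\beta := \sqrt{2dw_{\max}}$; by~\eqref{eq:lambda2} we have $\beta\geq\nrm{\overline{C}_G}$, so $\nrm{H}\leq 1$. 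The input state $\ket{\psi}:=\ket{0}\vec{i}_{ext}/\nrm{\vec{i}_{ext}}$ is prepared in cost $\bigO{T_{\vec{i}_{ext}}}$ using $U_{\vec{i}_{ext}}$.

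Next I would verify the two structural facts required by the corollary. The non-zero eigenvalues of $H$ are $\pm\sqrt{\lambda_j(L_G)}/\beta$ for $j\geq 2$, so they lie in $[-1,-1/\kappa_H]\cup[1/\kappa_H,1]$ for $\kappa_H := \beta/\sqrt{\lambda_2(L_G)} \leq \sqrt{2dw_{\max}/\lambda}$, where the lower bound $\lambda_2(L_G)\geq\lambda_2(\mathcal{L}_G)\geq\lambda$ follows from $D_G\succeq I$ via the Courant--Fischer characterization applied to the codimension-one subspace $(D_G\mathbf{1})^\perp$. Moreover $\ket{\psi}$ lies entirely in $\mathrm{col}(\overline{C}_G)$: since $\vec{i}_{ext}^T\mathbf{1}=0$ and $G$ is connected, $\vec{i}_{ext}\in\mathrm{range}(L_G)=\mathrm{range}(C_G)$, whence $\ket{0}\vec{i}_{ext}\in\mathrm{col}(\overline{C}_G)$, so the overlap parameter is $\gamma=1$. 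Applying Corollary~\ref{cor:qls-vtae} with these parameters produces $\Gamma$ that is $(1\pm\eps/4)$-multiplicatively close to $\nrm{H^{+}\ket{\psi}}=(\beta/\nrm{\vec{i}_{ext}})\nrm{\overline{C}_G^{+}\ket{0}\vec{i}_{ext}}$, so the final estimator $(\Gamma\nrm{\vec{i}_{ext}}/\beta)^2$ is $\eps$-multiplicatively close to $\mathcal{E}(\vec{i})$ for sufficiently small constants. The stated complexity now follows by substituting $\kappa_H\leq\sqrt{2dw_{\max}/\lambda}$ and $\alpha/\beta\leq\alpha$ (the latter because $\beta\geq 1$) into the corollary's bound, collapsing the product $\log^3\kappa_H\cdot\log(\log\kappa_H/\delta)$ into the single $\log^4(dw_{\max}/\lambda)$ factor appearing in the target.

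The routine bookkeeping is precision-tracking: the hypothesis $\delta \in o(\eps\lambda/(dw_{\max}\log^2(dw_{\max}/(\eps\lambda))))$ induces an $H$-block-encoding error of $\delta/\beta$, which must satisfy the corollary's requirement $o(\eps/(\kappa_H^2\log^3(\kappa_H/\eps)))$; substituting $\kappa_H^2\leq 2dw_{\max}/\lambda$ and $\beta=\sqrt{2dw_{\max}}$ confirms compatibility up to the log factor already absorbed into the final $\log^4$. The main obstacle I anticipate is the clean propagation of logarithmic factors and the verification that squaring an $\eps/4$-multiplicative estimate produces an $\eps$-multiplicative estimate of the dissipated power, which is immediate for $\eps<1$ but should be spelled out. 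Finally, for the effective-resistance special case $\vec{i}_{ext}=\ket{s}-\ket{t}$, the input preparation oracle is not needed: the state $(\ket{s}-\ket{t})/\sqrt{2}$ is preparable in $\bigO{\log N}$ gates and $\nrm{\vec{i}_{ext}}=\sqrt{2}$ is known, so the $T_{\vec{i}_{ext}}$ term contributes only a polylogarithmic overhead.
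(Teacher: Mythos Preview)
Your proposal is correct and follows essentially the same route as the paper: reduce to estimating $\nrm{\overline{C}_G^{+}\ket{0}\vec{i}_{ext}}$ via Lemma~\ref{lemma:alternate_linear_systems}, verify that the external current lies in $\mathrm{col}(C_G)$ so $\gamma=1$, bound the condition number by $\sqrt{2dw_{\max}/\lambda}$ using $\lambda_2(L_G)\geq\lambda_2(\mathcal L_G)$, and invoke Corollary~\ref{cor:qls-vtae}. The only cosmetic differences are that the paper normalizes by $\nrm{C_G}$ (then uses $\nrm{C_G}\geq\sqrt 2$) rather than by your explicit $\beta=\sqrt{2dw_{\max}}$, and that the paper relies on the footnote to Theorem~\ref{thm:qls-vtaa} to handle the non-Hermitian $C_G$ rather than explicitly passing through the complementing-block-encoding lemma; neither changes the argument or the complexity.
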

%\begin{theorem}[Estimating dissipated power]\label{thm:dissipated}
%Fix $\epsilon\in(0,1)$, $w_{\max}\geq 1$, $\lambda >0$, and $d\geq 1$. Fix any $\delta$ in $o\left(\frac{\epsilon \lambda}{dw_{\max}\log^2\frac{dw_{\max}}{\epsilon\lambda}}\right)$.
%For a weighted network $G(V,E,w)$, with $|V|=N$, $|E|=M$, maximum degree $d$, $1\leq w_e\leq w_{\max}$ for all $e\in E$, and $\lambda_2({\cal L}_G)\geq \lambda$; and an external current $\vec{i}_{ext}\in\mathbb{R}^N$, suppose we are given the value $\nrm{\vec{i}_{ext}}=\mathrm{poly}(N)$, a unitary $U_{\vec i_{ext}}$ preparing a quantum state proportional to $\vec{i}_{ext}$ in complexity $T_{\vec{i}_{ext}}\geq \log\nrm{\vec{i}_{ext}}$, and an $(\alpha,a,\delta)$-block-encoding of $C_G$ that can be implemented in complexity $T_{C_G}$. Then the dissipated power of $\vec{i}_{ext}$ can be estimated to multiplicative accuracy $\epsilon$ with success probability at least $\frac{2}{3}$ in complexity 
%$$\bigO{\frac{1}{\epsilon}\sqrt{\frac{dw_{\max}}{\lambda}}\left({\alpha}\big(T_{C_G}+a\big)\log^2\frac{dw_{\max}}{\lambda\epsilon}+ T_{\vec{i}_{ext}}\right)\log^4{\frac{dw_{\max}}{\lambda}}}.$$
%In particular, if $\vec{i}_{ext}=\ket{s}-\ket{t}$, then we can estimate the effective resistance between $s$ and $t$ in the given complexity, even without assuming an input oracle for state preparation.
%\anote{I find the condition $T_{\vec{i}_{ext}}\geq \log\nrm{\vec{i}_{ext}}$ weird, do we really need it? Also $\eps$ and $\epsilon$ are really easy to mix, should not we use $\delta$ instead of $\eps$?}
%\end{theorem}
\begin{proof}
By Lemma \ref{lemma:alternate_linear_systems} it suffices to compute $\nrm{C_G^+\vec{i}_{ext}}^2$. %We will actually estimate $\nrm{C_G^+\vec{i}_{ext}}$ to multiplicative accuracy $1\pm\epsilon/3$. 
%If $\tilde{x}$ is an estimate of $\nrm{C_G^+\vec{i}_{ext}}$ such that 
%$$\nrm{C_G^+\vec{i}_{ext}}(1-\epsilon/3)\leq \tilde{x}\leq \nrm{C_G^+\vec{i}_{ext}} (1+\epsilon/3),$$
%then
%\begin{eqnarray*}
	%\nrm{C_G^+\vec{i}_{ext}}^2 (1-2\epsilon/3+\epsilon^2/9)\leq & \tilde{x}^2 & \leq \nrm{C_G^+\vec{i}_{ext}}^2 (1+2\epsilon/3+\epsilon^2/9)\\
	%\nrm{C_G^+\vec{i}_{ext}}^2 (1-\epsilon)\leq & \tilde{x}^2 & \leq \nrm{C_G^+\vec{i}_{ext}}^2 (1+\epsilon),
%\end{eqnarray*}
%so $\tilde{x}^2$ is an estimate of the dissipated power $\nrm{C_G^+\vec{i}_{ext}}^2$, with multiplicative error $1\pm\epsilon$. 
We will actually estimate $\nrm{C_G^+\vec{i}_{ext}}$ to $\epsilon/3$-multiplicative accuracy, 
yielding an $\epsilon$-multiplicative estimate of $\nrm{C_G^+\vec{i}_{ext}}^2$. 

We first note that for any external current $\vec{i}_{ext}$, $\vec{i}_{ext}\in \mathrm{col}(C_G)$. This is because the entries of $\vec{i}_{ext}$ must sum to 0, meaning it is orthogonal to the uniform vector. Since $\lambda_2( L_G)\geq \lambda_2({\cal L}_G)>0$, the uniform vector is the unique 0-eigenvector of $L_G$, so $\vec{i}_{ext}\in\mathrm{col}(L_G)=\mathrm{col}(C_G)$, since $L_G=C_GC_G^T$. 

By \eqref{eq:lambda2}, the condition number of $L_G$ is at most $2dw_{\max}/\lambda_2(L_G)\leq 2dw_{\max}/\lambda_2({\cal L}_G)\leq 2dw_{\max}/\lambda$, and since $L_G=C_GC_G^T$, the condition number of $C_G$ is at most $\kappa=\sqrt{2dw_{\max}/\lambda}$. Thus, the eigenvalues of $C_G/\nrm{C_G}$ lie in $[1/\kappa,1]$, and we have an $(\alpha/\nrm{C_G},a,0)$-block-encoding of $C_G/\nrm{C_G}$, so by Corollary \ref{cor:qls-vtae}, we can estimate $\nrm{C_G^+\vec{i}_{ext}}/\nrm{\vec{i}_{ext}}$ to multiplicative accuracy $\epsilon/3$ in complexity %(neglecting constants):
$$\bigO{\frac{\kappa}{\epsilon}\left(\frac{\alpha}{\nrm{C_G}}\big(T_{C_G}+a\big)\log^2\frac{\kappa}{\epsilon}+ T_{\vec{i}_{ext}}\right)\log^4\kappa}.$$
%=\frac{1}{\epsilon}\sqrt{\frac{dw_{\max}}{\lambda}}\left(\alpha(a+T_{C_G})\log^2\frac{dw_{\max}}{\lambda\epsilon}+ T_{\vec{i}_{ext}}\right)\log^4{\frac{dw_{\max}}{\lambda}}.$$
%We can scale our estimate by $\nrm{\vec{i}_{ext}}$ to get an estimate of $\nrm{C^+_G\vec{i}_{ext}}$, which is also correct to multiplicative accuracy $\epsilon/3$.
Observe that for any $e\in E$, $\sqrt{2}\leq \nrm{C_G\ket{e}}\leq \nrm{C_G}$; also $\kappa\leq\sqrt{\frac{dw_{\max}}{\lambda}}$ concluding the proof.
\end{proof}

In Ref.~\cite{wang2017efficient}, Wang considers estimating the dissipated power in an input model that assumes a constant-complexity procedure for generating a state proportional to $\vec{i}_{ext}$, and allows sparse access to $C_G$, whose sparsity is $d$, in constant complexity. Since sparse access can be used to implement a $(d,\mathrm{polylog}(MN/\delta),\delta)$-block-encoding of $C_G$ in complexity $\mathrm{polylog}(MN/\delta)$, we have the following corollary.

\begin{corollary}[Estimating dissipated power in the sparse-access model]\label{cor:dissipated-sparse}
Fix parameters as in Theorem \ref{thm:dissipated}, and assume sparse access to $C_G$, access to the value $\nrm{\vec{i}_{ext}}$, and query access to a subroutine that generates a state proportional to $\vec{i}_{ext}$. Then there is a quantum algorithm that estimates the dissipated power of $\vec{i}_{ext}$ to multiplicative accuracy $\epsilon$ with bounded error in query and gate complexity 
$$\bigOt{\frac{d^{3/2}}{\epsilon}\sqrt{\frac{w_{\max}}{\lambda}}\mathrm{polylog}(N)}.$$ 
In particular, if $\vec{i}_{ext}=\ket{s}-\ket{t}$, then we can estimate the effective resistance between $s$ and $t$ in the given complexity, even without assuming an input oracle for state preparation.
\end{corollary}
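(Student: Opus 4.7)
The plan is to reduce directly to Theorem~\ref{thm:dissipated} by constructing a block-encoding of $C_G$ from the sparse oracles. First, I would observe that the structure of $C_G = B_G\sqrt{W_G}$ determines its sparsity: each column of $B_G$ has exactly two non-zero entries (corresponding to the two endpoints of an edge), so $C_G$ has column-sparsity $s^c = 2$; each row of $B_G$ corresponds to a vertex, so it has at most $d$ non-zero entries, giving row-sparsity $s^r \leq d$. Given sparse access to $B_G$ and access to the weights $w_e$, one obtains sparse access to $C_G$ with a constant-factor overhead (multiplying each queried entry by $\sqrt{w_e}$).

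The second step is to apply Lemma~\ref{lem:sparse-block}, which yields an $(\alpha, a, \delta)$-block-encoding of $C_G$ with $\alpha = \sqrt{s^r s^c} = O(\sqrt{d})$, ancilla count $a = \polylog(MN/\delta)$, and implementation cost $T_{C_G} = \polylog(MN/\delta)$, using $\bigO{1}$ queries to the sparse oracles per invocation. Choosing $\delta$ to be the polynomially small quantity required by Theorem~\ref{thm:dissipated}, i.e.\ $\delta = o\!\left(\epsilon\lambda/(dw_{\max}\log^2(dw_{\max}/(\epsilon\lambda)))\right)$, the polylogarithmic factors in the block-encoding cost remain polylogarithmic in $N$, $d$, $w_{\max}$, $1/\lambda$ and $1/\epsilon$.

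The third step is to substitute the parameters $\alpha = O(\sqrt{d})$, $a = \polylog(MN/\delta)$, $T_{C_G} = \polylog(MN/\delta)$ and $T_{\vec{i}_{ext}} = O(1)$ (by hypothesis, and using $\nrm{\vec{i}_{ext}} = \sqrt{2}$ in the effective-resistance special case, where no input oracle is needed) into the complexity bound of Theorem~\ref{thm:dissipated}. This gives
\[
\bigO{\frac{1}{\epsilon}\sqrt{\frac{dw_{\max}}{\lambda}}\cdot \sqrt{d}\cdot \polylog\!\left(\frac{MN\,dw_{\max}}{\lambda\epsilon}\right)} = \bigOt{\frac{d^{3/2}}{\epsilon}\sqrt{\frac{w_{\max}}{\lambda}}\polylog(N)},
\]
where we absorb $\log$ factors into $\widetilde{\mathcal{O}}$ and use $M = \bigO{N^2}$, $d,w_{\max},1/\lambda \leq \mathrm{poly}(N)$ to collapse the polylogarithms into $\polylog(N)$.

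There is essentially no obstacle; the only thing worth double-checking is that the precision $\delta$ of the sparse-access block-encoding can be made small enough at only polylogarithmic cost (it can, since Lemma~\ref{lem:sparse-block} is $\polylog(MN/\delta)$ in $\delta$), and that the sparsity bound $s^r\leq d$ uses only the maximum-degree assumption and not some additional structural hypothesis. The effective-resistance special case follows by specialising $\vec{i}_{ext}=\ket{s}-\ket{t}$, whose norm is $\sqrt{2}$ and which can be prepared in $\bigO{1}$ gates, so no external state-preparation oracle is required.
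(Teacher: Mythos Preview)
Your approach is essentially the same as the paper's: build a block-encoding of $C_G$ from the sparse-access oracles via Lemma~\ref{lem:sparse-block} and plug into Theorem~\ref{thm:dissipated}. The paper simply takes the sparsity of $C_G$ to be $d$ and uses an $(d,\polylog(MN/\delta),\delta)$-block-encoding, which yields $\alpha=d$ and hence the $d^{3/2}$ factor directly.

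One remark: your finer observation that $C_G$ has column-sparsity $s^c=2$ and row-sparsity $s^r\le d$ is correct, and with Lemma~\ref{lem:sparse-block} gives $\alpha=\sqrt{s^rs^c}=O(\sqrt{d})$. But then your final arithmetic is off: $\sqrt{dw_{\max}/\lambda}\cdot\sqrt{d}=d\sqrt{w_{\max}/\lambda}$, not $d^{3/2}\sqrt{w_{\max}/\lambda}$. So your argument actually proves the \emph{stronger} bound $\bigOt{\frac{d}{\epsilon}\sqrt{w_{\max}/\lambda}\,\polylog(N)}$, which of course still implies the stated $d^{3/2}$ corollary. The paper does not exploit the column-sparsity-$2$ structure and states the coarser $d^{3/2}$ bound.
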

Our algorithm in the sparse-access input model compares favourably with Wang's algorithm that is also based on inverting $C_G$, which has complexity $\bigOt{\dfrac{w_{\mathrm{max}}d^2}{\lambda\epsilon}\mathrm{polylog}(N)}.$ However, Wang presents a second algorithm for estimating the dissipated power that uses quantum-walk-based techniques. Our result also improves on this second algorithm in some parameter regimes. We discuss this further at the end of this section. 

Our block-encoding result can also be applied to the case when the input is given as a quantum data structure, in which case, the value $\nrm{\vec{i}_{ext}}$ can be easily read off the root of the tree that stores the entries of $\vec{i}_{ext}$ in a quantum data structure:
\begin{corollary}[Estimating dissipated power in quantum data structure model]\label{cor:dissipated-qram1}
Fix parameters as in Theorem \ref{thm:dissipated}, and assume $\vec{i}_{ext}$ is stored in a quantum data structure and either (1) $C_G$ is stored in a quantum data structure, in which case, let $\mu(C_G)=\nrm{C_G}_F$; or (2) $C_G^{(p)}$ and $C_G^{(1-p)}$ are stored in quantum data structures, in which case, let $\mu(C_G)=\mu_p(C_G)$. Then there is a quantum algorithm that estimates the dissipated power of $\vec{i}_{ext}$ to multiplicative accuracy $\epsilon$ with bounded error in complexity 
$$\bigOt{\frac{\mu(C_G)}{\epsilon}\sqrt{\frac{dw_{\max}}{\lambda}}\mathrm{polylog}(N)}.$$
In particular, if $\vec{i}_{ext}=\ket{s}-\ket{t}$, then we can estimate the effective resistance between $s$ and $t$ in the given complexity, even without assuming an input oracle for state preparation.
\end{corollary}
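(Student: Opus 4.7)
The plan is to derive Corollary~\ref{cor:dissipated-qram1} as a direct instantiation of Theorem~\ref{thm:dissipated} in the quantum data structure input model, using Lemma~\ref{lem:kp} and Theorem~\ref{theorem:data_structure} to supply the ingredients that Theorem~\ref{thm:dissipated} requires. The three ingredients needed are: (i) a block-encoding of $C_G$, (ii) a unitary preparing a state proportional to $\vec{i}_{ext}$, and (iii) access to the value $\nrm{\vec{i}_{ext}}$.

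First I would invoke Lemma~\ref{lem:kp}. In case (1), where $C_G$ itself is stored in a quantum-accessible data structure, the lemma yields unitaries $U_L,U_R$ such that $U_R^\dagger U_L$ is a $(\nrm{C_G}_F,\lceil\log(M+N)\rceil,\delta)$-block-encoding of $\overline{C_G}$, implementable in time $\polylog(MN/\delta)$; in case (2), with $C_G^{(p)}$ and $(C_G^{(1-p)})^\dagger$ stored in the data structure, we analogously obtain a $(\mu_p(C_G),\lceil\log(M+N+1)\rceil,\delta)$-block-encoding of $\overline{C_G}$ with the same cost. Setting $\alpha=\mu(C_G)$ and $a=\polylog(MN/\delta)$, both cases give an $(\mu(C_G),a,\delta)$-block-encoding of $\overline{C_G}$ implementable in cost $T_{C_G}=\polylog(MN/\delta)$. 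Since Theorem~\ref{thm:dissipated} is really applied via the $\overline{C_G}$ pseudoinverse (through Lemma~\ref{lemma:alternate_linear_systems}), this is exactly the form we need.

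Next, Theorem~\ref{theorem:data_structure} gives that $\vec{i}_{ext}$, stored in a quantum-accessible data structure, can be prepared as the normalized state $\sum_v \vec{i}_{ext}(v)\ket{v}/\nrm{\vec{i}_{ext}}$ in time $T_{\vec{i}_{ext}}=\polylog(N/\delta)$. Moreover, the classical root of the binary tree storing $\vec{i}_{ext}$ contains $\nrm{\vec{i}_{ext}}^2$ (up to the sign convention in the data structure), so the value $\nrm{\vec{i}_{ext}}$ can be read off in time $\polylog(N)$ with no need for an auxiliary oracle. For the effective resistance special case with $\vec{i}_{ext}=\ket{s}-\ket{t}$, the state preparation is trivial and $\nrm{\vec{i}_{ext}}=\sqrt{2}$, so the assumption on the data structure for $\vec{i}_{ext}$ is unnecessary, matching the corollary's final remark.

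Finally, I would plug these quantities into the complexity bound of Theorem~\ref{thm:dissipated}: with $\alpha=\mu(C_G)$, $a=T_{C_G}=T_{\vec{i}_{ext}}=\polylog(MN/\delta)$, and $\delta=\littleo{\eps\lambda/(dw_{\max}\log^2(dw_{\max}/(\eps\lambda)))}$, the leading-order cost becomes
\[
\bigO{\frac{1}{\eps}\sqrt{\frac{dw_{\max}}{\lambda}}\left(\mu(C_G)\cdot\polylog(MN/\delta)\log^2\tfrac{dw_{\max}}{\lambda\eps}+\polylog(N/\delta)\right)\log^4\tfrac{dw_{\max}}{\lambda}},
\]
which simplifies to the claimed $\bigOt{\frac{\mu(C_G)}{\eps}\sqrt{dw_{\max}/\lambda}\,\polylog(N)}$. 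There is essentially no obstacle beyond bookkeeping: the real work was done in Theorem~\ref{thm:dissipated} (which in turn rests on the variable-time amplitude estimation result Corollary~\ref{cor:qls-vtae}), and the only role of this corollary is to confirm that quantum-accessible data structures supply inputs of the required form and cost. The one point to check carefully is that the precision $\delta$ required by Theorem~\ref{thm:dissipated} is absorbed by the $\polylog$ factors in the data structure access, which it is since $\delta$ is inverse polynomial in the relevant parameters.
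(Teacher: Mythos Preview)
Your proposal is correct and follows exactly the approach the paper implicitly intends: the paper does not spell out a proof for this corollary, but the surrounding text (including the remark just before the corollary that $\nrm{\vec{i}_{ext}}$ can be read from the root of the data structure tree, and the parenthetical after Lemma~\ref{lem:kp} that one may replace $\overline{A}$ by $A$) makes clear that the intended argument is precisely to feed Lemma~\ref{lem:kp} and Theorem~\ref{theorem:data_structure} into Theorem~\ref{thm:dissipated}, which is what you do. Your bookkeeping on $\delta$, on $M\leq N^2$ so that $\polylog(MN)=\polylog(N)$, and on the effective-resistance special case is all in order.
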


In the quantum data structure model, it may be more natural to assume that the weights $W_G$ and the incidence matrix $B_G$ are stored separately. In that case, we get the following.
\begin{corollary}[Estimating dissipated power in quantum data structure model, alternative input]\label{cor:dissipated-qram2}
Fix parameters as in Theorem \ref{thm:dissipated}, and assume $\vec{i}_{ext}$ is stored in a quantum data structure, $w$ is stored in QROM so that we can compute $\ket{e}\mapsto \ket{e}\ket{w_e}$ in polylog$(M)$ complexity, and either (1) $B_G$ is stored in a quantum data structure, in which case, let $\mu(B_G)=\nrm{B_G}_F$; or (2) $B_G^{(p)}$ and $B_G^{(1-p)}$ are stored in quantum data structures, in which case, let $\mu(B_G)=\mu_p(B_G)$. Then there is a quantum algorithm that estimates the dissipated power of $\vec{i}_{ext}$ to multiplicative accuracy $\epsilon$ with bounded error in complexity 
$$\bigOt{\frac{\mu(B_G)w_{\max}}{\epsilon}\sqrt{\frac{d}{\lambda}}\mathrm{polylog}(N)}.$$
In particular, if $\vec{i}_{ext}=\ket{s}-\ket{t}$, then we can estimate the effective resistance between $s$ and $t$ in the given complexity, even without assuming an input oracle for state preparation.
\end{corollary}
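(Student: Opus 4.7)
The plan is to reduce this to Theorem~\ref{thm:dissipated} by constructing an efficient block-encoding of $C_G = B_G\sqrt{W_G}$ from the given inputs, and then appeal to the already-established machinery. The argument mirrors the alternative-input construction in Corollary~\ref{cor:WLS-alt}, where a block-encoding of $\sqrt{W}X$ was built from a block-encoding of $X$ together with QROM access to diagonal weights.

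First, since $B_G$ (or the pair $B_G^{(p)},B_G^{(1-p)}$) is stored in a quantum-accessible data structure, Lemma~\ref{lem:kp} gives unitaries $U_R,U_L$ implementable in complexity $\polylog(NM/\delta)$ such that $U_R^\dagger U_L$ is an $(\mu(B_G),\lceil\log(N+M+1)\rceil,\delta)$-block-encoding of $\overline{B_G}$. Next, using the QROM access $\ket{e}\mapsto\ket{e}\ket{w_e}$, I would modify the state-preparation unitaries from Lemma~\ref{lem:kp} so that, on the edge register, each basis state $\ket{e}$ is conditionally rotated to place an amplitude $\sqrt{w_e/w_{\max}}$ on a flag ancilla and $\sqrt{1-w_e/w_{\max}}$ on its orthogonal complement, after which the $\ket{w_e}$ register is uncomputed. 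Restricting to the ``good'' flag value yields, as in the proof of Lemma~\ref{lem:WLS-A-alt} (inside Corollary~\ref{cor:WLS-alt}), a $(\sqrt{w_{\max}}\mu(B_G),a',\delta')$-block-encoding of $\overline{C_G}$ for $a' = \bigO{\log(NM)}$ and $\delta'$ proportional to $\delta$, implementable in time $\polylog(NM/\delta)$. Roughly speaking, we are multiplying the Frobenius-style normalization $\mu(B_G)$ by $\sqrt{w_{\max}}$, which is exactly the cost of importing the diagonal weight matrix into the block-encoding.

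With this $(\sqrt{w_{\max}}\mu(B_G),\bigO{\log(NM)},\delta')$-block-encoding of $\overline{C_G}$ in hand, I plug $\alpha = \sqrt{w_{\max}}\mu(B_G)$ and $T_{C_G} = \polylog(N/\delta')$ into Theorem~\ref{thm:dissipated}. Since $\vec i_{ext}$ is stored in a quantum data structure, Theorem~\ref{theorem:data_structure} supplies both $\nrm{\vec i_{ext}}$ (read off the root of the tree) and a unitary $U_{\vec i_{ext}}$ preparing the corresponding quantum state in complexity $T_{\vec i_{ext}} = \polylog(N/\epsilon)$. Choosing $\delta'$ small enough (polynomially in $\epsilon\lambda/(dw_{\max})$) so that the hypothesis $\delta \in o\!\left(\epsilon\lambda/(dw_{\max}\log^2(dw_{\max}/(\epsilon\lambda)))\right)$ of Theorem~\ref{thm:dissipated} is met, the total complexity becomes
\[
\bigOt{\frac{1}{\epsilon}\sqrt{\frac{dw_{\max}}{\lambda}}\cdot\sqrt{w_{\max}}\,\mu(B_G)\polylog(N)}
= \bigOt{\frac{\mu(B_G)w_{\max}}{\epsilon}\sqrt{\frac{d}{\lambda}}\polylog(N)},
\]
which is the claimed bound. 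The special case $\vec i_{ext}=\ket s-\ket t$ requires no extra state-preparation oracle, since this vector can be generated in $\bigO{1}$ gates.

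The main obstacle, if any, is the careful accounting for errors when the weights $w_e$ are known only to polylogarithmic precision and are combined with a block-encoding that itself has $\delta$-error: one must verify that the resulting $\delta'$-error block-encoding of $\overline{C_G}$ satisfies the precision requirement of Theorem~\ref{thm:dissipated}, i.e.\ that $\delta'$ can be made small enough (with only logarithmic overhead) to match $o(\epsilon\lambda/(dw_{\max}\log^2(dw_{\max}/(\epsilon\lambda))))$. Since the construction of Lemma~\ref{lem:kp} and the conditional-rotation step both incur only $\polylog$ overhead in $1/\delta'$, this does not affect the stated complexity.
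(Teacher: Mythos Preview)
Your proposal is correct and follows essentially the same route as the paper: build a $(\sqrt{w_{\max}}\mu(B_G),\polylog(N),\delta)$-block-encoding of $C_G=B_G\sqrt{W_G}$ by modifying the Lemma~\ref{lem:kp} construction to incorporate the weights via a controlled rotation (exactly as in the alternate-input WLS setting), then invoke Theorem~\ref{thm:dissipated} with $\alpha=\sqrt{w_{\max}}\mu(B_G)$ and $T_{C_G}=\polylog(N/\delta)$. The paper's own proof says precisely this in two lines; your version simply spells out the construction and the error-tracking in more detail.
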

\begin{proof}
Similar to the proof of Lemma \ref{lem:kp},
we can implement a $(\sqrt{w_{\max}}\mu(B_G),\mathrm{polylog}(N),\delta)$-block-encoding of $C_G=B_G\sqrt{W_G}$ in complexity $\mathrm{polylog}(N/\delta)$. Then the result follows from Theorem~\ref{thm:dissipated}. 
\end{proof}
We note that due to the specific structure of $B_G$, $\mu(B_G)$ can likely be bounded in some cases, but we leave this for future work.

\paragraph{Comparison with previous work.} In the sparse-access model (Corollary~\ref{cor:dissipated-sparse}) our complexity is
\begin{equation}\label{eq:sparseComp}
\bigOt{\frac{d^{3/2}}{\epsilon}\sqrt{\frac{w_{\max}}{\lambda}}\mathrm{polylog}(N)}.
\end{equation}
This improves upon Wang's QLS based algorithm for estimating the dissipated power, which has complexity
\begin{equation}\label{eq:WangQLS}
\bigOt{\dfrac{w_{\mathrm{max}}d^2}{\lambda\epsilon}\mathrm{polylog}(N)}.
\end{equation} Wang also gives an alternative algorithm for estimating the dissipated power based on quantum walks, which has complexity:\anote{For future work: In a way quantum walks are based on block-encodings too. Can we improve his quantum walk as well?}
%$$\widetilde{\mathcal{O}}\left(\min\left\{\dfrac{\sqrt{w_{\mathrm{max}}}d^{3/2}}{\lambda\epsilon}\right.\right.,
%    \left.\left.\dfrac{w_{\mathrm{max}}\sqrt{d}}{\lambda^{3/2}\epsilon}\right\}\mathrm{polylog}(N)\right).$$

\begin{equation}\label{eq:WangQW}
\bigOt{\frac{\sqrt{dw_{\mathrm{max}}}}{\lambda\epsilon}\min\left\{d,\sqrt{\frac{w_{\mathrm{max}}}{\lambda}}\right\}\mathrm{polylog}(N)}.
\end{equation}

We also compare this complexity with our algorithm's complexity~\eqref{eq:sparseComp} by case separation:

\begin{itemize}
\item[(i)] When $d<\sqrt{w_{\max}/\lambda}$,  the complexity of Wang's algorithm~\eqref{eq:WangQW} is $\widetilde{\mathcal{O}}\left(\sqrt{w_{\max}}d^{3/2}\epsilon^{-1}\lambda^{-1}\right)$. Our complexity~\eqref{eq:sparseComp} is better by a factor of $\bigOt{1/\sqrt{\lambda}}$.

\item[(ii)] When $d>\sqrt{w_{\mathrm{max}}/\lambda}$, the complexity of Wang's algorithm~\eqref{eq:WangQW} is $\widetilde{\mathcal{O}}\left(w_{\mathrm{max}}\sqrt{d}\lambda^{-3/2}\epsilon^{-1}\right)$. Our complexity~\eqref{eq:sparseComp} has a worse dependence on $d$, but a better dependence on $w_{\max}$ and $\lambda$. We get a speedup as long as $\sqrt{w_{\max}/\lambda} < d \ll \sqrt{w_{\max}}/\lambda$, e.g., if $d=\Oo(\polylog(N))$, we get a speedup of $\bigOt{\sqrt{w_{\max}}/\lambda}$. 
\end{itemize}

We now consider our algorithm in the quantum data structure access model (Corollary \ref{cor:dissipated-qram1}) and compare it to Wang's algorithms.  Note that as Wang's algorithms are in the sparse-access input model, these are not directly comparable. Assume that we are in Case (1), in which case $\mu(C_G)=\nrm{C_G}_F\leq \nrm{C_G}\sqrt{N}$. The complexity of our algorithm in this model is
\begin{equation}\label{eq:qram1Comp}
\bigOt{\dfrac{1}{\epsilon}\sqrt{\dfrac{d N w_{\max}}{\lambda}}\polylog(N)}.
\end{equation}

As compared to Wang's algorithm based on linear systems~\eqref{eq:WangQLS}, our complexity~\eqref{eq:qram1Comp} is better for graphs with maximum degree $d\gg \sqrt[3]{N \lambda /w_{\max}}$.
With respect to Wang's quantum walk-based algorithm~\eqref{eq:WangQW} our complexity~\eqref{eq:qram1Comp} is better only in certain regimes. 

\begin{itemize}
\item[(i)] When $d<\sqrt{w_{\mathrm{max}}/\lambda}$, the complexity of Wang's algorithm~\eqref{eq:WangQW} is $\widetilde{\mathcal{O}}\left(\sqrt{w_{\mathrm{max}}}d^{3/2}\epsilon^{-1}\lambda^{-1}\right)$. Our complexity \eqref{eq:qram1Comp} is better as long as $\lambda \ll d^2/N$.

\item[(ii)] When $d>\sqrt{w_{\mathrm{max}}/\lambda}$, the complexity of Wang's algorithm~\eqref{eq:WangQW} is $\widetilde{\mathcal{O}}\left(w_{\mathrm{max}}\sqrt{d}\lambda^{-3/2}\epsilon^{-1}\right)$. Our complexity \eqref{eq:qram1Comp} is better as long as $\lambda\ll\sqrt{w_{\mathrm{max}}/N}.$ 
\end{itemize}

In Ref.~\cite{IJ15}, the authors developed a quantum algorithm for estimating effective resistance between $s$ and $t$, $R_{s,t}$, in the adjacency query model. Moreover, the weights of each edge are assumed to be in $\{0,1\}$. The algorithm estimates $R_{s,t}$ up to a multiplicative error $\epsilon$ in complexity

%$$\bigOt{\min\left\{\frac{N\sqrt{R_{s,t}}}{\epsilon^{3/2}},\frac{N\sqrt{R_{s,t}}}{\epsilon\sqrt{d\lambda}}\right\}}.$$
$$\bigOt{\frac{N\sqrt{R_{s,t}}}{\epsilon}\min\left\{\frac{1}{\sqrt{\epsilon}},\frac{1}{\sqrt{d\lambda}}\right\}}.$$
Although our models are not directly comparable to that of \cite{IJ15}, the complexity~\eqref{eq:qram1Comp} in the quantum data structure input model is better whenever $\lambda = \Omega(1)$ and $R_{s,t}\gg d^2w_{\mathrm{max}}/N$.  

%On the other hand, in the sparse-access input model~\eqref{eq:sparseComp}, our algorithm has a speedup as long as $d\ll\sqrt{N}R_{s,t}^{1/4}$.\anote{I do not think it is fair to compare something with adjacency model to our sparse-access results, I would just skip this last one. Adjacency query access has a data structure feel, I find that comparison more appropriate.}
%%%%%%%%%%%%%%%%%%%%%%%%%%%%%%%%%%%%%%%%%%%%%%%%%%%%%%%%%%%%%%%%%%%%%%%%%%%%%%%%%%%%%%%%%%%%%%%%%%%%

\section*{Acknowledgments}
The authors are grateful for Iordanis Kerendis, Anupam Prakash and Michael Walter for useful discussions.
This work was initiated when SC was a member of the Physics of Information and Quantum Technologies group at IT Lisbon and was visiting QuSoft, CWI Amsterdam. SC was then supported by DP-PMI and FCT (Portugal) through scholarship SFRH/BD/52246/2013 and by FCT (Portugal) through programmes PTDC/POPH/POCH and projects UID/EEA/50008/2013, IT/QuNet, ProQuNet, partially funded by EU FEDER, and from the JTF project NQuN (ID 60478).
SC is supported by the Belgian Fonds de la Recherche Scientifique - FNRS under grants no F.4515.16 (QUICTIME) and R.50.05.18.F (QuantAlgo). AG is supported by ERC Consolidator Grant QPROGRESS.  SJ is supported by an NWO WISE Grant and NWO Veni Innovational Research Grant under project number 639.021.752.

\bibliographystyle{alphaUrlePrint}
\bibliography{qc_gily}

\appendix

\section{Technical results about block-encodings}\label{app:proofs}

In this appendix we first prove some results about products of block-encodings, then we turn to smooth-functions of Hermitian matrices. 

In order to improve the complexity of multiplication of block-encoded matrices, we invoke a result about efficiently amplifying a subnormalized block-encoding, as proposed by Low and Chuang~\cite{LowChuangHamSpectraAmp2017}. The following result is proven in~\cite{gilyenBlockMatrices}.

\begin{lemma}[Uniform block-amplification]\label{lem:uniformBlockAmp}
	Let $A\in \mathbb{R}^{M\times N}$ such that $\nrm{A}\leq 1$. If $\alpha\geq 1$ and $U$ is a $(\alpha,a,\delta)$-block-encoding of $A$ that can be implemented in time $T_U$, then there is a $(\sqrt{2},a+1,\delta+\gamma)$-block-encoding of $A$ that can be implemented in time $\bigO{\alpha(T_U+a)\log(1/\gamma)}$.  
\end{lemma}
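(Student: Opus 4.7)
The plan is to invoke the quantum singular value transformation (QSVT) framework on the given block-encoding $U$ of $A$. Since $U$ is an $(\alpha,a,\delta)$-block-encoding, it effectively represents the subnormalized matrix $A/\alpha$, whose singular values lie in $[0,1/\alpha]$ (since $\nrm{A}\leq 1$). Our goal is to amplify these singular values by a factor of $\alpha/\sqrt{2}$, so that the resulting object is (approximately) a $(\sqrt{2},a+1,\delta+\gamma)$-block-encoding of $A$.

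The key analytic ingredient is a polynomial approximation result: there exists an odd polynomial $P\in\R[x]$ of degree $d=\bigO{\alpha\log(1/\gamma)}$, with $|P(x)|\leq 1$ for all $x\in[-1,1]$, such that
\[
\sup_{x\in[-1/\alpha,\,1/\alpha]}\left|P(x)-\tfrac{\alpha}{\sqrt{2}}x\right|\leq \gamma/\alpha.
\]
Such a polynomial can be built by smoothly truncating the linear function to $[-1,1]$ (for instance via an error-function smoothing) and then Chebyshev-truncating, essentially as in Low--Chuang's spectral amplification. I would cite the corresponding bound; the main step is just verifying that the degree scales as $\alpha\log(1/\gamma)$ and that the polynomial stays bounded by $1$ outside the ``interesting'' interval so that QSVT is applicable.

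Given this polynomial, I would then apply the QSVT construction (as formalized by Gilyén et al.) to $U$ with the phase sequence corresponding to $P$. This produces a unitary $U'$ acting on one extra ancilla qubit that is a $(1,a+1,\eps')$-block-encoding of $P^{SV}(A/\alpha)$, where $\eps'$ accounts for both the polynomial approximation error $\gamma/\alpha$ amplified to the matrix, and the $\delta$ error from $U$ propagated through $d$ uses, so that after rescaling by $\sqrt{2}$ (i.e.\ reading the block as a $(\sqrt{2},a+1,\cdot)$-encoding) the overall additive error on $A$ becomes at most $\delta+\gamma$. Concretely, one first uses robustness of QSVT to bound the error coming from $\delta$ by $\bigO{d\delta}$ on $P^{SV}(A/\alpha)$; a mild reparametrization (absorbing constants into $\gamma$) gives the stated $\delta+\gamma$ bound.

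The total cost is $d$ uses of controlled-$U$/$U^\dagger$, each of cost $T_U$, together with $\bigO{a}$ two-qubit gates per QSVT step to implement the reflections about $\ket{0}^{\otimes a}$, yielding the claimed $\bigO{\alpha(T_U+a)\log(1/\gamma)}$. The main obstacle is not the QSVT machinery itself (which is black-boxed from \cite{gilyenBlockMatrices}) but rather producing the polynomial $P$ with simultaneously the right degree, the right approximation quality on $[-1/\alpha,1/\alpha]$, and the uniform bound by $1$ on $[-1,1]$; here one has to be slightly careful near the ``kink'' at $x=\pm 1/\alpha$, which is why the smooth-truncation step is needed before Chebyshev truncation.
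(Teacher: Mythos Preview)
The paper does not actually prove this lemma: it states the result and cites \cite{gilyenBlockMatrices} for the proof (attributing the underlying idea to Low and Chuang~\cite{LowChuangHamSpectraAmp2017}). Your QSVT-based approach---approximating the amplification map $x\mapsto(\alpha/\sqrt{2})x$ on $[-1/\alpha,1/\alpha]$ by a bounded odd polynomial of degree $\bigO{\alpha\log(1/\gamma)}$ and then invoking singular value transformation---is precisely the construction carried out in that reference, so your proposal is correct and matches the intended proof.

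One small caution: your error analysis is a bit loose. The robustness bound for QSVT under an $\delta/\alpha$-perturbed block-encoding is not simply ``$\bigO{d\delta}$ then absorb constants''; the sharp bound in \cite{gilyenBlockMatrices} controls $\nrm{P^{SV}(\tilde A)-P^{SV}(A/\alpha)}$ directly in terms of $\delta/\alpha$ and the polynomial's Lipschitz behaviour on the relevant interval, and it is this that yields the clean $\delta+\gamma$ error after scaling by $\sqrt{2}$. You should cite that robustness lemma explicitly rather than hand-wave the reparametrization.
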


\nonSquareProdPreAmp*
\begin{proof}
	Using Lemma~\ref{lem:uniformBlockAmp} we can implement a unitary $\widetilde{U}$ that is a $(\sqrt{2},a+1,\delta+\gamma/2)$ block-encoding of $A$ in time $\bigO{\alpha\log(1/\gamma)(T_U+a)}$. Similarly we can implement a unitary $\widetilde{V}$ that is a $(\sqrt{2},b+1,\eps+\gamma/2)$ block-encoding of $B$ in time $\bigO{\beta\log(1/\gamma)(T_V+b)}$. Using Lemma~\ref{lemma:disjointAncillaProduct} we get a unitary $W$ that is a $(2,a+b+2,\sqrt{2}(\delta+\eps+\gamma))$ block-encoding of $AB$, that can be implemented in time $\bigO{\left(\alpha(T_U+a)+\beta(T_V+b)\right)\log(1/\gamma)}$.
\end{proof}

\subsection{Error propagation of block-encodings under various operations}	\label{app:error}
	In this subsection we present bounds on how the error propagates in block-encoded matrices when we perform multiplication or Hamiltonian simulation. 
	
	First we present some results about the error propagation when multiplying block-encodings in the special case when the encoded matrices are unitaries and their block-encoding does not use any extra scaling factor. In this case one might reuse the ancilla qubits, however it introduces an extra error term, which can be bounded by the geometrical mean of the two input error bounds. The following two lemmas can be found in the work of Gilyén et al.~\cite{gilyenBlockMatrices}. 
	%For completeness we repeat their proof.

\begin{lemma}
	If $U$ is an $(1,a,\delta)$-block-encoding of an $s$-qubit unitary operator $A$, and $V$ is an $(1,a,\eps)$-block-encoding of an $s$-qubit unitary operator $B$ then $UV$ is a $(1,a,\delta+\eps+2\sqrt{\delta\eps})$-block-encoding of the unitary operator $AB$.
\end{lemma}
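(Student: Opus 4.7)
The plan is to decompose the top-left block of $UV$ by inserting $I=\Pi+\Pi^\perp$ in the middle, where $\Pi := \ket{0}\bra{0}^{\otimes a}\otimes I_s$. Writing $\tilde{A}:=(\bra{0}^{\otimes a}\otimes I)U(\ket{0}^{\otimes a}\otimes I)$ and analogously $\tilde{B}$, this yields
$$(\bra{0}^{\otimes a}\otimes I)UV(\ket{0}^{\otimes a}\otimes I) = \tilde{A}\tilde{B} + E,$$
where $E:=(\bra{0}^{\otimes a}\otimes I)U\Pi^\perp V(\ket{0}^{\otimes a}\otimes I)$ is the cross term introduced by reusing the ancilla register across the two block-encodings. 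The triangle inequality then reduces the problem to bounding $\nrm{AB-\tilde{A}\tilde{B}}$ and $\nrm{E}$ separately.

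The first piece is straightforward and essentially reproduces the argument in the proof of Lemma~\ref{lemma:disjointAncillaProduct}: since $\nrm{B}=1$ and $\nrm{\tilde{A}}\leq 1$, the estimate $\nrm{AB-\tilde{A}\tilde{B}}\leq \nrm{A-\tilde{A}}\nrm{B}+\nrm{\tilde{A}}\nrm{B-\tilde{B}}\leq \delta+\eps$ is immediate. The key new ingredient — the main obstacle — is to bound $\nrm{E}$ by $2\sqrt{\delta\eps}$. The plan for this is to factor $\nrm{E}\leq \nrm{\Pi U \Pi^\perp}\cdot \nrm{\Pi^\perp V \Pi}$ and bound each factor using that $A$ and $B$ are \emph{unitary}.

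For the second factor: for any unit $\ket{\psi}$ on the system, $\nrm{\tilde{B}\ket{\psi}}\geq \nrm{B\ket{\psi}}-\eps = 1-\eps$, and since $V$ is unitary, $\nrm{\Pi V (\ket{0}^{\otimes a}\ket{\psi})}^2+\nrm{\Pi^\perp V (\ket{0}^{\otimes a}\ket{\psi})}^2=1$, giving $\nrm{\Pi^\perp V \Pi}\leq \sqrt{1-(1-\eps)^2}\leq \sqrt{2\eps}$. The bound on $\nrm{\Pi U \Pi^\perp}$ follows by the same argument applied to $U^\dagger$ — which is a $(1,a,\delta)$-block-encoding of $A^\dagger$ (also unitary) — using $\nrm{\Pi U \Pi^\perp}=\nrm{\Pi^\perp U^\dagger \Pi}\leq \sqrt{2\delta}$. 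Multiplying gives $\nrm{E}\leq 2\sqrt{\delta\eps}$, and combining with the first piece yields the claimed error $\delta+\eps+2\sqrt{\delta\eps}$. The tricky conceptual point is recognizing that unitarity of the encoded matrices forces the ``leakage'' into $\Pi^\perp$ to be small (of order $\sqrt{\delta}$, not $\delta$), which is precisely what makes ancilla-reuse work at the cost of only a geometric-mean cross term.
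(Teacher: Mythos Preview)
Your proof is correct and follows essentially the same approach as the paper's: insert $\Pi+\Pi^\perp$ between $U$ and $V$, bound the ``diagonal'' term $\tilde A\tilde B$ exactly as in Lemma~\ref{lemma:disjointAncillaProduct}, and control the cross term by exploiting unitarity of the encoded operators to bound the leakage into $\Pi^\perp$ by $\sqrt{2\delta}$ and $\sqrt{2\eps}$ respectively. The only cosmetic difference is that the paper argues state-by-state via Cauchy--Schwarz on $\bra{\phi}(\cdot)\ket{\psi}$, whereas you work directly with operator norms and the factorization $\nrm{E}\leq \nrm{\Pi U\Pi^\perp}\,\nrm{\Pi^\perp V\Pi}$; both routes are equivalent and yield the same bound.
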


The above lemma suggests that if we multiply together multiple block-encoded unitaries, the error may grow super-linearly. By analysing the spreading of errors following a binary tree structure, one can show~\cite{gilyenBlockMatrices} that the error increases at most quadratically with the number of factors in the product, as stated in the following corollary.

\begin{corollary}\label{cor:blockProductPrecision}
	Suppose that $U_j$ is an $(1,a,\eps)$-block-encoding of an $s$-qubit unitary operator $W_j$ for all $j\in [K]$.
	Then $\prod_{j=1}^K U_j$ is an $(1,a,4K^2\eps)$-block-encoding of $\prod_{j=1}^K W_j$.
\end{corollary}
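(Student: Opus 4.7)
The approach will be a standard binary-tree analysis. First observe that the recursive error bound from the preceding lemma can be rewritten very cleanly as
$$\delta+\eps+2\sqrt{\delta\eps}=\big(\sqrt{\delta}+\sqrt{\eps}\big)^{\!2},$$
so the ``error'' behaves additively under multiplication once we pass to square roots. This reformulation is what makes the subsequent induction so clean.

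Next I would split the product $\prod_{j=1}^K U_j$ according to a balanced binary tree with $K$ leaves, each leaf carrying one of the $U_j$. At every internal node we combine its two children via the preceding lemma; by the reformulation above, if the two children have error bounds $\eps_L$ and $\eps_R$ respectively, then the parent has error bound at most $(\sqrt{\eps_L}+\sqrt{\eps_R})^{\!2}$. Define $e_d$ to be the error bound at any node of depth $d$ above the leaves in a fully balanced tree; then $e_0=\eps$ and $e_{d+1}\le 4e_d$, giving $e_d\le 4^d\eps$ by a trivial induction on $d$.

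Since the tree has depth $\lceil\log_2 K\rceil$, the root carries error at most
$$4^{\lceil\log_2 K\rceil}\eps\;\le\;\bigl(2^{\lceil\log_2 K\rceil}\bigr)^{\!2}\eps\;\le\;(2K)^2\eps\;=\;4K^2\eps,$$
which yields the claimed bound. For $K$ not a power of two one can either pad with the identity (which trivially admits a $(1,a,0)$-block-encoding), or simply use an unbalanced tree in which the depth is still $\lceil\log_2 K\rceil$ and assign $e=0$ to any ``missing'' subtree; in either case the recursion $e_{d+1}\le 4e_d$ remains valid.

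The only real subtlety, and hence the step I expect to need the most care, is making sure that at each internal node one is genuinely in the hypotheses of the preceding lemma: the operator being encoded is a unitary (it is, as a product of unitaries $W_j$), the encoding uses the same number $a$ of ancillas throughout (it does, since each $U_j$ uses $a$ ancillas and the lemma preserves this count in the unitary-case reuse-ancilla construction), and the normalization factor stays $1$. Once these trivial bookkeeping points are verified, the bound $4K^2\eps$ follows immediately from the doubling recursion above.
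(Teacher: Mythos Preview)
Your proof is correct and follows the same binary-tree strategy that the paper indicates (and that its commented-out draft proof carries out). Your observation that $\delta+\eps+2\sqrt{\delta\eps}=(\sqrt{\delta}+\sqrt{\eps})^{2}$ is a clean reformulation: it makes the recursion $e_{d+1}\le 4e_d$ immediate and yields $4^{\lceil\log_2 K\rceil}\eps\le 4K^2\eps$ directly, whereas the paper's draft argues with a per-level factor of $3$ (using a slightly tighter version of the lemma) and then needs a small case analysis to reach the final $4K^2$ bound.
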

\begin{comment}
\begin{proof}
	First observe that for the product of two matrices we get the precision bound $3\eps$ by the above lemma. 
	If $K=2^k$ for some $k\in \mathbb{N}$. Then we can apply the above observation in a recursive fashion in a binary tree structure, to get the upper bound $3^k \eps$ on the precision, and observe that $3^k\leq 4^k=K^2$.
	
	If $K\leq 2^k$ we can just add identity operators so that we have $2^k$ matrices to multiply, which gives the precision bound $3^{\lceil\log_2 K\rceil}\eps$. We are almost done since $\forall K\in \mathbb{N}\setminus\{5\}$ we have that $3^{\lceil\log_2 K\rceil}\leq K^2$. This can be seen by case checking for $K\in [15]$, 
	% Checked by Mathematica: {Table[{k, k^2}, {k, 1, 15}], Table[{k, 3^Ceiling[Log[2, k]]} // Abs, {k, 1, 15}]} // ListPlot[#, PlotLegends -> {"Square", "Ceiling"}, ImageSize -> 600] &
	and observing that for $K\geq 16$ we have that $3^{\lceil\log_2 K\rceil}\leq 3^{\frac{5}{4}\lceil\log_2 K\rceil}=K^{\frac{5}{4}\log_2(3)}\leq K^2$. 
	% Checked by Mathematica: Log[3]/Log[2]*5/4 // N
	The $K=5$ case is also easy to check by hand.
\end{proof}
\end{comment}

The following lemma helps us to understand error accumulation in Hamiltonian simulation, which enables us to present a more generic claim in Theorem~\ref{thm:blockHamSim}.
\begin{lemma}\label{lemma:hamSimDiff}
	Suppose that $H,H'\in\mathbb{C}^s$ are Hermitian operators, then 
	$$\nrm{e^{it H}-e^{it H'}}\leq |t|\nrm{H-H'}.$$
\end{lemma}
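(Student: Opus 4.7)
The plan is to use the fundamental theorem of calculus together with a one-parameter homotopy between the two unitaries, reducing the difference of exponentials to an integral of a commutator-like expression that isolates $H - H'$.

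Concretely, I would define the auxiliary matrix-valued function $f\colon [0,1] \to \mathbb{C}^{s\times s}$ by
$$f(r) = e^{irtH} e^{i(1-r)tH'},$$
so that $f(1) = e^{itH}$ and $f(0) = e^{itH'}$. Differentiating (the two exponentials appear in a fixed order, so no Baker--Campbell--Hausdorff issues arise) gives
$$f'(r) = it\, e^{irtH}(H - H')\, e^{i(1-r)tH'}.$$
By the fundamental theorem of calculus,
$$e^{itH} - e^{itH'} = f(1) - f(0) = it \int_0^1 e^{irtH}(H - H')\, e^{i(1-r)tH'}\, dr.$$

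Taking the operator norm, using the triangle inequality for the Bochner-type integral, submultiplicativity of $\nrm{\cdot}$, and the fact that $e^{irtH}$ and $e^{i(1-r)tH'}$ are unitary (since $H, H'$ are Hermitian and $r, t \in \mathbb{R}$), hence have spectral norm $1$, yields
$$\nrm{e^{itH} - e^{itH'}} \leq |t| \int_0^1 \nrm{e^{irtH}} \cdot \nrm{H - H'} \cdot \nrm{e^{i(1-r)tH'}}\, dr = |t|\, \nrm{H - H'}.$$

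There is no serious obstacle here: the only thing to be careful about is that the two exponentials in $f(r)$ do not commute in general, which is why I interpolate by splitting the exponent between them as a product rather than trying something like $e^{it((1-r)H' + rH)}$ (where differentiation would require handling noncommuting factors inside a single exponential via Duhamel's formula). The chosen product form sidesteps this entirely.
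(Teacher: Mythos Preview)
Your proof is correct, and in fact more elementary than the paper's. The paper interpolates by putting the convex combination $H + x(H'-H)$ inside a \emph{single} exponential and then differentiates $x\mapsto e^{it(H+x(H'-H))}$ via the Duhamel/Karplus--Schwinger identity
\[
\frac{d}{dx}e^{A(x)}=\int_{0}^{1}e^{yA(x)}\,\frac{dA(x)}{dx}\,e^{(1-y)A(x)}\,dy,
\]
which produces a double integral before taking norms. Your product interpolation $f(r)=e^{irtH}e^{i(1-r)tH'}$ sidesteps this entirely: because each factor commutes with its own generator, ordinary product-rule differentiation gives $f'(r)=it\,e^{irtH}(H-H')e^{i(1-r)tH'}$ directly, and a single integral suffices. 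Both approaches finish the same way, using unitarity of the exponential factors. What your route buys is avoiding the Duhamel machinery; what the paper's route buys (in principle) is a framework that would still apply if one wanted to interpolate along a more general path of Hermitian operators, though that extra generality is not needed here.
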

\begin{proof}
	We recall a formula introduced by \cite{KarplusMicriWave,FeynmanOpCalc}, see also \cite[Page 181]{BellmanIntroMatrixAnal}:
	\begin{equation}\label{eq:expDerInt}
	\frac{d}{dx}e^{A(x)}=\int_{0}^{1}e^{y A(x)}\frac{d A(x)}{dx}e^{(1-y) A(x)} dy.
	\end{equation}
	Now observe that 
	\begin{align*}
	e^{it H'}-e^{it H}
	&=\int_{x=0}^{1} \frac{d}{dx}\left(e^{it(H + x(H'-H) )}\right)dx\\
	&=\int_{0}^{1} \int_{0}^{1} e^{y it(H + x(H'-H) )}it(H'-H)e^{(1-y) it(H + x(H'-H) )} dy dx. \tag{by \eqref{eq:expDerInt}}
	\end{align*}
	Finally using the triangle inequality we get that
	\begin{align*}
	\nrm{e^{it H'}-e^{it H}}
	&\leq\int_{0}^{1} \int_{0}^{1} \nrm{e^{y it(H + x(H'-H) )}it(H'-H)e^{(1-y) it(H + x(H'-H) )}} dy dx\\
	&=\int_{0}^{1} \int_{0}^{1} |t|\nrm{H'-H} dy dx\\
	&=|t|\nrm{H'-H}.\qedhere
	\end{align*}
\end{proof}

Now we restate the following result in order to better place its proof in context.

\optHamSim*
\begin{proof}
	Let $H'=\alpha(I\otimes \bra{0}^{\otimes a})U(I\otimes \bra{0}^{\otimes a})$, then $\nrm{H'-H}\leq \eps/|2t|$.
	By \cite[Theorem 1]{LowChuangQubitization2016} we can implement $V$ an $(1,a+2,\eps/2)$-block-encoding of $e^{itH'}$, with $\bigO{|\alpha t|+\log(1/\eps)}$ uses of controlled-$U$ or its inverse and with $\bigO{a|\alpha t|+a\log(1/\eps)}$ two-qubit gates. 
	By Lemma~\ref{lemma:hamSimDiff} we get that $V$ is an $(1,a+2,\eps)$-block-encoding of $e^{itH}$.
\end{proof}

Note that in order to get the optimal block-Hamiltonian simulation result, one can replace the $\log(1/\eps)$ term with the term $\frac{\log(1/\eps)}{\log(e+\log(1/\eps)/|\alpha t|)}$ in the above result and its proof. For more details see~\cite{gilyenBlockMatrices}.

\subsection{Implementing smooth functions of Block-Hamiltonians}\label{app:smooth}

Apeldoorn et al. developed some general techniques~\cite[Appendix B]{AGGW:SDP} that make it possible to implement smooth-functions of a Hamiltonian $H$, based on Fourier series decompositions and using the Linear Combinations of Unitaries (LCU) Lemma~\cite{BerryChilds:hamsimFOCS}. The techniques developed in~\cite[Appendix B]{AGGW:SDP} access $H$ only through controlled-Hamiltonian simulation, which we define in the following:

\begin{definition}\label{def:controlledSim}
	Let $M=2^J$ for some $J\in \mathbb{N}$, $\gamma\in\mathbb{R}$ and $\epsilon\geq0$. We say that the unitary 
	$$
	W:=\sum_{m=-M}^{M-1}\ketbra{m}{m}\otimes e^{im\gamma H}
	$$ 
	implements controlled $(M,\gamma)$-simulation of the Hamiltonian $H$, where $\ket{m}$ denotes a (signed) bitstring $\ket{b_Jb_{J-1}\ldots b_0}$ such that $m=-b_J2^J+\sum_{j=0}^{J-1}b_j2^j$. 
	%The unitary $\tilde{W}$ implements controlled $(M,\gamma,\eps)$-simulation of the Hamiltonian $H$, if
	%$$	\nrm{\tilde{W}-W}\leq \eps.	$$
\end{definition}

The following lemma shows what is the cost of implementing controlled Hamiltonian simulation, provided a block-encoding of $H$.

\begin{lemma}\label{lemma:controlledHamsin}
	Let $M=2^J$ for some $J\in \mathbb{N}$, $\gamma\in\mathbb{R}$ and $\epsilon\geq0$. Suppose that $U$ is an $(\alpha,a,\eps/|8 (J+1)^2 M \gamma|)$-block-encoding of the Hamiltonian $H$. Then we can implement a $(1,a+2,\eps)$-block-encoding of a controlled $(M,\gamma)$-simulation of the Hamiltonian $H$, with $\bigO{|\alpha  M\gamma|+J\log(J/\eps)}$ uses of controlled-$U$ or its inverse and with $\bigO{a|\alpha M\gamma|+aJ\log(J/\eps)}$ two-qubit gates.	
\end{lemma}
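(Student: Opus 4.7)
The plan is to decompose $e^{im\gamma H}$ using the binary expansion of $m$ and then invoke Theorem~\ref{thm:blockHamSim} for each bit separately. Specifically, writing $m = -b_J 2^J + \sum_{j=0}^{J-1} b_j 2^j$, we have
\[
e^{im\gamma H} = e^{-i b_J 2^J \gamma H}\prod_{j=0}^{J-1} e^{i b_j 2^j \gamma H},
\]
so the controlled simulation $W = \sum_m \ketbra{m}{m}\otimes e^{im\gamma H}$ factors as a product of $J+1$ operators, the $j$-th of which acts as $e^{\pm i 2^j \gamma H}$ controlled on the single bit $b_j$ (with the sign depending on whether $j=J$).

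For each $j\in\{0,1,\ldots,J\}$, I would apply Theorem~\ref{thm:blockHamSim} with simulation time $t_j = \pm 2^j\gamma$ and target precision $\eps' := \eps/(4(J+1)^2)$. This requires that the input unitary $U$ be an $(\alpha,a,\eps'/|2t_j|)$-block-encoding of $H$; the most stringent requirement comes from $j=J$, where $|t_j|=M|\gamma|$, giving the precision requirement $\eps'/(2M|\gamma|) = \eps/(8(J+1)^2 M|\gamma|)$, which matches the hypothesis. Theorem~\ref{thm:blockHamSim} then produces a $(1,a+2,\eps')$-block-encoding of $e^{it_j H}$ using $\bigO{|\alpha t_j|+\log(1/\eps')}$ calls to (controlled-)$U$ and $\bigO{a|\alpha t_j|+a\log(1/\eps')}$ two-qubit gates. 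Controlling each such simulation on $b_j$ adds only a constant factor.

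All $J+1$ block-encodings share the same $a+2$ ancilla qubits, so their composition is well-defined as a product of unitary block-encodings. Since each is a $(1,a+2,\eps')$-block-encoding of a unitary, Corollary~\ref{cor:blockProductPrecision} gives that the product is a $(1,a+2,4(J+1)^2 \eps')$-block-encoding of $\prod_j (\text{controlled-}e^{it_j H})$, which by the factorization above equals $W$; by choice of $\eps'$ the final error is at most $\eps$.

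Summing costs, the query complexity is
\[
\sum_{j=0}^{J} \bigO{|\alpha 2^j\gamma|+\log(1/\eps')} = \bigO{|\alpha M\gamma|+J\log(J/\eps)},
\]
since $\sum_{j=0}^{J} 2^j = 2M-1$ and $\log(1/\eps')=\log(4(J+1)^2/\eps)=\bigO{\log(J/\eps)}$. The two-qubit gate count follows by the same telescoping argument, with an extra factor of $a$. I do not expect any real obstacle here: the only delicate point is lining up the per-segment precision so that the $4(J+1)^2$ blow-up from Corollary~\ref{cor:blockProductPrecision} is exactly compensated by the $(J+1)^2$ factor in the denominator of the hypothesis, and verifying that no additional ancilla qubits beyond the two already introduced by Theorem~\ref{thm:blockHamSim} are needed when composing the segments.
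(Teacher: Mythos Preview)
Your proposal is correct and follows essentially the same approach as the paper: the binary decomposition of $W$ into $J+1$ singly-controlled simulations $e^{\pm i2^j\gamma H}$, per-segment invocation of Theorem~\ref{thm:blockHamSim} with target precision $\eps/(4(J+1)^2)$, composition via Corollary~\ref{cor:blockProductPrecision}, and the geometric-series cost summation all match the paper's argument exactly. Your check that the worst-case input precision (at $j=J$) lines up with the hypothesis is the same verification the paper performs implicitly.
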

\begin{proof}
	We use the result of Theorem~\ref{thm:blockHamSim}, which tells us that we can implement Hamiltonian simulation of $H$ for time $t\leq M\gamma$ with $\eps/(J+1)^2$ precision using
	\begin{equation}\label{eq:hamSimQueries}
	\bigO{|\alpha M\gamma|+\log(J/\eps)}
	\end{equation}
	uses of controlled-$U$ or its inverse and with
	\begin{equation}\label{eq:hamSimGates}
	\bigO{a|\alpha M\gamma|+a\log(J/\eps)}
	\end{equation}
	two-qubit gates.
	
	Now we write the sought unitary $W$ as the product of controlled Hamiltonian simulation unitaries.
	For $b\in \{0,1\}$ let us introduce the projector $\ketbra{b}{b}_j:=I_{2^j}\otimes\ketbra{b}{b}\otimes I_{2^{J-j}}$,
	where $J=\log(M)$. Observe that 
	\begin{equation}\label{eq:cleverW}
	W=\left(\ketbra{1}{1}_{J}\otimes e^{-i2^{J}\gamma H}+\ketbra{0}{0}_{J}\otimes I\right)\prod_{j=0}^{J-1}\left(\ketbra{1}{1}_{j}\otimes e^{i2^{j}\gamma H}+\ketbra{0}{0}_{j}\otimes I\right).
	\end{equation}		
	
	We can implement an $(1,a+2,\eps/(4(J+1)^2))$-block-encoding of the $j$-th operator $e^{\pm i2^{j}\gamma H}$ in the product~\eqref{eq:cleverW} with using 
	$\bigO{\alpha 2^j\gamma + \log\left(\frac{J}{\eps}\right)}$ queries \eqref{eq:hamSimQueries} and using $\bigO{a|\alpha 2^j\gamma|+a\log(J/\eps)}$ two-qubit gates by~\eqref{eq:hamSimGates}. By Corollary~\ref{cor:blockProductPrecision} we get the sought error bound. The complexity statement easily follows by adding up the complexities.
\end{proof}

Now we invoke~\cite[Theorem~40]{AGGW:SDP} about implementing smooth functions of Hamiltonians.
The theorem is stated slightly differently in order to adapt it to the terminology used here, but the the same proof applies as for \cite[Theorem~40]{AGGW:SDP}.

\begin{theorem}[Implementing a smooth function of a Hamiltonian]\label{thm:Taylor}
	Let $x_0\in\mathbb{R}$ and $r>0$ be such that $f(x_0+x)=\sum_{\ell=0}^{\infty} a_\ell x^\ell$ for all $x\in\![-r,r]$. 
	Suppose $B>0$ and $\delta\in(0,r]$ are such that $\sum_{\ell=0}^{\infty}(r+\delta)^\ell|a_\ell|\leq B$. 
	If $\nrm{H-x_0I}\leq r$ and $\eps'\in\!\left(0,\frac{1}{2}\right]$, then we can implement a unitary $\tilde{U}$ that is a $(B,a+\bigO{\log(r\log(1/\eps')/\delta)},B \eps')$-block-encoding of $f(H)$, with a single use of a circuit $V$ which is a $(1,a,\eps'/2)$-block-encoding of controlled $\left(\bigO{r\log(1/\eps')/\delta},\bigO{1/r}\right)$-simulation of $H$, and with using $\bigO{r/\delta\log\left(r/(\delta\eps')\right)\log\left(1/\eps'\right)}$ two-qubit gates. 
\end{theorem}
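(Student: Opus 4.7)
The plan is to approximate $f$ by a linear combination of complex exponentials in $H$ and then implement that combination via the Linear Combinations of Unitaries (LCU) technique using the controlled-simulation block-encoding $V$ provided by Lemma~\ref{lemma:controlledHamsin}. Since $H$ has spectrum in $[x_0-r,x_0+r]$, this turns evaluating $f(H)$ into a discrete Fourier synthesis problem in which the ``sample points'' are the Hamiltonian simulation times $\{m\gamma\}_{m=-M}^{M-1}$ that $V$ already implements in superposition.

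First I would truncate the Taylor series to a degree $K=\Theta(r\log(1/\eps')/\delta)$. The hypothesis $\sum_{\ell=0}^\infty |a_\ell|(r+\delta)^\ell\leq B$ combined with the identity $r^\ell=(r+\delta)^\ell\bigl(r/(r+\delta)\bigr)^\ell$ bounds the tail as
$$\sum_{\ell>K}|a_\ell|r^\ell \leq B\Bigl(\tfrac{r}{r+\delta}\Bigr)^{K+1}\leq B\,e^{-(K+1)\delta/(r+\delta)}\leq B\eps'/4.$$
Next I would replace each monomial $(x-x_0)^\ell$, for $|x-x_0|\leq r$ and $\ell\leq K$, by an approximation $\sum_{m=-M}^{M-1} c_{m,\ell}\,e^{im\gamma(x-x_0)}$ with $\gamma=\Theta(1/r)$, $M=\Theta(r\log(1/\eps')/\delta)$, and coefficients bounded in $\ell^1$ by $(r+\delta)^\ell$; this is a standard Fourier-series/Jackson-type approximation of $(x-x_0)^\ell$ on the slightly enlarged window $[-(r+\delta),(r+\delta)]$, whose Dirichlet truncation error on $[-r,r]$ can be driven below $\eps'(r+\delta)^\ell$ by the choice of $M$. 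Weighting by $a_\ell$ and summing over $\ell\leq K$ yields $f(x)\approx \sum_{m=-M}^{M-1}\tilde{c}_m e^{im\gamma(x-x_0)}$ with $\sum_m|\tilde{c}_m|\leq B$ and uniform error on $[x_0-r,x_0+r]$ at most $B\eps'/2$; by functional calculus this transfers to $f(H)$ directly.

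Finally, applying LCU: prepare the ancilla state $\ket{\phi}:=B^{-1/2}\sum_m \sqrt{|\tilde{c}_m|}\,e^{i\arg(\tilde{c}_m)}\ket{m}$ on the $\log_2(2M)$-qubit register indexing $m$, apply $V$ (the $(1,a,\eps'/2)$-block-encoding of $\sum_m\ketbra{m}{m}\otimes e^{im\gamma H}$), and uncompute with a final map sending $\ket{\phi}\mapsto\ket{0}$; the $\ket{0}$-flagged block of the resulting circuit equals $\frac{1}{B}\sum_m\tilde{c}_m e^{im\gamma H}$, up to the $\eps'/2$ error of $V$, so the whole construction is a $(B,a+O(\log(r\log(1/\eps')/\delta)),B\eps')$-block-encoding of $f(H)$, since the Fourier error contributes $B\eps'/2$ and the simulation error contributes at most $B\eps'/2$ after the normalization. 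The dominant gate cost is preparing $\ket{\phi}$, which requires $O(M)=O\bigl(\tfrac{r}{\delta}\log(r/(\delta\eps'))\log(1/\eps')\bigr)$ two-qubit gates by standard state-preparation techniques.

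The main obstacle is the joint $\ell^1$ control in the middle step: I need a \emph{single} choice of Fourier modes $\{e^{im\gamma x}\}_{m=-M}^{M-1}$ that simultaneously approximates every monomial $(x-x_0)^\ell$ with $\ell\leq K$ to precision that, when summed against $|a_\ell|(r+\delta)^\ell$, yields a total coefficient $\ell^1$-norm bounded by $B$ rather than by $KB$. Achieving this requires exploiting analyticity of $f$ on the disk of radius $r+\delta$ directly (so that the Fourier series of $f$ itself, not of each monomial separately, provides the expansion), so that the $(r+\delta)^\ell$ weights telescope through the Cauchy estimates into the single geometric factor that already appears in $B$. Once this uniform Fourier approximation is set up, the LCU assembly and the error accounting against $V$ are routine.
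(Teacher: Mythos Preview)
The paper does not actually prove this theorem; it simply invokes \cite[Theorem~40]{AGGW:SDP} and remarks that the same proof applies with the terminology adapted. Your sketch matches the strategy of that reference: truncate the Taylor series using the $(r+\delta)$-margin, approximate each surviving monomial on $[-r,r]$ by a short Fourier expansion in $e^{im\gamma x}$ with $\gamma=\Theta(1/r)$ and $M=\Theta\big(\tfrac{r}{\delta}\log(1/\eps')\big)$, and assemble via LCU against the controlled-simulation unitary~$V$.

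On the obstacle you flag: in \cite{AGGW:SDP} the $\ell^1$ control is in fact obtained monomial-by-monomial, not by Fourier-expanding $f$ directly. After rescaling $x\mapsto x/(r+\delta)$ one shows that $y^\ell$, for $y$ in the strictly sub-unit interval $\big[-\tfrac{r}{r+\delta},\tfrac{r}{r+\delta}\big]$, admits a Fourier approximation whose coefficient $\ell^1$-norm is $O(1)$ uniformly in $\ell\leq K$; undoing the rescaling gives weight $O((r+\delta)^\ell)$ for $x^\ell$, and summing against $|a_\ell|$ yields $O(B)$ with no factor of $K$. So the bound you already state in your middle paragraph is exactly what is needed, and the ``$KB$'' worry does not materialise once that per-monomial bound is in hand. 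One cosmetic point: $V$ block-encodes $\sum_m\ketbra{m}{m}\otimes e^{im\gamma H}$, so the scalar phase $e^{-im\gamma x_0}$ from the shift to $x-x_0$ should be folded into the LCU coefficients $\tilde c_m$.
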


Now we are ready to prove our result about implementing power functions of both negative and positive exponents.

\begin{corollary}\label{cor:NegativePowerCost}
	Let $\kappa\geq 2$, $c\in(0,\infty)$ and $H$ be an $s$-qubit Hamiltonian such that $I/\kappa\preceq H \preceq I$.\\
	Then we can implement a unitary $\tilde{U}$ that is a $(2\kappa^{c},a+\bigO{\log(\kappa^c\max\left(1,c\right)\log(\kappa^c/\eps))}, \eps)$-block-encoding of $H^{-c}$, with a single use of a circuit $V$ which is a $(1,a,\eps/(4\kappa^c))$-block-encoding of controlled $\left(\bigO{\kappa\max\left(1,c\right)\log(\kappa^c/\eps)},\bigO{1}\right)$-simulation of $H$, and with using $\bigO{\kappa\max\left(1,c\right)\log^2\left(\kappa^{1+c}/\eps\right)}$ two-qubit gates. 
\end{corollary}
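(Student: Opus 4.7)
The plan is to apply Theorem~\ref{thm:Taylor} directly to the function $f(x)=x^{-c}$. We take the expansion point to be $x_0=1$ and the radius to be $r=1-1/\kappa$; since the eigenvalues of $H$ lie in $[1/\kappa,1]$, this guarantees $\nrm{H-x_0 I}\leq r$. The function $x^{-c}$ is analytic on $(0,\infty)$ and in particular on the open ball of radius $1=x_0$ around $x_0$, so its Taylor expansion
\[
f(x_0+x) = (1+x)^{-c} = \sum_{\ell=0}^{\infty} \binom{-c}{\ell}\, x^{\ell}
\]
converges for $|x|<1$. The coefficients satisfy $|a_\ell|=\binom{c+\ell-1}{\ell}$, so for any $\delta\in(0,1-r)$ the generalized binomial series yields the closed form
\[
\sum_{\ell=0}^{\infty}(r+\delta)^\ell |a_\ell| \;=\; \bigl(1-(r+\delta)\bigr)^{-c} \;=\; (1/\kappa-\delta)^{-c}.
\]
This will serve as the value $B$ appearing in Theorem~\ref{thm:Taylor}.

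The next step is to choose $\delta$ so that $B\leq 2\kappa^{c}$. Inverting the formula above, this holds iff $1/\kappa-\delta \geq 2^{-1/c}/\kappa$, i.e.\ $\delta \leq (1-2^{-1/c})/\kappa$. The elementary estimate $1-2^{-1/c}=\Theta(\min(1,1/c))$ (using $2^{-1/c}\leq 1/2$ for $c\leq 1$ and $1-e^{-\ln 2/c}=\Theta(1/c)$ for $c\geq 1$) shows that this is achievable by taking $\delta=\Theta\bigl(1/(\kappa\max(1,c))\bigr)$. This is the one place where the $\max(1,c)$ factor in the final complexity enters, and it is the main --- but essentially routine --- computational step in the argument. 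With this choice, $r/\delta=\bigO{\kappa\max(1,c)}$ and $r+\delta<1=x_0$, so Theorem~\ref{thm:Taylor} is applicable with $B=2\kappa^c$.

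Finally, I feed these parameters into Theorem~\ref{thm:Taylor} with accuracy parameter $\eps':=\eps/(2\kappa^c)=\eps/B$, so that the resulting block-encoding has overall error $B\eps'=\eps$. This directly produces a $(2\kappa^c,a+\bigO{\log(r\log(1/\eps')/\delta)},\eps)$-block-encoding of $H^{-c}$; substituting the bounds on $r/\delta$ and $\eps'$ gives the ancilla count $a+\bigO{\log(\kappa^c\max(1,c)\log(\kappa^c/\eps))}$ stated in the corollary. The single use of $V$ is as a $(1,a,\eps'/2)=(1,a,\eps/(4\kappa^c))$-block-encoding of a controlled $(M,\gamma)$-simulation with $M=\bigO{r\log(1/\eps')/\delta}=\bigO{\kappa\max(1,c)\log(\kappa^c/\eps)}$ and $\gamma=\bigO{1/r}=\bigO{1}$, matching the statement. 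The remaining $\bigO{r/\delta\,\log(r/(\delta\eps'))\log(1/\eps')}$ two-qubit gates specialize to $\bigO{\kappa\max(1,c)\log^2(\kappa^{1+c}/\eps)}$, which is the claimed bound. Beyond the $B$-calibration above, every other step is pure bookkeeping.
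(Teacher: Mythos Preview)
Your proposal is correct and follows essentially the same route as the paper: expansion of $x^{-c}$ around $x_0=1$ with $r=1-1/\kappa$, choice $\delta=\Theta(1/(\kappa\max(1,c)))$ to force $B=(1/\kappa-\delta)^{-c}\leq 2\kappa^c$, and $\eps'=\eps/(2\kappa^c)$ before invoking Theorem~\ref{thm:Taylor}. The paper fixes the explicit value $\delta=1/(2\kappa\max(1,c))$ and verifies $(1-\tfrac{1}{2\max(1,c)})^{-c}\leq 2$ directly, whereas you solve for the admissible $\delta$ via the estimate $1-2^{-1/c}=\Theta(\min(1,1/c))$; the two arguments are equivalent.
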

\begin{proof}
	Let $f(y):=y^{-c}$ and observe that $f(1+x)=(1+x)^{-c}=\sum_{k=0}^{\infty}\binom{-c}{k}x^k$ for all $x\in(-1,1)$.
	We choose $x_0:=1$, $r:=1-1/\kappa$, $\delta:=1/(2\kappa\max\left(1,c\right))$, and observe that
	\begin{align*}
	\sum_{k=0}^{\infty}\left|\binom{-c}{k}\right|(r+\delta)^k
	&=\sum_{k=0}^{\infty}\left|\binom{-c}{k}\right|\left(1-\frac{1}{\kappa}+\frac{1}{2\kappa\max\left(1,c\right)}\right)^{\!\!k}\\
	&=\sum_{k=0}^{\infty}\binom{-c}{k}\left(\frac{1}{\kappa}\left(1-\frac{1}{2\max\left(1,c\right)}\right)-1\right)^{\!\!k}\\
	&=\kappa^c\left(1-\frac{1}{2\max\left(1,c\right)}\right)^{\!\!-c}\\	
	&\leq \underset{B:=}{\underbrace{2 \kappa^c}}.	
	%Checked by Mathemtica: (1 - 1/(2 Max[1, c]))^(-c) // Plot[{2, #, Sqrt[E]}, {c, 0, 5}] &
	\end{align*} By choosing $\eps':=\eps/(2\kappa^c)$ we get the results by invoking Theorem~\ref{thm:Taylor}.
\end{proof}

\negPower*
\begin{proof}
	By Lemma \ref{lemma:controlledHamsin}, we can implement a $(1,a+2,\frac{\eps}{4\kappa^c})$-block-encoding $V$ of $(t,\gamma)$-controlled simulation of $H$, for $t=\bigO{\kappa\mathrm{max}(1,c)\log\frac{\kappa^c}{\eps}}$ and $\gamma=\bigO{1}$, in cost
	$$T_V = \bigO{\left(\alpha t + \log t \log\frac{\kappa^c \log t}{\eps}\right)(a+T_U)}=\bigO{\left(\alpha \kappa (1+c) \log^2\frac{\kappa^{1+c} }{\eps}\right)(a+T_U)}.$$
	
	Then by Corollary \ref{cor:NegativePowerCost}, we can implement a $(2\kappa^c,a+\bigO{\log(\kappa^c\mathrm{max}(1,c)\log\frac{\kappa^c}{\eps}},\eps)$-block-encoding of $H^{-c}$ in gate complexity $T_V+\bigO{\kappa\mathrm{max}(1,c)\log^2\frac{\kappa^{1+c}\mathrm{max}(1,c)}{\eps}}$, which gives total cost:
	\begin{align*}
	\bigO{\left(\alpha \kappa (1+c) \log^2\frac{\kappa^{1+c} }{\eps}\right)(a+T_U)}.
	\end{align*}
\end{proof}

Similarly we prove a result about implementing power functions of positive exponents.
\begin{corollary}\label{cor:PositivePowerCost}
	Let $\kappa\geq 2$, $c\in(0,1]$ and $H$ be an $s$-qubit Hamiltonian such that $I/\kappa\preceq H \preceq I$.\\
	Then we can implement a unitary $\tilde{U}$ that is a $(2,a+\bigO{\log\log(1/\eps)}, \eps)$-block-encoding of $H^{c}$, with a single use of a circuit $V$ which is a $(1,a,\eps/4)$-block-encoding of controlled $\left(\bigO{\kappa\log(1/\eps)},\bigO{1}\right)$-simulation of $H$, and with using $\bigO{\kappa\log\left(\kappa/\eps\right)\log\left(1/\eps\right)}$ two-qubit gates. 
\end{corollary}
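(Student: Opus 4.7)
I will mirror the proof of Corollary~\ref{cor:NegativePowerCost}, replacing $f(y)=y^{-c}$ by $f(y)=y^{c}$ and invoking Theorem~\ref{thm:Taylor} with the Taylor expansion $f(1+x)=(1+x)^{c}=\sum_{k=0}^{\infty}\binom{c}{k}x^{k}$, which converges on $(-1,1)$. I will choose $x_{0}:=1$ and $r:=1-1/\kappa$, so that the assumption $I/\kappa\preceq H\preceq I$ gives $\nrm{H-x_{0}I}\leq r$.

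The crucial input to Theorem~\ref{thm:Taylor} is a bound of the form $\sum_{k=0}^{\infty}\left|\binom{c}{k}\right|(r+\delta)^{k}\leq B$ for a well-chosen $\delta\in(0,r]$. The main observation, and the place where the proof differs from that of Corollary~\ref{cor:NegativePowerCost}, is that for $c\in(0,1]$ the signs of $\binom{c}{k}$ satisfy $\binom{c}{0}=1$, $\binom{c}{1}=c\geq 0$, and for $k\geq 2$ they alternate as $\mathrm{sign}\!\binom{c}{k}=(-1)^{k-1}$. A short calculation using the binomial series for $(1-x)^{c}$ then yields the clean identity
$$
\sum_{k=0}^{\infty}\left|\binom{c}{k}\right|x^{k}=2-(1-x)^{c}\qquad\text{for }x\in[0,1),
$$
which I will verify by combining the expansions of $(1+x)^{c}$ and $(1-x)^{c}$. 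Setting $\delta:=1/(2\kappa)$, so that $r+\delta=1-1/(2\kappa)<1$, this identity gives
$$
\sum_{k=0}^{\infty}\left|\binom{c}{k}\right|(r+\delta)^{k}=2-\left(\tfrac{1}{2\kappa}\right)^{c}\leq 2,
$$
so I can take $B:=2$, independent of $\kappa$ (which is precisely why the normalization of the block-encoding is $2$ rather than $2\kappa^{c}$ as in the negative-power case).

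Now I apply Theorem~\ref{thm:Taylor} with $B=2$, $r=1-1/\kappa$, $\delta=1/(2\kappa)$, and $\eps':=\eps/2$ (so that $B\eps'=\eps$). Computing the parameters: $r/\delta=2(\kappa-1)=\bigO{\kappa}$ and $1/r=\bigO{1}$ (since $\kappa\geq 2$), which produces an $(1,a,\eps'/2)=(1,a,\eps/4)$-block-encoding of controlled $(\bigO{\kappa\log(1/\eps)},\bigO{1})$-simulation of $H$ as the only subroutine, together with $\bigO{(r/\delta)\log(r/(\delta\eps'))\log(1/\eps')}=\bigO{\kappa\log(\kappa/\eps)\log(1/\eps)}$ auxiliary two-qubit gates. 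This matches the stated complexities exactly.

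The only remaining step is a small subtlety. Theorem~\ref{thm:Taylor} gives an ancilla overhead of $\bigO{\log(r\log(1/\eps')/\delta)}=\bigO{\log(\kappa\log(1/\eps))}$, whereas the corollary advertises $\bigO{\log\log(1/\eps)}$. I expect this to be the main (minor) obstacle: it is resolved by noting that the $\kappa$-dependent part of the ancilla register encodes the truncation index of the Fourier/Taylor series used inside Theorem~\ref{thm:Taylor}, and this register can be absorbed into the ancilla of $V$ itself once one folds the controlled-simulation parameter $M=\bigO{\kappa\log(1/\eps)}$ into the definition of $V$; only the additional $\bigO{\log\log(1/\eps)}$ qubits used for the LCU-style truncation beyond $V$ need to be counted separately in the outer block-encoding. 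I will spell this bookkeeping out carefully to justify the stated ancilla count.
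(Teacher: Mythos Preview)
Your core argument is correct and essentially identical to the paper's: same $x_0=1$, $r=1-1/\kappa$, same sign analysis of $\binom{c}{k}$ for $c\in(0,1]$, and the same invocation of Theorem~\ref{thm:Taylor} with $B=2$ and $\eps'=\eps/2$. The only cosmetic difference is that the paper takes $\delta:=1/\kappa$, so that $r+\delta=1$ and one gets the exact identity
\[
\sum_{k=0}^{\infty}\left|\binom{c}{k}\right|
=1-\sum_{k\geq 1}\binom{c}{k}(-1)^k
=2-(1+(-1))^{c}=2,
\]
whereas you take $\delta=1/(2\kappa)$ and use $\sum_k|\binom{c}{k}|x^k=2-(1-x)^c\leq 2$ at $x=1-1/(2\kappa)$. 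Either works; the paper's choice is marginally cleaner but changes nothing in the $\bigO{\cdot}$ bounds.

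On the ancilla count: you are right that Theorem~\ref{thm:Taylor} as stated yields $a+\bigO{\log(\kappa\log(1/\eps))}$ ancillas, not $a+\bigO{\log\log(1/\eps)}$. However, the paper's own proof does not address this either --- it simply invokes Theorem~\ref{thm:Taylor} and stops --- so this appears to be a minor oversight in the corollary statement rather than something the paper actually resolves. Your proposed fix (``absorb the control register into the ancilla of $V$'') does not quite work as written: the $\bigO{\log M}$-qubit control register is part of the \emph{system} on which $V$ acts (cf.\ Definition~\ref{def:controlledSim}), not part of $V$'s $a$ ancilla qubits, so it cannot be hidden inside the ``$a$'' of the $(1,a,\eps/4)$-block-encoding without changing the meaning of that parameter. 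The honest count from Theorem~\ref{thm:Taylor} is $a+\bigO{\log(\kappa\log(1/\eps))}$, matching the pattern in Corollary~\ref{cor:NegativePowerCost}; the discrepancy does not affect any downstream complexity bounds in the paper.
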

\begin{proof}
	Let $f(y):=y^{c}$ and observe that $f(1+x)=(1+x)^{c}=\sum_{k=0}^{\infty}\binom{c}{k}x^k$ for all $x\in[-1,1]$.
	We choose $x_0:=1$, $r:=1-1/\kappa$, $\delta:=1/\kappa$, and observe that
	\begin{align*}
	\sum_{k=0}^{\infty}\left|\binom{c}{k}\right|(r+\delta)^k
	&=	\sum_{k=0}^{\infty}\left|\binom{c}{k}\right|\\
	&=1- \sum_{k=1}^{\infty}\binom{c}{k}(-1)^k\\
	&=2- \sum_{k=0}^{\infty}\binom{c}{k}(-1)^k\\
	&=2- f(1-1)\\	
	&= \underset{B:=}{\underbrace{2}}.	
	%Checked by Mathemtica: (1 + 1/(2 Max[1, c]))^(c) // Plot[{2, #, Sqrt[E]}, {c, 0, 10}] &
	\end{align*} By choosing $\eps':=\eps/2$ we get the results by invoking Theorem~\ref{thm:Taylor}.
\end{proof}   

\posPower*
\begin{proof}
By Lemma \ref{lemma:controlledHamsin}, we can implement a $(1,a+2,\frac{\eps}{4})$-block-encoding $V$ of $(t,\gamma)$-controlled simulation of $H$, for $t=\bigO{\kappa\log(1/\eps)}$ and $\gamma=\bigO{1}$, in cost
$$T_V = \bigO{\left(\alpha t + \log t \log\frac{\log t}{\eps}\right)(a+T_U)}.$$

Then by Corollary \ref{cor:PositivePowerCost}, we can implement a  $(2,a+\bigO{\log\log(1/\eps)}, \eps)$-block-encoding of $H^{c}$ in gate complexity $T_V+\bigO{\kappa\log\left(\kappa/\eps\right)\log\left(1/\eps\right)}$. The result follows.
\end{proof}
 
%\anote{It should be possbile to improve the time of the Hamiltonian simulation to $c\kappa^{1-c}$. -- I am not sure any longer, $H$^0 seems to require $\Omega{\kappa}$}
\subsection{Variable time quantum algorithm for implementing negative powers of Hermitian matrices}
\label{app:negpow-vtaa}
\negpowvtaa*
\begin{proof}
We follow the same argument as Theorem~\ref{thm:qls-vtaa}, except that $\epsilon'=\eps/(m\alpha_{\mathrm{max}})$ where $\alpha_{max}=\mathcal{O}(\kappa^c)$. This gives us that 
\begin{align*}
%\label{eq:t-max-neg-pow-vtaa}
T_{\mathrm{max}}&=\bigO{\alpha\kappa q\log^2\left(\frac{q\kappa^{q}}{\epsilon'}\right)(a+T_U)}
                =\bigO{\alpha q \kappa\log^2\left(\frac{q\kappa^{q}}{\eps}\right)(a+T_U)},
\end{align*}
and $T'_{\mathrm{max}}=\bigO{\kappa}$.
We can calculate the $l_2$-averaged stopping time of $\mathcal{A}$, $\nrm{T}_2$ as
\begin{align*}
\nrm{T}_2^2&=\sum_j p_j t_j^2\\
           &=\sum_k |c_k|^2\sum_j \left(\nrm{\Pi_{C_j}\mathcal{A}_j\dots \mathcal{A}_1\ket{\lambda_k}_I\ket{0}_{CFPQ}}^2t_j^2\right)\\
           &=\mathcal{O}\left(\alpha^2 q^2 (a+T_U)^2 \sum_k \dfrac{|c_k|^2}{\lambda_k^2}\log^4\dfrac{q\kappa^c\log\kappa}{\eps\lambda^{q}_k} \right)\\
%\label{eq:neg-pow-l2-stopping-time}
\implies \nrm{T}_2&\leq \alpha q (a+T_U)\log^2\left(\dfrac{q\kappa^{q}}{\eps}\right)\sqrt{\sum_k\dfrac{|c_k|^2}{\lambda_k^2}}.
\end{align*} 
Also the success probability, $p_{\mathrm{succ}}$ can be written as 
\begin{align*}
\sqrt{p_{\mathrm{succ}}}&=\nrm{\Pi_F \dfrac{H^{-c}}{\alpha_{\mathrm{max}}}\ket{\psi}_I\ket{\Phi}_{CFPQ}}+\mathcal{O}\left(m\epsilon'\right)\\
                        &=\dfrac{1}{\alpha_{\mathrm{max}}}\left(\sum_k\dfrac{|c_k|^2}{\lambda^{2c}_k}\right)^{1/2}+\mathcal{O}\left(\dfrac{\eps}{\alpha_{\mathrm{max}}}\right)\\
%\label{eq:neg-pow-succ-prob-vtaa-1}
                        &\geq\Omega\left(\dfrac{1}{\kappa^c}\right)\left(\sum_k\dfrac{|c_k|^2}{\lambda^{2c}_k}\right)^{1/2}.
\end{align*}
When $c\geq 1$ we have:
\begin{equation}
\label{eq:ratio-c-greater-one}
\sqrt{\dfrac{\sum_k |c_k|^2}{\lambda^{2c}_k}}\geq \sqrt{\dfrac{\sum_k |c_k|^2}{\lambda^2_k}}.
\end{equation} 
Thus, the success probability satisfies:
\begin{align*}
%\label{eq:neg-pow-succ-prob-vtaa-c-greater-one}                        
  \sqrt{p_{\mathrm{succ}}}\geq \Omega\left(\dfrac{1}{\kappa^c}\right)\sqrt{\sum_k\dfrac{|c_k|^2}{\lambda^{2}_k}}.
\end{align*}
On the other hand, using that $|\kappa\lambda_k|\geq1$, term-by-term comparison reveals that for all $c\in[0,1]$
\begin{align}
\sum_k\dfrac{ |c_k|^2}{(\kappa\lambda_k)^{2c}}&\underset{\Downarrow}{\geq} \sum_k\dfrac{ |c_k|^2}{(\kappa\lambda_k)^2}\nonumber\\
\sqrt{\dfrac{\sum_k |c_k|^2}{\lambda^{2c}_k}}&\geq \kappa^{-1+c} \sqrt{\dfrac{\sum_k |c_k|^2}{\lambda^2_k}}.\label{eq:ratio-c-lesser-one} 
\end{align}
So for this case, the success probability is bounded as
\begin{align*}
%\label{eq:neg-pow-succ-prob-vtaa-c-lesser-one}                        
  \sqrt{p_{\mathrm{succ}}}\geq \Omega\left(\dfrac{1}{\kappa}\right)\sqrt{\sum_k\dfrac{|c_k|^2}{\lambda^{2}_k}}.
\end{align*}
By combining \eqref{eq:ratio-c-greater-one} and \eqref{eq:ratio-c-lesser-one}, we have that for $c\in (0,\infty)$ 
\begin{align*}
%\label{eq:neg-pow-succ-prob-vtaa}                        
  \sqrt{p_{\mathrm{succ}}}\geq \Omega\left(\dfrac{1}{\kappa^{q}}\right)\sqrt{\sum_k\dfrac{|c_k|^2}{\lambda^{2}_k}}.
\end{align*}
The final complexity of applying VTAA is given by Theorem~\ref{thm:efficient-vtaa-vtae} as (neglecting constants):
\begin{align*}
&T_{\max}+T_{\psi}+\frac{\left(\nrm{T}_2+T_\psi\right)\log(T'_{\max})}{\sqrt{p_\mathrm{succ}}}\nonumber
\\
=&\alpha q\kappa\log^2\left(\frac{q\kappa^{q}}{\eps}\right)(a+T_U)+\left(\alpha  q \kappa^{q}  (a+T_U)\log^2\left(\dfrac{q\kappa^{q}}{\eps}\right)+\kappa^{c}T_\psi\right)\log\left(\kappa\right)\nonumber
\\
%\label{eq:neg-pow-complexity-vtaa}
=&\mathcal{O}\left(\left(\alpha q \kappa^{q} (a+T_U)\log^2\left(\dfrac{q\kappa^{q}}{\eps}\right)+\kappa^{c}T_\psi\right)\log\left(\kappa\right)\right).
\end{align*}

The second part follows from Corollary \ref{cor:qls-vtae}.  
%\vskip-20pt % to move qed symbol up
\end{proof}

\end{document}	
